\numberwithin{equation}{section}
\numberwithin{figure}{section}
\theoremstyle{plain}
\newtheorem*{thm*}{\protect\theoremname}
\theoremstyle{plain}
\newtheorem{thm}{\protect\theoremname}[section]
\theoremstyle{definition}
\newtheorem{defn}[thm]{\protect\definitionname}
\theoremstyle{remark}
\newtheorem{rem}[thm]{\protect\remarkname}
\theoremstyle{definition}
\newtheorem{example}[thm]{\protect\examplename}
\theoremstyle{plain}
\newtheorem{lem}[thm]{\protect\lemmaname}
\theoremstyle{plain}
\newtheorem{assumption}[thm]{\protect\assumptionname}
\theoremstyle{plain}
\newtheorem{prop}[thm]{\protect\propositionname}
\providecommand{\assumptionname}{Assumption}
\providecommand{\definitionname}{Definition}
\providecommand{\examplename}{Example}
\providecommand{\lemmaname}{Lemma}
\providecommand{\propositionname}{Proposition}
\providecommand{\remarkname}{Remark}
\providecommand{\theoremname}{Theorem}
\begin{document}
\global\long\def\B{\mathcal{B}}%

\global\long\def\R{\mathbb{R}}%

\global\long\def\Q{\mathbb{Q}}%

\global\long\def\Z{\mathbb{Z}}%

\global\long\def\C{\mathbb{C}}%

\global\long\def\N{\mathbb{N}}%

\global\long\def\RR{\mathbb{\overline{R}}}%

\global\long\def\E{\mathcal{E}}%

\global\long\def\V{\mathcal{V}}%

\global\long\def\P{\mathcal{P}}%

\global\long\def\VR{\mathcal{V}_{\mathcal{R}}}%

\global\long\def\dom{\Omega}%

\global\long\def\sdom{\mathcal{S}}%

\global\long\def\spec#1{\textrm{\text{Spec}}\left(#1\right)}%

\global\long\def\meig#1{\mathcal{\lambda}^{\Gamma}\left(#1\right)}%

\global\long\def\deig#1{\lambda\left(#1\right)}%

\global\long\def\sreg{\Sigma^{\text{reg}}}%

\global\long\def\sf#1{\mathrm{Sf}_{#1}}%

\global\long\def\mult#1#2{\mathrm{Mult}_{#1}\left(#2\right)}%

\global\long\def\L{\mathbf{L}}%

\begin{center}
\textbf{\large{}SPECTRAL CURVES OF QUANTUM GRAPHS WITH}{\large\par}
\par\end{center}

\begin{center}
\textbf{\large{}$\delta_{s}$ TYPE VERTEX CONDITIONS}{\large\par}
\par\end{center}

\vspace{140bp}

\begin{center}
\emph{\Large{}Gilad Sofer}{\Large\par}
\par\end{center}

\thispagestyle{empty}

\newpage{}
\title[]{Spectral Curves of Quantum Graphs with $\delta_{s}$ Type Vertex
Conditions}

\maketitle
\vspace{70bp}

\begin{center}
{\Large{}Research Thesis In Partial Fulfillment of The Requirements
for the Degree of Master of Science in Mathematics }{\Large\par}
\par\end{center}

\vspace{70bp}

\begin{center}
\emph{\Large{}Gilad Sofer}\vspace{250bp}
\par\end{center}

\begin{center}
{\Large{}Submitted to the Senate of the Technion - Israel Institute
of Technology }{\Large\par}
\par\end{center}

\vspace{5bp}

\begin{center}
{\Large{}Elul, 5782 , Haifa, September, 2022 }{\Large\par}
\par\end{center}

\thispagestyle{empty}

\newpage{}

~\vspace{50bp}

\begin{center}
{\Large{}The Research Thesis Was Done Under The Supervision of Ram
Band in the Faculty of Mathematics}{\Large\par}
\par\end{center}

\vspace{60bp}

\begin{center}
{\Large{}The Generous Financial Help of the Gutwirth Fellowship and}{\Large\par}
\par\end{center}

\begin{center}
{\Large{}the Technion is Gratefully Acknowledged}{\Large\par}
\par\end{center}

\vspace{60bp}

\begin{center}
I sincerely thank Ram Band for his wonderful guidance and support
throughout this experience\vspace{60bp}
\par\end{center}

\textbf{Conferences. }

\vspace{6bp}

Parts of the Thesis appeared in the following.
\begin{enumerate}
\item QGraph conference, Stockholm University, December 2021.\\
\emph{\url{https://staff.math.su.se/kurasov/QGRAPH/Meeting1221.html}}
\item Ergodic Operators and Quantum Graphs, Simons Center for Geometry and
Physics, Stony Brook University, New York, June 2022.\\
\emph{\url{https://scgp.stonybrook.edu/archives/32892}}
\item Heat Kernels on Graphs and Manifolds, Center for Distance Learning,
Bregenz, August 2022.\\
\emph{\url{https://www.mat-dyn-net.eu/en/spectral-geometry-2022}}
\item Workshop on Functional Analysis, Operator Theory and Dynamical Systems,
Potsdam University, September 2022.\\
\emph{\url{https://www.math.uni-potsdam.de/institut/veranstaltungen/details/veranstaltungsdetails/workshop-on-functional-analysis-operator-theory-and-dynamical-systems}}
\end{enumerate}
\newpage{}

\textbf{Acknowledgment.}

\vspace{6bp}

I had the pleasure to work with several great collaborators in my
studies, and some parts of the thesis are based on these collaborations.

Some results regarding the Robin-Neumann gap are based on joint work
with Ram Band, Holger Schanz, and Uzy Smilansky {[}10{]}, with a preprint
available at

\emph{\url{https://arxiv.org/abs/2212.12531}}

Some results regarding the spectral flow are based on joint work with
Ram Band and Marina Prokhorova {[}9{]} (to be submitted in the future).

\thispagestyle{empty}

\newpage{}

\tableofcontents{}

\thispagestyle{empty}

\newpage{}

\listoffigures

\thispagestyle{empty}\newpage{}

\section*{Abstract}

\vspace{120bp}

In this Thesis we study the behavior of spectral curves of quantum
graphs under certain families of vertex conditions, called the $\delta_{s}$
family, which we define in this work. We focus on studying two main
quantities related to the spectral curves, known as the Robin-Neumann
gap and the spectral flow. We show that these quantities hold information
about the the spectral curves, the behavior of the corresponding eigenfunctions,
and the geometry of the graph itself.

For a specific subset of the $\delta_{s}$ family which is known as
the $\delta$ family, we study the Robin-Neumann gap, which measures
the total increase in the eigenvalues with respect to the perturbation
parameter. We use this quantity to show that the growth of the spectral
curves is uniformly bounded, and that on average it is linear, with
proportionality factor determined by the geometry of the graph.

For the general $\delta_{s}$ family of vertex conditions, we study
a quantity known as the spectral flow, which counts the number of
oriented intersections of the spectral curves with some given horizontal
cross section. We use this quantity to prove an index theorem which
connects between a generalized nodal deficiency of the eigenfunctions
and the stability index of a generalized Dirichlet-to-Neumann map.
We also show that the spectral flow holds information about the graph
topology.

\setcounter{page}{1}

\newpage{}

\section*{Nomenclature}

$\R$ - The set of Real numbers.

$\overline{\mathbb{R}}$ -- The extended real line $\mathbb{R}\cup\left\{ \infty\right\} $.

$\N$ - The set of Natural numbers.

$-\triangle$ - The non-negative Laplacian.

$\Omega$ - A bounded domain in $\R^{N}$ with piecewise smooth boundary.

$\frac{\partial f}{\partial n}\mid_{\partial\Omega}$ - The outwards
pointing normal derivative of $f$ along the boundary $\partial\Omega$.

$G=\left(\mathcal{V},\mathcal{E}\right)$ - A combinatorial graph
with vertex set $\mathcal{V}$ and edge set $\E$.

$V$ - Number of vertices in $\V$.

$E$ - Number of edges in $\E$.

$E_{v}$ - Set of edges connected to the vertex $v$.

$\deg\left(v\right)$ - Degree of the vertex $v$, or $\left|E_{v}\right|$.

$\Gamma$ - A compact, connected metric graph.

$\beta_{\Gamma}$ - First Betti number of $\Gamma$.

$L$ - Total length of the graph.

$H^{1}(\Gamma)$ - The Sobolev space $W^{1,2}(\Gamma)$.

$H_{0}^{1}(\Gamma)$ - The Sobolev space $W_{0}^{1,2}(\Omega)$.

$H^{2}(\Gamma)$ - The Sobolev space $W^{2,2}(\Gamma)$.

$L^{2}\left(X\right)$ - The Hilbert space $L^{2}$ on the space $X$.

$Dom\left(H\right)$ - The domain of the operator/sesquilinear form
$H$.

$\spec H$ - The spectrum of the operator $H$.

$\mult{\lambda}H$ - The multiplicity of the eigenvalue $\lambda$
in $\spec H$.

$H^{s}\left(t\right)$ - The Hamiltonian of the $\delta_{s}$ family
at time $t$.

$\lambda_{n}^{s}\left(t\right)$ - The value of the $n$th spectral
curve of the $\delta_{s}$ family at time $t$.

$k_{n}$ - The wave number of the $n$th eigenvalue, which satisfies
$k_{n}^{2}=\lambda_{n}$.

$d_{n}\left(\sigma\right)$ - The $n$th Robin-Neumann gap with coupling
parameter $\sigma$.

$\VR$ - The set of $\delta$ vertices.

$\left\langle a\right\rangle _{n}$ - The Cesaro sum of the sequence
$\left(a_{n}\right)_{n=1}^{\infty}$.

$\mathbb{T}^{E}$ - The $E$ dimensional torus $\mathbb{R}^{E}/2\pi\mathbb{Z}^{E}$.

$S^{\left(t\right)}\left(k\right)$ - The bond scattering matrix with
Robin parameter $t$ and wave number $k$.

$\L$ - Diagonal matrix of edge lengths $\left(\ell_{1},\ell_{1},...,\ell_{E},\ell_{E}\right)$.

$\Sigma$ - The secular manifold.

$\sreg$ - The regular part of the secular manifold.

$\mathcal{S}_{s}\left(f_{n}\right)$ - The set of $s$ points corresponding
to the $n$th eigenfunction.

$\phi_{s}\left(f_{n}\right)$ - The number of $s$ points corresponding
to the $n$th eigenfunction.

$\nu_{s}\left(f_{n}\right)$ - The number of $s$ domains corresponding
to the $n$th eigenfunction$.$

$\mathcal{D}_{s}\left(f_{n}\right)$ - The $s$ deficiency corresponding
to the $n$th eigenfunction , or $n-\nu_{s}\left(f_{n}\right)$.

$\Lambda_{s}\left(c\right)$ - The Robin map with Robin parameter
$s$ and spectral parameter $c$.

$L_{t}^{s}$ - The sesquilinear form corresponding to the $\delta_{s}$
family.

$\gamma^{s},\gamma^{s\star}$ - The $s$ traces. 

$\sf{\lambda}\left(I\right)$ - The spectral flow along the interval
$I$ through the horizontal line $\lambda$.

$Mor\left(H\right)$ - The Morse index of an operator $H$, or number
of negative eigenvalues.

$Pos\left(H\right)$ - The positive index of an operator $H$, or
number of positive eigenvalues.

\newpage{}

\section{Introduction\label{sec:Introduction}}

Given a `nice enough' family of self-adjoint operators $\left(H\left(\vec{R}\right)\right)_{\vec{R}\in M}$
(where $M$ is some smooth manifold), one can consider the collection
of eigenvalues of this family -- $\lambda_{n}\left(\vec{R}\right)$.
In the particular case of a one-parameter family $H\left(t\right)$,
this gives a sequence of spectral curves $\lambda_{n}\left(t\right)$.

The study of the behavior of spectral curves of families of self-adjoint
operators has a long history in spectral geometry, dating back to
the classical Hadamard type formula \cite{Had_book08}, which gives
a formula for the derivative of the spectral curves of the Dirichlet
Laplacian for a time dependent domain $\Omega\left(t\right)\subset\mathbb{R}^{N}$.
Since then, many works have been devoted to the subject, with numerous
works focusing on families of self-adjoint extensions of the Laplacian
on manifolds, metric graphs and discrete graphs.

While the spectral curves naturally give information about the eigenvalues
themselves, it turns out that in some cases they can also give information
about the corresponding eigenfunctions, and even about the geometry
of the underlying space.

A nice example from recent years is the study of the so called Robin-Neumann
gap on planar domains, which gives a connection between the change
in the spectral curves of the Robin Laplacian and geometric properties
of the domain \cite{RudWigYes_arxiv21}. Given a bounded domain $\Omega\subset\mathbb{R}^{2}$
with piecewise smooth boundary, one can consider the family of Laplacians
$\left(-\Delta\left(t\right)\right)_{t\geq0}$ under the Robin boundary
condition with coupling parameter $t$:
\begin{equation}
\frac{\partial f}{\partial n}|_{\partial\Omega}+tf|_{\partial\Omega}=0,\label{eq:-85}
\end{equation}
where $\frac{\partial f}{\partial n}$ is the outwards pointing normal
derivative of $f$. One then defines the sequence of Robin-Neumann
gaps $\left(d_{n}\left(t\right)\right)_{n=1}^{\infty}$ as the total
change of the $n$th spectral curve due to the Robin condition: 
\begin{equation}
d_{n}\left(t\right):=\lambda_{n}\left(t\right)-\lambda_{n}\left(0\right).\label{eq:-86}
\end{equation}

The following relation between the expectation value of the Robin-Neumann
gap and the geometry of $\Omega$ was recently proven in \cite{RudWigYes_arxiv21}:
\begin{equation}
\lim_{N\rightarrow\infty}\frac{1}{N}\sum_{n=1}^{N}d_{n}\left(t\right)=\frac{2\cdot\text{Length}\left(\partial\Omega\right)}{\text{Area\ensuremath{\left(\Omega\right)}}}t.\label{eq:-87}
\end{equation}

The spectral curves have also played an important role in the study
of nodal domains of Laplacian eigenfunctions. Given an open subset
$\Omega\subset\mathbb{R}^{n}$ and an eigenpair $\left(\lambda_{n},f_{n}\right)$
of a Schrödinger operator $H$, one can define the nodal set of $f_{n}$
by
\begin{equation}
\mathcal{S}_{\infty}\left(f_{n}\right)=\left\{ x\in\Omega:f_{n}\left(x\right)=0\right\} .\label{eq:-22}
\end{equation}

The nodal set partitions $\Omega\backslash\mathcal{S}_{\infty}\left(f_{n}\right)$
into connected components, which are called the nodal domains of $f_{n}$.
Denoting the number of nodal domains of $f_{n}$ by $\nu_{\infty}\left(f_{n}\right)$,
one can define the corresponding nodal deficiency of $f_{n}$:
\begin{equation}
\mathcal{D}_{\infty}\left(f_{n}\right):=n-\nu_{\infty}\left(f_{n}\right).\label{eq:-23}
\end{equation}

A famous theorem by Courant (\cite{Cou_ngwgmp23}) states that $\mathcal{D}_{\infty}\left(f_{n}\right)\geq0$.
This result was the corner stone to many later works devoted to studying
the nodal deficiency.

An important related family of results from recent years are the so
called \emph{nodal index theorems}, which link between the nodal
count of Laplacian eigenfunctions and the stability index of some
appropriate functional/operator. It turns out that for many of these
results, the proof relies on defining some proper parametric family
of self-adjoint operators and studying the behavior of the corresponding
spectral curves. 

An interesting example of such a result is the celebrated nodal-magnetic
theorem for quantum and discrete graphs (see \cite{Ber_apde13,BerWey_ptrsa14}),
which relates the stability index of the $n$th eigenvalue of the
magnetic Laplacian with respect to a magnetic perturbation to the
nodal count of the corresponding eigenfunction.

Another important example, which serves as a main motivation for this
work, is the following theorem:
\begin{thm*}
\cite{BerCoxMar_lmp19,Cox2017} Let $\left(\lambda_{n},f_{n}\right)$
be a simple eigenpair of the Neumann Laplacian on a bounded, Lipschitz
domain $\Omega\subset\mathbb{R}^{N}$. Then for $\epsilon>0$ small
enough, the following formula for the nodal deficiency holds:
\begin{equation}
\mathcal{D}_{\infty}\left(f_{n}\right)=Mor\left(\Lambda\left(\lambda_{n}+\epsilon\right)\right),\label{eq:-84}
\end{equation}
where $\Lambda\left(\lambda_{n}+\epsilon\right)$ is the perturbed
two sided Dirichlet to Neumann map evaluated at the nodal set of $f_{n}$,
and $Mor$ is the Morse index (number of negative eigenvalues).
\end{thm*}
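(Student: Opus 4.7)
The plan is to realize the nodal deficiency as a spectral flow of a one-parameter family of self-adjoint operators interpolating between the Neumann Laplacian on $\Omega$ and the operator obtained by decoupling the nodal domains along $\mathcal{S} := \mathcal{S}_{\infty}(f_n)$, and then to identify this spectral flow with $Mor(\Lambda(\lambda_n + \epsilon))$ via a Birman--Schwinger-type correspondence in which $\Lambda(\mu)$ appears as the boundary operator generating the family.

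Concretely, for each $\sigma \in [0, \infty)$ I would introduce the self-adjoint operator $H(\sigma)$ on $L^{2}(\Omega)$ associated to the closed sesquilinear form
\[
q_{\sigma}(u) = \int_{\Omega} |\nabla u|^{2} \, dx + \sigma \int_{\mathcal{S}} |u|^{2} \, dS, \qquad u \in H^{1}(\Omega \setminus \mathcal{S}),
\]
together with Neumann conditions on $\partial\Omega$. At $\sigma = 0$ this is the ambient Neumann Laplacian, and as $\sigma \to \infty$ the family converges in strong resolvent sense to the operator $H_{\mathcal{S}}$ with Dirichlet conditions imposed on $\mathcal{S}$. Since $f_n$ restricts to a positive Dirichlet ground state on each nodal domain $\Omega_i$, the operator $H_{\mathcal{S}}$ has no eigenvalue below $\lambda_n$ and has $\lambda_n$ itself as an eigenvalue of multiplicity exactly $\nu_{\infty}(f_n)$. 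Combined with the simplicity of $\lambda_n$ as a Neumann eigenvalue, this shows that for every sufficiently small $\epsilon > 0$ the number of eigenvalues strictly below $\mu := \lambda_n + \epsilon$ equals $n$ for $H(0)$ and $\nu_{\infty}(f_n)$ for $H_{\mathcal{S}}$. Form monotonicity makes each branch $\sigma \mapsto \lambda_k(\sigma)$ non-decreasing and piecewise real-analytic, so as $\sigma$ grows from $0$ to $\infty$ exactly $n - \nu_{\infty}(f_n) = \mathcal{D}_{\infty}(f_n)$ eigenvalue branches cross $\mu$ in the upward direction.

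To turn this crossing count into $Mor(\Lambda(\mu))$ I would use the boundary-trace correspondence: for $\mu \notin \spec{H_{\mathcal{S}}}$, any eigenfunction $u$ of $H(\sigma)$ at level $\mu$ is uniquely determined by its trace $\phi := u|_{\mathcal{S}}$ through the unique solvability of $(-\Delta - \mu)u = 0$ on $\Omega \setminus \mathcal{S}$ with Neumann conditions on $\partial\Omega$, while the Robin jump condition $[\partial_n u]_{\mathcal{S}} + \sigma u|_{\mathcal{S}} = 0$ translates exactly into $\Lambda(\mu)\phi = -\sigma\phi$. The map $u \mapsto \phi$ is therefore a multiplicity-preserving bijection between $\ker(H(\sigma) - \mu)$ and $\ker(\Lambda(\mu) + \sigma)$, so each upward crossing at some $\sigma^\star > 0$ corresponds to the negative eigenvalue $-\sigma^\star$ of $\Lambda(\mu)$ and, conversely, every negative eigenvalue of $\Lambda(\mu)$ arises in this way. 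Summing with multiplicities yields $\mathcal{D}_{\infty}(f_n) = Mor(\Lambda(\lambda_n + \epsilon))$.

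The main obstacle I anticipate is rigorous spectral-flow bookkeeping in the Lipschitz setting: I have to verify that no branches escape to or from $\pm\infty$ in the $\sigma \to 0^{+}$ and $\sigma \to \infty$ limits, that all crossings of $\mu$ are strictly transversal (via a Hellmann--Feynman argument showing that the derivative of a branch at a crossing equals $\int_{\mathcal{S}} |u|^{2} \, dS$, which is positive since $u|_{\mathcal{S}} = \phi \neq 0$ whenever $\sigma > 0$), and that the boundary-trace correspondence preserves multiplicities even where $\mathcal{S}$ is singular or meets $\partial\Omega$. The cleanest route around these issues is a Krein-type resolvent formula relating $(H(\sigma) - \mu)^{-1}$ and $(H_{\mathcal{S}} - \mu)^{-1}$ through $(\Lambda(\mu) + \sigma)^{-1}$, which packages the whole spectral-flow count into standard properties of the Dirichlet-to-Neumann map for elliptic problems on Lipschitz domains.
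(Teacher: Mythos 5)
Your proposal is correct and takes essentially the same route as the paper's argument (carried out there for the graph analogue, Theorem \ref{thm:SF-index} with $s=\infty$): a monotone one-parameter $\delta$-type family attached to the nodal set interpolating between the Neumann operator and the nodally decoupled Dirichlet operator, the observation that $\lambda_{n}$ is the ground state of the decoupled operator with multiplicity $\nu_{\infty}\left(f_{n}\right)$ (Lemmas \ref{lem:multiplicity} and \ref{lem:s-position}), and the correspondence $\ker\left(H\left(\sigma\right)-\mu\right)\cong\ker\left(\Lambda\left(\mu\right)+\sigma\right)$ (Lemma \ref{lem:DTN-correspondence}) converting the count of upward crossings of $\mu=\lambda_{n}+\epsilon$ into the Morse index. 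The only slip is cosmetic: for the $\delta$-coupling the form domain should be $H^{1}\left(\Omega\right)$, i.e.\ functions continuous across $\mathcal{S}$, rather than $H^{1}\left(\Omega\setminus\mathcal{S}\right)$, so that the single-valued trace in $\sigma\int_{\mathcal{S}}\left|u\right|^{2}dS$ and the jump condition $\left[\partial_{n}u\right]+\sigma u|_{\mathcal{S}}=0$ you state are consistent.
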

Here, the perturbed Dirichlet to Neumann map (which is sometimes also
known as a Steklov type operator) is a linear map acting on $L^{2}\left(\partial\Omega\right)$
which, in essence, links between the boundary value of the eigenfunction
$f_{n}$ and its normal derivative. A generalized version of this
map is defined for quantum graphs in Section \ref{sec:Prelims}.

While the spectral curves do not appear explicitly in the statement
of the theorem above, a very natural proof is given by defining an
appropriate one-parameter family of operators and studying the corresponding
spectral curves.

In the present work, we study the behavior of the spectral curves
of the Laplacian on metric graphs. A metric graph $\Gamma$ is a one-dimensional
simplicial complex with a natural notion of $L^{2}\left(\Gamma\right)$,
which can then be used to define a Laplacian (See complete definition
in Section \ref{sec:Prelims}). We focus on studying the spectral
curves of the Laplacian under the $\delta_{s}$ family of vertex conditions,
which we define in this work. A large portion of the work is dedicated
specifically to the $\delta$ vertex condition, which has been studied
extensively in past works.

The work concerns with studying two main quantities related to the
behavior of the spectral curves. The first is the Robin-Neumann gap
of the $\delta$ family, which in essence captures the total vertical
change in the spectral curves along some given interval; The second
is the spectral flow of the $\delta_{s}$ family, which in essence
captures the number of spectral curves which intersect some given
horizontal cross section. Both of these quantities are defined precisely
for quantum graphs in Section \ref{sec:Prelims}.

For the Robin-Neumann gap, we show that the sequence of Robin-Neumann
gap functions is uniformly Lipschitz continuous, uniformly bounded,
and contains a uniformly convergent subsequence (Theorem \ref{thm:1.RNG-Lipschitz}).
We also show (Theorem \ref{thm:RNG-mean}) that the mean value of
the Robin-Neumann gap exists, and is given by a geometric expression
analogous to the one given in Formula (\ref{eq:-87}). In the process
of proving Theorem \ref{thm:RNG-mean}, we also prove a local Weyl
law, which provides an estimate for the mean value of the graph eigenfunctions
at the graph vertices (Theorem \ref{thm:Weyl-law}).

For the spectral flow, we provide an index theorem (Theorem \ref{thm:SF-index})
which relates a generalized nodal deficiency to the Morse index of
a generalized Dirichlet to Neumann map, both of which are defined
in Section \ref{sec:Prelims}. This result provides a substantial
generalization to the formula presented in Formula (\ref{eq:-84})
above for the graph setting, which allows one to study not only nodal
points of eigenfunctions (points such that $f_{n}=0$), but also Neumann
points (points such that $\frac{df_{n}}{dx}=0$) and a much wider
class of points. Moreover, we show that the spectral flow of the $\delta_{s}$
family is determined by the number of interaction points (Theorem
\ref{prop:SF-points}), and can be related to the first Betti number
of the graph (Theorems \ref{thm:SF-Betti1} and \ref{prop:SF-Betti2}).

The structure of the thesis is as follows. Section \ref{sec:Prelims}
is dedicated for preliminary background required in order to state
the main results. We then present the main results precisely in Section
\ref{sec:Main-Results}. Section \ref{sec:RNG-tools} is devoted to
explaining and developing tools necessary for the proof of the results
related to the Robin-Neumann gap, which are then proven in Section
\ref{sec:proof-of-rng}. Similarly, Section \ref{sec:SF-tools} is
devoted to explaining and developing the tools necessary for the proof
of the results related to the spectral flow, which are then proven
in Section \ref{sec:proof_for_sf}. Finally, Section \ref{sec:Discussion}
includes a discussion of the results and suggestions to possible future
research directions and applications of this work. Appendix \ref{sec:Appendices}
is devoted to deriving the sesquilinear form corresponding to the
$\delta_{s}$ condition which appears in this work. Appendix \ref{sec:DTN-comp}
presents an example for a computation for the Robin map presented
in this work.

\newpage{}

\section{\label{sec:Prelims}Preliminaries}

\subsection{Basic introduction to quantum graphs}
\begin{defn}
\label{def:Mgraph}A \emph{metric graph} is a pair $\Gamma=\left(G,\vec{\ell}\right)$,
where $G=\left(\mathcal{V},\mathcal{E}\right)$ is a combinatorial
graph and $\vec{\ell}\in\mathbb{R}_{+}^{E}$ is a vector of positive
edge lengths associated to the edges in $\mathcal{E}$. $\Gamma$
is considered as a metric space, such that each edge $e\in\mathcal{E}$
is identified with the interval $\left[0,\ell_{e}\right]$ (as shown
in Figure \ref{fig:A-metric-graph.}). The total length of the graph
is denoted by $L:=\sum_{e\in\mathcal{E}}\ell_{e}$.

For each vertex $v$, we denote the set of edges connected to $v$
by $E_{v}$, and moreover denote $\deg\left(v\right):=\left|E_{v}\right|$.
Furthermore, we denote $V=\left|\mathcal{V}\right|,E=\left|\mathcal{E}\right|$.
\end{defn}

\begin{figure}
\includegraphics[scale=0.7]{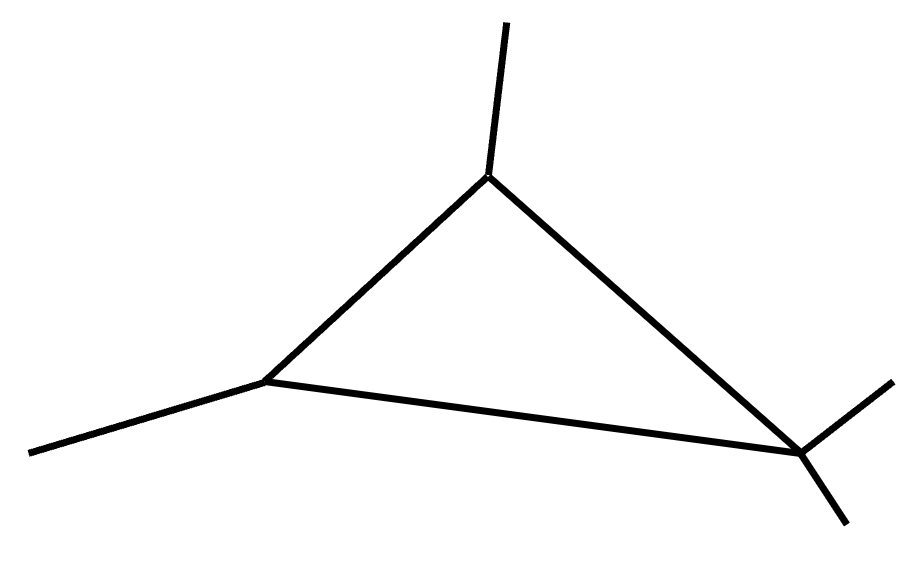}

\caption{A metric graph.\label{fig:A-metric-graph.}}
\end{figure}

\begin{defn}
\label{def:Qgraph}A \emph{Quantum Graph} is a metric graph $\Gamma$,
equipped with a self-adjoint differential operator $H$ (usually called
the \emph{Hamiltonian}), acting on the Sobolev space $H^{2}\left(\Gamma\right):=\oplus_{e\in\mathcal{E}}H^{2}\left(\left[0,\ell_{e}\right]\right)$.

One usually takes $H$ to be a Schrödinger operator -- $H=-\frac{d^{2}}{dx^{2}}+V\left(x\right)$
with $V\left(x\right)\in L^{\infty}\left(\Gamma\right)$, along with
certain vertex conditions such that $H$ is self-adjoint. Some common
examples for self-adjoint vertex conditions:
\end{defn}

\begin{enumerate}
\item Neumann-Kirchhoff (or standard) vertex condition: \label{enu:NK}
\begin{align}
 & f\text{ is continuous at \ensuremath{v} - \ensuremath{f|_{e}\left(v\right)=f|_{e'}\left(v\right),\forall e,e'\in E_{v}}},\label{eq:-15}\\
 & \text{Current conservation - \ensuremath{\sum_{e\in E_{v}}f'|_{e}\left(v\right)=0,}}\label{eq:-16}
\end{align}
where by convention the derivatives are taken in the outwards direction
from the vertex.
\item $\delta$ vertex condition with coupling parameter $\sigma\in\mathbb{R}$:
\begin{align}
 & \text{Continuity at \ensuremath{v} - \ensuremath{f|_{e}\left(v\right)=f|_{e'}\left(v\right):=f\left(v\right),\forall e,e'\in E_{v},}}\label{eq:-17}\\
 & \text{\text{Robin condition -}\ensuremath{\sum_{e\in E_{v}}f'|_{e}\left(v\right)=\sigma f\left(v\right)}.}\label{eq:-18}
\end{align}
Note that the Neumann-Kirchhoff condition is a particular case, with
$\sigma=0$. The $\delta$ condition has appeared in many previous
works, as a model for describing a singular potential barrier on the
graph.
\item For a vertex of degree two, one can define the $\delta'$ vertex condition
with coupling parameter $\sigma\in\mathbb{R}$:
\begin{align}
 & f'\text{ is continuous at \ensuremath{v} - \ensuremath{f'|_{e_{1}}\left(v\right)+f'|_{e_{2}}\left(v\right)=0,}}\label{eq:-19}\\
 & \text{\ensuremath{f|_{e_{1}}\left(v\right)-f|_{e_{2}}\left(v\right)=\sigma f'|_{e_{2}}\left(v\right)}.}\label{eq:-20}
\end{align}
Note that $\sigma=0$ once again corresponds to the Neumann-Kirchhoff
condition. The $\delta'$ condition can be thought of as an analog
of the $\delta$ condition, with the roles of the function and its
derivative reversed. It has been studied in the context of periodic
scattering arrays with the sign of the coupling parameter reversed
(see \cite{AvrExnLas_prl94}) and for certain lattice Kronig-Penney
models (see \cite{Exner1995}).
\end{enumerate}
\begin{rem}
All of the vertex conditions above are a particular case of the so
called $\delta_{s}$ family, which we define in Subsection \ref{subsec:delta-s-definition}.
\end{rem}

In the following work, we always take $\Gamma$ to be a compact, connected
metric graph. We also take our Hamiltonian to be the Laplacian $H=-\frac{d^{2}}{dx^{2}}$,
with one of the vertex conditions stated above.

By standard theory (see Section $3$ in \cite{BerKuc_graphs}), $H$
is self-adjoint, and its spectrum is infinite, discrete, and bounded
from below. We can thus write the spectrum of $H$ as $\lambda_{1}<\lambda_{2}\leq...\rightarrow\infty$,
with a complete orthonormal set of eigenfunctions $f_{1},f_{2},...$.

Moreover, in the case of the Neumann-Kirchhoff Laplacian, we shall
sometimes assume that the eigenpairs $\left(\lambda_{n},f_{n}\right)_{n=1}^{\infty}$
are generic.
\begin{defn}
\label{def:generic}An eigenpair $\left(\lambda_{n},f_{n}\right)_{n=1}^{\infty}$
of the Neumann-Kirchhoff Laplacian is said to be generic if $\lambda_{n}$
is simple, and $f_{n}$ and its derivative do not vanish at any vertex
of degree larger than two. This property is generic in the sense explained
in \cite{Alon_PhDThesis}.
\end{defn}

\bigskip

\subsection{The Robin-Neumann gap\label{subsec:RNG-definition}}

Let $\Gamma$ be a quantum graph, initially with Neumann-Kirchhoff
condition imposed at all vertices. We introduce a perturbation to
our initial graph, by selecting a finite subset of vertices $\VR\subset\mathcal{V}$,
and on this subset of vertices imposing the $\delta$ condition with
parameter $\sigma\in\mathbb{R}$ (recall (\ref{eq:-17}), (\ref{eq:-18})).

We denote our perturbed Hamiltonian by $H\left(\sigma\right)$, and
its eigenvalues by $\left(\lambda_{n}\left(\sigma\right)\right)_{n=1}^{\infty}$.
If $\sigma\geq0$, then $\spec{H\left(\sigma\right)}\subset[0,\infty)$,
and we can also denote the eigenvalues using the wave number $\left(k_{n}^{2}\left(\sigma\right)\right)_{n=1}^{\infty}:=\left(\lambda_{n}\left(\sigma\right)\right)_{n=1}^{\infty}$.

One can show that the eigenvalues of $H\left(\sigma\right)$ are non-decreasing
with respect to $\sigma$, see Proposition $3.1.6$ in \cite{BerKuc_graphs}.
To quantify this increase, we define the \emph{Robin-Neumann gap}
(or RNG) by
\begin{equation}
d_{n}\left(\sigma\right)=\lambda_{n}\left(\sigma\right)-\lambda_{n}\left(0\right).\label{eq:-21}
\end{equation}

This gives us an infinite sequence of functions $\left(d_{n}\left(\sigma\right)\right)_{n=1}^{\infty}$,
which quantifies the total increase in the spectrum of $H\left(\sigma\right)$
due to the $\delta$ perturbation (see Figure \ref{fig:RNG}).

\begin{figure}
\includegraphics[scale=0.8]{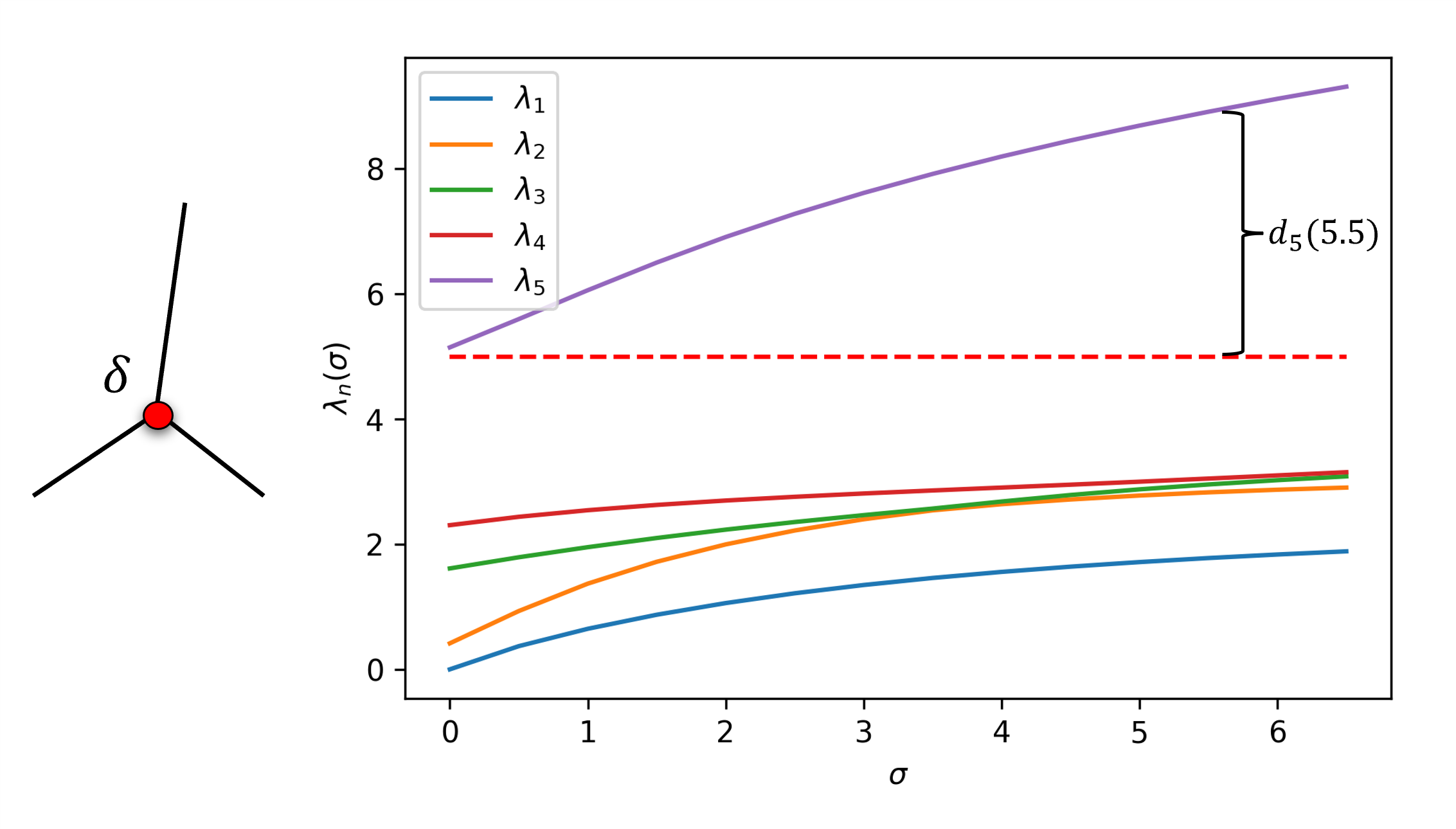}

\caption[Spectral curves of star graph with Robin-Neumann gap .]{The first five spectral curves of a star graph with a $\delta$ point
at the central vertex, and the fifth Robin-Neumann gap $d_{5}\left(5.5\right)$.\label{fig:RNG}}
\end{figure}

The RNG has been recently studied in the two-dimensional setting for
planar domains and the hemisphere in \cite{RudWigYes_arxiv21,RudWig_amq21},
and for non self-adjoint $\delta$ condition on star graphs in \cite{RivRoy_jphys20}.
\begin{rem}
\label{rem:ambiguity}The definition of $d_{n}\left(\sigma\right)$
might not be inclusive if $\lambda_{n}\left(\sigma\right)$ or $\lambda_{n}\left(0\right)$
is a multiple eigenvalue, since then it is not clear which branch
of the spectral curves one should follow. For the purpose of this
work, which deals with the collective behavior of the sequence $d_{n}\left(\sigma\right)$,
this ambiguity does not matter, since it only results in exchanging
the order of several elements of the sequence. Thus, one may choose
the eigenvalue branch arbitrarily in this case (and this possible
exchange in order will not affect our results).
\end{rem}

\bigskip

\subsection{Nodal domains, Neumann domains and $s$ domains on quantum graphs\label{subsec:sDomains-definition}}

The definition given in Section \ref{sec:Introduction} for nodal
domains on piecewise smooth domains in $\mathbb{R}^{N}$ can be naturally
extended for quantum graphs; Given an eigenpair $\left(\lambda_{n},f_{n}\right)$
of a Schrödinger operator, one can define the nodal set of $f_{n}$
by
\begin{equation}
\mathcal{S}_{\infty}\left(f_{n}\right)=\left\{ x\in\Gamma:f_{n}\left(x\right)=0\right\} .\label{eq:-24}
\end{equation}

Under our assumption that $f_{n}$ is a generic eigenfunction (see
Definition \ref{def:generic}), $\mathcal{S}_{\infty}\left(f_{n}\right)$
is a finite set of points of degree two, which are called \emph{nodal points}.
We denote the number of nodal points (which is known as the nodal
count) by $\phi_{\infty}\left(f_{n}\right)$.

As in the case of domains, the nodal set partitions $\Gamma\backslash\mathcal{S}_{\infty}\left(f_{n}\right)$
into connected components, which are called \emph{nodal domains}
(see Figure \ref{fig:nodal-domains-2}). We denote the number of nodal
domains of $f_{n}$ by $\nu_{\infty}\left(f_{n}\right)$. The nodal
deficiency of an eigenfunction $f_{n}$ is then defined as
\begin{equation}
\mathcal{D}_{\infty}\left(f_{n}\right):=n-\nu_{\infty}\left(f_{n}\right)\label{eq:-109}
\end{equation}

\begin{figure}
\includegraphics[scale=0.7]{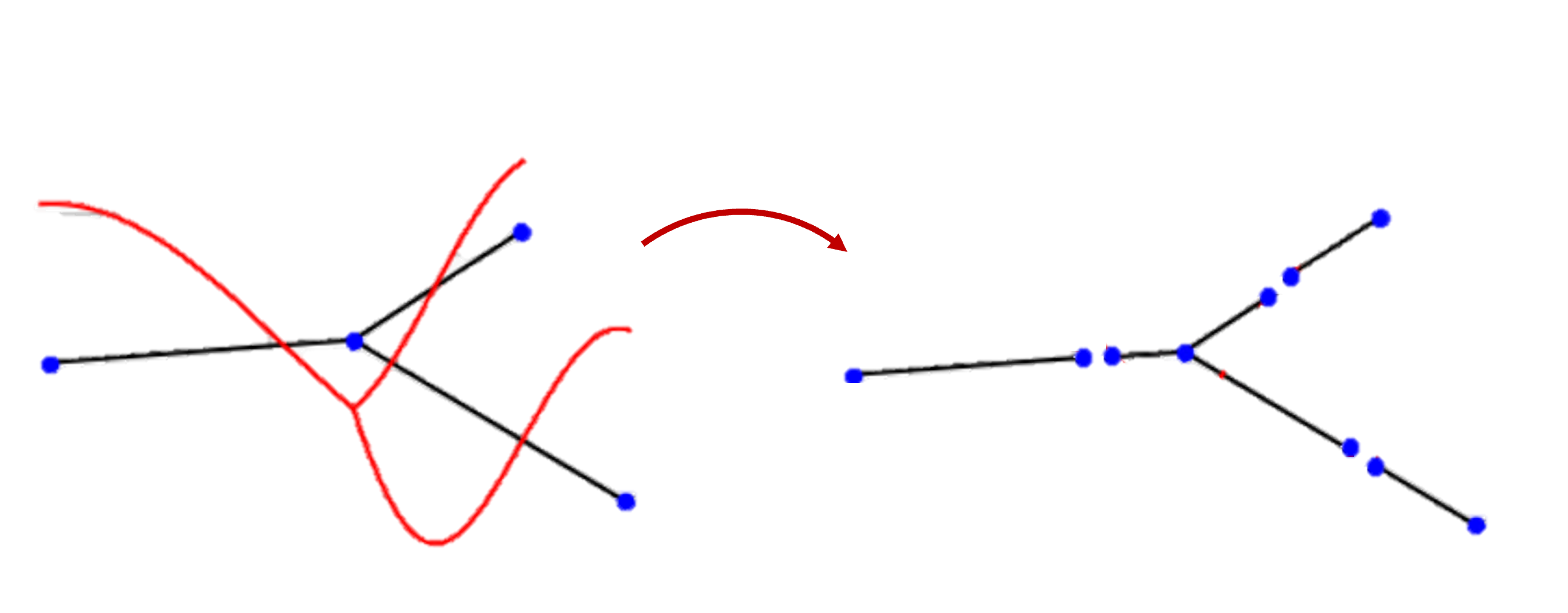}

\caption{Nodal domains for an eigenfunction of a star graph.\label{fig:nodal-domains-2}}
\end{figure}

Under the genericity assumption, the following version of Courant's
nodal theorem (see \cite{Courant23}) holds for the graph setting:
\begin{thm*}[\cite{Gnutzmann2003},\cite{Ber_cmp08}]
The following bounds hold:
\begin{equation}
0\leq\mathcal{D}_{\infty}\left(f_{n}\right)\leq\beta_{\Gamma},\label{eq:-25}
\end{equation}
where $\beta_{\Gamma}$ is the first Betti number of the graph, $\beta_{\Gamma}=E-V+1$.
\end{thm*}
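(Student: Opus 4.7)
The plan is to prove the two inequalities $0 \leq \mathcal{D}_{\infty}(f_n)$ and $\mathcal{D}_{\infty}(f_n) \leq \beta_{\Gamma}$ separately. The first is a direct adaptation of Courant's classical minimax argument to the graph setting; the second combines an elementary Euler-characteristic count with a Sturm-type lower bound on the number of nodal points, which is the real content of the theorem.

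For the lower bound $\nu_{\infty}(f_n) \leq n$, I follow Courant's minimax scheme. Set $\nu := \nu_{\infty}(f_n)$ with nodal domains $\Omega_1,\ldots,\Omega_\nu$, and define $\psi_i := f_n \cdot \mathbf{1}_{\Omega_i}$. Under the genericity assumption the nodal set is a finite collection of degree-two points where $f_n$ vanishes, and $f_n$ does not vanish at any vertex of degree larger than two; hence each $\psi_i$ is continuous on $\Gamma$ and sits in the Neumann-Kirchhoff form domain inside $H^1(\Gamma)$. The $\psi_i$ are mutually $L^2$-orthogonal (disjoint support), and each satisfies $\int_{\Gamma}|\psi_i'|^2 = \lambda_n \|\psi_i\|_{L^2}^{2}$. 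Assume for contradiction that $\nu \geq n+1$, and pick a nonzero linear combination $\psi = \sum_{i=1}^{n+1} c_i \psi_i$ orthogonal to $f_1,\ldots,f_{n-1}$, which exists since $n+1$ coefficients are subject to only $n-1$ linear constraints. This $\psi$ has Rayleigh quotient exactly $\lambda_n$, so by the variational characterisation it must be a $\lambda_n$-eigenfunction; since $\lambda_n$ is simple, $\psi$ is a scalar multiple of $f_n$. But by construction $\psi$ vanishes identically on at least one nodal domain, contradicting unique continuation for $f_n$. Thus $\nu_{\infty}(f_n) \leq n$.

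For the upper bound $\mathcal{D}_{\infty}(f_n) \leq \beta_{\Gamma}$ I combine two ingredients. The first is a purely topological count. Subdividing each of the $\phi := \phi_{\infty}(f_n)$ nodal points into a new degree-two vertex yields a metric graph $\tilde{\Gamma}$ with first Betti number still equal to $\beta_{\Gamma}$, and the nodal domains of $f_n$ are exactly the connected components of $\tilde{\Gamma}$ after deleting these $\phi$ vertices. Removing a degree-two vertex from a connected graph either disconnects it into one additional component (and preserves the cycle rank) or keeps it connected (and drops the cycle rank by one). Since the final cycle rank is non-negative, at most $\beta_{\Gamma}$ removals can be of the second type, so
\[
\nu_{\infty}(f_n) \;\geq\; 1 + \phi_{\infty}(f_n) - \beta_{\Gamma}.
\]
The second ingredient is the Sturm-type lower bound $\phi_{\infty}(f_n) \geq n-1$, which I discuss below. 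Substituting it into the displayed inequality immediately gives $\nu_{\infty}(f_n) \geq n - \beta_{\Gamma}$, as claimed.

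The main obstacle is precisely the lower bound $\phi_{\infty}(f_n) \geq n-1$. For a tree this is a direct induction on the number of edges from the classical Sturm oscillation theorem on intervals, glued together by continuity and current conservation at the vertices. For $\beta_{\Gamma} > 0$ I would follow one of two standard routes. Route (a) is the magnetic-perturbation approach: view $\lambda_n$ as a smooth function on the $\beta_{\Gamma}$-dimensional torus of magnetic fluxes, use time-reversal symmetry to see that the zero-flux point is a critical point, and invoke the nodal-magnetic theorem recalled in Section~\ref{sec:Introduction} to identify the Morse index at this critical point with the nodal surplus $\phi_{\infty}(f_n) - (n-1)$; its non-negativity is then automatic. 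Route (b) is an edge-surgery induction: cut $\beta_{\Gamma}$ edges lying on independent cycles to reduce to a tree, and use Dirichlet/Neumann eigenvalue interlacing to control how $n$ and $\phi_{\infty}(f_n)$ change under each cut. Either route completes the hard step; the Courant argument and the Euler-characteristic count are essentially bookkeeping once it is in hand.
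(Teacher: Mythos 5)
This theorem is quoted in the thesis from \cite{Gnutzmann2003,Ber_cmp08} and is not proved there, so there is no internal proof to compare against; I am judging your argument on its own merits. Your architecture is sound. The Courant-type minimax step for $\nu_{\infty}\left(f_{n}\right)\leq n$ is the right tool, and note that on a metric graph you do not even need (the generally delicate) unique continuation: once $\psi=cf_{n}$ with $c\neq0$ and $\psi$ vanishes identically on some nodal domain, this directly contradicts the fact that $f_{n}\neq0$ there. Your topological count $\nu_{\infty}\left(f_{n}\right)\geq1+\phi_{\infty}\left(f_{n}\right)-\beta_{\Gamma}$ is also correct: each cut at a degree-two point raises the Euler characteristic by one, so it either adds a component or lowers the cycle rank, and the latter can happen at most $\beta_{\Gamma}$ times. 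One small slip: you form $\psi=\sum_{i=1}^{n+1}c_{i}\psi_{i}$ subject to only the $n-1$ orthogonality constraints and then assert that $\psi$ ``by construction'' vanishes on some nodal domain; if $\nu_{\infty}\left(f_{n}\right)=n+1$ and all $c_{i}\neq0$ this is false as stated. Either use only $\psi_{1},\dots,\psi_{n}$ (so $\psi$ vanishes on $\Omega_{n+1}$ automatically), or use the at-least-two-dimensional solution space to impose $c_{n+1}=0$.

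The substantive gap is the step you yourself flag: the bound $\phi_{\infty}\left(f_{n}\right)\geq n-1$. This is not bookkeeping; it is the actual content of the cited theorem (essentially the main result of \cite{Ber_cmp08}), and neither of your routes amounts to a proof. Route (a) invokes the nodal--magnetic theorem, a result at least as deep as, and which immediately implies, the bound you want, so nothing is proved there beyond a citation. Route (b) is only a gesture: if you cut a cycle edge at a point where $f_{n}\neq0$, then $f_{n}$ is no longer an eigenfunction of the Neumann--Kirchhoff Laplacian on the cut graph, so before any interlacing argument can run you must specify which operator lives on the cut graph (e.g.\ $\delta$-type conditions matched to the logarithmic derivative of $f_{n}$ at the cut point, in the spirit of the $\delta_{s}$ family of this thesis) and then track simultaneously how the spectral position and the nodal count change under the cut; that is exactly where the difficulty of the theorem lives, and it is absent from the plan. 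As written, the proposal correctly proves the Courant half (modulo the small fix above) and correctly reduces the other half to $\phi_{\infty}\left(f_{n}\right)\geq n-1$, but it does not prove that key inequality.
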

A related quantity to the nodal count which has appeared in several
recent works (see \cite{AloBan_21ahp,AloBanBerEgg_lms20}) is the
so called Neumann count. Here, instead of partitioning $\Gamma$ at
the points on which $f=0$, one instead partitions the graph according
to the points on which $f'=0$. These points are known as Neumann
points, and the corresponding domains are known as Neumann domains.

Note that nodal points can be considered as points such that the corresponding
eigenfunction satisfies the Dirichlet condition. Similarly, Neumann
points are points such that the eigenfunction satisfies the Neumann
condition. These definition suggest towards a natural generalization,
which we shall call Robin points.
\begin{defn}
\label{def:s-points}Given $s\in\mathbb{R}$, we say that an eigenfunction
$f$ of the Neumann-Kirchhoff Laplacian satisfies the \emph{Robin condition}
at some point $x$ (of degree one or two) with Robin parameter $s$,
if $f$ satisfies
\begin{equation}
f'\left(x\right)=sf\left(x\right),\label{eq:-26}
\end{equation}
where the derivative is taken with respect to the natural orientation
induced by the parametrization of the corresponding edge as $\left[0,\ell_{e}\right]$.

The case $s=0$ corresponds to the Neumann condition. One can also
consider the Dirichlet condition as a particular case, by allowing
the value $s=\pm\infty$. Thus, from now on, we consider $s$ as a
parameter on the one point compactification of $\mathbb{R}$:
\begin{equation}
s\in\RR:=\mathbb{R}\cup\left\{ \infty\right\} .\label{eq:-27}
\end{equation}

By varying the value of $s$, this allows us to simultaneously consider
Neumann points ($s=0$), nodal points ($s=\infty$), and a much wider
collection of points. These Robin points will usually be referred
to as \emph{s points}, and we say that $f$ satisfies the \emph{s condition}
at these given points.
\end{defn}

\begin{rem}
Note that this definition only corresponds to vertices of degree one
or two. This agrees with the definition of nodal and Neumann points
under the genericity assumption. We do not define the notion of $s$
points for vertices of higher degree.
\end{rem}

\begin{rem}
Recall that these definitions rely on a fixed choice of orientation
on the edges of $\Gamma$. Mainly, for $s\neq0,\infty$, if one reverses
the orientation of an edge, then one has to replace the value $s$
with the value $-s$. For nodal/Neumann points, this does not matter.
\end{rem}

One can then define the $s$ set of $f_{n}$ analogously to the case
of domains:
\begin{equation}
\mathcal{S}_{s}\left(f_{n}\right)=\left\{ x\in\Gamma:f_{n}'\left(x\right)=sf_{n}\left(x\right)\right\} ,\label{eq:-28}
\end{equation}
where once again we allow the special case $s=\infty$. Similarly,
we denote the number of $s$ points by $\phi_{s}\left(f_{n}\right)$.
Moreover, if we partition our graph using the points of $\mathcal{S}_{s}\left(f_{n}\right)$,
we denote the corresponding number of $s$ domains by $\nu_{s}\left(f_{n}\right)$,
and the $s$ deficiency by
\begin{equation}
\mathcal{D}_{s}\left(f_{n}\right):=n-\nu_{s}\left(f_{n}\right).\label{eq:-29}
\end{equation}

Note that taking $s=\infty$ gives all the usual definitions of nodal
domains, and $s=0$ corresponds to Neumann domains.

The $s$ condition can also be written in terms of \emph{Prüfer angles};
By writing $s=\cot\left(\alpha\right)$ for $\alpha\in\left[0,\pi\right]$
(where $\alpha=0,\pi$ both correspond to the value $s=\infty$),
one may replace the parameter space $\RR$ with $\left[0,\pi\right]/\left\{ 0,\pi\right\} \cong S^{1}$.
These Prüfer angles will be used to define the $\delta_{s}$ family
below.

\bigskip

\subsection{The $\delta_{s}$ family of Hamiltonians\label{subsec:delta-s-definition}}

For the remainder of this work, we fix some orientation on the edges
of the graph, so that for every vertex of degree two, we have an ingoing
and outgoing edge. With this orientation in mind, we denote the two
sided values of a function $f$ at a vertex $v$ of degree two by
$f_{1,2}\left(v\right)$. Correspondingly, we denote the derivative
of $f_{2}$ at $v$ oriented into the \textbf{edge} by $f_{2}'\left(v\right)$
and the derivative of $f_{1}$ at $v$ oriented into the \textbf{vertex}
by $f_{1}'\left(v\right)$ (see Figure \ref{fig:Orientation}).

\begin{figure}
\includegraphics[scale=0.8]{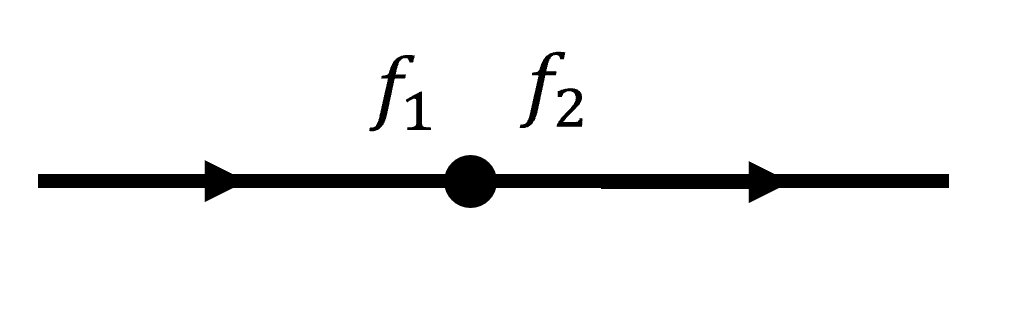}

\caption[Orientation on the edges.]{The orientation on the edges determines $f_{1},f_{2}$. $f_{1}'$
is directed into the vertex, while $f_{2}'$ is directed into the
edge. \label{fig:Orientation}}
\end{figure}

Throughout this work, we treat interior points of edges as artificial
vertices of degree two as well (and make no distinction between the
two).

Fix $s\in\RR$ and the corresponding Prüfer angle $\alpha$. Given
a subset $B\subset\mathcal{V}$ of vertices of degree two, we define
the following trace maps:
\begin{align}
 & \gamma_{1}^{s},\gamma_{1}^{s\star}\left(f\right):H^{2}\left(\Gamma\right)\rightarrow\ell^{2}\left(B\right)\label{eq:-75}\\
 & \left(\begin{array}{c}
\gamma_{1}^{s}\left(f\right)\left(v\right)\\
\gamma_{1}^{s\star}\left(f\right)\left(v\right)
\end{array}\right)=\left(\begin{array}{cc}
\cos\left(\alpha\right) & -\sin\left(\alpha\right)\\
\sin\left(\alpha\right) & \cos\left(\alpha\right)
\end{array}\right)\left(\begin{array}{c}
f_{1}\left(v\right)\\
f_{1}'\left(v\right)
\end{array}\right)\,\forall v\in B,\label{eq:-76}
\end{align}

and similarly define $\gamma_{2}^{s},\gamma_{2}^{s\star}$ with $f_{2}$.
We can heuristically think of the vector $\left(\gamma_{i}^{s},\gamma_{i}^{s\star}\right)$
at each vertex $v\in B$ as the rotation of the vector $\left(f_{i}\left(v\right),f_{i}'\left(v\right)\right)$
by the appropriate Prüfer angle.
\begin{example}
For $i=1,2$, $\gamma_{i}^{\infty}$ (which correspond to $\alpha=0$)
are the usual Dirichlet traces, while $\gamma_{i}^{0}$ (which correspond
to $\alpha=\frac{\pi}{2}$) are the Neumann traces. Moreover, $\gamma_{i}^{\infty\star}$
and $\gamma_{i}^{0\star}$ are the Neumann and Dirichlet traces, respectively.
\end{example}

The definition above can be extended so that $B$ may contain vertices
of degree one. If $v$ is of degree one, then only two of the four
trace maps above are naturally defined ($\gamma_{i}^{s},\gamma_{i}^{s\star}$
for some $i\in\left\{ 1,2\right\} $). In this case, for $j\neq i$,
we use the convention $\gamma_{j}^{s}=\gamma_{i}^{s}$ and $\gamma_{j}^{s\star}=0$.

From Definition \ref{def:s-points} we see that the $s$ points of
a function $f$ are characterized by $\gamma_{1}^{s}\left(f\right)=\gamma_{2}^{s}\left(f\right)=0$.
In the case where the vertex $v$ is of degree one or that the values
of the two sided traces agree on $v$, we simply denote
\begin{align}
 & \gamma^{s}:=\gamma_{1}^{s}=\gamma_{2}^{s}.\label{eq:-95}
\end{align}

Now, let $B\subset\mathcal{V}$ be a subset of vertices of degree
two as above. We define a one-parameter family of self-adjoint operators
$\left(H^{s}\left(t\right)\right)_{t\in\RR}$ by setting $H^{s}\left(t\right)=-\frac{d^{2}}{dx^{2}}$,
with domain consisting of functions $f\in H^{2}\left(\Gamma\right)$
satisfying the Neumann-Kirchhoff condition at $\mathcal{V}\backslash B$,
and satisfying the following vertex condition at $B$:
\begin{align}
 & \gamma_{1}^{s}\left(f\right)=\gamma_{2}^{s}\left(f\right),\label{eq:-6-1-1}\\
 & \gamma_{2}^{s\star}\left(f\right)-\gamma_{1}^{s\star}\left(f\right)=t\gamma^{s}\left(f\right),\label{eq:-7-1-1}
\end{align}
where the case $t=\infty$ is interpreted as satisfying the $s$ condition
at $B$:
\begin{equation}
\gamma_{1}^{s}\left(f\right)=\gamma_{2}^{s}\left(f\right)=0.\label{eq:-5-1}
\end{equation}

One can verify that $H^{s}\left(0\right)$ is simply the Neumann-Kirchhoff
Laplacian for all $s$, and so we sometimes denote $H_{0}:=H^{s}\left(0\right)$.
\begin{defn}
For fixed $s$, we call $\left(H^{s}\left(t\right)\right)_{t\in\RR}$
the $\delta_{s}$ family. 
\end{defn}

\begin{example}
For $s=\infty$, the $\delta_{\infty}$ family is exactly the $\delta$
type condition which was used to define the Robin-Neumann gap. For
$s=0$, the $\delta_{0}$ family is exactly the $\delta'$ type condition.
\end{example}

\bigskip

\subsection{The Robin map\label{subsec:DTN-definition}}

In this subsection we define the \emph{Robin map}, which serves as
a generalization to the two sided Dirichlet to Neumann map presented
in \cite{BerCoxMar_lmp19,Berkolaiko2022}.

Let $B\subset\mathcal{V}$ be a subset of vertices of degree one or
two. We usually take $B$ to be the $s$ set of some fixed eigenfunction.
The points in $B$ partition $\Gamma$ into subgraphs $\left\{ D_{i}\right\} _{i=1}^{n}$,
each one with a corresponding boundary $\left\{ \partial D_{i}\right\} _{i=1}^{n}$.

Denote $B_{i}=\partial D_{i}\cap B$. We define a collection of weights
$\chi_{i}:B_{i}\rightarrow\left\{ \pm1\right\} $, such that $\chi_{i}\left(v\right)=1$
if the orientation on $D_{i}$ points into $v$, and $\chi_{i}\left(v\right)=-1$
if the orientation on $D_{i}$ points out of $v$ (see Figure \ref{fig:chimap}).

\begin{figure}
\includegraphics[scale=0.7]{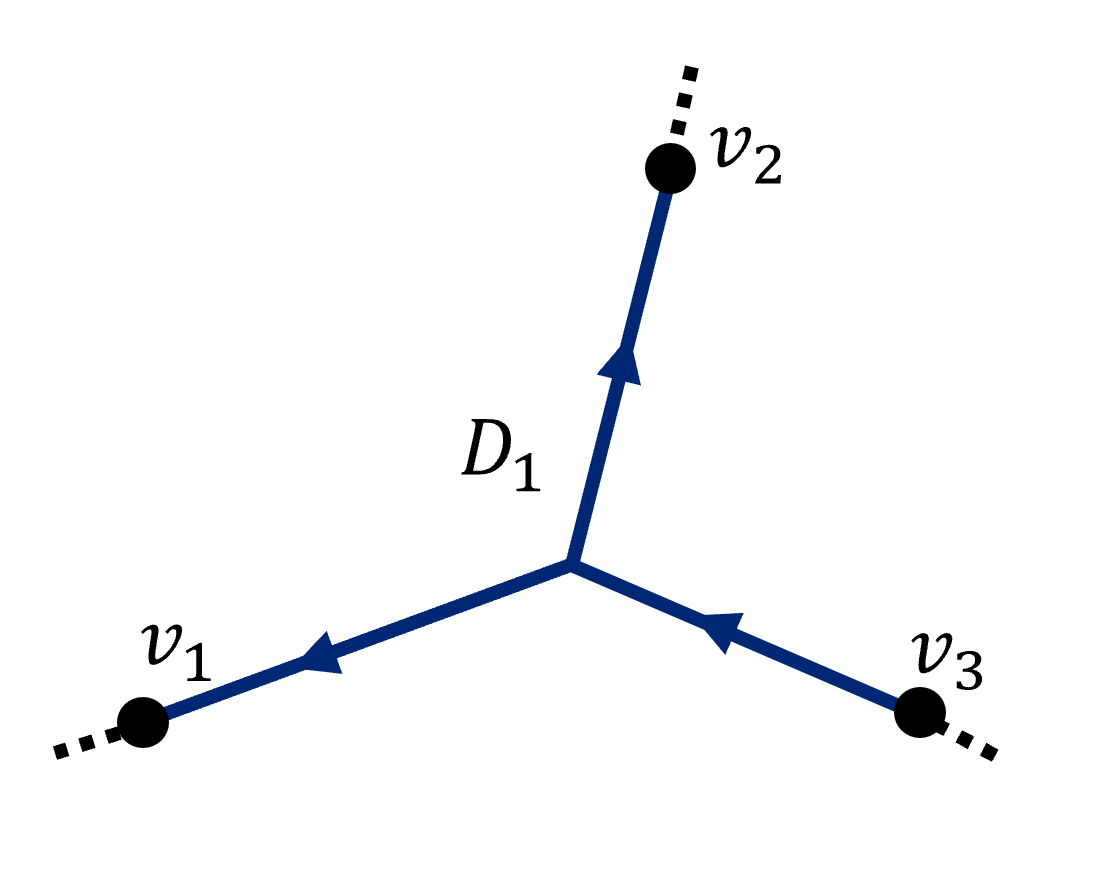}

\caption[Demonstration of the map $\chi_{D_{1}}$.]{Demonstration of the map $\chi_{D_{1}}$ according to the orientation
on the domain $D_{1}$ (The dashed lines are additional edges of the
graph). Here, $\chi_{D_{1}}\left(v_{1}\right)=\chi_{D_{1}}\left(v_{2}\right)=1$,
and $\chi_{D_{1}}\left(v_{3}\right)=-1$. \label{fig:chimap}}
\end{figure}

Fix $s\in\overline{\mathbb{R}}$ and $c\notin\spec{H^{s}\left(\infty\right)}$.
Let us first focus on defining the Robin map $\Lambda_{s}\left(c\right)$
on a single domain $D_{i}$ (which comes with its weight function
$\chi_{D_{i}}$), and then define it globally. More precisely, we
define a map $\Lambda_{s}^{D_{i}}$ which acts on the finite dimensional
vector space $\ell^{2}\left(B_{i}\right)$. 

For a vector $w\in\ell^{2}\left(B_{i}\right)$, the boundary value
problem
\begin{align}
 & -\frac{\partial^{2}u}{\partial x^{2}}=cu,\,\text{on }D_{i},\label{eq:-13-1}\\
 & \gamma^{s}\left(u\right)=w\,\text{on }B_{i},\label{eq:-14-1}\\
 & u\text{ satisfies the Neumann-Kirchhoff condition (\ref{enu:NK}) at }\partial D_{i}\backslash B_{i},
\end{align}
has a unique solution $u\in H^{2}\left(D_{i}\right)$, and we thus
let
\begin{equation}
\Lambda_{s}^{D_{i}}\left(c\right)w=\gamma^{s\star}\left(u\right).\label{eq:-15-1}
\end{equation}

In other words, $\Lambda_{s}^{D_{i}}\left(c\right)$ sends the $\gamma^{s}$
data of $u$ to the $\gamma^{s\star}$ data of $u$.

For each domain $D_{i}$, $\ell^{2}\left(B_{i}\right)$ can be identified
as a subspace of $\ell^{2}\left(B\right)\cong\mathbb{C}^{\left|B\right|}$.
If two domains $D_{i},D_{j}$ have a common boundary point, then $\ell^{2}\left(B_{i}\right)\cap\ell^{2}\left(B_{j}\right)\neq\left\{ 0\right\} $.

With this identification in mind, we can define the combined Robin
map $\Lambda_{s}\left(c\right)$ by the following block form:
\begin{equation}
\Lambda_{s}\left(c\right):=\sum_{D_{i}}\chi_{D_{i}}\Lambda_{s}^{D_{i}}\left(c\right).\label{eq:-79}
\end{equation}

This gives the following effective formula for the Robin map:
\begin{equation}
\Lambda_{s}\left(c\right)\gamma^{s}\left(f\right)=\gamma_{1}^{s\star}\left(f\right)-\gamma_{2}^{s\star}\left(f\right),\label{eq:-16-1}
\end{equation}
where the minus sign is due to the weights $\chi_{D}$ (see Figure
\ref{fig:DTNdomains}). This formula holds for vertices of degree
one or two, while keeping in mind that in the case of vertices of
degree one, one of the traces in the right hand side is zero by convention.

The motivation for adding the weights will become clear in Lemma \ref{lem:DTN-correspondence},
which shows the connection between the Robin map and the $\delta_{s}$
family (as already hinted by Equation (\ref{eq:-7-1-1})).

\begin{figure}
\includegraphics[scale=0.65]{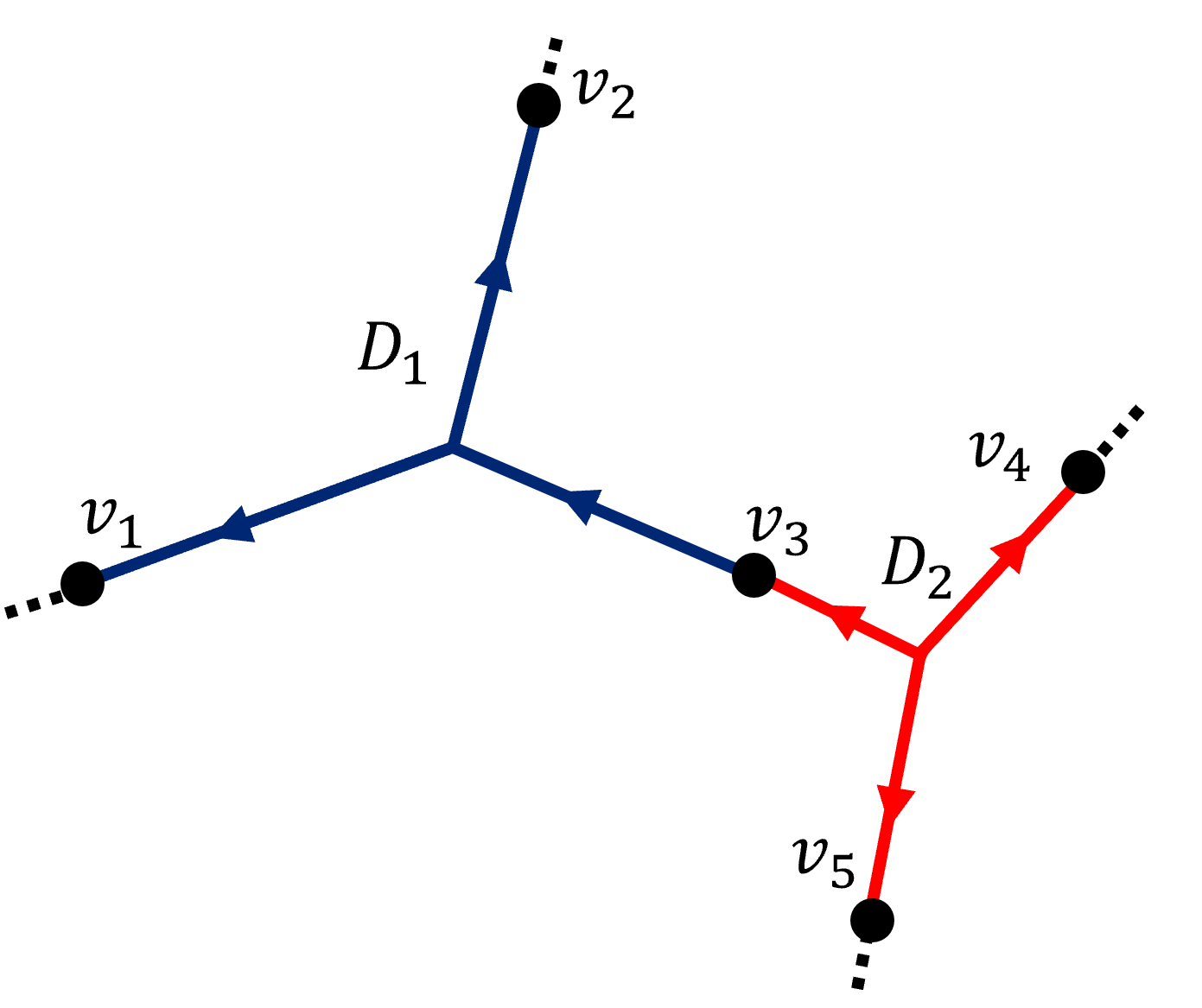}

\caption[Construction of $\Lambda_{s}\left(c\right)$ for a graph consisting
of two domains.]{Construction of $\Lambda_{s}\left(c\right)$ for a graph consisting
of two domains -- $D_{1},D_{2}$ (The dashed lines are additional
edges of the graph). Here, $B=\left\{ v_{1},...,v_{5}\right\} $,
$B_{1}=\left\{ v_{1},v_{2},v_{3}\right\} $, $B_{2}=\left\{ v_{3},v_{4,},v_{5}\right\} $.
The two domains share a common vertex $v_{3}$. The maps $\Lambda_{s}^{D_{1}}\left(c\right)$
and $\Lambda_{s}^{D_{2}}\left(c\right)$ are defined on different
three-dimensional subspaces of $\ell^{2}\left(B\right)\protect\cong\mathbb{C}^{5}$,
whose intersection is one-dimensional (and can be thought of as $\ell^{2}\left(v_{3}\right)$).
Since $\chi_{D_{1}}\left(v_{3}\right)=-1$ and $\chi_{D_{2}}\left(v_{3}\right)=1$,
we get the expression for $\Lambda_{s}\left(c\right)$ as in Equation
(\ref{eq:-16-1}). \label{fig:DTNdomains}}
\end{figure}

\begin{example}
For $s=\infty$, we get that at each $v\in B$:
\begin{align}
 & \gamma^{\infty}\left(f\right)\left(v\right)=f\left(v\right),\gamma^{\infty\star}\left(f\right)\left(v\right)=f'\left(v\right)\label{eq:-97}\\
\Rightarrow & \Lambda_{\infty}\left(c\right)f\left(v\right)=f_{1}'\left(v\right)-f_{2}'\left(v\right).\label{eq:-17-1}
\end{align}
\end{example}

which is exactly the two sided Dirichlet to Neumann map which appears
in (\ref{eq:-84}).
\begin{rem}
\label{rem:DTN-matrix}Since the Robin map acts on $\ell^{2}\left(B\right)$,
then it can be simply thought of as a square matrix of size $\left|B\right|$.
A concrete computation of such a matrix is presented in Appendix \ref{sec:DTN-comp}.
\end{rem}

\bigskip

\subsection{The spectral flow\label{subsec:SF-definition}}

Given a one-parameter family of Hamiltonians $\left(H\left(t\right)\right)_{t\in I}$
(where $I=\left[a,b\right]$ is some closed interval), whose spectral
curves depend continuously on the parameter $t$, one can define a
quantity known as the \emph{spectral flow}. Colloquially, this quantity
captures the number of oriented intersections of spectral curves with
respect to a given horizontal cross section along $I$ (see Figure
\ref{fig:The-spectral-flow}).

More precisely, given $\lambda\in\mathbb{R}$ and a partition $a=t_{0}<t_{1}<...<t_{N}=b$,
and $N$ intervals $\left(\left[a_{l},b_{l}\right]\right)_{l=1}^{N}$
with $a_{l}<\lambda<b_{l}$ such that
\begin{equation}
a_{l},b_{l}\notin\spec{H\left(t\right)}_{t=a}^{b},\forall t\in\left[t_{l-1},t_{l}\right],1\leq l\leq N,\label{eq:-30}
\end{equation}
we define the spectral flow through $\lambda$ by
\begin{equation}
\sf{\lambda}\left(I\right)=\sum_{l=1}^{N}\sum_{a_{l}\leq\alpha<\lambda}\left(\dim\ker\left(H\left(t_{l-1}\right)-\alpha\right)-\dim\ker\left(H\left(t_{l}\right)-\alpha\right)\right).\label{eq:-31}
\end{equation}

\begin{figure}
\includegraphics[scale=0.7]{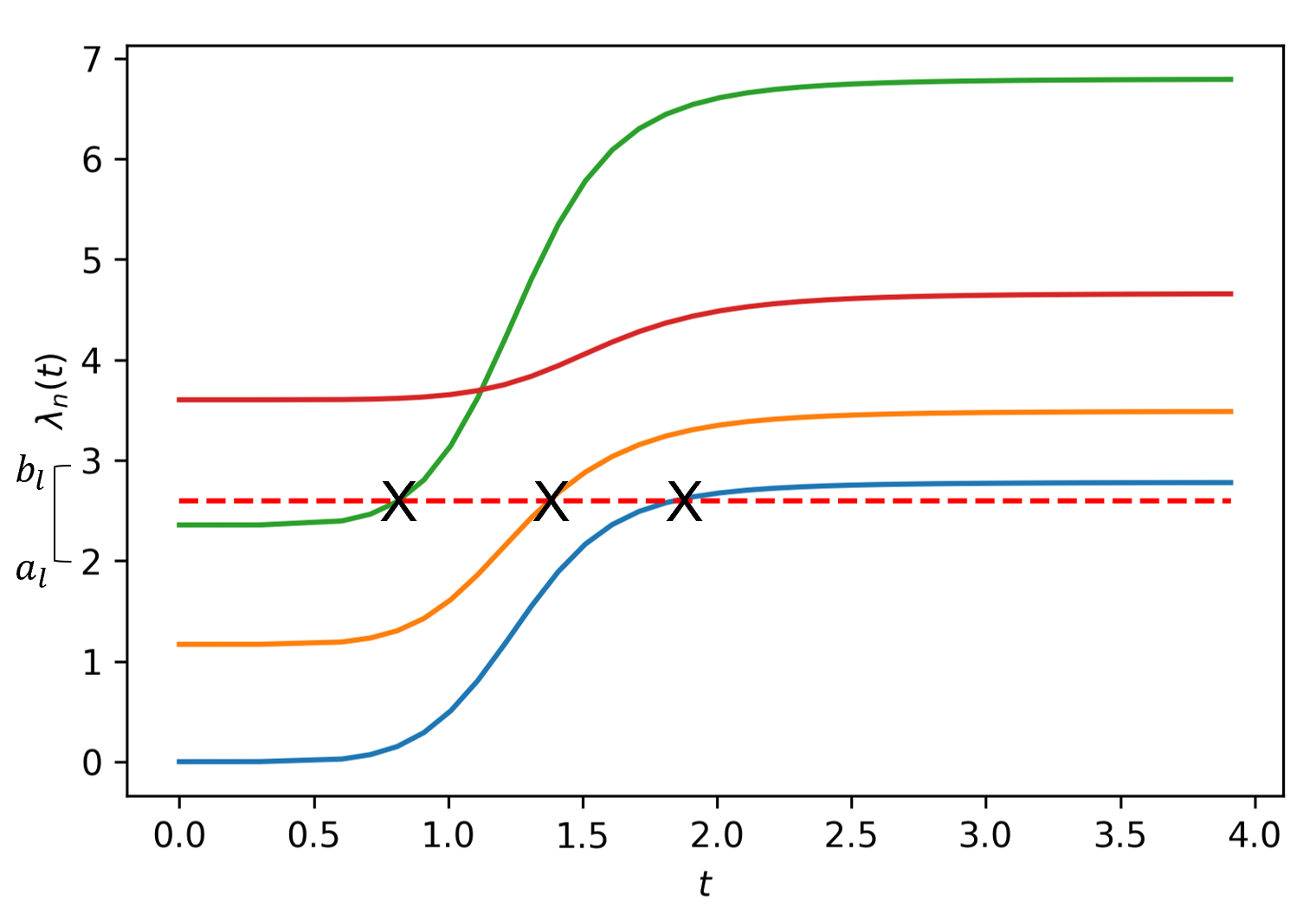}

\caption[The spectral flow.]{The spectral flow of the spectral curves through the horizontal cross
section $\lambda=2.5$. Here, $I=\left[0,4\right]$. One can choose,
for instance, a partition with $t_{l}=0.5l$ (as given by the hatch
marks on the horizontal axis) and $\left[a_{l},b_{l}\right]=\left[2,3\right],\forall l\in\left\{ 0,...,8\right\} $.
The marked intersections inside $\left[0.5,1\right]$, $\left[1,1.5\right]$
and $\left[1.5,2\right]$ give total spectral flow of three. \label{fig:The-spectral-flow}}
\end{figure}

One can show that the spectral flow does not depend on the choice
of partition, see e.g. \cite{B.BoossBavnbek2018}.

A nice recent work which lays the groundwork for the spectral flow
machinery in the context of quantum graphs is given in \cite{LatSuk_ams20}.

\newpage{}

\section{Statement of Main Results \label{sec:Main-Results}}

\subsection{Results concerning the Robin-Neumann gap}
\begin{thm}
\label{thm:1.RNG-Lipschitz}1. The sequence of functions $d_{n}\left(\sigma\right)$
is Lipschitz continuous in $\sigma\in\mathbb{R}$ with a uniform Lipschitz
constant.

2. For any compact set $A\subset\mathbb{R}$, the sequence of functions
$d_{n}\left(\sigma\right)$ is uniformly bounded on $A$, and there
exists a subsequence of functions $d_{n_{k}}\left(\sigma\right)$
which converges uniformly on $A$.
\end{thm}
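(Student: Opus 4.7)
The plan is to establish a pointwise derivative bound $\left|\frac{d}{d\sigma}\lambda_n(\sigma)\right|\leq C$ holding uniformly in $n$ (for $\sigma$ in a prescribed compact range), and then to read off both the Lipschitz continuity of part 1 and the uniform boundedness of part 2 from this single input by integration. The existence of a uniformly convergent subsequence will then be a direct application of Arzela--Ascoli to the equicontinuous, uniformly bounded family $(d_n|_A)_{n\in\N}$.

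The derivative itself is computed by the standard Hellmann--Feynman identity. The quadratic form $Q_\sigma(f)=\|f'\|_{L^2(\Gamma)}^2+\sigma\sum_{v\in\VR}|f(v)|^2$ depends real-analytically on $\sigma$, so by Rellich's analytic perturbation theory the eigenvalue curves $\lambda_n(\sigma)$ admit real-analytic branches away from a discrete set of crossings, with real-analytic $L^2$-normalized eigenfunctions $f_n(\sigma,\cdot)$. Differentiating $Q_\sigma(f_n)=\lambda_n$ and using $\|f_n\|_{L^2(\Gamma)}=1$ together with stationarity of the Rayleigh quotient yields
\[
\frac{d\lambda_n}{d\sigma}\;=\;\sum_{v\in\VR}|f_n(\sigma,v)|^2.
\]
At the isolated crossings $\lambda_n(\sigma)$ is still continuous, and comparing left/right derivatives along the neighboring analytic branches extends any uniform Lipschitz estimate across them.

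The heart of the proof is therefore the uniform vertex bound
\[
\sum_{v\in\VR}|f_n(\sigma,v)|^2\;\leq\;C_\Gamma,
\]
with $C_\Gamma$ depending only on the graph. The plan is to exploit the explicit sinusoidal form of eigenfunctions. For $\lambda_n=k^2>0$, on each edge $e$ of length $\ell_e$ one has $f|_e(x)=A_e\cos(kx)+B_e\sin(kx)$, and the Pythagorean identity $k^2|f|^2+|f'|^2\equiv k^2(A_e^2+B_e^2)$ gives $|f(v)|^2\leq A_e^2+B_e^2$ at both endpoints. A direct integration yields $\|f\|_{L^2(e)}^2\geq(A_e^2+B_e^2)\bigl(\tfrac{\ell_e}{2}-\tfrac{C}{k}\bigr)$, so once $k$ exceeds some $k_0$ depending only on the minimum edge length one has $A_e^2+B_e^2\leq C'\|f\|_{L^2(e)}^2/\ell_e$. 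Summing over the edges meeting each $v\in\VR$ and using $\|f_n\|_{L^2(\Gamma)}=1$ absorbs everything into a graph-intrinsic constant. The finitely many remaining low-frequency eigenvalues, and any negative eigenvalues appearing for $\sigma<0$, are handled by a separate finite-dimensional compactness argument on the prescribed compact $\sigma$-range, and their contribution is absorbed into $C_\Gamma$.

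With the pointwise derivative bound in hand the rest is mechanical. Integrating over $[\sigma_1,\sigma_2]$ yields the uniform Lipschitz estimate $|d_n(\sigma_1)-d_n(\sigma_2)|\leq C\,|\sigma_1-\sigma_2|$; the normalization $d_n(0)=0$ converts this into the uniform bound $|d_n(\sigma)|\leq C\sup_{\sigma\in A}|\sigma|$; and Arzela--Ascoli applied to the equicontinuous, uniformly bounded family $(d_n|_A)_{n\in\N}$ supplies the uniformly convergent subsequence. The main obstacle is the uniform vertex bound of the previous paragraph: the Pythagorean argument degrades both for the low-lying part of the spectrum and, more seriously, as $\sigma\to-\infty$, where the lowest eigenvalues drift to $-\infty$ with exponential concentration of mass at the Robin vertices. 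Handling this finite-dimensional remainder cleanly, and tracking the precise dependence of $C_\Gamma$ on the compact $\sigma$-range, is the most delicate point of the argument.
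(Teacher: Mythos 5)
Your route is the same as the paper's: the Hadamard/Hellmann--Feynman identity (Lemma \ref{lem:hadamard}) reduces both parts to a bound on $\sum_{v\in\VR}\left|f_{n}^{\left(t\right)}\left(v\right)\right|^{2}$ uniform in $n$ and $t$; the high-frequency part of that bound is obtained exactly as you propose, by writing the eigenfunction explicitly on each edge and observing that the $L^{2}$ normalization dominates the squared amplitudes up to an $O\left(1/k\right)$ error (this is Equation (\ref{eq:-9}) in the proof of Lemma \ref{lem:bdd}); and the conclusion (integrate the derivative bound, then Arzel\`a--Ascoli on a compact set $A$) is identical, including the treatment of eigenvalue crossings via piecewise real analyticity.

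The gap is exactly where you flag it: the finitely many remaining branches. Part 1 of the theorem asserts one Lipschitz constant valid for all $\sigma\in\mathbb{R}$, so a ``finite-dimensional compactness argument on the prescribed compact $\sigma$-range'', with a constant depending on that range, would only yield local Lipschitz continuity --- enough for part 2, but not for part 1 as stated --- and you give no argument controlling $\left|f_{n}^{\left(t\right)}\left(v\right)\right|^{2}$ for the low branches as $t\rightarrow-\infty$, which is precisely the regime you worry about (attractive coupling, eigenfunctions concentrating at the Robin vertices). The paper closes this step differently: for each of the finitely many exceptional $n$ (those with $k_{n}<k_{*}$), the quantity $\left|f_{n}^{\left(t\right)}\left(v\right)\right|^{2}$ is a real-analytic function of $t$ which, by the convergence of the $\delta$ condition to the Dirichlet condition as $t\rightarrow\pm\infty$, is claimed to tend to zero at infinity and hence to be bounded on all of $\mathbb{R}$, with no dependence on a compact $\sigma$-range; this is what makes Lemma \ref{lem:bdd} a statement over the whole line. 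Note that your concern about concentration as $\sigma\rightarrow-\infty$ is substantive --- it is exactly the scenario that such a decay-at-infinity argument must rule out for the branches whose eigenvalues diverge to $-\infty$, and the paper treats the non-positive part of the spectrum only briefly --- so in your write-up it cannot simply be ``absorbed into $C_{\Gamma}$'' without proof: you must either supply an argument of the paper's type for the low branches or weaken the statement you prove to compact $\sigma$-intervals.
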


\begin{defn}
\label{def:Cesaro}Given a sequence of numbers $\left(a_{n}\right)_{n=1}^{\infty}$,
we define the Cesaro sum (or Cesaro mean) of $a_{n}$ as
\begin{equation}
\left\langle a\right\rangle _{n}=\lim_{N\rightarrow\infty}\frac{1}{N}\sum_{n=1}^{N}a_{n},\label{eq:-40}
\end{equation}
\end{defn}

assuming that the limit exists.

The next theorem concerns with the Cesaro mean of the Robin-Neumann
gap: 
\begin{thm}
\label{thm:RNG-mean}$\left\langle d\right\rangle _{n}\left(\sigma\right)$
exists for all $\sigma\in\mathbb{R}$ and satisfies
\begin{equation}
\left\langle d\right\rangle _{n}\left(\sigma\right)=\frac{2\sigma}{L}\sum_{v\in\VR}\frac{1}{\deg\left(v\right)}.\label{eq:}
\end{equation}
\end{thm}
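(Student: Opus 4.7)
The plan is to reduce the computation of $\langle d \rangle_n(\sigma)$ to an application of a local Weyl law (Theorem \ref{thm:Weyl-law}) combined with a Hellmann--Feynman type formula for the derivative of the eigenvalues with respect to the Robin parameter $\sigma$. The $\delta$ family admits a sesquilinear form description
\begin{equation*}
h_\sigma[f] = \int_\Gamma |f'|^2\,dx + \sigma \sum_{v\in\VR}|f(v)|^2,\quad f\in H^1(\Gamma),
\end{equation*}
so differentiating $\lambda_n(\sigma) = h_\sigma[f_n(\cdot;\sigma)]$ and using orthonormality of the eigenfunctions yields
\begin{equation*}
\frac{d\lambda_n}{d\sigma}(\sigma) = \sum_{v\in\VR}|f_n(v;\sigma)|^2
\end{equation*}
at every $\sigma$ at which $\lambda_n$ is a simple eigenvalue. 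Since the set of crossings is discrete and $d_n$ is continuous by Theorem \ref{thm:1.RNG-Lipschitz}, integrating on $[0,\sigma]$ and patching across crossings (with the branch ambiguity of Remark \ref{rem:ambiguity} absorbed into the summation over $n$) gives the key identity
\begin{equation*}
d_n(\sigma) = \int_0^\sigma \sum_{v\in\VR}|f_n(v;t)|^2 \, dt.
\end{equation*}

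Next I would average over $n$ and swap the finite sum with the integral to obtain
\begin{equation*}
\frac{1}{N}\sum_{n=1}^N d_n(\sigma) = \sum_{v\in\VR} \int_0^\sigma \left(\frac{1}{N}\sum_{n=1}^N |f_n(v;t)|^2\right) dt.
\end{equation*}
The local Weyl law (Theorem \ref{thm:Weyl-law}) states precisely that the inner Cesaro average converges to $\frac{2}{L\deg(v)}$ as $N\to\infty$. Because a $\delta$ perturbation is a compact form perturbation relative to the Neumann--Kirchhoff Laplacian, the high-energy Weyl asymptotics are insensitive to $t$, so this limit is attained for each $t\in[0,\sigma]$ with the same value.

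To conclude, I would pass to the limit $N\to\infty$ inside the integral. The justification is the uniform boundedness of $\frac{1}{N}\sum_{n=1}^N |f_n(v;t)|^2$ in both $N$ and $t$, which follows directly from the uniform Lipschitz and boundedness statements in Theorem \ref{thm:1.RNG-Lipschitz} together with the fact that the integrand is nonnegative and its integral against $t$ is exactly $\frac{1}{N}\sum_{n=1}^N d_n(\sigma)$. Dominated convergence then produces
\begin{equation*}
\langle d \rangle_n(\sigma) = \sum_{v\in\VR}\int_0^\sigma \frac{2}{L\deg(v)}\,dt = \frac{2\sigma}{L}\sum_{v\in\VR}\frac{1}{\deg(v)}.
\end{equation*}

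The main obstacle in this strategy is the local Weyl law itself; establishing $\frac{1}{N}\sum_{n=1}^N|f_n(v;t)|^2 \to \frac{2}{L\deg(v)}$ with sufficient uniformity in $t$ is the substantive analytic step, since it requires controlling the pointwise values of the eigenfunctions at the vertices through a careful analysis (for instance, via the secular manifold/scattering matrix formalism or via semiclassical trace estimates) rather than through the global $L^2$ normalization. The Hellmann--Feynman step, the interchange of sum and integral, and the handling of multiplicities via Cesaro averaging are by comparison routine once this trace asymptotic is in hand.
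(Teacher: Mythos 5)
Your overall architecture coincides with the paper's: the Hadamard/Hellmann--Feynman identity $d_n(\sigma)=\sum_{v\in\VR}\int_0^\sigma\left|f_n^{(t)}(v)\right|^2dt$ (Lemma \ref{lem:hadamard}), averaging over $n$, exchanging limit and integral, and feeding in the local Weyl law. The genuine gap is the step where you claim that the Cesaro average $\frac{1}{N}\sum_{n=1}^N\left|f_n(v;t)\right|^2$ converges to $\frac{2}{L\deg(v)}$ for every $t\in[0,\sigma]$ ``because a $\delta$ perturbation is a compact form perturbation, so the high-energy Weyl asymptotics are insensitive to $t$.'' Theorem \ref{thm:Weyl-law} is proved only for the Neumann--Kirchhoff eigenfunctions ($t=0$), and compactness (indeed finite rank) of the form perturbation controls eigenvalue shifts, not the vertex values of eigenfunctions; the perturbation sits exactly at $v$, and $\left|f_n^{(t)}(v)\right|^2$ genuinely depends on $t$ (it tends to $0$ as $t\to\pm\infty$), so a quantitative high-energy input is indispensable. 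The paper supplies it in Lemmas \ref{lem:eigenfunc-conv} and \ref{lem:eigmean-conv}: from the $k$-dependent bond scattering matrix one gets $\left\Vert S^{(t)}(k)-S^{(0)}(k)\right\Vert=O(t/k)$ uniformly for $t\in[0,\sigma]$, hence $\left|k_n^t-k_n^0\right|\to0$ and $\left|f_n^{(t)}(v)-f_n^{(0)}(v)\right|\to0$ uniformly in $t$, which together with the uniform bound on $\left|f_n^{(t)}(v)\right|^2$ (Lemma \ref{lem:bdd}) yields $\left\langle\left|f^{(t)}(v)\right|^2\right\rangle_n=\left\langle\left|f^{(0)}(v)\right|^2\right\rangle_n$; only then can the limit be taken inside the integral (your bounded-convergence argument is fine once this pointwise-in-$t$ convergence is in hand). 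You do flag this as ``the main obstacle'' and name the scattering/secular-manifold formalism, which is the right tool, but as written the proposal asserts rather than proves the decisive step.

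A secondary point: the secular-manifold/ergodic-theorem proof of the Weyl law requires rationally independent edge lengths; the paper removes this assumption afterwards by a continuity-in-edge-lengths argument (Subsection \ref{subsec:rational-proof}). If you carry out the $t$-uniform statement via the secular manifold, the same caveat applies to your argument as well.
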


\begin{figure}
\includegraphics[scale=0.43]{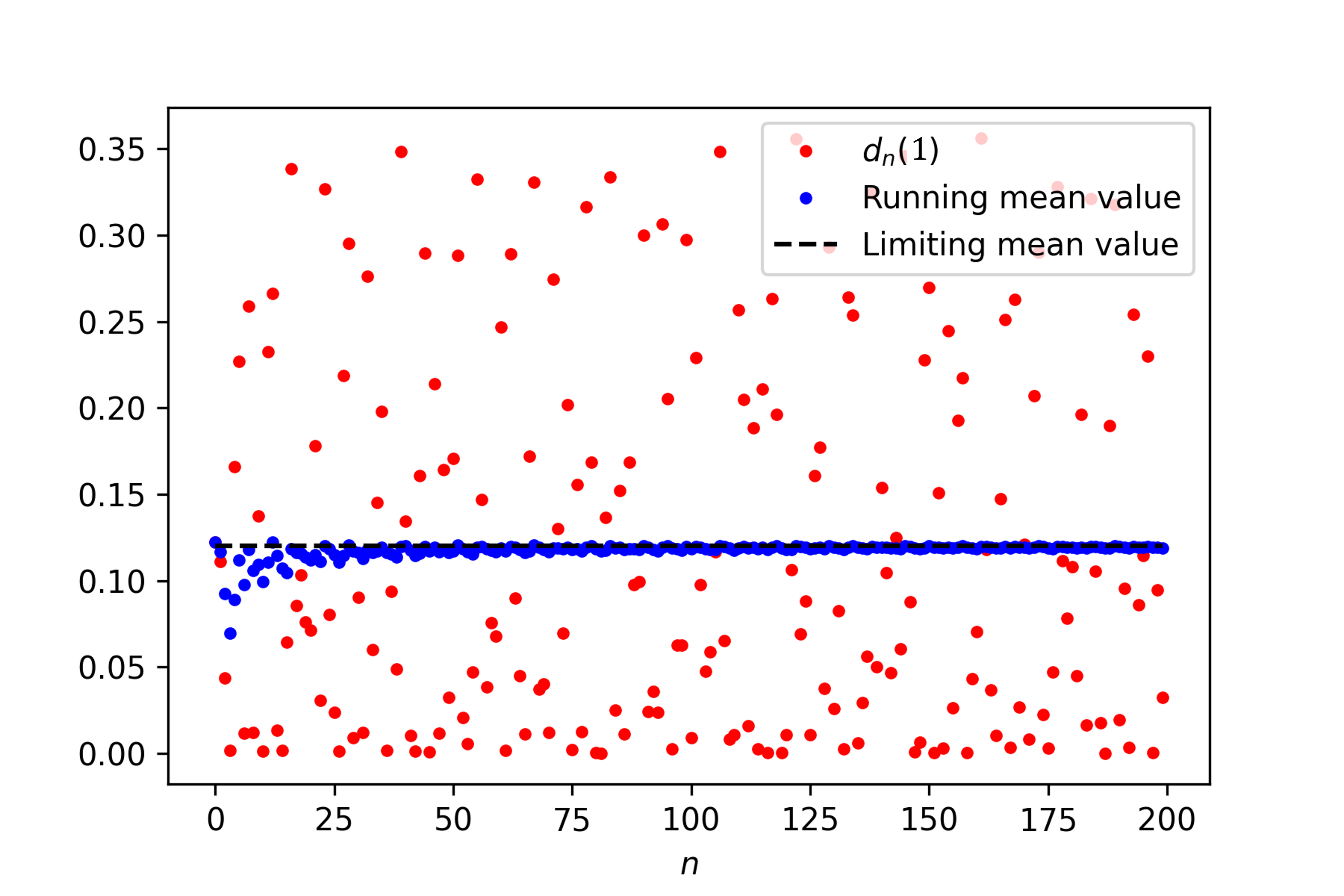}

\caption[Demonstration of Theorem \ref{thm:RNG-mean}.]{A demonstration of Theorem \ref{thm:RNG-mean}. For a star graph with
a single $\delta$ point at the central vertex, we plot the first
$200$ values of $d_{n}\left(\sigma\right)$ with $\sigma=1$ (red
points). We also plot the running mean value of the sequence (blue
points). The plot shows the rapid convergence of the mean value to
the limiting value $\frac{2}{L\cdot\deg\left(v\right)}$ (dashed black
line), as suggested by the theorem. One can also see that the sequence
is bounded, as suggested by Theorem \ref{thm:1.RNG-Lipschitz}.\label{fig:RNG-thm-demo}}
\end{figure}

For a demonstration of Theorems \ref{thm:1.RNG-Lipschitz} and \ref{thm:RNG-mean},
see Figure \ref{fig:RNG-thm-demo}.

In the process of proving Theorem \ref{thm:RNG-mean}, we also prove
the following local Weyl law, which estimates the Cesaro mean of several
quantities related to the eigenfunctions:
\begin{thm}
\label{thm:Weyl-law}Denote the $L^{2}$ normalized eigenfunctions
of the Neumann-Kirchhoff Laplacian by $f_{n}$ with eigenvalues $k_{n}^{2}$.
Then at each vertex $v\in\mathcal{V}$
\begin{equation}
\left\langle \left|f\left(v\right)\right|^{2}\right\rangle _{n}=\frac{2}{\deg\left(v\right)L}.\label{eq:-15-2}
\end{equation}

Moreover, if we write $f_{n}$ on each edge $e$ as
\begin{equation}
f_{n}^{\left(e\right)}\left(x\right)=A_{e}e^{ik_{n}x}+A_{\hat{e}}e^{ik_{n}\ell_{e}}e^{-ik_{n}x},\label{eq:-45-1}
\end{equation}
where the amplitudes $A_{e},A_{\hat{e}}$ depend on $n$, then
\begin{align}
 & \left\langle \left|A_{e}\right|^{2}\right\rangle _{n}=\left\langle \left|A_{\hat{e}}\right|^{2}\right\rangle _{n}=\frac{1}{2L},\forall e\in\mathcal{E}\label{eq:-16-2}\\
 & \left\langle A_{j}\overline{A_{e}}\right\rangle _{n}=\left\langle A_{\hat{j}}\overline{A_{\hat{e}}}\right\rangle _{n}=0\text{,\,\ensuremath{\forall}}j,e\in\mathcal{E},j\neq e.\label{eq:-17-2}
\end{align}
\end{thm}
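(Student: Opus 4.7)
The plan is to combine an equidistribution theorem of Barra--Gaspard type on the secular manifold $\Sigma$ with direct computations of the resulting averages. I would proceed in four main steps.

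First, I would set up the scattering formalism already in the nomenclature. Each eigenfunction $f_n$ is determined by its amplitude vector $\vec{a}_n = (A_e^{(n)}, A_{\hat{e}}^{(n)})_{e\in\E}\in\C^{2E}$, which is a null vector of $I - e^{ik_n\L}S$, where $S$ is the fixed, $k$-independent bond scattering matrix of the Neumann--Kirchhoff Laplacian. The wave numbers $k_n$ are then characterized by the condition that $\vec{\theta}_n:=k_n\vec{\ell}\pmod{2\pi}\in\mathbb{T}^{E}$ lies on the secular manifold $\Sigma\subset\mathbb{T}^E$. Solving the null-vector equation expresses each amplitude $A_e^{(n)}$ as a smooth function of $\vec{\theta}_n\in\sreg$, up to the $L^2$ normalization $\|f_n\|_{L^2}=1$.

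Second, I would invoke the Barra--Gaspard equidistribution theorem: under the assumption of rationally independent edge lengths, the sequence $(\vec{\theta}_n)_{n=1}^\infty$ is equidistributed on $\sreg$ with respect to a specific probability measure $d\mu$. This reduces the Cesaro mean of any bounded continuous function of $\vec{\theta}$ to a single integral,
\begin{equation*}
\langle g(\vec\theta)\rangle_n = \int_\Sigma g\,d\mu,
\end{equation*}
and the case of rationally dependent lengths would be treated separately by a perturbation/continuity argument.

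Third, I would compute the integrals. A direct calculation using the plane-wave expansion yields
\begin{equation*}
\int_0^{\ell_e}|f_n^{(e)}|^2\,dx = (|A_e|^2+|A_{\hat e}|^2)\ell_e + O(1/k_n),
\end{equation*}
so the $L^2$ normalization gives $\sum_e(|A_e|^2+|A_{\hat e}|^2)\ell_e = 1 + O(1/k_n)$. The key observation is that $\Sigma$ carries natural symmetries --- translations by elements of $\pi\Z^E$ combined with a time-reversal involution --- which leave $d\mu$ invariant and which act transitively on the collection of amplitudes $\{A_e,A_{\hat e}\}_{e\in\E}$. Combining this symmetry action with the normalization constraint forces $\langle|A_e|^2\rangle_n=\langle|A_{\hat e}|^2\rangle_n=1/(2L)$. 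Similarly, the off-diagonal products $A_j\overline{A_e}$ for $j\ne e$ pick up non-trivial phases under these symmetries and so average to zero.

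Finally, the vertex values formula follows from the structure of the Neumann--Kirchhoff condition: at each vertex $v$, continuity and current conservation give $f_n(v)=\tfrac{2}{\deg(v)}\sum_{e\in E_v}A_e^{\mathrm{in}}(v)$, where $A_e^{\mathrm{in}}(v)$ denotes the amplitude of the wave incoming to $v$ along the edge $e$. Expanding $|f_n(v)|^2$ and applying the amplitude moments from the previous step gives
\begin{equation*}
\langle|f(v)|^2\rangle_n = \frac{4}{\deg(v)^2}\cdot\deg(v)\cdot\frac{1}{2L}=\frac{2}{\deg(v)L},
\end{equation*}
with the cross-terms vanishing by the second assertion of the theorem. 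The main obstacle I anticipate is the rigorous identification of the symmetry action on the amplitude functions together with the verification that $d\mu$ is invariant under it, so that the symmetry-based determination of the moments is justified without explicit description of $d\mu$. A secondary technical point is handling rationally dependent edge lengths, where Barra--Gaspard equidistribution fails directly and a continuity-in-$\vec\ell$ argument is required.
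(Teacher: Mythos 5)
Your overall scaffolding (scattering formalism, Barra--Gaspard equidistribution on $\Sigma$, a continuity-in-$\vec{\ell}$ argument for rationally dependent lengths, and the reduction of the vertex value to amplitude moments via $f_{n}\left(v\right)=\frac{2}{\deg\left(v\right)}\sum_{e\in E_{v}}A_{e}^{\mathrm{in}}\left(v\right)$) matches the paper. The genuine gap is in your third step: the moments are not forced by symmetries of the secular manifold. Translations of the torus by elements of $\pi\Z^{E}$ do not preserve $\Sigma$ for a general graph (replacing $\kappa_{j}$ by $\kappa_{j}+\pi$ changes $\det\left(I-Se^{i\kappa}\right)$ in an essential way), and no symmetry of $\Sigma$ acts transitively on the family of amplitudes $\left\{ a_{e},a_{\hat{e}}\right\} _{e\in\E}$; the only symmetry available in general is the time-reversal involution $\vec{\kappa}\mapsto-\vec{\kappa}$, and it only exchanges $a_{e}$ with $a_{\hat{e}}$ on the \emph{same} edge (this is precisely how the paper uses it in Lemma \ref{lem:Amean}). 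Consequently ``symmetry plus normalization'' cannot pin down the common value $\frac{1}{2L}$, nor the vanishing of cross-correlations: the normalization only constrains the weighted sum $\sum_{e}\ell_{e}\left(\left|a_{e}\right|^{2}+\left|a_{\hat{e}}\right|^{2}\right)$, not the individual edge averages. The paper closes this with two specific inputs absent from your proposal: (i) Colin de Verdi\`ere's formula $\hat{n}_{j}=\left|a_{j}\right|^{2}+\left|a_{\hat{j}}\right|^{2}$ identifying the amplitude sum with a component of the unit normal to $\Sigma$ (Lemma \ref{lem:secular-normal}), combined with the fact that the projection of $\Sigma$ onto the $j$th coordinate face of the torus is two-to-one, which evaluates the flux $\int_{\Sigma}\hat{n}_{j}d\sigma=2\left(2\pi\right)^{E-1}$ and yields $\frac{1}{2L}$ edge by edge; and (ii) for $\left\langle A_{j}\overline{A_{e}}\right\rangle _{n}=\left\langle A_{\hat{j}}\overline{A_{\hat{e}}}\right\rangle _{n}=0$, the alternative integration formula (\ref{eq:-11-2}) of Alon--Band--Berkolaiko, which rewrites the $\mu_{\vec{\ell}}$-integral as an integral over the full torus summed over all $2E$ eigenvectors of the unitary $Se^{i\kappa}$, where orthonormality of that eigenbasis makes the integrand vanish identically. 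Neither fact is a symmetry statement, and without them your Step 3 does not go through.

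A secondary point: even granting the stated moments, your last step invokes the cross-correlations in the wrong form. The incoming amplitudes at a vertex are $A_{\hat{e}}e^{ik_{n}\ell_{e}}$, so the cross terms in $\left|f_{n}\left(v\right)\right|^{2}$ are $A_{\hat{e}}\overline{A_{\hat{e}'}}e^{ik_{n}\left(\ell_{e}-\ell_{e'}\right)}$, which is not the quantity appearing in (\ref{eq:-17-2}). You would need either a phased version of the vanishing statement (which does follow from the same argument as Lemma \ref{lem:Uncorrelation}, since the phases survive the eigenphase shift harmlessly), or, as the paper does in Lemma \ref{lem:fmean}, to substitute the vertex scattering relation $a_{j}=e^{ik\ell_{j}}\sum_{i}\left(\frac{2}{\deg\left(v\right)}-\delta_{ji}\right)a_{\hat{i}}$, in which the phases cancel exactly and only the unphased correlations of Lemma \ref{lem:Uncorrelation} appear.
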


\begin{rem}
A similar local Weyl law was recently obtained in \cite{Borthwick2022}
via heat kernel methods.

\bigskip
\end{rem}

\subsection{Results concerning the $\delta_{s}$ family}

The next several results are concerned with the relation between the
$\delta_{s}$ family and properties of eigenfunctions on the graph,
as well as geometric properties of the graph.

The first result is a generalization of the result presented for domains
in Equation (\ref{eq:-84}):
\begin{thm}
\label{thm:SF-index} Let $\left(\lambda_{n},f_{n}\right)$ be a generic
eigenpair of $H_{0}$ such that $\lambda_{n}>\frac{\pi}{\ell_{min}}$,
where $\ell_{min}$ is the length of the shortest edge of the graph.
Then for $\epsilon>0$ small enough, the $s$ deficiency is given
by the following formula:
\begin{equation}
\mathcal{D}_{s}\left(f_{n}\right)=\phi_{\infty}\left(f_{n}\right)-Pos\left(\Lambda_{s}\left(\lambda_{n}+\epsilon\right)\right),\label{eq:-6-1}
\end{equation}
where the Robin map $\Lambda_{s}\left(\lambda_{n}+\epsilon\right)$
is evaluated at the set of $s$ points of $f_{n}$, $Pos\left(\Lambda_{s}\left(\lambda_{n}+\epsilon\right)\right)$
denotes the number of its positive eigenvalues, and $\phi_{\infty}\left(f_{n}\right)$
is the number of nodal points of $f_{n}$. \\
For $s=\infty$, the following formula for the nodal deficiency holds:
\begin{equation}
\mathcal{D}_{\infty}\left(f_{n}\right)=Mor\left(\Lambda_{\infty}\left(\lambda_{n}+\epsilon\right)\right),\label{eq:-7-1}
\end{equation}
where $Mor\left(\Lambda_{\infty}\left(\lambda_{n}+\epsilon\right)\right)$
is the Morse index, or number of its negative eigenvalues.
\end{thm}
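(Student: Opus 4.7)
The plan is to use the $\delta_{s}$ family $\left(H^{s}\left(t\right)\right)_{t\in\RR}$ with $B=\mathcal{S}_{s}\left(f_{n}\right)$ and compute the spectral flow $\sf{\lambda_{n}+\epsilon}\left(\left[0,\infty\right]\right)$ in two complementary ways, then equate the outputs. Comparing the $\delta_{s}$ vertex condition (\ref{eq:-7-1-1}) with the definition (\ref{eq:-16-1}) of the Robin map gives the key correspondence: an eigenpair $\left(c,f\right)$ of $H^{s}\left(t\right)$ with $\gamma^{s}\left(f\right)\neq 0$ on $B$ corresponds to $-t$ being an eigenvalue of $\Lambda_{s}\left(c\right)$ with eigenvector $\gamma^{s}\left(f\right)|_{B}$. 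A Maslov-index / crossing-form analysis then expresses $\sf{\lambda_{n}+\epsilon}\left(\left[0,\infty\right]\right)$ as a signed count of the non-positive eigenvalues of $\Lambda_{s}\left(\lambda_{n}+\epsilon\right)$. Under the generic assumption $\lambda_{n}+\epsilon\notin\spec{H^{s}\left(\infty\right)}$, so that $\Lambda_{s}\left(\lambda_{n}+\epsilon\right)$ is well-defined and has no kernel, this signed count reduces to an expression in $Mor\left(\Lambda_{s}\left(\lambda_{n}+\epsilon\right)\right)$ and $Pos\left(\Lambda_{s}\left(\lambda_{n}+\epsilon\right)\right)$.

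The same spectral flow is also computable directly as the difference between the counts of eigenvalues below $\lambda_{n}+\epsilon$ at the two endpoints of the parameter interval. At $t=0$ the operator is $H_{0}$, so for $\epsilon>0$ small this count equals $n$ by the simplicity of $\lambda_{n}$ coming from the genericity assumption. At $t=\infty$ the operator $H^{s}\left(\infty\right)$ decouples as the direct sum of Laplacians over the $s$-domains $D_{1},\dots,D_{\nu_{s}\left(f_{n}\right)}$, with $s$-type boundary data at $B\cap\partial D_{i}$ and Neumann-Kirchhoff at the remaining boundary vertices. The central step is an oscillation (Sturm-type) argument identifying the position $m_{i}$ of $\lambda_{n}$ in the spectrum of the restricted operator as $m_{i}=k_{i}+1$, where $k_{i}$ is the number of nodal points of $f_{n}$ in the interior of $D_{i}$. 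Since $s$-points and nodal points of $f_{n}$ are generically disjoint, $\sum_{i}k_{i}=\phi_{\infty}\left(f_{n}\right)$, and hence the count at $t=\infty$ equals $\nu_{s}\left(f_{n}\right)+\phi_{\infty}\left(f_{n}\right)$.

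Matching the two computations, together with the generic identity $Mor+Pos=\left|B\right|=\phi_{s}\left(f_{n}\right)$, produces after rearrangement the claimed identity $\mathcal{D}_{s}\left(f_{n}\right)=\phi_{\infty}\left(f_{n}\right)-Pos\left(\Lambda_{s}\left(\lambda_{n}+\epsilon\right)\right)$. The specialization to $s=\infty$ is immediate: $\mathcal{S}_{\infty}\left(f_{n}\right)$ is then the nodal set, so $\phi_{\infty}=\phi_{s}=\left|B\right|$, and the identity $\phi_{\infty}-Pos=Mor$ gives $\mathcal{D}_{\infty}\left(f_{n}\right)=Mor\left(\Lambda_{\infty}\left(\lambda_{n}+\epsilon\right)\right)$.

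The main obstacle is the Sturm-type identification $m_{i}=k_{i}+1$ on each $s$-domain $D_{i}$. On a single interval this is the classical Sturm oscillation theorem, but when $D_{i}$ is a subgraph containing vertices of degree at least three or having nonzero first Betti number, a genuine graph-theoretic nodal-counting argument (in the spirit of Berkolaiko's nodal count results for graphs) is required, together with careful bookkeeping of how $\beta_{\Gamma}$ splits among the $\beta_{D_{i}}$ when $\Gamma$ is cut at $B$. The hypothesis $\lambda_{n}>\pi/\ell_{min}$ enters here: it guarantees enough oscillation per edge for the nodal count to be unambiguous and also helps ensure $\lambda_{n}+\epsilon\notin\spec{H^{s}\left(\infty\right)}$, so that $\Lambda_{s}\left(\lambda_{n}+\epsilon\right)$ is well-defined and non-degenerate. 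A secondary technical issue is the potentially singular behavior of the $\delta_{s}$ family at the endpoint parameter values $t=0,\infty$ (particularly for intermediate $s$), which I plan to handle by computing the spectral flow on a truncated interval $\left[0,T\right]$ for large finite $T$ and accounting separately for any asymptotic contribution as $t\to\infty$.
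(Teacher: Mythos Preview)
Your overall strategy --- place the $\delta_s$ family at $B = \mathcal{S}_s(f_n)$, compute the spectral flow through $\lambda_n + \epsilon$ in two ways (once via the correspondence with $\Lambda_s$, once via endpoint eigenvalue counts), and equate --- is precisely the paper's approach, and your decoupling/Sturm argument at $t = \infty$ matches Lemmas \ref{lem:multiplicity} and \ref{lem:s-position}.

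There is, however, a genuine issue with your choice of interval $[0, \infty]$ when $s \neq \infty$. The family $H^s(t)$ is \emph{not} continuous across $t = 0$ in that case: by Lemma \ref{lem:sf-bdd-below} and the remark following it, exactly $|B|$ spectral curves diverge to $-\infty$ as $t \to 0^+$. Consequently the count of eigenvalues below $\lambda_n + \epsilon$ is $n$ at $t = 0$ but $n + |B|$ at $t = 0^+$, and the naive difference-of-endpoint-counts on $[0, \infty]$ is off by $|B|$. You locate the singularity at $t \to \infty$ and propose truncating to $[0, T]$, but the trouble is at the left endpoint, not the right. The paper avoids this entirely by working on $[-\infty, 0]$ for $s \neq \infty$: there the curves are continuous and monotone increasing (Lemmas \ref{lem:sf-bdd-below}, \ref{lem:monotonicity}), the endpoint counts are $\nu_s + \phi_\infty$ at $t = -\infty$ and $n$ at $t = 0$, and the $(\nu_s + \phi_\infty) - n$ crossings, all at negative $t$, correspond via Lemma \ref{lem:DTN-correspondence} directly to the \emph{positive} eigenvalues of $\Lambda_s(\lambda_n + \epsilon)$. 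This yields $Pos = \nu_s + \phi_\infty - n$ immediately, without ever invoking $Mor + Pos = |B|$. Only for $s = \infty$ does the paper use $[0, \infty]$, where no singularity arises.

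A smaller point: you anticipate Betti-number bookkeeping on the $s$-domains $D_i$, but the hypothesis $\lambda_n > \pi/\ell_{\min}$ is used precisely to guarantee (via a cited result of Alon--Band) that every $s$-domain is a star graph. Thus each $\beta_{D_i} = 0$, and the spectral position on $D_i$ is read off directly from the nodal count of $f_n|_{D_i}$ with no further graph-theoretic input.
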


\begin{rem}
\label{rem:unnecessary}For $s=\infty$, the assumption that $\lambda_{n}>\frac{\pi}{\ell_{min}}$
is unnecessary, and so is the assumption that $f_{n}$ is generic. 
\end{rem}

\begin{thm}
\label{prop:SF-points}Consider the $\delta_{s}$ family placed on
the vertex set $B$. Then for every $c\notin\spec{H^{s}\left(\infty\right)}$
we have that
\begin{equation}
\sf c^{\delta_{s}}\left[-\infty,\infty\right]=\left|B\right|.\label{eq:-33}
\end{equation}

In particular, if the $\delta_{s}$ family is placed at the $s$ points
of an eigenfunction $f_{n}$, then
\begin{equation}
\sf c^{\delta_{s}}\left[-\infty,\infty\right]=\phi_{s}\left(f_{k}\right).\label{eq:-99}
\end{equation}
\end{thm}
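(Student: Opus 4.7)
The plan is to exploit the monotonicity of the spectral curves of $H^{s}\left(t\right)$ in $t$, together with a correspondence between the eigenvalues of $H^{s}\left(t\right)$ at $c$ and the eigenvalues of the Robin map $\Lambda_{s}\left(c\right)$.

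First, I would establish monotonicity. The sesquilinear form $L_{t}^{s}$ associated to the $\delta_{s}$ family (derived in Appendix \ref{sec:Appendices}) has a $t$-dependent contribution of the shape $t\sum_{v\in B}\left|\gamma^{s}\left(f\right)\left(v\right)\right|^{2}$, so it is monotonically non-decreasing in $t$. By the minimax principle, each spectral curve $\lambda_{n}^{s}\left(t\right)$ is monotonically non-decreasing in $t$. In particular, every transverse intersection of a spectral curve with the horizontal line $\lambda=c$ contributes $+1$ to the spectral flow; there are no downward crossings to subtract.

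Second, I would use the Robin map correspondence. If $u$ is an eigenfunction of $H^{s}\left(t\right)$ with eigenvalue $c$, then on each $s$-domain $D_{i}$ the restriction of $u$ solves the boundary-value problem that defines $\Lambda_{s}^{D_{i}}\left(c\right)$ with boundary data $w=\gamma^{s}\left(u\right)\mid_{B_{i}}$. Combining the global formula $\Lambda_{s}\left(c\right)\gamma^{s}\left(u\right)=\gamma_{1}^{s\star}\left(u\right)-\gamma_{2}^{s\star}\left(u\right)$ with the $\delta_{s}$ jump condition $\gamma_{2}^{s\star}\left(u\right)-\gamma_{1}^{s\star}\left(u\right)=t\gamma^{s}\left(u\right)$ gives
\begin{equation*}
\Lambda_{s}\left(c\right)\gamma^{s}\left(u\right)=-t\,\gamma^{s}\left(u\right).
\end{equation*}
Hence $c\in\spec{H^{s}\left(t\right)}$ if and only if $-t\in\spec{\Lambda_{s}\left(c\right)}$, with matching multiplicities. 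Since $\Lambda_{s}\left(c\right)$ is a self-adjoint operator on the $\left|B\right|$-dimensional space $\ell^{2}\left(B\right)$, it has exactly $\left|B\right|$ real eigenvalues counted with multiplicity. As $t$ increases from $-\infty$ to $+\infty$, the quantity $-t$ meets each eigenvalue of $\Lambda_{s}\left(c\right)$ exactly once, so by monotonicity the total oriented count of upward crossings equals $\left|B\right|$.

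The main obstacle I anticipate is justifying that all $\left|B\right|$ crossings are indeed captured on the compactified parameter interval $\left[-\infty,\infty\right]$, with none ``escaping'' at the endpoints. This amounts to the asymptotic claim that as $t\to+\infty$ the family $H^{s}\left(t\right)$ converges in strong resolvent sense to $H^{s}\left(\infty\right)$, and since $c\notin\spec{H^{s}\left(\infty\right)}$ no spectral curve settles at $c$ in that limit; while as $t\to-\infty$, exactly $\left|B\right|$ eigenvalues diverge to $-\infty$ (one ``bound state'' per vertex in $B$), so below $c$ we pick up $\left|B\right|$ extra eigenvalues relative to $H^{s}\left(\infty\right)$. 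Finally, the second statement follows at once by specializing the first to $B=\mathcal{S}_{s}\left(f_{n}\right)$, whose cardinality is $\phi_{s}\left(f_{n}\right)$ by definition.
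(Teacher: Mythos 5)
This is essentially the paper's proof: you identify the crossing times of the level $c$ with the eigenvalues of the $\left|B\right|\times\left|B\right|$ Robin map $\Lambda_{s}\left(c\right)$ (the paper's Lemma \ref{lem:DTN-correspondence}, whose proof you reproduce --- note that the hypothesis $c\notin\spec{H^{s}\left(\infty\right)}$ is precisely what guarantees $\gamma^{s}\left(u\right)\neq0$ in that argument), and monotonicity of the spectral curves (Lemma \ref{lem:monotonicity}) makes every crossing count $+1$, giving $\left|B\right|$ in total. Two details are stated loosely but harmlessly: for $s\neq\infty$ the form $L_{t}^{s}$ is not of the shape $\int_{\Gamma}\left|f'\right|^{2}dx+t\sum_{v\in B}\left|\gamma^{s}\left(f\right)\left(v\right)\right|^{2}$ (monotonicity follows instead from the derivative formula (\ref{eq:-12-1}), and only on the half-lines $t<0$ and $t>0$ separately), and for $s\neq\infty$ the curves escape to $-\infty$ as $t\rightarrow0^{+}$ rather than as $t\rightarrow-\infty$; neither affects the conclusion, since the crossing times are exactly the finite set $\left\{ -\mu:\mu\in\spec{\Lambda_{s}\left(c\right)}\right\} $ of interior points of the parameter interval.
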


\begin{rem}
\label{rem:Maslov-remark}In fact, one can show that Theorem \ref{prop:SF-points}
holds for every $c\in\mathbb{R}$ (without further restrictions).
The proof requires more abstract tools (see \cite{Band}). Nevertheless,
we discuss those briefly in Section \ref{sec:Discussion}.

The next two theorems relate the spectral flow to the topology of
the graph (see demonstrations in Figures \ref{fig:Demo-betti-1} and
\ref{fig:Demo-betti-2}).
\end{rem}

\begin{thm}
\label{thm:SF-Betti1}Consider the $\delta_{0}$ family placed at
an arbitrary subset of points of degree two $B\subset\Gamma$. Denote
the metric graph obtained by cutting $\Gamma$ at $B$ by $\Gamma_{cut}$.
Then for $\epsilon>0$ small enough
\begin{equation}
\sf{\epsilon}^{\delta_{0}}\left[0,\infty\right]=\beta_{\Gamma}-\beta_{\Gamma_{cut}}.\label{eq:-34}
\end{equation}

\begin{figure}
\caption[Demonstration of Theorem \ref{thm:SF-Betti1}.]{\label{fig:Betti}Demonstration of Theorem \ref{thm:SF-Betti1} on
a glasses graph for two choices of $\delta_{0}$ points.\label{fig:Demo-betti-1}}
\subfloat[The points are chosen so that cutting the graph \textquotedblleft destroys\textquotedblright{}
two cycles, and so the spectral flow through the dashed line $y=\epsilon$
is two.]{\includegraphics[scale=0.7]{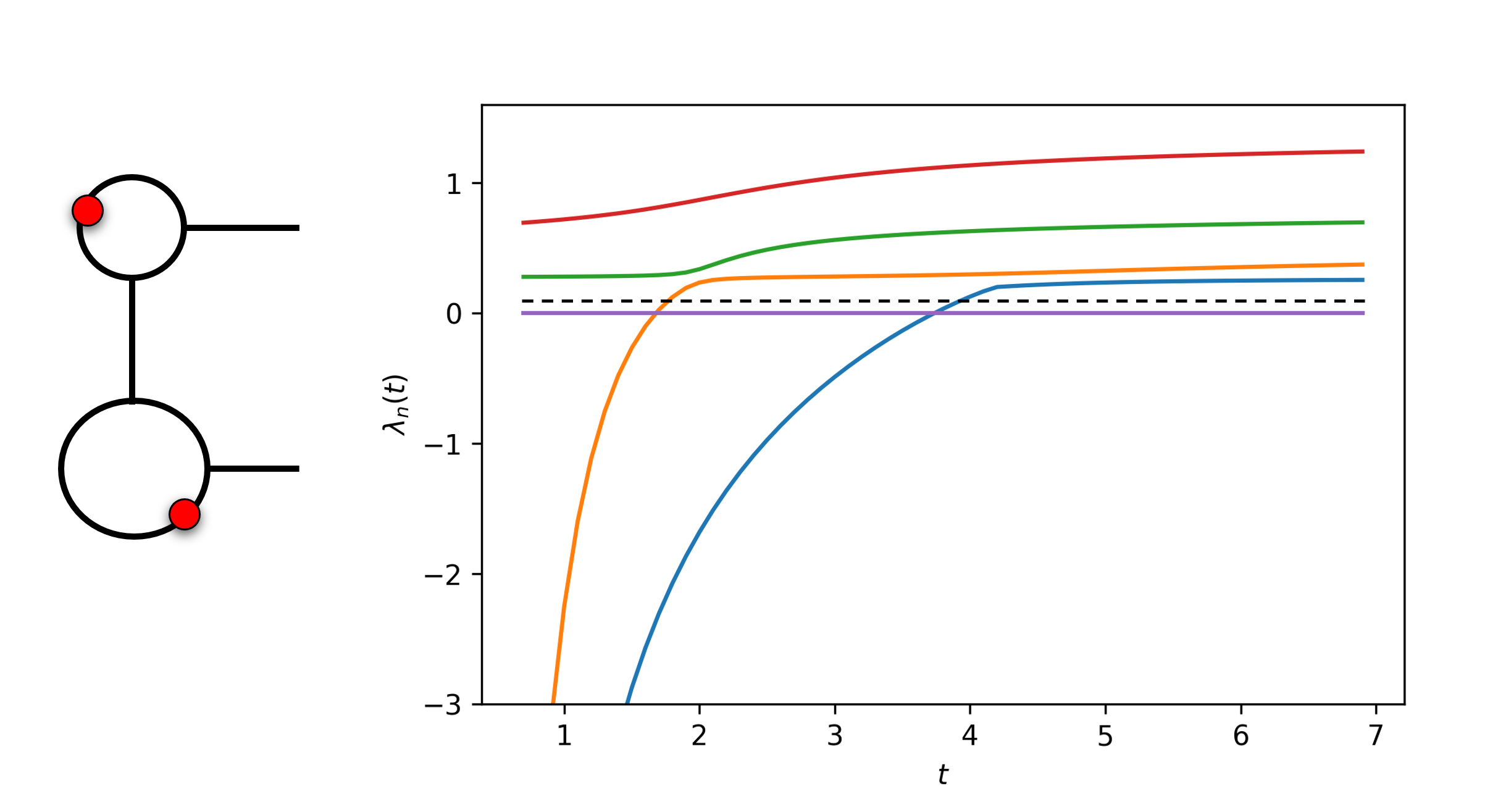}

}

\subfloat[The points are chosen so that cutting the graph \textquotedblleft destroys\textquotedblright{}
one cycle, and so the spectral flow through the dashed line $y=\epsilon$
is one. ]{\includegraphics[scale=0.7]{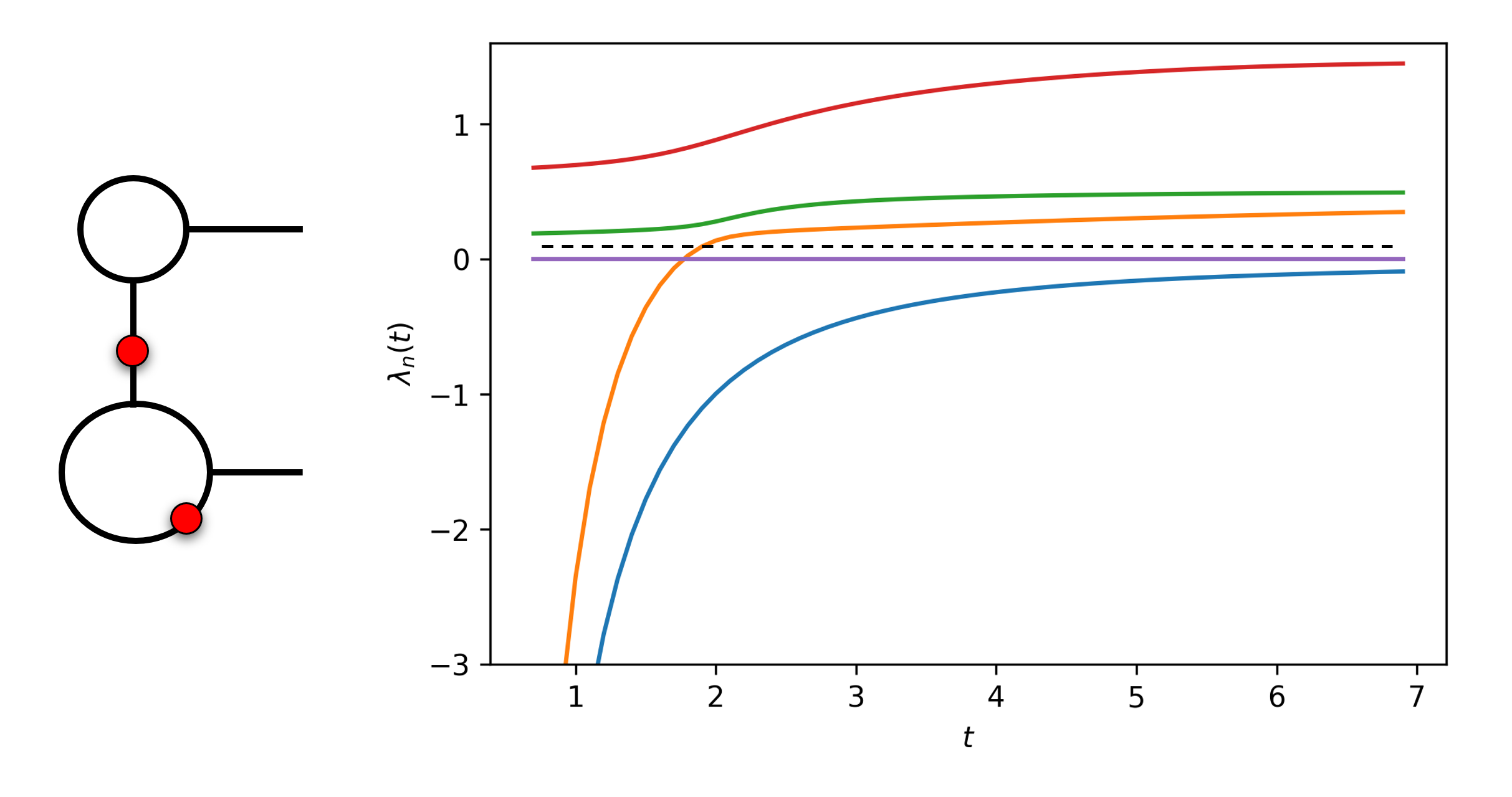}

}
\end{figure}
\end{thm}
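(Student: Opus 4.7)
The plan is to leverage Theorem \ref{prop:SF-points}, which gives $\sf{\epsilon}^{\delta_{0}}\left[-\infty,\infty\right]=|B|$, and reduce the problem to computing the complementary half $\sf{\epsilon}^{\delta_{0}}\left[-\infty,0\right]$. An Euler characteristic computation, using that cutting at $|B|$ degree-two points adds $|B|$ vertices while keeping the edge count unchanged, shows that $\beta_{\Gamma} - \beta_{\Gamma_{cut}} = 1 + |B| - k$, where $k$ denotes the number of connected components of $\Gamma_{cut}$. So the target formula reduces to the claim that $\sf{\epsilon}^{\delta_{0}}\left[-\infty,0\right] = k - 1$.

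To compute this, I would invoke the sesquilinear form derivation from the appendix to write the form associated to $H^{0}(t)$ for $t \in [-\infty, 0]$ as
\[
L_{t}^{0}[f] = \int_{\Gamma} |f'|^{2} \, dx + \frac{1}{|t|} \sum_{v \in B} |f_{1}(v) - f_{2}(v)|^{2},
\]
on the natural form domain allowing jumps at $B$, with the endpoint values $t = -\infty$ and $t = 0$ corresponding respectively to zero penalty (NK on $\Gamma_{cut}$) and infinite penalty (NK on $\Gamma$). The crucial observation is that this form is positive semi-definite throughout $[-\infty, 0]$ and monotone non-decreasing in $t$; by the min-max principle, the eigenvalues of $H^{0}(t)$ are therefore non-negative and monotone non-decreasing in $t$ on this interval.

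Because every eigenvalue curve can only cross the horizontal line $\epsilon$ upward, the spectral flow equals the net decrease in the count of eigenvalues below $\epsilon$. For $\epsilon > 0$ chosen smaller than the first positive eigenvalue of both $H^{0}(-\infty)$ and $H^{0}(0)$, this count coincides with the kernel dimension at each endpoint: $\dim \ker H^{0}(-\infty) = k$ (one constant per component of $\Gamma_{cut}$) and $\dim \ker H^{0}(0) = 1$ (constants on the connected $\Gamma$). Hence $\sf{\epsilon}^{\delta_{0}}\left[-\infty,0\right] = k - 1$, and by additivity $\sf{\epsilon}^{\delta_{0}}\left[0,\infty\right] = |B| - (k-1) = \beta_{\Gamma} - \beta_{\Gamma_{cut}}$.

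The main obstacle is justifying the endpoint behavior at $t = -\infty$ and $t = 0$, where the form domain itself changes (from freely discontinuous at $t = -\infty$ to continuous at $t = 0$). This should follow from Kato's monotone form convergence theorem applied to the monotone family $\{L_{t}^{0}\}_{t \in [-\infty, 0]}$, yielding norm-resolvent convergence and continuity of the eigenvalue curves up to and including the endpoints, which in turn justifies the identification of the kernels of $H^{0}(\pm\infty)$ with those of the claimed limiting Neumann-Kirchhoff operators.
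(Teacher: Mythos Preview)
Your proposal is correct and follows essentially the same route as the paper: split the total spectral flow $|B|$ from Theorem~\ref{prop:SF-points} into the two halves $[-\infty,0]$ and $[0,\infty]$, use non-negativity and monotonicity of the quadratic form on $[-\infty,0]$ to reduce $\sf{\epsilon}^{\delta_0}[-\infty,0]$ to the difference of kernel dimensions $\pi_0(\Gamma_{cut})-1$, and finish with the Euler-characteristic identity $\beta_\Gamma-\beta_{\Gamma_{cut}}=1+|B|-\pi_0(\Gamma_{cut})$. The paper handles the endpoint continuity via its Lemmas~\ref{lem:sf-bdd-below} and~\ref{lem:monotonicity} rather than invoking Kato's monotone form convergence explicitly, but the substance is the same.
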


\begin{thm}
\label{prop:SF-Betti2} If $\left(\lambda_{n},f_{n}\right)$ is a
generic eigenpair of $H_{0}$ as in Theorem \ref{thm:SF-index}, then
for the $\delta_{s}$ family with vertex set $B$ chosen as the $s$
points of $f_{n}$, there are exactly $\beta_{\Gamma}$ transversal
intersections of the spectral curves with the horizontal line $\lambda_{n}$.
\\
In particular, for tree graphs, there are no such intersections.
\end{thm}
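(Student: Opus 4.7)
The plan is to reduce the count of transversal intersections of the spectral curves with $\lambda=\lambda_{n}$ to the total spectral flow provided by Theorem \ref{prop:SF-points}, and then correct for asymptotic (non-crossing) curves at $t=\pm\infty$ using the multiplicity of $\lambda_{n}$ in $\spec{H^{s}(\infty)}$ together with Euler's formula for the cut graph. The key starting observation is that $f_{n}$ is itself an eigenfunction of $H^{s}(t)$ at eigenvalue $\lambda_{n}$ for \emph{every} $t\in\RR$: since $B$ is the $s$-set of $f_{n}$ we have $\gamma^{s}(f_{n})=0$ on $B$, so the vertex condition (\ref{eq:-7-1-1}) collapses to $\gamma_{2}^{s\star}(f_{n})=\gamma_{1}^{s\star}(f_{n})$, which is automatic at the interior degree-two points on which the smooth function $f_{n}$ lives. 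Hence one of the spectral curves is identically $\lambda\equiv\lambda_{n}$ and is nowhere transversal to the horizontal line. From the form $L_{t}^{s}$, whose $t$-derivative is the positive rank-$|B|$ form $f\mapsto\|\gamma^{s}(f)\|^{2}$, the min-max principle shows that every other spectral curve is strictly monotone increasing in $t$ and therefore meets $\lambda=\lambda_{n}$ in at most one point, automatically transversally.

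For $\epsilon>0$ small enough that $\lambda_{n}+\epsilon\notin\spec{H^{s}(\infty)}$, Theorem \ref{prop:SF-points} yields $\sf{\lambda_{n}+\epsilon}^{\delta_{s}}[-\infty,\infty]=\phi_{s}(f_{n})$. By monotonicity this count equals the number of spectral curves $\lambda_{k}(t)$ with $\inf_{t}\lambda_{k}\leq\lambda_{n}$ and $\sup_{t}\lambda_{k}>\lambda_{n}$, which I will decompose as $T+b$: $T$ is the desired transversal crossing number (both inequalities strict), while $b$ counts non-constant curves asymptoting to $\lambda_{n}$ as $t\to-\infty$ (that is, $\inf=\lambda_{n}$ and $\sup>\lambda_{n}$). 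Since the non-escaping spectral curves limit at $t=-\infty$ to eigenvalues of $H^{s}(\infty)$, the number of curves with $\inf=\lambda_{n}$ equals the multiplicity $m$ of $\lambda_{n}$ in $\spec{H^{s}(\infty)}$; exactly one of these, the constant curve for $f_{n}$ (uniqueness follows from the simplicity of $\lambda_{n}$ in $\spec{H_{0}}$), fails to be strictly increasing, so $b=m-1$.

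To finish, $H^{s}(\infty)$ decouples as $\bigoplus_{i}H_{D_{i}}^{s}$ over the $s$-domains, and the hypothesis $\lambda_{n}>\pi/\ell_{min}$ places an $s$-point on every edge of $\Gamma$ and thereby severs every cycle. Thus $\beta_{\Gamma_{cut}}=0$, every $D_{i}$ is a tree, and genericity makes $\lambda_{n}$ a simple eigenvalue of the $s$-Laplacian on each such tree, giving $m=\nu_{s}(f_{n})$. Combining these, $T=\phi_{s}(f_{n})-\nu_{s}(f_{n})+1$, which by Euler's formula applied to $\Gamma$ and $\Gamma_{cut}$ equals $\beta_{\Gamma}-\beta_{\Gamma_{cut}}=\beta_{\Gamma}$. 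The tree-graph case $\beta_{\Gamma}=0$ follows immediately. The most delicate step will be the simplicity assertion $m=\nu_{s}(f_{n})$: it requires not only the cycle-severing property but also the exclusion of accidental coincidences of $\lambda_{n}$ with higher eigenvalues on any tree component $D_{i}$.
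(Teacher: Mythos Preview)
Your argument is essentially the same as the paper's. Both proofs hinge on the same three ingredients: the constant spectral curve $\lambda\equiv\lambda_{n}$ coming from $f_{n}\in Dom\left(H^{s}\left(t\right)\right)$ for all $t$ (the paper's Lemma \ref{lem:const}), the identification $m=\mult{\lambda_{n}}{H^{s}\left(\infty\right)}=\nu_{s}$ (the paper's Lemma \ref{lem:multiplicity}), and the total spectral flow $\phi_{s}$ from Theorem \ref{prop:SF-points}. Your count $T=\phi_{s}-\left(\nu_{s}-1\right)$ is exactly the paper's ``$\beta_{\Gamma}$ additional spectral curves'' obtained by subtracting the $\nu_{s}-1$ non-flat curves emanating from $\lambda_{n}$ at $t=\pm\infty$ from the total $\phi_{s}$.

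The one cosmetic difference is in the last step: the paper simply quotes the relation $\nu_{s}\left(f_{n}\right)=\phi_{s}\left(f_{n}\right)-\beta_{\Gamma}+1$ from \cite{AloBan_21ahp}, whereas you re-derive it from Euler's formula together with $\beta_{\Gamma_{cut}}=0$ (which follows from each $D_{i}$ being a star, itself a consequence of the hypothesis on $\lambda_{n}$). These are equivalent formulations of the same combinatorial fact. A minor inaccuracy: for $s\neq\infty$ the Hadamard-type derivative of $L_{t}^{s}$ is $\frac{1}{\sin^{2}\left(\alpha\right)t^{2}}\sum_{v\in B}\left|f_{1}\left(v\right)-f_{2}\left(v\right)\right|^{2}$ rather than $\left\Vert \gamma^{s}\left(f\right)\right\Vert ^{2}$ (see the paper's Lemma \ref{lem:monotonicity}), but since you only use non-negativity and the fact that the kernel singles out the constant curve, this does not affect your argument.
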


\begin{figure}
\includegraphics[scale=0.7]{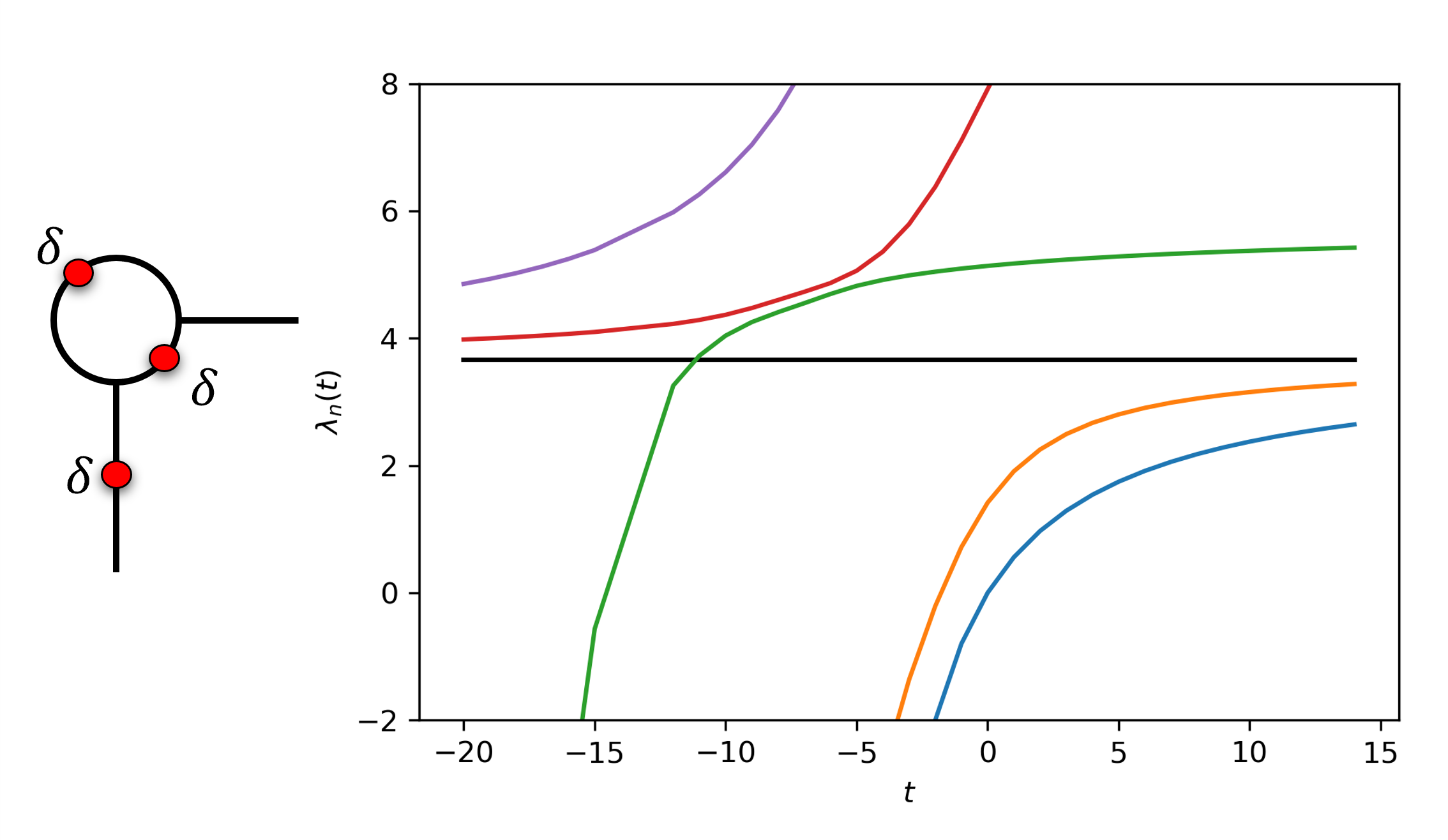}

\caption[Demonstration of Theorem \ref{prop:SF-Betti2}.]{Demonstration of Theorem \ref{prop:SF-Betti2} on a graph with $\beta=1$.
The $\delta$ points are placed at the nodal points of an eigenfunction,
and exactly one spectral curve intersects the horizontal line $\lambda_{k}$
(marked in black). \label{fig:Demo-betti-2}}
\end{figure}

\newpage{}

\section{Tools for proofs of Robin-Neumann gap Theorems \ref{thm:1.RNG-Lipschitz},
\ref{thm:RNG-mean} and \ref{thm:Weyl-law} \label{sec:RNG-tools}}

\subsection{Scattering formalism and the secular equation\label{subsec:Scattering-formalism}}

For this section and all sections concerning the RNG, we fix our family
of Hamiltonians to be the $\delta$ family, $H\left(t\right):=H^{\infty}\left(t\right)$.

An eigenfunction $f$ of $H_{0}:=H\left(0\right)$ with eigenvalue
$k^{2}>0$ can be written on the $j$th edge as
\begin{equation}
f_{j}\left(x\right)=a_{j}e^{ikx}+a_{\hat{j}}e^{ik\ell_{j}}e^{-ikx},\label{eq:-6}
\end{equation}

as seen in \cite{Berkolaiko_qg-intro17}. Thus, $f$ can be described
by the vector of coefficients $\vec{a}=\left(a_{1},a_{\hat{1}},...,a_{E},a_{\hat{E}}\right)\in\mathbb{C}^{2E}$,
which depends on the wave number $k$ (although we omit this $k$
dependence to avoid clutter).

These coefficients are known as the \emph{scattering amplitudes}.
We can think of $a_{j}$ as representing an outgoing wave along the
directed edge $j$, and of $a_{\hat{j}}$ as representing an ingoing
wave. Meaning, we think of $\hat{j}$ as the reversal of the directed
edge $j$, as originally described by Kottos and Smilanksy in \cite{KotSmi_prl00,KotSmi_ap99}.

\begin{figure}
\includegraphics[scale=0.6]{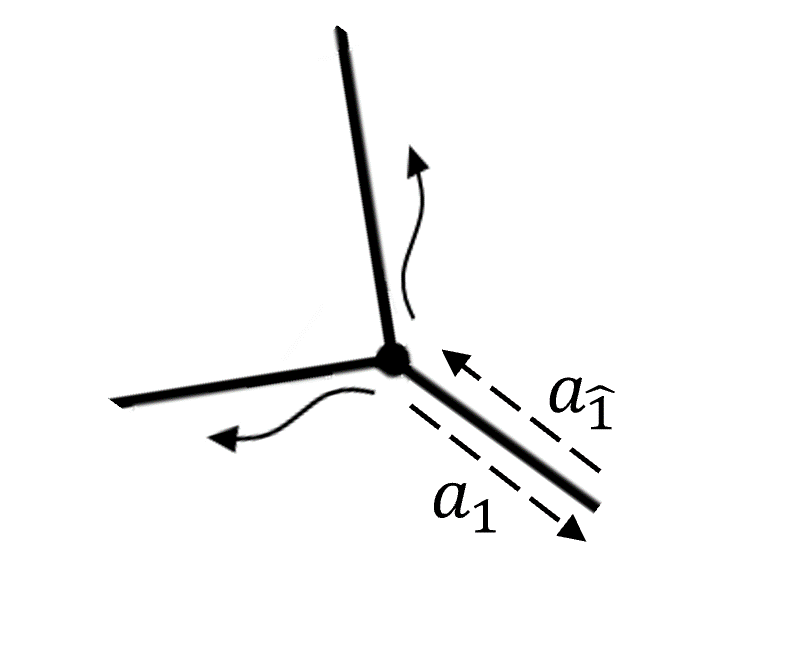}

\caption{Scattering amplitudes at a vertex.}
\end{figure}

If we write the vertex conditions imposed at a vertex $v$ at $x=0$,
with the edges exiting $v$ indexed from $1$ to $\deg\left(v\right)$,
the Neumann-Kirchhoff condition in (\ref{eq:-15}, \ref{eq:-16})
gives
\begin{align}
 & a_{1}+a_{\hat{1}}e^{ik\ell_{1}}=...=a_{\deg\left(v\right)}+a_{\hat{\deg\left(v\right)}}e^{ik\ell_{\deg\left(v\right)}}\,\,\left(\text{continuity}\right),\label{eq:-35}\\
 & \sum_{i=1}^{\deg\left(v\right)}a_{i}-\sum_{i=1}^{\deg\left(v\right)}a_{\hat{i}}e^{ik\ell_{i}}=0\,\,\left(\text{current conservation}\right).\label{eq:-36}
\end{align}

The first condition implies that for any fixed edge $1\leq j\leq\deg\left(v\right)$:
\begin{equation}
\deg\left(v\right)\left(a_{j}+a_{\hat{j}}e^{ik\ell_{j}}\right)=\sum_{i=1}^{\deg\left(v\right)}a_{i}+\sum_{i=1}^{\deg\left(v\right)}a_{\hat{i}}e^{ik\ell_{i}}.\label{eq:-37}
\end{equation}

Combining this with the expression for $\sum_{i=1}^{\deg\left(v\right)}a_{i}$
in (\ref{eq:-36}) and extracting $a_{j}$ gives
\begin{equation}
a_{j}=\frac{2}{\deg\left(v\right)}\sum_{i=1}^{\deg\left(v\right)}a_{\hat{i}}e^{ik\ell_{i}}-a_{\hat{j}}e^{ik\ell_{j}}.\label{eq:-12}
\end{equation}

This gives a simple linear relation between the different scattering
amplitudes.

Given two directed edges $j,j'$, we say that $j\rightarrow j'$ at
$v$ if the end vertex of $j$ is $v$ and the starting vertex of
$j'$ is $v$. With this in mind, we can define the following square
matrix of size $2E$:

\begin{equation}
S_{j'j}=\begin{cases}
\frac{2}{\deg\left(v\right)}-1 & j'=\hat{j}\\
\frac{2}{\deg\left(v\right)} & j\rightarrow j'\text{ at \ensuremath{v} and }j'\neq\hat{j}\\
0 & \text{Otherwise}
\end{cases}\label{eq:-3-1-1}
\end{equation}

We can thus write the linear relation in (\ref{eq:-12}) in the following
compact form:

\begin{equation}
Se^{ik\L}\vec{a}=\vec{a},\label{eq:-2}
\end{equation}

where $\L$ is the diagonal matrix $diag\left(\ell_{1},\ell_{1},\ell_{2},\ell_{2},...,\ell_{E},\ell_{E}\right)$.
The unitary matrix $S$ is known as the \emph{bond scattering matrix}
(For more details, see \cite{BerKuc_graphs,Alon_PhDThesis,Berkolaiko_qg-intro17,BanHarJoy_prep12,GnuSmi_ap06}).

Equation \ref{eq:-2} gives us the following fundamental result:
\begin{thm}
\label{thm:secular-equation}(\cite{KotSmi_prl00,KotSmi_ap99}) $k^{2}>0$
is an eigenvalue of $H_{0}$ if and only if $\det\left(I-Se^{ik\L}\right)=0$.
\end{thm}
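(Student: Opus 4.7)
The plan is to establish the biconditional by recasting the eigenvalue problem as a linear-algebraic condition on the scattering amplitude vector. The first step is to note that any solution of $-f'' = k^{2}f$ with $k^{2}>0$ on an edge $[0,\ell_{j}]$ admits the representation (\ref{eq:-6}) for a unique pair $(a_{j},a_{\hat{j}})\in\mathbb{C}^{2}$, and conversely any such pair defines an edge-wise solution. Because $e^{ikx}$ and $e^{-ikx}$ are linearly independent on $[0,\ell_{j}]$, the global function $f$ constructed via (\ref{eq:-6}) vanishes identically if and only if $\vec{a}=0$. Hence the existence of a non-trivial $k^{2}$-eigenfunction is equivalent to the existence of a non-zero $\vec{a}\in\mathbb{C}^{2E}$ for which the associated $f$ satisfies the Neumann-Kirchhoff vertex conditions at every $v\in\mathcal{V}$.

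The second step is to show that these vertex conditions are equivalent to $Se^{ik\L}\vec{a}=\vec{a}$. The forward implication is the derivation already carried out in (\ref{eq:-35})--(\ref{eq:-2}): combining continuity with current conservation at a vertex $v$ yields the identity (\ref{eq:-12}) for each edge $j$ incident to $v$, and the matrix $S$ from (\ref{eq:-3-1-1}) is constructed precisely so that the $j$th component of $Se^{ik\L}\vec{a}$ reproduces the right-hand side of (\ref{eq:-12}). For the converse I would rewrite the component relation (\ref{eq:-12}) as
\begin{equation*}
a_{j}+a_{\hat{j}}e^{ik\ell_{j}}=\frac{2}{\deg(v)}\sum_{i=1}^{\deg(v)}a_{\hat{i}}e^{ik\ell_{i}},
\end{equation*}
and observe that the right-hand side is independent of $j$; this immediately gives continuity (\ref{eq:-35}). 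Summing the identity over all $\deg(v)$ edges incident to $v$ yields $\sum_{j}a_{j}+\sum_{j}a_{\hat{j}}e^{ik\ell_{j}}=2\sum_{j}a_{\hat{j}}e^{ik\ell_{j}}$, which simplifies to $\sum_{j}a_{j}=\sum_{j}a_{\hat{j}}e^{ik\ell_{j}}$, i.e.\ current conservation (\ref{eq:-36}). Combining the two steps, $k^{2}\in\spec{H_{0}}$ if and only if $\ker(I-Se^{ik\L})\neq\{0\}$, which is equivalent to $\det(I-Se^{ik\L})=0$.

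The only real technical point is in the converse direction of the vertex-condition translation: one must check that the single vector relation $Se^{ik\L}\vec{a}=\vec{a}$ simultaneously encodes \emph{both} continuity and current conservation, rather than only the combination used to derive (\ref{eq:-12}). The summation-over-incident-edges argument above resolves this, but requires careful bookkeeping of the contributions of $a_{j}$ and $a_{\hat{j}}$ at each vertex, and of the fact that $\hat{\hat{j}}=j$ and $\ell_{j}=\ell_{\hat{j}}$, so that the roles of outgoing and incoming amplitudes are correctly interchanged when one sums around $v$.
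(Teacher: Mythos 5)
Your proposal is correct and follows essentially the same route as the paper, which derives the component relation (\ref{eq:-12}) from continuity and current conservation, packages it as $Se^{ik\L}\vec{a}=\vec{a}$, and cites \cite{KotSmi_prl00,KotSmi_ap99} for the theorem. The only addition is that you spell out the converse (recovering (\ref{eq:-35}) from the $j$-independence of the right-hand side and (\ref{eq:-36}) by summing over incident edges), a step the paper leaves implicit, and your argument for it is sound.
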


More generally, for the $\delta$ Hamiltonian $H\left(t\right)$ with
$t\in\mathbb{R}$, a similar computation shows that $k^{2}>0$ is
an eigenvalue of $H\left(t\right)$ if and only if $\det\left(I-S^{\left(t\right)}\left(k\right)e^{ik\L}\right)=0$,
where this time the scattering matrix $S^{\left(t\right)}\left(k\right)$
depends on $k$, and is given by the following expression (see Equation
($3.6$) in \cite{GnuSmi_ap06}):
\begin{equation}
S_{j'j}^{\left(t\right)}=\begin{cases}
\frac{2}{\deg\left(v\right)+\frac{it}{k}}-1 & j'=\hat{j}\\
\frac{2}{\deg\left(v\right)+\frac{it}{k}} & j\rightarrow j'\text{ at \ensuremath{v} and }j'\neq\hat{j}\\
0 & \text{Otherwise}
\end{cases}\label{eq:-32}
\end{equation}

Throughout the following sections, we shall denote the $L^{2}$ normalized
eigenfunctions of $H\left(t\right)$ by $f_{n}^{\left(t\right)}$.

\bigskip

\subsection{The secular manifold and the Barra-Gaspard measure\label{subsec:secular-manifold}}

Motivated by the discussion above, we can define the following function
on $\mathbb{R}^{E}$:
\begin{equation}
\tilde{F}\left(\vec{y}\right)=\det\left(I-Se^{iy}\right),\label{eq:-3}
\end{equation}

where $y$ is the diagonal matrix $diag\left(y_{1},y_{1},...,y_{E},y_{E}\right)$.
This function is clearly $2\pi$ periodic in each of its components,
and we can thus consider it as a function $F\left(\vec{\kappa}\right)$,
where $\vec{\kappa}$ is a point on the torus $\mathbb{T}^{E}:=\mathbb{R}^{E}/2\pi\mathbb{Z}^{E}$.
This function is known as the ``\emph{secular function}''.

Given a vector of edge lengths $\vec{\ell}$, we can consider the
following linear flow on the torus:
\begin{equation}
\phi\left(k\right)=k\vec{\ell},k\in\mathbb{R},\label{eq:-102}
\end{equation}
where $k\vec{\ell}=\left(k\ell_{1},...,k\ell_{E}\right)$ is identified
with its class in $\mathbb{T}^{E}$, see also Figure \ref{fig:torus flow}.
Then by Theorem \ref{thm:secular-equation} presented above, we conclude
that $k^{2}>0$ is an eigenvalue of $H_{0}$ if and only if $F\left(\phi\left(k\right)\right)=0$.
We can thus define the set (see Figure \ref{fig:torus flow})
\begin{equation}
\Sigma=\left\{ \vec{\kappa}\in\mathbb{T}^{E}:F\left(\vec{\kappa}\right)=0\right\} .\label{eq:-38}
\end{equation}

\begin{figure}
\includegraphics[scale=0.9]{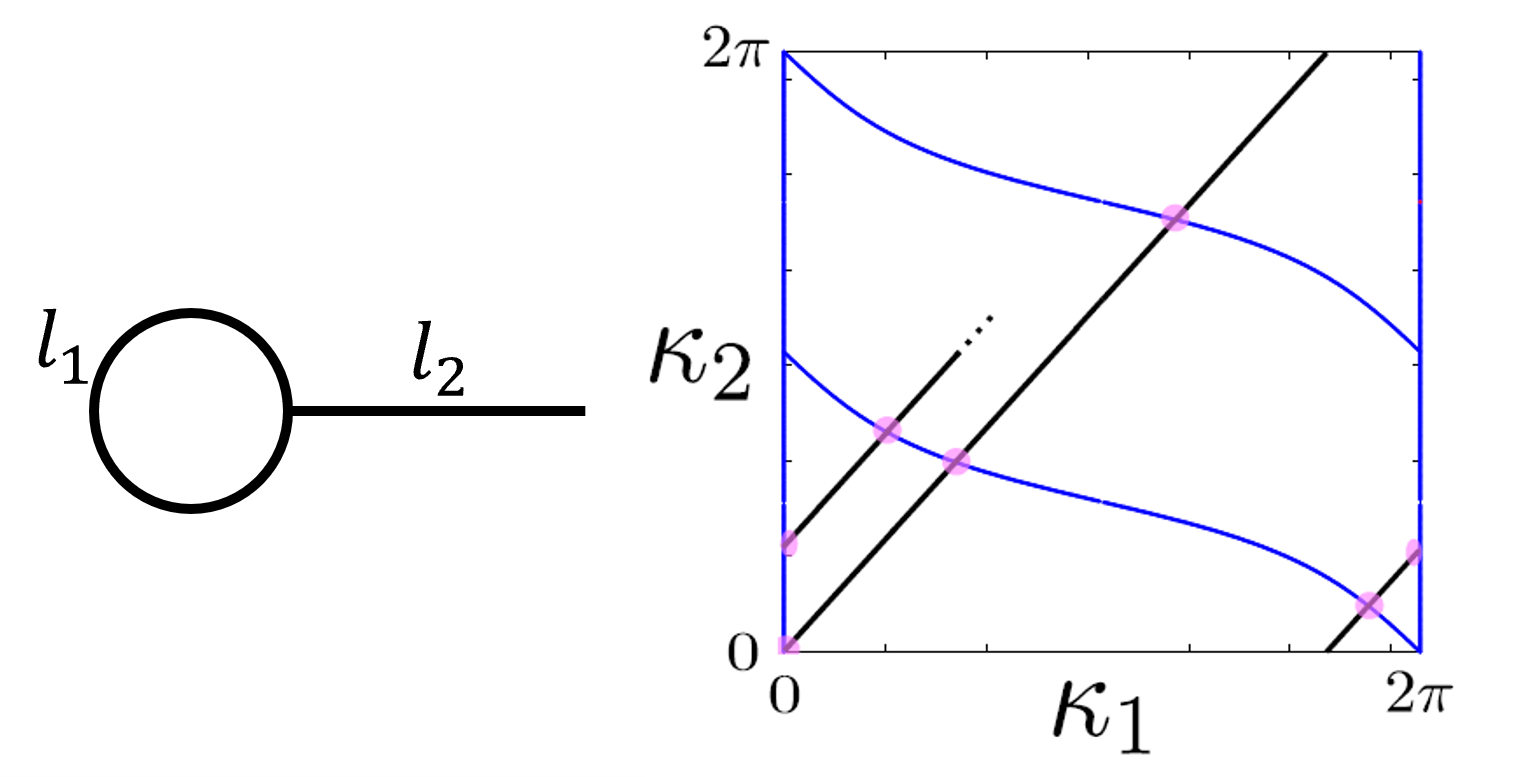}

\caption[A lasso graph and the corresponding secular manifold.]{A lasso graph and the corresponding secular manifold (blue) in the
two-dimensional torus. The black line is the torus flow $\phi\left(k\right)=k\vec{\ell}$.
Each intersection of the torus flow with the secular manifold corresponds
to an eigenvalue of $H_{0}$.\label{fig:torus flow}}
\end{figure}

$\Sigma$ is a compact algebraic subvariety of the torus known as
the \emph{secular manifold}\footnote{This is a slight misnomer, since generally the secular manifold could
have singular points.} (see \cite{AloBanBer_cmp18}). We can also define
\begin{equation}
\sreg=\left\{ \vec{\kappa}\in\Sigma:\dim\ker\left(I-Se^{i\kappa}\right)=1\right\} ,\label{eq:-39}
\end{equation}
with $\kappa=diag\left(\kappa_{1},\kappa_{1},...,\kappa_{E},\kappa_{E}\right)$.
$\sreg$ is a smooth submanifold of the torus of codimension one.

Note that if the combinatorial structure of the graph is fixed, then
any given point $\vec{\kappa}\in\sreg$ may be identified with infinitely
many different metric graphs. These are all metric graphs whose corresponding
secular function (which depends on their vector of edge lengths $\vec{\ell}$)
satisfies $\dim\ker\left(I-Se^{i\kappa}\right)=1$, where $\kappa=diag\left(k\ell_{1},k\ell_{1},...,k\ell_{E},k\ell_{E}\right)$.
Any point $\vec{\kappa}\in\sreg$ can be used to define a function
on this family of graphs by
\begin{equation}
\widetilde{f}_{j}^{\vec{\kappa}}\left(x\right)=a_{j}\left(\vec{\kappa}\right)e^{ix}+a_{\hat{j}}\left(\vec{\kappa}\right)e^{i\kappa_{j}}e^{-ix},\label{eq:-4}
\end{equation}

where $a_{j}\left(\vec{\kappa}\right)$ and $a_{\hat{j}}\left(\vec{\kappa}\right)$
are the corresponding entries of the $\mathbb{C}^{2E}$ normalized
vector $\vec{a}\in\ker\left(I-Se^{ik\L}\right)$ as in (\ref{eq:-2}).
These are also known as \emph{canonical eigenfunctions}, as discussed
in \cite{AloBanBer_cmp18,Alon_PhDThesis}. 

Note that if $k^{2}>0$ is an eigenvalue of $H_{0}$ with eigenfunction
$f$, then up to sign, $f$ and the canonical eigenfunction $f^{k\vec{\ell}}$
share the same amplitude vector $\vec{a}$, which means that they
attain the same values at vertices and have the same derivatives.
Due to this reason, the canonical eigenfunctions can be thought of
as a canonical choice of representatives for all eigenfunctions of
the family of metric graphs which correspond to the given point $\vec{\kappa}\in\sreg$.

The canonical eigenfunctions $\widetilde{f}^{\vec{\kappa}}$ can be
used to define various functions on $\sreg$, such as $\left|a_{j}\left(\vec{\kappa}\right)\right|^{2},\left|a_{\hat{j}}\left(\vec{\kappa}\right)\right|^{2},\left|\widetilde{f}_{j}^{\vec{\kappa}}\left(0\right)\right|^{2}$,
etc. By the above, we see that these functions on $\sreg$ can be
related to the eigenfunctions of our graph. Due to this useful property,
we would naturally like to be able to integrate such functions over
$\Sigma$. To do this, we shall apply the method first suggested by
Barra and Gaspard in \cite{BarGas_jsp00}, and further developed in
\cite{BerWin_tams10} by Berkolaiko and Winn and in \cite{CdV_ahp15}
by Colin de Verdière.
\begin{defn}
\label{def:bg-measure}(\cite{BarGas_jsp00,CdV_ahp15,BerWin_tams10}).
The Barra-Gaspard measure on the secular manifold $\Sigma$ is the
Radon probability measure given by
\begin{equation}
d\mu_{\vec{\ell}}=\frac{\pi}{L}\cdot\frac{1}{\left(2\pi\right)^{E}}\left|\hat{n}\left(\vec{\kappa}\right)\cdot\vec{\ell}\right|d\sigma,\label{eq:-5}
\end{equation}
where $d\sigma$ is the standard Lebesgue surface element and $\hat{n}$
is the unit normal to the secular manifold $\Sigma$. For a thorough
introduction to the Barra-Gaspard measure, see \cite{AloBanBer_cmp18,CdV_ahp15,AloBanBer_21arxiv,Alon_PhDThesis,BarGas_jsp00,BerWin_tams10}.
\end{defn}

The following ergodic theorem will be a main tool in proving our results:
\begin{thm}
\label{thm:ergodic-theorem}(\cite{BarGas_jsp00,CdV_ahp15,BerWin_tams10}).
Assume that the entries of the edge length vector $\vec{\ell}$ are
linearly independent over $\mathbb{Q}$. Let $g$ be a Riemann integrable
function on $\Sigma$ (equivalently -- $g$ is continuous almost
everywhere and bounded). Then
\begin{equation}
\left\langle g\right\rangle _{n}:=\lim_{N\rightarrow\infty}\frac{1}{N}\sum_{n=1}^{\infty}g\left(k_{n}\vec{\ell}\right)=\int_{\Sigma}gd\mu_{\vec{\ell}}.\label{eq:-7}
\end{equation}
\end{thm}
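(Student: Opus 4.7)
The plan is to prove the ergodic theorem by combining Weyl's equidistribution theorem on the torus $\mathbb{T}^E$ with a co-area / transversality argument that converts the sum over orbit-intersections with $\Sigma$ into a surface integral weighted by the crossing speed $|\hat{n}(\vec{\kappa})\cdot\vec{\ell}|$.

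First, I would invoke the $\mathbb{Q}$-linear independence hypothesis to apply Weyl's equidistribution theorem: the linear flow $\phi(k)=k\vec{\ell}$ is uniformly distributed on $\mathbb{T}^{E}$ with respect to Haar measure, so for any continuous $h:\mathbb{T}^{E}\to\mathbb{C}$,
$$\lim_{K\to\infty}\frac{1}{K}\int_{0}^{K}h(\phi(k))\,dk=\frac{1}{(2\pi)^{E}}\int_{\mathbb{T}^{E}}h(\vec{\kappa})\,d\vec{\kappa}.$$
Next, by the Weyl law for the Neumann-Kirchhoff Laplacian, $\#\{n:k_{n}\leq K\}=\frac{L}{\pi}K+O(1)$. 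It therefore suffices to evaluate $\frac{1}{K}\sum_{k_{n}\leq K}g(k_{n}\vec{\ell})$ and afterwards multiply by $\pi/L$ in order to recover the Cesaro mean $\left\langle g\right\rangle _{n}$.

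Second, I would thicken $\sreg$ to a tubular neighborhood $\Sigma_{\epsilon}$ of width $2\epsilon$ and approximate the counting measure on crossings by
$$\sum_{k_{n}\leq K}g(k_{n}\vec{\ell})\;\approx\;\frac{1}{2\epsilon}\int_{0}^{K}g(\phi(k))\,\mathbf{1}_{\Sigma_{\epsilon}}(\phi(k))\,dk.$$
At a regular point $\vec{\kappa}\in\sreg$, the flow crosses $\Sigma$ transversally at speed $|\hat{n}(\vec{\kappa})\cdot\vec{\ell}|$ in the normal direction, so each transversal crossing of $\Sigma$ by the orbit contributes a time of length $\approx 2\epsilon/|\hat{n}(\vec{\kappa})\cdot\vec{\ell}|$ inside $\Sigma_{\epsilon}$. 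Invoking the co-area formula inside the tube, the right-hand side tends (as $\epsilon\to 0$, with $K$ fixed after normalization) to the surface integral of $g$ against $|\hat{n}\cdot\vec{\ell}|/(2\pi)^{E}$.

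Third, I would apply Weyl's equidistribution to the bounded Riemann-integrable function $g\cdot\mathbf{1}_{\Sigma_{\epsilon}}$ (after a standard sandwich argument approximating it from above and below by continuous functions, which is where the Riemann-integrability assumption on $g$ enters), then exchange the limits $K\to\infty$ and $\epsilon\to 0$. This yields
$$\lim_{K\to\infty}\frac{1}{K}\sum_{k_{n}\leq K}g(k_{n}\vec{\ell})=\frac{1}{(2\pi)^{E}}\int_{\Sigma}g(\vec{\kappa})\,|\hat{n}(\vec{\kappa})\cdot\vec{\ell}|\,d\sigma.$$
Multiplying by the Weyl-law factor $\pi/L$ produces exactly the Barra-Gaspard measure $d\mu_{\vec{\ell}}$ in Definition~\ref{def:bg-measure}, completing the proof.

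The main obstacle is the careful control of the \emph{singular} set $\Sigma\setminus\sreg$ and of the tangential locus $\{\vec{\kappa}\in\sreg:\hat{n}(\vec{\kappa})\cdot\vec{\ell}=0\}$, where the transversality argument breaks down. One must show that both sets have $d\mu_{\vec{\ell}}$-measure zero (the first is a proper algebraic subvariety of $\Sigma$ of smaller dimension; the second has vanishing $|\hat{n}\cdot\vec{\ell}|$ by construction, so it is $\mu_{\vec{\ell}}$-null), and that the contribution of orbit-time spent near these sets to the tubular integral is negligible as $\epsilon\to 0$. Once this null-set control is in place, the exchange of limits $\lim_{K\to\infty}\lim_{\epsilon\to 0}=\lim_{\epsilon\to 0}\lim_{K\to\infty}$ is justified by the uniform boundedness of $g$ and the Riemann-integrability sandwich; this is the technical core of the argument, and it is the step that has been worked out in detail in the references \cite{BarGas_jsp00,BerWin_tams10,CdV_ahp15}.
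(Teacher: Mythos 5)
You should first note that the paper does not prove Theorem \ref{thm:ergodic-theorem} at all: it is quoted, with attribution, from \cite{BarGas_jsp00,CdV_ahp15,BerWin_tams10}, so there is no internal proof to compare against. Your sketch reconstructs the standard Barra--Gaspard argument as developed in those references (Weyl equidistribution of the linear flow under rational independence, the Weyl law normalization $N\approx\frac{L}{\pi}K$, thickening $\Sigma$ to a tube, and converting crossing counts into a surface integral weighted by the transversal speed $\left|\hat{n}\cdot\vec{\ell}\right|$), and in outline this is the right route and consistent with how the cited works proceed.

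Two concrete points need repair before this could stand as a proof. First, your displayed tube approximation is internally inconsistent: as written, $\frac{1}{2\epsilon}\int_{0}^{K}g\left(\phi\left(k\right)\right)\mathbf{1}_{\Sigma_{\epsilon}}\left(\phi\left(k\right)\right)dk$ approximates $\sum_{c}g\left(\kappa_{c}\right)/\left|\hat{n}\left(\kappa_{c}\right)\cdot\vec{\ell}\right|$ (by your own crossing-time estimate $2\epsilon/\left|\hat{n}\cdot\vec{\ell}\right|$), not $\sum_{c}g\left(\kappa_{c}\right)$; you must insert the weight $\left|\hat{n}\cdot\vec{\ell}\right|$ (suitably extended to the tube) into the integrand, and only then does Weyl equidistribution plus the co-area formula produce $\frac{1}{\left(2\pi\right)^{E}}\int_{\Sigma}g\left|\hat{n}\cdot\vec{\ell}\right|d\sigma$, i.e.\ the Barra--Gaspard density. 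Second, the Cesaro mean on the left-hand side counts eigenvalues with multiplicity, while crossings of the flow through $\Sigma\backslash\sreg$ (where $\dim\ker\left(I-Se^{i\kappa}\right)\geq2$) contribute multiply to the sum but are invisible to the transversality heuristic; one must show such crossings have density zero along the flow, which is part of the same technical control you defer to the references. That deferral is the real issue: the null-set and limit-interchange analysis near $\Sigma\backslash\sreg$ and the tangential locus is precisely the substance of the theorem, so as written your argument is an accurate roadmap of the known proof rather than a self-contained one --- which is acceptable here only because the thesis itself treats the result as imported.
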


Thus, the ergodic theorem allows us to compute Cesaro means (see Definition
\ref{def:Cesaro}) of Riemann integrable functions on the secular
manifold when we sample them along the points where the torus flow
intersects the secular manifold.

In addition to Theorem \ref{thm:ergodic-theorem}, we will need the
following result: 
\begin{lem}
\label{lem:secular-normal}(\cite{CdV_ahp15}). Let $\widetilde{f}^{\vec{\kappa}}$
be a canonical eigenfunction such that $\vec{a}$ is $\mathbb{C}^{2E}$
normalized. Then the component of the unit normal to the secular manifold
$\Sigma$ at the point $\vec{\kappa}=k\vec{\ell}$ in the direction
of the edge $j$ is given by
\begin{equation}
\hat{n}_{j}=\left|a_{j}\right|^{2}+\left|a_{\hat{j}}\right|^{2}.\label{eq:-8}
\end{equation}
\end{lem}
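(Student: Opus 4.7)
The plan is to derive the formula from first-order perturbation theory for the eigenvalues of the unitary matrix $Se^{i\kappa}$, combined with an implicit characterization of $\Sigma$ as the zero level set of a real-valued defining function.

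First I would observe that at a regular point $\vec{\kappa}_0\in\sreg$, the matrix $Se^{i\kappa_0}$ has $\mu=1$ as a simple eigenvalue with eigenvector $\vec{a}$, and since $Se^{i\kappa}$ is unitary, applying $(Se^{i\kappa_0})^{-1}=(Se^{i\kappa_0})^*$ to $Se^{i\kappa_0}\vec{a}=\vec{a}$ gives $\vec{a}^*Se^{i\kappa_0}=\vec{a}^*$, so $\vec{a}$ is simultaneously the right and left eigenvector. For $\vec{\kappa}$ near $\vec{\kappa}_0$, the simple eigenvalue persists as a smooth function $\mu(\vec{\kappa})$, and unitarity forces $\mu(\vec{\kappa})=e^{i\theta(\vec{\kappa})}$ for a smooth real-valued phase $\theta$ with $\theta(\vec{\kappa}_0)=0$. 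Locally $\Sigma$ coincides with the level set $\{\theta=0\}$, so the unit normal at $\vec{\kappa}_0$ is proportional to $\nabla\theta$.

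Next I would invoke the Hellmann--Feynman formula for $\mu$. Writing $\kappa(\vec{\kappa})=\mathrm{diag}(\kappa_1,\kappa_1,\ldots,\kappa_E,\kappa_E)$, differentiation gives $\partial_{\kappa_j}\kappa=E_j$, where $E_j$ is the diagonal projection onto the two coordinates indexed by $j$ and $\hat{j}$. Hence $\partial_{\kappa_j}(Se^{i\kappa})=iSe^{i\kappa}E_j$, and using $\vec{a}^*Se^{i\kappa_0}=\vec{a}^*$ together with $|\vec{a}|^2=1$ yields
\begin{equation*}
\partial_{\kappa_j}\mu\big|_{\vec{\kappa}_0}=\vec{a}^*(iSe^{i\kappa_0}E_j)\vec{a}=i\vec{a}^*E_j\vec{a}=i(|a_j|^2+|a_{\hat{j}}|^2).
\end{equation*}
Differentiating $\mu=e^{i\theta}$ at $\theta(\vec{\kappa}_0)=0$ then gives $\partial_{\kappa_j}\theta|_{\vec{\kappa}_0}=|a_j|^2+|a_{\hat{j}}|^2$, which is the stated expression for the $j$-th component of the normal (with the overall normalization absorbed into the convention used to define the Barra--Gaspard measure).

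The main obstacle I would anticipate is not the perturbation calculation itself, but producing a genuine real-valued defining function for the codimension-one submanifold $\Sigma$, since $F=\det(I-Se^{i\kappa})$ is complex-valued and its zero set would naively be of codimension two. Unitarity of $Se^{i\kappa}$ resolves this by constraining the perturbed eigenvalue to remain on the unit circle, so that its phase $\theta$ is an honest real defining function. An alternative route that bypasses eigenvalue perturbation theory uses the Jacobi formula $\partial_{\kappa_j}\det M=\mathrm{tr}(\mathrm{adj}(M)\,\partial_{\kappa_j}M)$ together with the rank-one identity $\mathrm{adj}(M)=c\,\vec{a}\vec{a}^*$ at a regular point of $\sreg$, which reduces to the same inner product $\vec{a}^*E_j\vec{a}$ and yields the same conclusion.
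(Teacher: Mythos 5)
Your derivation is correct, and there is nothing internal to compare it against: the paper does not prove this lemma at all but imports it verbatim from \cite{CdV_ahp15}, and your argument is essentially the standard proof found there and in the related literature (simplicity of the eigenvalue $1$ of the unitary matrix $Se^{i\kappa}$ on $\sreg$, coincidence of left and right eigenvectors by unitarity, the Hellmann--Feynman computation $\partial_{\kappa_{j}}\theta=\vec{a}^{*}E_{j}\vec{a}=\left|a_{j}\right|^{2}+\left|a_{\hat{j}}\right|^{2}$ for the eigenphase $\theta$, and the local identification of $\Sigma$ with the level set $\left\{ \theta=0\right\} $, which is legitimate because the remaining eigenvalues stay away from $1$ near a regular point). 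Your observation that the complex-valued secular function cannot directly serve as a defining function, and that unitarity supplies the real phase $\theta$ instead, is exactly the right point; note also that your formula $\sum_{j}\partial_{\kappa_{j}}\theta=\left|\vec{a}\right|^{2}=1$ is what certifies $\nabla\theta\neq0$, so it is worth stating explicitly. The only caveat is normalization: the vector you produce has components summing to one (an $\ell^{1}$-type normalization inherited from $\left\Vert \vec{a}\right\Vert _{\mathbb{C}^{2E}}=1$), not Euclidean norm one, so the phrase ``unit normal'' in the statement must be read with that convention or one divides by $\left\Vert \nabla\theta\right\Vert $; this is a looseness of the statement rather than of your proof, and it is harmless for the way the lemma is used later, since $\hat{n}_{j}$ enters only through the scale-invariant ratio $\hat{n}_{j}/\sum_{i}\ell_{i}\hat{n}_{i}$, i.e., only the direction of the normal matters. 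The alternative route via the Jacobi formula and the rank-one adjugate is also sound and gives the same proportionality.
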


Using these results, we can integrate some of the functions defined
above on the secular manifold.

\newpage{}

\section{Proofs for Theorems \ref{thm:1.RNG-Lipschitz}, \ref{thm:RNG-mean}
and \ref{thm:Weyl-law}\label{sec:proof-of-rng}}

\subsection{Proof of Theorem \ref{thm:1.RNG-Lipschitz}}

To prove Theorem \ref{thm:1.RNG-Lipschitz}, we need the following
lemmas:
\begin{lem}
\label{lem:hadamard}The RNG is given by the formula
\begin{equation}
d_{n}\left(\sigma\right)=\sum_{v\in\VR}\int_{0}^{\sigma}\left|f_{n}^{\left(t\right)}\left(v\right)\right|^{2}dt,\label{eq:-1}
\end{equation}
where $f_{n}^{\left(t\right)}$ is an $L^{2}$ normalized eigenfunction
of $H\left(t\right)$ as described in Section \ref{sec:RNG-tools}.
\end{lem}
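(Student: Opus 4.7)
The plan is to prove this Hadamard-type formula by a Feynman-Hellmann argument applied to the natural sesquilinear form associated with the $\delta$ family. Recall that $H(t)$ is generated by the form
\begin{equation*}
Q_t(f,f) = \int_\Gamma |f'|^2\, dx + t \sum_{v \in \VR} |f(v)|^2,
\end{equation*}
with form domain $H^1(\Gamma)$ (this is derived in the appendix referenced in the introduction). The crucial structural feature is that $Q_t$ is affine in $t$, with $t$-derivative $\dot Q(f,f) = \sum_{v \in \VR} |f(v)|^2$ independent of $t$, and the boundary traces $f(v)$ make sense because $H^1$ functions on $\Gamma$ are continuous.

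First, I would invoke Kato's analytic perturbation theory: since $(H(t))_{t \in \mathbb{R}}$ is a self-adjoint holomorphic family of type (B), the eigenvalues and an orthonormal basis of eigenfunctions can be chosen to depend real-analytically on $t$. This allows me to work with smooth branches $\lambda_n(t)$ and $L^2$-normalized $f_n^{(t)}$ even across eigenvalue crossings, after possibly relabeling. Second, I would differentiate the eigenvalue relation $H(t) f_n^{(t)} = \lambda_n(t) f_n^{(t)}$ in the quadratic-form sense. Combined with $\|f_n^{(t)}\|_{L^2} = 1$, the standard first-order perturbation identity gives
\begin{equation*}
\frac{d\lambda_n(t)}{dt} = \dot Q(f_n^{(t)}, f_n^{(t)}) = \sum_{v \in \VR} \bigl|f_n^{(t)}(v)\bigr|^2.
\end{equation*}
Integrating from $0$ to $\sigma$ and interchanging the (finite) sum with the integral then yields exactly
\begin{equation*}
d_n(\sigma) = \lambda_n(\sigma) - \lambda_n(0) = \sum_{v \in \VR} \int_0^\sigma \bigl|f_n^{(t)}(v)\bigr|^2\, dt.
\end{equation*}

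The main obstacle is the bookkeeping at eigenvalue crossings. For a fixed index $n$, the branch $\lambda_n(t)$ (ordered by magnitude) is only piecewise smooth, and the eigenfunction $f_n^{(t)}$ is only defined up to the labeling ambiguity noted in Remark 2.2.2. I would handle this by partitioning $[0,\sigma]$ at the (countable, hence measure-zero set of) crossing times and applying the analytic-branch version of Feynman-Hellmann on each open subinterval, then summing. Since permuting the branch labels at crossing points does not change the integrated right-hand side when summed over all $n$ up to any cutoff (this is precisely the observation in Remark 2.2.2), the formula holds for the magnitude-ordered branches as stated. Alternatively, if one restricts to generic $\vec\ell$ so that eigenvalues of $H(0)$ are simple and crossings are transverse, only finitely many crossings occur on any compact $t$-interval, and the same argument applies directly.
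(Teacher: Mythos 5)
Your proposal is correct and follows essentially the same route as the paper: both rest on the Hadamard/Feynman--Hellmann identity $\frac{d\lambda_{n}\left(t\right)}{dt}=\sum_{v\in\VR}\left|f_{n}^{\left(t\right)}\left(v\right)\right|^{2}$ (which the paper cites from Proposition $3.1.6$ of Berkolaiko--Kuchment rather than rederiving it via the quadratic form and Kato's type (B) theory, as you do), followed by integration in $t$, with the crossing times controlled by piecewise real analyticity of the spectral curves. The one point the paper treats more explicitly than you is the case where $\lambda_{n}\left(t\right)$ is a multiple eigenvalue for \emph{every} $t$ --- there the degenerate set is not finite and your partition argument does not literally apply, but then the eigenfunctions vanish on $\VR$ and the curve is constant (Lemma $3.1.15$ in the same reference), so both sides of the formula are zero and the identity holds trivially.
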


\begin{lem}
\label{lem:bdd} For every $v\in\VR$, the quantity $\left|f_{n}^{\left(t\right)}\left(v\right)\right|^{2}$
is uniformly bounded in $n$ and $t\in\mathbb{R}$.
\end{lem}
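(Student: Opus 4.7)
The plan is to use the scattering formalism of Subsection~\ref{subsec:Scattering-formalism} to express $f_n^{(t)}(v)$ directly in terms of the scattering amplitudes of the eigenfunction, and then control those amplitudes through the $L^2$ normalization $\|f_n^{(t)}\|_{L^2}=1$.

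First I would derive the $\delta$-vertex analogue of (\ref{eq:-12}): using the scattering matrix (\ref{eq:-32}) at $v\in\VR$ together with continuity, a direct manipulation parallel to the derivation of (\ref{eq:-12}) gives
\[
f_n^{(t)}(v) \;=\; \frac{2}{\deg(v)+it/k_n}\sum_{i\in E_v} a_{\hat{i}}\, e^{ik_n\ell_i}.
\]
Since $|\deg(v)+it/k_n|^2 \ge \deg(v)^2$ for real $k_n$ and any $t\in\mathbb{R}$, Cauchy--Schwarz immediately yields $|f_n^{(t)}(v)|^2 \le (4/\deg(v))\sum_{i\in E_v}|a_{\hat{i}}|^2$.

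Next I would control the amplitudes globally. Expanding $\|f_n^{(t)}\|_{L^2}^2$ edge by edge from the representation (\ref{eq:-6}) produces the identity
\[
1 \;=\; \sum_{e\in\mathcal{E}}\ell_e\bigl(|a_e|^2+|a_{\hat{e}}|^2\bigr) + \sum_{e\in\mathcal{E}}\frac{2\sin(k_n\ell_e)}{k_n}\,\Re(a_e\overline{a_{\hat{e}}}).
\]
The oscillating cross term is absolutely bounded by $k_n^{-1}(|a_e|^2+|a_{\hat{e}}|^2)$ per edge, so once $k_n \ge 2/\ell_{min}$ it can be absorbed into the main sum to give $\sum_e(|a_e|^2+|a_{\hat{e}}|^2)\le 2/\ell_{min}$; combined with the pointwise bound above, this yields the uniform estimate $|f_n^{(t)}(v)|^2 \le 8/(\deg(v)\ell_{min})$ for every $t\in\mathbb{R}$ and every index $n$ with $k_n \ge 2/\ell_{min}$.

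The finitely many remaining low-lying indices I would handle by a continuity-and-limit argument: the map $t\mapsto|f_n^{(t)}(v)|^2$ is continuous, and for $t\to\infty$ the $\delta$-condition $\sum_{e\in E_v}f'|_e(v) = t f_n^{(t)}(v)$ together with the $H^2$ bound $\|f''_n^{(t)}\|_{L^2}=\lambda_n^{(t)}\le\lambda_n^{\mathrm{Dir}}$ forces $|f_n^{(t)}(v)|\to 0$; hence each of these finitely many functions is bounded on $[0,\infty)$, and taking the maximum together with the high-frequency constant yields the claimed uniform bound. I expect the main technical obstacle to be the absorption in the high-frequency step: the threshold $k_0=2/\ell_{min}$ must be chosen so that the oscillating cross-term is dominated uniformly in both $n$ and $t$, and the argument as written is naturally suited to $t\ge 0$, where the RNG is applied.
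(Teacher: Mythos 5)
Your high-frequency step is correct and is, in substance, the same as the paper's: the paper works with the $\mathbb{C}^{2E}$-normalized canonical amplitude vector and bounds the quotient (\ref{eq:-9}) (numerator at most $4$, denominator at least $\ell_{min}$ up to an $O(1/k_n)$ correction), while you keep the amplitudes of the $L^{2}$-normalized eigenfunction, read the vertex value off the $\delta$ scattering relation (\ref{eq:-32}) --- your formula $f_n^{(t)}(v)=\frac{2}{\deg(v)+it/k_n}\sum_{i\in E_v}a_{\hat i}e^{ik_n\ell_i}$ is indeed what continuity plus the Robin condition give --- and control $\sum_e\bigl(|a_e|^2+|a_{\hat e}|^2\bigr)$ through the normalization identity. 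Your version has the small advantage that the uniformity in $t$ is visible directly from $|\deg(v)+it/k_n|\ge\deg(v)$, and your bound $8/(\deg(v)\ell_{min})$ holds for every real $t$ once $\lambda_n(t)\ge 4/\ell_{min}^2$.

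The gap is in the low-lying part, and it sits exactly where you hedge: the lemma is stated for all $t\in\mathbb{R}$, but your continuity-plus-decay argument only covers $t\ge 0$ (monotonicity is what makes ``finitely many remaining indices'' true there, the energy identity $\|f'\|_{L^2}^2=\lambda_n(t)-t|f_n^{(t)}(v)|^2$ you implicitly use needs $t\ge 0$, and you only treat the limit $t\to+\infty$). For $t<0$ one must additionally handle negative eigenvalues (where the representation with real $k_n$, hence your threshold, is unavailable), justify via interlacing that still only finitely many curves ever drop below the threshold, and control the regime $t\to-\infty$. The paper does that last step by real-analyticity in $t$ together with the claim that the vertex values tend to $0$ as $t\to\pm\infty$ because the $\delta$ condition converges to the Dirichlet condition; be aware that this is itself delicate for the finitely many curves that dive to $-\infty$ as $t\to-\infty$: for a loop with a single $\delta$ vertex one has $\lambda_1(t)\sim-t^2/\deg(v)^2$, and the Hadamard formula (\ref{eq:-43}) then gives $|f_1^{(t)}(v)|^2=\frac{d\lambda_1}{dt}\sim 2|t|/\deg(v)^2$, which grows, so uniformity over all of $(-\infty,0]$ cannot be extracted from a limit argument at $-\infty$. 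In short, your proposal establishes the statement on $[0,\infty)$ by essentially the paper's route with a cleaner vertex bound; extending it to negative $t$ requires the extra ingredients above, and for $t$ unbounded below the statement (and the paper's own treatment of that regime) should be read with care.
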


\begin{rem}
\label{rem:additivity}Note that by Lemma \ref{lem:hadamard}, we
may from now on assume without loss of generality that our graph contains
a single $\delta$ interaction point placed at a vertex $v$. The
more general results of Theorem \ref{thm:1.RNG-Lipschitz} follow
from the additivity of Formula (\ref{eq:-1}). Moreover, for convenience,
we shall assume that the graph $\Gamma$ is parameterized so that
$v$ is located at $x=0$.

Before proving the lemmas, we use them in order to prove Theorem \ref{thm:1.RNG-Lipschitz}.
\end{rem}

\begin{proof}[Proof of Theorem \ref{thm:1.RNG-Lipschitz}]
 By Lemma \ref{lem:hadamard},
\begin{equation}
d_{n}\left(\sigma\right)=\sum_{v\in\VR}\int_{0}^{\sigma}\left|f_{n}^{\left(t\right)}\left(v\right)\right|^{2}dt.\label{eq:-41}
\end{equation}
By Remark \ref{rem:additivity}, we may prove the theorem for the
simpler case of $\VR$ consisting of a single vertex:
\begin{equation}
d_{n}\left(\sigma\right)=\int_{0}^{\sigma}\left|f_{n}^{\left(t\right)}\left(v\right)\right|^{2}dt.\label{eq:-42}
\end{equation}

By Lemma \ref{lem:bdd}, we know that $d_{n}'\left(\sigma\right)=\left|f_{n}^{\left(t\right)}\left(v\right)\right|^{2}$
is uniformly bounded in $n$ and $t\in\mathbb{R}$, which means that
the all functions in the sequence $\left(d_{n}\left(\sigma\right)\right)_{n=1}^{\infty}$
are Lipschitz continuous in $\mathbb{R}$ with a uniform Lipschitz
constant. In particular, this implies that the sequence of functions
$\left(d_{n}\left(\sigma\right)\right)_{n=1}^{\infty}$ is uniformly
bounded on any compact subset $A\subset\mathbb{R}$.

To prove the existence of a uniformly convergent subsequence on $A$,
we apply the Arzelà-Ascoli Theorem. We already know that that $\left(d_{n}\left(\sigma\right)\right)_{n=1}^{\infty}$
is uniformly bounded on $A$. Moreover, it is uniformly Lipschitz
and hence equicontinuous. By Arzelà-Ascoli, there exists a subsequence
$d_{n_{k}}\left(\sigma\right)$ which converges uniformly on $A$. 
\end{proof}
\begin{proof}[Proof of Lemma \ref{lem:hadamard}]
 We use a simple generalization of the Hadamard type formula which
is presented in Proposition $3.1.6$ in \cite{BerKuc_graphs} (the
formula allows for only a single $\delta$ vertex, while we allow
multiple vertices). The formula states that at any point $t$ such
that $\lambda_{n}\left(t\right)$ is a simple eigenvalue,
\begin{equation}
\frac{d\lambda_{n}\left(t\right)}{dt}=\sum_{v\in\VR}\left|f_{n}^{\left(t\right)}\left(v\right)\right|^{2}.\label{eq:-43}
\end{equation}

Note that unless $\lambda_{n}\left(t\right)$ is a multiple eigenvalue
for all $t\in\mathbb{R}$, then the set $D\subset\left[0,\sigma\right]$
of $t$ values for which $\lambda_{n}\left(t\right)$ is degenerate
must be finite. This is true since the spectral curves of the $\delta$
family are piecewise real analytic in $t$ (see Section $3.1.2$ in
\cite{BerKuc_graphs}). Indeed, if $D\subset\left[0,\sigma\right]$
was infinite, then two of the spectral curves would agree on a set
with an accumulation point, and thus agree everywhere. Furthermore,
if $\lambda_{n}\left(t\right)$ is a multiple eigenvalue for all $t$,
then by Lemma $3.1.15$ in \cite{BerKuc_graphs}, we have that $\lambda_{n}\left(t\right)=\lambda_{n}\left(0\right)$
for all $t$, and the corresponding eigenfunction vanishes at $\VR$
for all $t$. In this case, the RNG is zero, and Formula (\ref{eq:-1})
holds trivially, and so we may focus on the previous case.

Since Formula (\ref{eq:-43}) above holds at all but finitely many
values of $t$, which for the purpose of integration do not matter,
we conclude that
\begin{equation}
d_{n}\left(\sigma\right)=\lambda_{n}\left(\sigma\right)-\lambda_{n}\left(0\right)=\int_{0}^{\sigma}\frac{d\lambda_{n}\left(t\right)}{dt}dt=\sum_{v\in\VR}\int_{0}^{\sigma}\left|f_{n}^{\left(t\right)}\left(v\right)\right|^{2}dt.\label{eq:-44}
\end{equation}
\end{proof}
\begin{rem}
\label{rem:Simplicity-assumption} Due to arguments similar to the
ones in the proof above, we shall always assume from now on that Formula
(\ref{eq:-43}) holds everywhere (and for our purposes this will not
affect the proofs).
\end{rem}

\begin{proof}[Proof of Lemma \ref{lem:bdd}]
 Due to Remark \ref{rem:additivity}, we may assume that $v=0$.

Denote by $\widetilde{f}_{n}^{\left(t\right)}$ the eigenfunctions
of $H\left(t\right)$ whose coefficient vector $\vec{a}$ is $\mathbb{C}^{2E}$
normalized (see Section \ref{sec:RNG-tools}). We wish to express
the value of the $L^{2}$ normalized eigenfunction $\left|f_{n}^{\left(t\right)}\left(0\right)\right|^{2}$
in terms of the eigenfunction $\widetilde{f}_{n}^{\left(t\right)}$.

We treat the case where the corresponding eigenvalue is positive.
Since $H\left(t\right)$ can have only finitely many non-positive
eigenvalues, the boundedness result still holds by the same arguments
as below.

By Remark \ref{rem:Simplicity-assumption}, we may focus on the values
$t\in\mathbb{R}$ such that the eigenvalues of $H\left(t\right)$
are simple (and so the corresponding eigenspace is one-dimensional).
We can thus write $\widetilde{f}_{n}^{\left(t\right)}=\alpha_{n}f_{n}^{\left(t\right)}$
for some $\alpha_{n}\in\mathbb{R}$ (and for our purposes we may assume
$\alpha_{n}>0$).

Since $\left\Vert f_{n}^{\left(t\right)}\right\Vert _{L^{2}}=1$,
then $\left\Vert \widetilde{f}_{n}^{\left(t\right)}\right\Vert _{L^{2}}^{2}=\left|\alpha_{n}\right|^{2}$,
and we can compute $\alpha_{n}$ explicitly:
\begin{align}
1 & =\left\Vert f_{n}^{\left(t\right)}\right\Vert _{L^{2}}^{2}=\frac{1}{\left|\alpha_{n}\right|^{2}}\sum_{i=1}^{E}\int_{0}^{\ell_{i}}\left|a_{i}e^{ik_{n}x}+a_{\hat{i}}e^{ik_{n}\ell_{i}}e^{-ik_{n}x}\right|^{2}dx\label{eq:-45}\\
= & \frac{1}{\left|\alpha_{n}\right|^{2}}\sum_{i=1}^{E}\left[\ell_{i}\left(\left|a_{i}\right|^{2}+\left|a_{\hat{i}}\right|^{2}\right)+\frac{2}{k_{n}}Re\left(a_{i}\overline{a_{\hat{i}}}e^{-ik_{n}\ell_{i}}\right)\right],\label{eq:-47}
\end{align}

where we have used the expression for $\widetilde{f}_{n}^{\left(t\right)}$
given in (\ref{eq:-6}).

This means:
\begin{align}
 & \alpha_{n}=\frac{1}{\sqrt{\sum_{i=1}^{E}\left[\ell_{i}\left(\left|a_{i}\right|^{2}+\left|a_{\hat{i}}\right|^{2}\right)+\frac{2}{k_{n}}Re\left(a_{i}a_{\hat{i}}e^{ik_{n}\ell_{i}}\right)\right]}}\label{eq:-48}\\
\Rightarrow & \left|f_{n}^{\left(t\right)}\left(0\right)\right|^{2}=\left|\frac{\widetilde{f}_{n}^{\left(t\right)}\left(0\right)}{\alpha_{n}}\right|^{2}=\frac{\left|a_{j}+a_{\hat{j}}e^{ik_{n}\ell_{j}}\right|^{2}}{\sum_{i=1}^{E}\left[\ell_{i}\left(\left|a_{i}\right|^{2}+\left|a_{\hat{i}}\right|^{2}\right)+\frac{2}{k_{n}}Re\left(a_{i}\overline{a_{\hat{i}}}e^{-ik_{n}\ell_{i}}\right)\right]}.\label{eq:-9}
\end{align}

We want to show that the quantity above is bounded. From now on we
simply denote $\left|f_{n}^{\left(t\right)}\right|^{2}$.

Since $a_{i}$ and $a_{\hat{i}}$ are both bounded in absolute value
by one, the numerator of the expression above is bounded from above
by four. The first term in the denominator is bounded from below by
$\min_{i}\left\{ \ell_{i}\right\} $. Moreover, since $a_{i},a_{\hat{i}}$
are bounded in absolute value by one, there is some $k_{*}$ large
enough so that $\left|\frac{2}{k_{n}}Re\left(a_{i}\overline{a_{\hat{i}}}e^{-ik_{n}\ell_{i}}\right)\right|<\frac{1}{2}\min_{i}\left\{ \ell_{i}\right\} $
for all $k_{n}\geq k_{*}$ (and in particular, it is smaller than
the first term in the denominator).

This means that $\left|f_{n}^{\left(t\right)}\right|^{2}$ is bounded
in $t$ and $n$ for all but possibly finitely many values of $n$
(the ones for which $k<k_{*}$). Since there are only finitely many
wave numbers $k_{n}$ in $\left[0,k_{*}\right]$, it is enough to
show that for fixed $n$ (which corresponds to a fixed $k_{n}$),
$\left|f_{n}^{\left(t\right)}\right|^{2}$ is bounded in $t$.

Similar to the proof of Lemma \ref{lem:hadamard}, $\left|f_{n}^{\left(t\right)}\right|^{2}$
is a real analytic function of $t$. Moreover, note as $t\rightarrow\pm\infty$,
the $\delta$ condition corresponds to the Dirichlet condition at
$v$ (see also Section $1.4.4$ in \cite{BerKuc_graphs}), and so
$\lim_{t\rightarrow\pm\infty}\left|f_{n}^{\left(t\right)}\right|^{2}\rightarrow0$.
So $\left|f_{n}^{\left(t\right)}\right|^{2}$ is bounded as a real
analytic function which tends to zero at infinity.

Overall, we get that $\left|f_{n}^{\left(t\right)}\right|^{2}$ is
uniformly bounded in $n$ and $t\in\mathbb{R}$, as required.
\end{proof}
\bigskip

\subsection{Proof of Theorem \ref{thm:Weyl-law}}

From now and until the end of Subsection \ref{subsec:mean-proof},
we make the following assumption\textbf{:}
\begin{assumption}
\label{subsec:rationality}The edge lengths of the graph are linearly
independent over $\mathbb{Q}$.
\end{assumption}

This assumption is required in order to apply Theorem \ref{thm:ergodic-theorem}.
In Subsection \ref{subsec:rational-proof} we will show that this
assumption can be omitted.

Our proofs will heavily rely on the secular manifold. By Theorem \ref{thm:ergodic-theorem},
we know that for any Riemann integrable function $g:\Sigma\rightarrow\mathbb{R}$,
we have that
\begin{equation}
\lim_{N\rightarrow\infty}\frac{1}{N}\sum_{n=1}^{N}g\left(\vec{\kappa}_{n}\right)=\int_{\Sigma}gd\mu_{\vec{\ell}}.\label{eq:-58}
\end{equation}

Our strategy will thus be to compute the mean values by defining appropriate
functions on the secular manifold which we can integrate.
\begin{rem}
\label{rem:meas0} While many of the functions we consider for the
application of Theorem \ref{thm:ergodic-theorem} are only well defined
on $\sreg$, we know that $\Sigma\backslash\sreg$ is of codimension
at least one in $\Sigma$, and is thus of measure zero (since $\mu_{\vec{\ell}}$
is a Radon measure). Thus, for the purpose of integration, this does
not matter.
\end{rem}

\begin{rem}
We prove the three formulas in Theorem \ref{thm:Weyl-law} in the
terminology of the eigenfunctions whose coefficient vector $\vec{a}$
is $\mathbb{C}^{2E}$ normalized. This is equivalent to the statement
given in Theorem \ref{thm:Weyl-law}, after applying Formula (\ref{eq:-9})
obtained in the proof of Lemma \ref{lem:bdd}.
\end{rem}

\begin{lem}
\label{lem:Amean}On each edge $j$, the following holds:
\begin{equation}
\left\langle \frac{\left|a_{j}\right|^{2}}{\sum_{i=1}^{E}\ell_{i}\left(\left|a_{i}\right|^{2}+\left|a_{\hat{i}}\right|^{2}\right)}\right\rangle _{n}=\left\langle \frac{\left|a_{\hat{j}}\right|^{2}}{\sum_{i=1}^{E}\ell_{i}\left(\left|a_{i}\right|^{2}+\left|a_{\hat{i}}\right|^{2}\right)}\right\rangle _{n}=\frac{1}{2L}.\label{eq:-42-1}
\end{equation}
\end{lem}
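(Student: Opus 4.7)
The plan is to apply the ergodic theorem (Theorem \ref{thm:ergodic-theorem}) and exploit a cancellation between the denominator of the integrand and the density of the Barra-Gaspard measure. The integrand $|a_j|^2/\sum_i \ell_i(|a_i|^2+|a_{\hat{i}}|^2)$ is well defined and smooth on $\sreg$, where the normalized null vector $\vec{a}$ of $I-Se^{i\kappa}$ is unique up to phase. Since $\|\vec{a}\|_{\mathbb{C}^{2E}}=1$, the denominator is bounded below by $\ell_{\min}$, so the integrand is bounded and continuous away from the measure-zero set $\Sigma\setminus\sreg$ (see Remark \ref{rem:meas0}), hence Riemann integrable on $\Sigma$. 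The same remarks apply to $|a_{\hat{j}}|^2/\sum_i \ell_i(|a_i|^2+|a_{\hat{i}}|^2)$.

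By Lemma \ref{lem:secular-normal}, $\hat{n}(\vec{\kappa})\cdot\vec{\ell}=\sum_i \ell_i(|a_i|^2+|a_{\hat{i}}|^2)$, which is strictly positive, so Definition \ref{def:bg-measure} shows that the factor $|\hat{n}\cdot\vec{\ell}|$ in the density of $d\mu_{\vec{\ell}}$ exactly cancels the denominator of the integrand. Combined with Theorem \ref{thm:ergodic-theorem} this yields
\begin{equation*}
\left\langle\frac{|a_j|^2}{\sum_i \ell_i(|a_i|^2+|a_{\hat{i}}|^2)}\right\rangle_n = \frac{\pi}{L(2\pi)^E}\int_\Sigma |a_j|^2\,d\sigma,
\end{equation*}
and the analogous formula with $|a_{\hat{j}}|^2$. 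Writing $c_j:=\int_\Sigma |a_j|^2\,d\sigma$ and $c_{\hat{j}}$ similarly, I observe that $\Sigma$, the surface measure $d\sigma$, and the amplitudes $a_j(\vec{\kappa})$ depend only on the combinatorial graph through $S$, so $c_j$ and $c_{\hat{j}}$ are independent of $\vec{\ell}$. Applying the same cancellation to the constant function $1$, i.e., to $\mu_{\vec{\ell}}(\Sigma)=1$, gives $\sum_j \ell_j(c_j+c_{\hat{j}})=2L(2\pi)^{E-1}$, and since this identity must hold for every admissible length vector $\vec{\ell}$ while the $c_j$ are length-independent, we conclude $c_j+c_{\hat{j}}=2(2\pi)^{E-1}$ for each edge $j$.

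To separate $c_j$ from $c_{\hat{j}}$, I would use time-reversal symmetry: because $S$ is real, the involution $\vec{\kappa}\mapsto-\vec{\kappa}$ preserves both $\Sigma$ and the surface measure $d\sigma$, and complex-conjugating the defining relation $\vec{a}=Se^{i\kappa}\vec{a}$ (equivalently, complex-conjugating the canonical eigenfunction (\ref{eq:-4})) yields $|a_j(\vec{\kappa})|^2=|a_{\hat{j}}(-\vec{\kappa})|^2$. A change of variables then forces $c_j=c_{\hat{j}}=(2\pi)^{E-1}$, and substituting back gives $\frac{\pi}{L(2\pi)^E}\cdot(2\pi)^{E-1}=\frac{1}{2L}$, as required. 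I expect the symmetry step to be the main obstacle: the length-independence trick delivers the combined sum $c_j+c_{\hat{j}}$ very cleanly, but disentangling the two terms requires carefully tracking how the amplitudes transform under $\vec{\kappa}\mapsto-\vec{\kappa}$ and verifying the claimed isometry of $(\Sigma,d\sigma)$; everything else is a direct assembly of Theorem \ref{thm:ergodic-theorem}, Lemma \ref{lem:secular-normal}, and the probability-measure normalization.
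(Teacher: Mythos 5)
Your proof is correct, and it evaluates the constant $\frac{1}{2L}$ by a genuinely different route than the paper. The paper first uses the involution $\vec{\kappa}\mapsto-\vec{\kappa}$ to identify the two Cesaro means, and then evaluates them through Lemma \ref{lem:secular-normal}: it reduces the claim to $\int_{\Sigma}\hat{n}_{j}\,d\sigma=2\left(2\pi\right)^{E-1}$, which it obtains as the flux of $\frac{\partial}{\partial x_{j}}$ through $\Sigma$ via Colin de Verdi\`ere's result that the projection of $\Sigma$ onto the $j$th face of the torus is two-to-one. You instead cancel the denominator $\hat{n}\cdot\vec{\ell}=\sum_{i}\ell_{i}\left(\left|a_{i}\right|^{2}+\left|a_{\hat{i}}\right|^{2}\right)$ against the Barra-Gaspard density, invoke the normalization $\mu_{\vec{\ell}}\left(\Sigma\right)=1$ from Definition \ref{def:bg-measure}, and use the $\vec{\ell}$-independence of $c_{j}=\int_{\Sigma}\left|a_{j}\right|^{2}d\sigma$ (since $\Sigma$, $d\sigma$ and $\vec{a}\left(\vec{\kappa}\right)$ depend only on $S$) to decouple the identity $\sum_{j}\ell_{j}\left(c_{j}+c_{\hat{j}}\right)=2L\left(2\pi\right)^{E-1}$ edge by edge by varying $\vec{\ell}$. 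This is a clean substitute for the two-to-one projection argument, with the caveat that the geometric input is merely relocated: the fact that $\frac{\pi}{L\left(2\pi\right)^{E}}\left|\hat{n}\cdot\vec{\ell}\right|d\sigma$ has total mass one is itself normally proven by the same flux/Weyl-law computation, so your argument is valid relative to the paper's stated toolbox rather than more elementary in an absolute sense.

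One small repair in the symmetry step: complex-conjugating $\vec{a}=Se^{i\kappa}\vec{a}$ by itself only shows that $\overline{\vec{a}}\left(\vec{\kappa}\right)$ spans $\ker\left(I-Se^{-i\kappa}\right)$, hence $\left|a_{j}\left(-\vec{\kappa}\right)\right|=\left|a_{j}\left(\vec{\kappa}\right)\right|$; it does not by itself swap $j$ and $\hat{j}$. The swap is obtained as in the paper, by re-expanding the canonical eigenfunction with $k\mapsto-k$ (Equations (\ref{eq:-31-2})--(\ref{eq:-32-2-1})), or alternatively by combining your conjugation observation with the reality of canonical eigenfunctions: a one-dimensional kernel together with real $S$ forces $\left|a_{j}\right|=\left|a_{\hat{j}}\right|$ pointwise on $\sreg$, which in fact makes the first equality of the lemma immediate. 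Either way $c_{j}=c_{\hat{j}}$ holds and your conclusion stands.
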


\begin{proof}
Consider the transformation $\vec{\kappa}\mapsto-\vec{\kappa}$ on
the torus. This is a measure preserving isometry of the secular manifold
(see \cite{AloBanBer_cmp18}), and so for any Riemann integrable $g:\Sigma\rightarrow\mathbb{C}$,
\begin{equation}
\int_{\Sigma}g\left(\vec{\kappa}\right)d\mu_{\vec{\ell}}=\int_{\Sigma}g\left(-\vec{\kappa}\right)d\mu_{\vec{\ell}}.\label{eq:-10-2}
\end{equation}

Note that under our transformation, the canonical eigenfunctions change
in the following manner:
\begin{align}
 & \tilde{f}_{j}\left(x\right)=a_{j}e^{ikx}+a_{\hat{j}}e^{ik\ell_{j}}e^{-ikx},\label{eq:-59-1-1}\\
 & k\mapsto-k,\label{eq:-60-1-1}\\
 & \tilde{f}_{j}\left(x\right)\mapsto a_{j}e^{-ikx}+a_{\hat{j}}e^{-ik\ell_{j}}e^{ikx}.\label{eq:-29-2}
\end{align}

So up to a phase (which does not affect the absolute value), $a_{j}$
and $a_{\hat{j}}$ simply switch roles:
\begin{align}
 & a_{j}\mapsto a_{\hat{j}}e^{-ik\ell_{j}},\label{eq:-31-2}\\
 & a_{\hat{j}}\mapsto a_{j}e^{-ik\ell_{j}}.\label{eq:-32-2-1}
\end{align}

Now, consider the following function on $\sreg$:
\begin{equation}
g\left(\vec{\kappa}\right)=\frac{\left|a_{j}+a_{\hat{j}}e^{i\kappa_{j}}\right|^{2}}{\sum_{i=1}^{E}\ell_{i}\left(\left|a_{i}\right|^{2}+\left|a_{\hat{i}}\right|^{2}\right)}.\label{eq:-59}
\end{equation}
 As stated in Remark \ref{rem:meas0}, for the purpose of integration,
we can consider $g$ as a function on $\Sigma$ itself. The coefficients
$a_{i},a_{\hat{i}}$ in $g$ are defined by the coefficients of the
canonical eigenfunctions, as described in (\ref{eq:-4}). Recall that
the canonical eigenfunctions have the same amplitude vector as the
eigenfunctions we consider.

Moreover, under our measure preserving transformation, this function
becomes:
\begin{equation}
g\left(-\vec{\kappa}\right)=\frac{\left|a_{j}\right|^{2}}{\sum_{i=1}^{E}\ell_{i}\left(\left|a_{i}\right|^{2}+\left|a_{\hat{i}}\right|^{2}\right)}.\label{eq:-33-3}
\end{equation}

Since the integral of $g\left(-\vec{\kappa}\right)$ should remain
the same under our measure preserving change of variables, we conclude
from the ergodic theorem (Theorem \ref{thm:ergodic-theorem}) that
\begin{equation}
\left\langle \frac{\left|a_{j}\right|^{2}}{\sum_{i=1}^{E}\ell_{i}\left(\left|a_{i}\right|^{2}+\left|a_{\hat{i}}\right|^{2}\right)}\right\rangle _{n}=\left\langle \frac{\left|a_{\hat{j}}\right|^{2}}{\sum_{i=1}^{E}\ell_{i}\left(\left|a_{i}\right|^{2}+\left|a_{\hat{i}}\right|^{2}\right)}\right\rangle _{n}.\label{eq:-34-2}
\end{equation}

Now, to prove that both terms above are equal to $\frac{1}{2L}$,
we may apply Lemma \ref{lem:secular-normal} and simply prove that
\begin{equation}
\left\langle \frac{\left|a_{j}\right|^{2}+\left|a_{\hat{j}}\right|^{2}}{\sum_{i=1}^{E}\ell_{i}\left(\left|a_{i}\right|^{2}+\left|a_{\hat{i}}\right|^{2}\right)}\right\rangle _{n}=_{(\ref{eq:-8})}\left\langle \frac{\hat{n}_{j}}{\sum_{i=1}^{E}\ell_{i}\hat{n}_{i}}\right\rangle _{n}=\frac{1}{L}.\label{eq:-35-2}
\end{equation}

Due to the ergodic theorem and the definition of the Barra-Gaspard
measure (\ref{eq:-5},\ref{eq:-7}),
\begin{equation}
\left\langle \frac{\hat{n}_{j}}{\sum_{i=1}^{E}\ell_{i}\hat{n}_{i}}\right\rangle _{n}=\int_{\Sigma}\frac{\hat{n}_{j}}{\sum_{i=1}^{E}\ell_{i}\hat{n}_{i}}d\mu_{\vec{\ell}}=\frac{\pi}{L}\cdot\frac{1}{\left(2\pi\right)^{E}}\int_{\Sigma}\hat{n}_{j}d\sigma,\label{eq:-36-2}
\end{equation}
and so \ref{eq:-35-2} is equivalent to showing that $\int_{\Sigma}\hat{n}_{j}d\sigma=2\left(2\pi\right)^{E-1}$.

The given integral is exactly the flux through the secular manifold
of the vector field $\frac{\partial}{\partial x_{j}}$. Proposition
$3.1$ in \cite{CdV_ahp15} shows that the map $\pi_{j}:\Sigma\rightarrow\left(\mathbb{R}/2\pi\mathbb{Z}\right)^{E-1}$
which projects the secular manifold to the $j$th face of the torus
is a two to one map. Thus, the given integral should be equal to twice
the flux of the given vector field through the $j$th face of the
torus:
\begin{equation}
\int_{\Sigma}\left(1,0,...,0\right)\cdot d\hat{n}=2\int_{\left(\mathbb{R}/2\pi\mathbb{Z}\right)^{E-1}}1dn=2\left(2\pi\right)^{E-1},\label{eq:-37-2}
\end{equation}
as required.
\end{proof}
\begin{lem}
\label{lem:Uncorrelation} For every $i\neq j$, the following holds:
\begin{equation}
\left\langle \frac{a_{i}\overline{a_{j}}}{\sum_{i=1}^{E}\ell_{i}\left(\left|a_{i}\right|^{2}+\left|a_{\hat{i}}\right|^{2}\right)}\right\rangle _{n}=\left\langle \frac{a_{\hat{i}}\overline{a_{\hat{j}}}}{\sum_{i=1}^{E}\ell_{i}\left(\left|a_{i}\right|^{2}+\left|a_{\hat{i}}\right|^{2}\right)}\right\rangle _{n}=0.\label{eq:-43-2}
\end{equation}
\end{lem}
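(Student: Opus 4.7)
The plan is to mirror the proof of Lemma \ref{lem:Amean}: use the ergodic theorem (Theorem \ref{thm:ergodic-theorem}) to transfer the Cesaro averages into integrals over the secular manifold $\Sigma$ against the Barra-Gaspard measure $d\mu_{\vec{\ell}}$, and then exploit symmetries of $(\Sigma,\mu_{\vec{\ell}})$. As in the preceding lemma, since $D=\sum_{k}\ell_{k}(|a_{k}|^{2}+|a_{\hat{k}}|^{2})$ coincides with $\hat{n}\cdot\vec{\ell}$ by Lemma \ref{lem:secular-normal}, the factor $1/D$ cancels against the density $|\hat{n}\cdot\vec{\ell}|$ of $d\mu_{\vec{\ell}}$, reducing the claim to showing that
\[
\int_{\Sigma}a_{i}\overline{a_{j}}\,d\sigma=0\quad\text{and}\quad\int_{\Sigma}a_{\hat{i}}\overline{a_{\hat{j}}}\,d\sigma=0
\]
for every $i\neq j$.

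The first step is to apply the measure-preserving involution $\vec{\kappa}\mapsto-\vec{\kappa}$ on $\Sigma$ that was already used in Lemma \ref{lem:Amean}. Under this involution, the canonical amplitudes transform via $a_{l}\mapsto a_{\hat{l}}e^{-i\kappa_{l}}$, so that
\[
a_{i}\overline{a_{j}}\;\longmapsto\;a_{\hat{i}}\overline{a_{\hat{j}}}\,e^{i(\kappa_{j}-\kappa_{i})}.
\]
This identifies the two integrals of interest up to a non-constant phase factor, but does not by itself yield vanishing.

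The second (and harder) step is to introduce further cancellation. My plan is to exploit the two-to-one projection $\pi_{j}:\Sigma\to\mathbb{T}^{E-1}$ (Proposition $3.1$ of \cite{CdV_ahp15}): for generic fixed values of the remaining coordinates, the secular function is a polynomial of degree two in the variable $z_{j}=e^{i\kappa_{j}}$, and the amplitude vector at each of the two preimages can be written explicitly via the adjugate of $I-Se^{i\kappa}$ (which has rank one on $\sreg$ and is proportional to $\vec{a}\vec{a}^{*}$). A Vieta-type identity relating the two roots should then produce the required pointwise cancellation of $a_{i}\overline{a_{j}}$ summed across the two sheets, so that after integration over $\mathbb{T}^{E-1}$ we obtain zero.

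The principal obstacle is precisely this last step. The off-diagonal correlation $a_{i}\overline{a_{j}}$ is not a geometric component of the normal to $\Sigma$ (unlike the diagonal combination $|a_{j}|^{2}+|a_{\hat{j}}|^{2}=\hat{n}_{j}$ exploited in Lemma \ref{lem:Amean}), so the clean geometric shortcut available there is unavailable here. One must therefore either carry out the algebraic Vieta computation explicitly using the adjugate formula, or else identify a second measure-preserving symmetry of $(\Sigma,\mu_{\vec{\ell}})$ sensitive specifically to the directed edge pair $(i,j)$, which, when combined with the involution above, forces the remaining integrals to vanish.
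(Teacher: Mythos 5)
Your reduction to showing $\int_{\Sigma}a_{i}\overline{a_{j}}\,d\sigma=0$ (after cancelling the denominator $\sum_{k}\ell_{k}(|a_{k}|^{2}+|a_{\hat{k}}|^{2})=\hat{n}\cdot\vec{\ell}$ against the Barra--Gaspard density, as in Lemma \ref{lem:Amean}) is fine, and the involution $\vec{\kappa}\mapsto-\vec{\kappa}$ is correctly analyzed as insufficient on its own. But the argument then stops exactly where the proof has to happen: the claimed ``Vieta-type'' cancellation of $a_{i}\overline{a_{j}}$ summed over the two sheets of the projection $\pi_{j}:\Sigma\rightarrow\mathbb{T}^{E-1}$ is never derived, and you offer no reason why such a pointwise two-sheet cancellation should hold. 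In Lemma \ref{lem:Amean} the two-to-one projection was used only to compute the flux of a constant vector field, i.e. a purely geometric quantity; the off-diagonal correlation $a_{i}\overline{a_{j}}$ is not a component of the normal, and there is no identity relating its values at the two preimages of a generic base point that forces their (Jacobian-weighted) sum to vanish. So as written this is a plan with the decisive step missing, which you yourself flag as the principal obstacle.

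The cancellation mechanism that actually works is of a different nature: it sums over all $2E$ eigenbranches of the unitary matrix $Se^{i\kappa}$, not over the two sheets of a coordinate projection. Concretely, the paper invokes Theorem $4.10$ of \cite{AloBanBer_21arxiv}, which rewrites
\begin{equation}
\int_{\Sigma}g\,d\mu_{\vec{\ell}}=\int_{\mathbb{T}^{E}}\sum_{n=1}^{2E}g\left(\vec{\kappa}-\theta_{n}\cdot\vec{1}\right)\frac{\vec{a}_{n}^{*}\L\vec{a}_{n}}{\mathrm{tr}\left(\L\right)}\frac{d\vec{\kappa}}{\left(2\pi\right)^{E}},
\end{equation}
where $\theta_{n}$ are the eigenphases of $Se^{i\kappa}$ and $\vec{a}_{n}$ its $\mathbb{C}^{2E}$-normalized eigenvectors. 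Applying this to $g=a_{\hat{j}}\overline{a_{\hat{i}}}/\sum_{k}\ell_{k}(|a_{k}|^{2}+|a_{\hat{k}}|^{2})$, the denominator cancels the weight $\vec{a}_{n}^{*}\L\vec{a}_{n}/\mathrm{tr}(\L)$ up to the constant $1/(2L)$, and the remaining integrand $\sum_{n=1}^{2E}\left(\vec{a}_{n}\right)_{\hat{j}}\overline{\left(\vec{a}_{n}\right)_{\hat{i}}}$ is the $(\hat{j},\hat{i})$ entry of $\sum_{n}\vec{a}_{n}\vec{a}_{n}^{*}=I$, hence identically zero for $i\neq j$. This orthonormality of the full eigenbasis is the missing idea; without it (or an equivalent substitute), your proposal does not establish the lemma.
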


\begin{proof}
We write the proof for $\hat{i},\hat{j}$, the proof for $i,j$ is
identical.

We refer to Theorem $4.10$ in \cite{AloBanBer_21arxiv}, which gives
an alternative method for integrating Riemann integrable functions
over $\Sigma$:
\begin{equation}
\int_{\Sigma}gd\mu_{\vec{\ell}}=\int_{\mathbb{T}^{E}}\sum_{n=1}^{2E}g\left(\vec{\kappa}-\theta_{n}\cdot\vec{1}\right)\frac{\vec{a}_{n}^{*}\L\vec{a}_{n}}{\text{tr}\left(\L\right)}\frac{d\vec{\kappa}}{\left(2\pi\right)^{E}},\label{eq:-11-2}
\end{equation}
where $\theta_{n}$ are the eigenphases of the unitary matrix $Se^{i\kappa}$
in (\ref{eq:-2}), $\vec{a}_{n}$ are its $\mathbb{C}^{2E}$ normalized
eigenvectors, and $\vec{1}$ is the vector $\left(1,1,...,1\right)$.
In a sense, the result above states that instead of sampling $g$
directly along the torus flow defined in Subsection \ref{subsec:secular-manifold},
one can sample $g\left(\vec{\kappa}-\theta_{n}\cdot\vec{1}\right)$
along the entire torus.

Applying the ergodic theorem along with (\ref{eq:-11-2}) to the function
$g\left(\vec{\kappa}\right)=\frac{a_{\hat{j}}\overline{a_{\hat{i}}}}{\sum_{i=1}^{E}\ell_{i}\left(\left|a_{i}\right|^{2}+\left|a_{\hat{i}}\right|^{2}\right)}$,
we conclude:
\begin{equation}
\left\langle \frac{a_{\hat{j}}\overline{a_{\hat{i}}}}{\sum_{i=1}^{E}\ell_{i}\left(\left|a_{i}\right|^{2}+\left|a_{\hat{i}}\right|^{2}\right)}\right\rangle _{n}=\frac{1}{2L\left(2\pi\right)^{E}}\int_{T}\sum_{n=1}^{2E}a_{\hat{j}}\overline{a_{\hat{i}}}\left(\vec{\kappa}-\theta_{n}\cdot\vec{1}\right)d\vec{\kappa}.\label{eq:-38-2-1}
\end{equation}
Since the eigenvectors $\vec{a}_{n}$ of the unitary matrix $Se^{i\boldsymbol{\kappa}}$
form an orthonormal basis, the integrand itself (which is just the
$\mathbb{C}^{2E}$ inner product) is identically zero, and so the
integral is zero, as required.
\end{proof}
\begin{rem}
\label{rem:canonical-equiv} Equation (\ref{eq:-9}) in the proof
of Lemma \ref{lem:bdd} shows that
\begin{align}
 & \left|A_{j}\right|^{2}=\frac{\left|a_{j}\right|^{2}}{\sum_{i=1}^{E}\left[\ell_{i}\left(\left|a_{i}\right|^{2}+\left|a_{\hat{i}}\right|^{2}\right)+O\left(\frac{1}{k_{n}}\right)\right]},\label{eq:-101}\\
 & \left|A_{\hat{j}}\right|^{2}=\frac{\left|a_{\hat{j}}\right|^{2}}{\sum_{i=1}^{E}\left[\ell_{i}\left(\left|a_{i}\right|^{2}+\left|a_{\hat{i}}\right|^{2}\right)+O\left(\frac{1}{k_{n}}\right)\right]},\label{eq:-103}\\
 & A_{\hat{j}}\overline{A_{\hat{e}}}=\frac{a_{\hat{j}}\overline{a_{j}}}{\sum_{i=1}^{E}\ell_{i}\left(\left|a_{i}\right|^{2}+\left|a_{\hat{i}}\right|^{2}+O\left(\frac{1}{k_{n}}\right)\right)}.\label{eq:-104}
\end{align}

Since terms of the form $O\left(\frac{1}{k_{n}}\right)$ do not affect
the Cesaro sum (For more details, see proof of Lemma \ref{lem:fmean}
below), we see that the two lemmas above are equivalent to Equations
(\ref{eq:-16-2}) and (\ref{eq:-17-2}) in the statement of Theorem
\ref{thm:Weyl-law}.
\end{rem}

Combining the two lemmas we get:
\begin{lem}
\label{lem:fmean}At each vertex $v$ we have that
\begin{equation}
\left\langle \left|f\left(v\right)\right|^{2}\right\rangle _{n}=\frac{2}{\deg\left(v\right)L}.\label{eq:-44-2}
\end{equation}
\end{lem}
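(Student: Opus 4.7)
The plan is to express $f(v)$ compactly in terms of the scattering amplitudes using the Neumann--Kirchhoff relations at $v$, and then reduce the statement to Lemmas \ref{lem:Amean} and \ref{lem:Uncorrelation}. Label the edges incident to $v$ so that each $j \in E_v$ is parametrized with $v$ at $x=0$ (a relabeling of scattering amplitudes). By the continuity relation (\ref{eq:-35}), $f(v) = a_j + a_{\hat{j}} e^{ik\ell_j}$ for every $j \in E_v$, and summing this over $E_v$ while substituting the current conservation relation (\ref{eq:-36}) yields the clean identity
\[
f(v) \;=\; \frac{2}{\deg(v)} \sum_{j \in E_v} a_j.
\]

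Using this identity together with the normalization formula (\ref{eq:-9}) from the proof of Lemma \ref{lem:bdd}, I would write
\[
|f_n(v)|^2 \;=\; \frac{4}{\deg(v)^2}\sum_{j,j'\in E_v}\frac{a_j\,\overline{a_{j'}}}{\sum_{i=1}^E \ell_i(|a_i|^2+|a_{\hat{i}}|^2) + O(1/k_n)}.
\]
Now take the Cesaro mean. For the $\deg(v)$ diagonal terms $j=j'$, Lemma \ref{lem:Amean} contributes $\frac{1}{2L}$ apiece; for the off-diagonal terms $j\neq j'$, Lemma \ref{lem:Uncorrelation} gives $0$. Collecting these yields
\[
\langle |f(v)|^2 \rangle_n \;=\; \frac{4}{\deg(v)^2}\cdot \deg(v)\cdot \frac{1}{2L} \;=\; \frac{2}{\deg(v)\, L},
\]
as claimed.

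The only genuinely delicate point is justifying the replacement of the full denominator in (\ref{eq:-9}) by its leading term $\sum_i \ell_i(|a_i|^2+|a_{\hat{i}}|^2)$ before invoking the ergodic Theorem \ref{thm:ergodic-theorem}. The $O(1/k_n)$ error is uniformly controlled using the boundedness established in Lemma \ref{lem:bdd}, and because $k_n \to \infty$, the resulting correction contributes to the Cesaro average only a term that tends to zero. Once this routine estimate is in hand, the identity $f(v) = \tfrac{2}{\deg(v)}\sum_{j\in E_v} a_j$ does all the work, and the lemma follows immediately from the two preceding lemmas.
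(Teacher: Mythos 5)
Your proposal is correct and follows essentially the same route as the paper: express $f(v)$ via the vertex relations in terms of the amplitudes, discard the $O(1/k_n)$ correction in the denominator of (\ref{eq:-9}) since it does not affect Cesaro means, and then invoke Lemmas \ref{lem:Amean} and \ref{lem:Uncorrelation} for the diagonal and cross terms respectively. The only cosmetic difference is that you package the vertex conditions into the symmetrized identity $f(v)=\frac{2}{\deg(v)}\sum_{j\in E_{v}}a_{j}$ and work with the outgoing amplitudes, whereas the paper expands $\left|a_{j}+a_{\hat{j}}e^{ik_{n}\ell_{j}}\right|^{2}$ for a single edge and uses the scattering relation (\ref{eq:-39-2}) to reduce the cross term to correlations of the $a_{\hat{i}}$'s; both reductions rest on the same two lemmas and yield the same value $\frac{2}{\deg\left(v\right)L}$.
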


\begin{proof}
From now on we assume as usual that the point of evaluation is at
$v=0$ and replace the notation $f\left(v\right)$ with simply $f$.

From Equation (\ref{eq:-9}) in the proof of Lemma \ref{lem:bdd}
we know that
\begin{equation}
\left|f\right|^{2}=\frac{\left|a_{j}+a_{\hat{j}}e^{ik_{n}\ell_{j}}\right|^{2}}{\sum_{i=1}^{E}\left[\ell_{i}\left(\left|a_{i}\right|^{2}+\left|a_{\hat{i}}\right|^{2}\right)+O\left(\frac{1}{k_{n}}\right)\right]}.\label{eq:-32-2}
\end{equation}

Here we remind the reader that the coefficients of the vector $\vec{a}$
depend on $k_{n}$ as well. Since $\lim_{n\rightarrow\infty}k_{n}=\infty$,
note that
\begin{equation}
\lim_{n\rightarrow\infty}\left|\frac{\left|a_{j}+a_{\hat{j}}e^{ik_{n}\ell_{j}}\right|^{2}}{\sum_{i=1}^{E}\ell_{i}\left(\left|a_{i}\right|^{2}+\left|a_{\hat{i}}\right|^{2}\right)+O\left(\frac{1}{k_{n}}\right)}-\frac{\left|a_{j}+a_{\hat{j}}e^{ik_{n}\ell_{j}}\right|^{2}}{\sum_{i=1}^{E}\ell_{i}\left(\left|a_{i}\right|^{2}+\left|a_{\hat{i}}\right|^{2}\right)}\right|=0,\label{eq:-57}
\end{equation}
which means that the Cesaro sum will not change if we omit the additional
term in the denominator. Thus,
\begin{align}
 & \left\langle \left|f\right|^{2}\right\rangle _{n}=\left\langle \frac{\left|a_{j}+a_{\hat{j}}e^{ik_{n}\ell_{j}}\right|^{2}}{\sum_{i=1}^{E}\ell_{i}\left(\left|a_{i}\right|^{2}+\left|a_{\hat{i}}\right|^{2}\right)}\right\rangle _{n}\label{eq:-61-1-1}\\
= & \left\langle \frac{\left|a_{j}\right|^{2}+\left|a_{\hat{j}}\right|^{2}}{\sum_{i=1}^{E}\ell_{i}\left(\left|a_{i}\right|^{2}+\left|a_{\hat{i}}\right|^{2}\right)}\right\rangle _{n}+2\left\langle \frac{Re\left(e^{ik_{n}\ell_{j}}a_{\hat{j}}\overline{a_{j}}\right)}{\sum_{i=1}^{E}\ell_{i}\left(\left|a_{i}\right|^{2}+\left|a_{\hat{i}}\right|^{2}\right)}\right\rangle _{n}\label{eq:-14-2}
\end{align}

Due to Lemma \ref{lem:Amean}, the first term is $\frac{1}{L}$. We
now want to evaluate the second term. From Equation (\ref{eq:-32-2})
we have that
\begin{equation}
a_{j}=e^{ik_{n}\ell_{j}}\sum_{i=1}^{\deg\left(v\right)}\left(\frac{2}{\deg\left(v\right)}-\delta_{ji}\right)a_{\hat{i}}.\label{eq:-39-2}
\end{equation}

Thus,
\begin{align}
 & \left\langle \frac{Re\left(e^{ik_{n}\ell_{j}}a_{\hat{j}}\overline{a_{j}}\right)}{\sum_{i=1}^{E}\ell_{i}\left(\left|a_{i}\right|^{2}+\left|a_{\hat{i}}\right|^{2}\right)}\right\rangle _{n}=\left(\frac{2}{\deg\left(v\right)}-1\right)\left\langle \frac{\left|a_{\hat{j}}\right|^{2}}{\sum_{i=1}^{E}\ell_{i}\left(\left|a_{i}\right|^{2}+\left|a_{\hat{i}}\right|^{2}\right)}\right\rangle _{n}\label{eq:-62-1-1}\\
 & +\frac{2}{\deg\left(v\right)}\sum_{i\neq j\in E_{v}}Re\left\langle \frac{a_{\hat{j}}\overline{a_{\hat{i}}}}{\sum_{i=1}^{E}\ell_{i}\left(\left|a_{i}\right|^{2}+\left|a_{\hat{i}}\right|^{2}\right)}\right\rangle _{n}.\label{eq:-40-2-1}
\end{align}

From Lemmas \ref{lem:Amean} and \ref{lem:Uncorrelation}, the first
term is equal to $\frac{1}{2L}\left(\frac{2}{\deg\left(v\right)}-1\right)$,
while the second term is equal to zero. Plugging this into (\ref{eq:-14-2})
we obtain
\begin{equation}
\left\langle \left|f\right|^{2}\right\rangle _{n}=\frac{1}{L}+\frac{1}{L}\left(\frac{2}{\deg\left(v\right)}-1\right)=\frac{2}{\deg\left(v\right)L}.\label{eq:-41-1-1}
\end{equation}
\end{proof}
\bigskip

\subsection{\label{subsec:mean-proof}Proof of Theorem \ref{thm:RNG-mean}}
\begin{prop}
\label{prop:mean-conv} For any $\sigma\in\mathbb{R}$, $\left\langle d\right\rangle _{n}\left(\sigma\right)$
exists and satisfies
\begin{equation}
\left\langle d\right\rangle _{n}\left(\sigma\right)=\sum_{v\in\VR}\left\langle \left|f^{\left(0\right)}\left(v\right)\right|^{2}\right\rangle _{n}\sigma.\label{eq:-49}
\end{equation}
\end{prop}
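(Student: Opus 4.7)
The plan is to carry out three stages: (i) rewrite $d_n$ as a $t$-integral via Hadamard, (ii) swap $N\to\infty$ with the integral, and (iii) show that the resulting pointwise-in-$t$ Cesaro mean equals $\langle|f^{(0)}(v)|^2\rangle_n$, independently of $t$.

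For stage (i), I will invoke Lemma \ref{lem:hadamard} and Fubini to write
\[
\frac{1}{N}\sum_{n=1}^N d_n(\sigma)=\sum_{v\in\VR}\int_0^\sigma h_N^v(t)\,dt,\qquad h_N^v(t):=\frac{1}{N}\sum_{n=1}^N|f_n^{(t)}(v)|^2.
\]
Stage (ii) is then an application of the dominated convergence theorem, justified by the uniform bound on $|f_n^{(t)}(v)|^2$ from Lemma \ref{lem:bdd}, provided the pointwise limit $h^v(t):=\lim_{N\to\infty} h_N^v(t)$ can be established.

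The main obstacle is stage (iii), and my plan is to repeat the scattering-theoretic argument used to prove Theorem \ref{thm:Weyl-law}, but now with $H(t)$ in place of $H(0)$. Eigenfunctions of $H(t)$ are determined by amplitude vectors $\vec a^{(t)}$ in the $1$-eigenspace of the unitary $S^{(t)}(k_n)e^{ik_n\L}$, and $|f_n^{(t)}(v)|^2$ is the same quadratic-in-amplitudes ratio as in formula (\ref{eq:-9}). From (\ref{eq:-32}) one has $S^{(t)}(k)=S+O(1/k)$ uniformly for $t$ in any bounded interval, so analytic perturbation theory, applied away from the measure-zero set $\Sigma\setminus\sreg$ (see Remark \ref{rem:meas0}), should give $\vec a^{(t)}=\vec a^{(0)}+O(1/k_n)$, and hence $|f_n^{(t)}(v)|^2=|f_n^{(0)}(v)|^2+O(1/k_n)$. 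By the same Cesaro-negligibility of $O(1/k_n)$ terms invoked in the proof of Lemma \ref{lem:fmean} and in Remark \ref{rem:canonical-equiv}, this forces $h^v(t)=\langle|f^{(0)}(v)|^2\rangle_n$ for every $t\in\R$.

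Combining the three stages yields
\[
\langle d\rangle_n(\sigma)=\sum_{v\in\VR}\int_0^\sigma\langle|f^{(0)}(v)|^2\rangle_n\,dt=\sigma\sum_{v\in\VR}\langle|f^{(0)}(v)|^2\rangle_n,
\]
as required. The most delicate technical point will be making the $O(1/k_n)$ perturbation bound in stage (iii) uniform in the averaging: this requires the spectral gap of $Se^{ik_n\L}$ at the eigenvalue $1$ to be bounded below in a Cesaro-averaged sense, for which the smoothness of the regular part $\sreg$ of the secular manifold is the natural tool.
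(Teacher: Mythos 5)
Your stages (i) and (ii) coincide with the paper's proof: it too starts from Lemma \ref{lem:hadamard}, exchanges the Cesaro limit with the $t$-integral using the uniform bound of Lemma \ref{lem:bdd} (the paper phrases the swap via uniform-in-$t$ convergence rather than dominated convergence, an immaterial difference), and reduces everything to showing that the Cesaro mean of $\left|f_{n}^{\left(t\right)}\left(v\right)\right|^{2}$ is independent of $t$, which is the content of the paper's Lemmas \ref{lem:eigenfunc-conv} and \ref{lem:eigmean-conv}. Note that the literal first reading of your stage (iii) -- rerunning the Theorem \ref{thm:Weyl-law} computation for $H\left(t\right)$ -- would not work: for $t\neq0$ the matrix $S^{\left(t\right)}\left(k\right)$ depends on $k$, so there is no fixed secular manifold in $\mathbb{T}^{E}$ to which the Barra--Gaspard ergodic theorem applies; this is precisely why one must compare with $t=0$, as you then in fact do.

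The gap is in how you execute that comparison. From $S^{\left(t\right)}\left(k\right)=S+O\left(1/k\right)$ you cannot conclude $\vec{a}^{\left(t\right)}=\vec{a}^{\left(0\right)}+O\left(1/k_{n}\right)$ by perturbation theory alone: the two amplitude vectors lie in the kernels of $I-S^{\left(t\right)}\left(k\right)e^{ik\L}$ and $I-Se^{ik\L}$ evaluated at \emph{different} wavenumbers $k_{n}^{t}$ and $k_{n}^{0}$, so you first need $\left|k_{n}^{t}-k_{n}^{0}\right|\rightarrow0$ uniformly in $t\in\left[0,\sigma\right]$, and closeness of the matrices does not give this by itself. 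The paper supplies the missing input: the eigenphases of $U^{\left(t\right)}\left(k\right)=S^{\left(t\right)}\left(k\right)e^{ik\L}$ increase in $k$ at a rate bounded below by some $c>0$ (Lemma $4.5$ in \cite{BolEnd_ahp09}), and the mean value theorem then converts the matrix estimate into the wavenumber estimate. Your proposed substitute -- a lower bound on the spectral gap of $Se^{ik_{n}\L}$ at the eigenvalue $1$ ``in a Cesaro-averaged sense'' via smoothness of $\sreg$ -- is not a worked-out tool: the gap genuinely degenerates near $\Sigma\backslash\sreg$, and the fact that this set has measure zero in $\Sigma$ does not by itself control how often the discrete sample points $k_{n}\vec{\ell}$ approach it (that sort of control is exactly what Theorem \ref{thm:ergodic-theorem} provides, but only for Riemann integrable functions on $\Sigma$, which is not how you have framed this step). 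Relatedly, the rate $O\left(1/k_{n}\right)$ you claim is stronger than needed and is unlikely to hold uniformly near degenerate crossings; all that is required, and all the paper proves, is that the difference tends to zero, since a null sequence is Cesaro-negligible. Once the eigenphase-monotonicity argument is inserted, your stage (iii) becomes the paper's Lemma \ref{lem:eigenfunc-conv} and the rest of your plan goes through.
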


Applying the local Weyl law from Theorem \ref{thm:Weyl-law}, the
proposition above immediately proves Theorem \ref{thm:RNG-mean} as
a corollary.
\begin{rem}
By Remark \ref{rem:additivity}, we may once again without loss of
generality prove Proposition \ref{prop:mean-conv} for a single $\delta$
vertex placed at $v=0$. As done before, we replace the notation $f_{n}^{\left(t\right)}\left(0\right)$
with $f_{n}^{\left(t\right)}$ for convenience.
\end{rem}

To prove Proposition \ref{prop:mean-conv}, we need the following
lemmas:
\begin{lem}
\label{lem:eigenfunc-conv}$\left|f_{n}^{\left(t\right)}-f_{n}^{\left(0\right)}\right|\underset{_{n\rightarrow\infty}}{\rightarrow}0$
uniformly in $t\in\left[0,\sigma\right]$.
\end{lem}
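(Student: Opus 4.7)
The plan is to exploit the scattering formalism from Subsection \ref{subsec:Scattering-formalism}. Recall that in this subsection the notation $f_n^{(t)}$ stands for the vertex value $f_n^{(t)}(v)$. Each $L^2$-normalized eigenfunction $f_n^{(t)}$ is determined (up to a phase) by its wave number $k_n^{(t)}$ and its $\mathbb{C}^{2E}$-normalized amplitude vector $\vec{a}_n^{(t)}$ spanning $\ker\!\left(I - S^{(t)}(k_n^{(t)})\, e^{i k_n^{(t)} \L}\right)$, and the vertex value is given by the explicit formula (\ref{eq:-9}) derived in the proof of Lemma \ref{lem:bdd}, whose denominator is bounded below uniformly in $n$ and $t$ by the same argument. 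It therefore suffices to show that, uniformly in $t \in [0,\sigma]$, both $k_n^{(t)} - k_n^{(0)}$ and $\vec{a}_n^{(t)} - \vec{a}_n^{(0)}$ (with a suitable phase choice) tend to zero as $n \to \infty$.

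Two observations do most of the work. First, Theorem \ref{thm:1.RNG-Lipschitz} provides a uniform bound $d_n(t) \le C_\sigma$ on $[0,\sigma]$, hence $|k_n^{(t)} - k_n^{(0)}| = O(1/k_n)$ uniformly in $t$. Second, formula (\ref{eq:-32}) for $S^{(t)}(k)$ gives $\|S^{(t)}(k) - S\| = O(t/k)$; combined with the first observation, $\|e^{i k_n^{(t)} \L} - e^{i k_n^{(0)} \L}\| \to 0$, so the matrices $I - S^{(t)}(k_n^{(t)})\, e^{i k_n^{(t)} \L}$ and $I - S\, e^{i k_n^{(0)} \L}$ differ in operator norm by $o(1)$, uniformly in $t$. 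By Remark \ref{rem:Simplicity-assumption} we restrict to simple eigenvalues, so both kernels are one-dimensional, and matrix perturbation theory then delivers convergence of the kernel projectors; pairing this with formula (\ref{eq:-9}) and a continuous choice of phase yields the claim.

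The main obstacle is upgrading the convergence of the kernel lines to uniform convergence of the normalized amplitude vectors, since this requires a uniform lower bound on the spectral gap between $0$ and the rest of the spectrum of $I - S^{(t)}(k_n^{(t)})\, e^{i k_n^{(t)} \L}$, as both $n \to \infty$ and $t$ varies in $[0,\sigma]$. A cleaner route that sidesteps this is to work directly with the quantities $|a_j^{(t)}|^2$, $|a_{\hat{j}}^{(t)}|^2$ and $a_j^{(t)} \overline{a_{\hat{j}}^{(t)}}$ entering (\ref{eq:-9}): these are phase-independent matrix entries of the rank-one orthogonal projector onto the kernel and hence depend continuously on the matrix perturbation, giving uniform convergence $\bigl||f_n^{(t)}(v)|^2 - |f_n^{(0)}(v)|^2\bigr| \to 0$. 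Under the standard real-analytic phase convention for the family $f_n^{(t)}$, this is readily promoted to $|f_n^{(t)} - f_n^{(0)}| \to 0$ uniformly in $t$, as required by the lemma.
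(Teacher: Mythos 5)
Your route to the wave-number estimate is fine and is genuinely different from the paper's: you get $0\le k_{n}^{t}-k_{n}^{0}\le d_{n}\left(t\right)/\left(k_{n}^{t}+k_{n}^{0}\right)=O\left(1/k_{n}\right)$ uniformly in $t\in\left[0,\sigma\right]$ directly from the uniform bound of Theorem \ref{thm:1.RNG-Lipschitz} (which is proved independently of this lemma, so there is no circularity), whereas the paper derives the same convergence by bounding $\left\Vert U^{\left(t\right)}\left(k\right)-U^{\left(0\right)}\left(k\right)\right\Vert $, using monotonicity of the eigenphases $\theta_{j}^{t}\left(k\right)$ with a lower speed bound and the mean value theorem. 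Up to and including the statement that the matrices $I-S^{\left(t\right)}\left(k_{n}^{t}\right)e^{ik_{n}^{t}\L}$ and $I-Se^{ik_{n}^{0}\L}$ differ by $o\left(1\right)$ in norm uniformly in $t$, your argument is correct.

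The genuine gap is in the last step, and you in fact named it yourself before dismissing it: passing from norm-closeness of the matrices to closeness of the kernel data requires the eigenvalue $1$ of the unitary $U^{\left(t\right)}\left(k_{n}^{t}\right)$ to be separated from the rest of its spectrum by an amount large compared with the perturbation, uniformly in $n$ and $t$. Working with the phase-free quantities $\left|a_{j}\right|^{2},\left|a_{\hat{j}}\right|^{2},a_{j}\overline{a_{\hat{j}}}$, i.e.\ with the entries of the rank-one orthogonal projector onto the kernel, does not sidestep this: the spectral projector is exactly as unstable under perturbations comparable to the spectral gap as the normalized eigenvector is; phase independence removes only the phase ambiguity, not the gap requirement. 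Concretely, $U_{1}=\mathrm{diag}\left(1,e^{i\epsilon}\right)$ and $U_{2}=\mathrm{diag}\left(e^{i\epsilon},1\right)$ satisfy $\left\Vert U_{1}-U_{2}\right\Vert \rightarrow0$, each of $I-U_{i}$ has a one-dimensional kernel, yet the kernel projectors are mutually orthogonal. Since eigenvalue spacings of quantum graphs (equivalently, gaps between eigenphases of $U$ near $1$) are not uniformly bounded below, you cannot assert "continuous dependence of the projector on the matrix perturbation" uniformly in $n$; some additional input is needed, and the paper supplies it (tersely) by following a single eigenphase branch $\theta_{j}^{t}\left(k\right)$ and invoking the real-analytic dependence of the corresponding kernel vector through the matrix equation (\ref{eq:-2}), rather than comparing two unrelated kernels of nearby matrices. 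A second, smaller gap: upgrading $\left|\left|f_{n}^{\left(t\right)}\left(v\right)\right|^{2}-\left|f_{n}^{\left(0\right)}\left(v\right)\right|^{2}\right|\rightarrow0$ to $\left|f_{n}^{\left(t\right)}\left(v\right)-f_{n}^{\left(0\right)}\left(v\right)\right|\rightarrow0$ is not automatic under a phase convention alone, since the (real) vertex value may change sign as $t$ varies through a zero; this needs an argument, although for the only place the lemma is used (Lemma \ref{lem:eigmean-conv}) the squared values would suffice.
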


\begin{lem}
\label{lem:eigmean-conv}$\left\langle \left|f^{\left(t\right)}\right|^{2}\right\rangle _{n}=\left\langle \left|f^{\left(0\right)}\right|^{2}\right\rangle _{n}$
for all $t\in\mathbb{R}$. 
\end{lem}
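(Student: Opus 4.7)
The plan is to combine the two preceding lemmas with the elementary fact that Cesaro summation annihilates null sequences. Fix $t\in\mathbb{R}$ and, by Remark \ref{rem:additivity}, reduce to the case of a single $\delta$ vertex $v$. The strategy is to show that the pointwise difference $\bigl||f_{n}^{(t)}(v)|^{2}-|f_{n}^{(0)}(v)|^{2}\bigr|$ tends to zero as $n\to\infty$, and then conclude that the corresponding Cesaro means agree.

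For the first step, I would invoke Lemma \ref{lem:eigenfunc-conv} on a compact interval containing both $0$ and $t$, which yields $f_{n}^{(t)}(v)-f_{n}^{(0)}(v)\to 0$ as $n\to\infty$. Combined with Lemma \ref{lem:bdd}, which provides a uniform constant $M>0$ such that $|f_{n}^{(s)}(v)|\leq M$ for every $n\in\mathbb{N}$ and every $s\in\mathbb{R}$, this gives the estimate
\[
\bigl||f_{n}^{(t)}(v)|^{2}-|f_{n}^{(0)}(v)|^{2}\bigr|\;\leq\;2M\cdot\bigl|f_{n}^{(t)}(v)-f_{n}^{(0)}(v)\bigr|\;\longrightarrow\;0.
\]

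For the second step, I would use the standard fact that if $(a_{n})_{n=1}^{\infty}$ converges to zero then so does its Cesaro mean $\tfrac{1}{N}\sum_{n=1}^{N}a_{n}$. Applied to $a_{n}:=|f_{n}^{(t)}(v)|^{2}-|f_{n}^{(0)}(v)|^{2}$, and using that the Cesaro mean $\langle|f^{(0)}|^{2}\rangle_{n}$ already exists by Lemma \ref{lem:fmean}, it follows that $\langle|f^{(t)}|^{2}\rangle_{n}$ also exists and equals $\langle|f^{(0)}|^{2}\rangle_{n}$, which is the claim.

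The only mild technicality is that Lemma \ref{lem:eigenfunc-conv} is stated for intervals of the form $[0,\sigma]$ with $\sigma\geq 0$, while here I may want $t<0$; this is resolved by repeating the same argument on $[t,0]$, since the proof of Lemma \ref{lem:eigenfunc-conv} does not use the sign of the parameter. Apart from this cosmetic adjustment, the real content of the lemma is already packaged inside Lemma \ref{lem:eigenfunc-conv} -- once one has uniform convergence of the perturbed eigenfunctions to the unperturbed ones at the $\delta$ vertex, the present lemma is a short corollary, and no further spectral analysis is required.
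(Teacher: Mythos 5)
Your proposal is correct and follows essentially the same route as the paper: both use Lemma \ref{lem:eigenfunc-conv} together with the uniform bound from Lemma \ref{lem:bdd} to get $\bigl||f_{n}^{(t)}|^{2}-|f_{n}^{(0)}|^{2}\bigr|\leq 2M\,|f_{n}^{(t)}-f_{n}^{(0)}|\rightarrow0$, and then invoke the fact that the Cesaro mean of a null sequence vanishes, with existence of $\langle|f^{(0)}|^{2}\rangle_{n}$ supplied by Lemma \ref{lem:fmean}. Your extra remark about $t<0$ is a harmless refinement of a point the paper leaves implicit.
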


Before proving the lemmas, we use them to prove Proposition \ref{prop:mean-conv}.
\begin{proof}[Proof of Proposition \ref{prop:mean-conv}]
 Fix $\sigma$. Then
\begin{align}
 & \left\langle d\right\rangle _{n}\left(\sigma\right)=\lim_{N\rightarrow\infty}\frac{1}{N}\sum_{n=1}^{N}d_{n}\left(\sigma\right)\label{eq:-50}\\
 & =_{(\ref{eq:-1})}\lim_{N\rightarrow\infty}\frac{1}{N}\sum_{n=1}^{N}\int_{0}^{\sigma}\left|f_{n}^{\left(t\right)}\right|^{2}dt=\lim_{N\rightarrow\infty}\int_{0}^{\sigma}\frac{1}{N}\sum_{n=1}^{N}\left|f_{n}^{\left(t\right)}\right|^{2}dt.
\end{align}

We know by Lemma \ref{lem:eigmean-conv} that $\left\langle \left|f^{\left(t\right)}\right|^{2}\right\rangle _{n}=\left\langle \left|f^{\left(0\right)}\right|^{2}\right\rangle _{n}$,
and that the convergence is uniform in $t\in\left[0,\sigma\right]$.
We can thus insert the limit into the integral and take $t=0$:
\begin{equation}
\left\langle d\right\rangle _{n}\left(\sigma\right)=\int_{0}^{\sigma}\lim_{N\rightarrow\infty}\frac{1}{N}\sum_{n=1}^{N}\left|f^{\left(t\right)}\right|^{2}dt=\int_{0}^{\sigma}\left\langle \left|f^{\left(0\right)}\right|^{2}\right\rangle _{n}dt=\left\langle \left|f^{\left(0\right)}\right|^{2}\right\rangle _{n}\sigma.\label{eq:-51}
\end{equation}
\end{proof}
\begin{proof}[Proof of Lemma \ref{lem:eigenfunc-conv}]
 Recall that by Formula (\ref{eq:-32}), the scattering matrix of
the $\delta$ condition is given by
\begin{equation}
S_{j'j}^{\left(t\right)}=\begin{cases}
\frac{2}{\deg\left(v\right)+\frac{it}{k}}-1 & j'=\hat{j}\\
\frac{2}{\deg\left(v\right)+\frac{it}{k}} & j\rightarrow j'\text{ at \ensuremath{v} and }j'\neq\hat{j}\\
0 & \text{Otherwise}
\end{cases}\label{eq:-3-1-1-1-1-1}
\end{equation}
Moreover, denote $U^{\left(t\right)}\left(k\right)=S^{\left(t\right)}e^{ik\L}$.
From (\ref{eq:-2}), we know that $k^{2}$ is an eigenvalue of $H\left(t\right)$
if and only if one is an eigenvalue of $U^{\left(t\right)}\left(k\right)$.
We would like to show that the $k$ values for which this happens
with $U^{\left(t\right)}\left(k\right)$ get close to the $k$ values
for which this happens with $U^{\left(0\right)}\left(k\right)$ as
$k\rightarrow\infty$ .

Note that by definition of $U^{\left(t\right)}$,
\begin{equation}
\left\Vert U^{\left(t\right)}\left(k\right)-U^{\left(0\right)}\left(k\right)\right\Vert _{\infty}=\left\Vert e^{ik\L}\left(S^{\left(t\right)}-S^{\left(0\right)}\right)\right\Vert _{\infty}\leq\frac{2t}{\deg\left(v\right)\left|\deg\left(v\right)+\frac{it}{k}\right|k},\label{eq:-52}
\end{equation}
and this expression approaches zero uniformly in $t\in\left[0,\sigma\right]$
as $k\rightarrow\infty$ . Since the supremum norm of the difference
tends to zero, so does the operator norm of the difference. This means
that as $k\rightarrow\infty$, the eigenvalues of $U^{\left(t\right)}\left(k\right)$
converge to those of $U^{\left(0\right)}\left(k\right)$ uniformly
in $t\in\left[0,\sigma\right]$.

Denote the eigenvalues of the unitary matrix $U^{\left(t\right)}\left(k\right)$
by $\left(e^{i\theta_{j}^{t}\left(k\right)}\right)_{j=1}^{2E}$, so
that the eigenphases $\left(\theta_{j}^{t}\left(k\right)\right)_{j=1}^{2E}$
are the lifts of these eigenvalues from $S^{1}$ to the universal
cover $\mathbb{R}$. We have that $k^{2}>0$ is an eigenvalue of $H\left(t\right)$
if and only if $\theta_{j}^{t}\left(k\right)\in2\pi\mathbb{Z}$ for
some $j$. Denote by $\left(k_{n}^{t}\right)_{n=1}^{\infty}$ the
$k$ values for which this happens (these are exactly the roots of
the secular function (\ref{eq:-3})).

We know that $\left(\theta_{j}^{t}\left(k\right)\right)_{j=1}^{2E}$
increase monotonically with $k$ at a rate which is bounded from below
by some $c>0$, and that $k_{n}^{0}<k_{n}^{t}$ for all $n$ (see
Lemma $4.5$ in \cite{BolEnd_ahp09}). Then by applying the mean value
theorem we get
\begin{align}
 & \theta_{j}^{0}\left(k_{n}^{0}\right)=2\pi m=\theta_{j}^{t}\left(k_{n}^{t}\right)\geq\theta_{j}^{t}\left(k_{n}^{0}\right)+c\left(k_{n}^{t}-k_{n}^{0}\right)\label{eq:-53}\\
\Rightarrow & k_{n}^{t}-k_{n}^{0}\leq\frac{1}{c}\left(\theta_{j}^{0}\left(k_{n}^{0}\right)-\theta_{j}^{t}\left(k_{n}^{0}\right)\right).\label{eq:-54}
\end{align}

As $n\rightarrow\infty$ (which is equivalent to $k\rightarrow\infty$),
we know by Equation (\ref{eq:-52}) that the expression above goes
to zero, and so we conclude that as $n\rightarrow\infty$, $\left|k_{n}^{t}-k_{n}^{0}\right|\rightarrow0$
uniformly in $t\in\left[0,\sigma\right]$.

Since the roots of the secular function (which are exactly $k_{n}^{t}$)
determine the coefficients of the eigenfunction via the (real analytic)
matrix equation (\ref{eq:-2}), we get that as $n\rightarrow\infty$,
\begin{equation}
\left|f_{n}^{\left(t\right)}\left(0\right)-f_{n}^{\left(0\right)}\left(0\right)\right|\rightarrow0,\label{eq:-55}
\end{equation}
and that this convergence is uniform in $t\in\left[0,\sigma\right]$,
as required.
\end{proof}
\begin{proof}[Proof of Lemma \ref{lem:eigmean-conv}]
 We first note that for $t=0$, $\left\langle \left|f^{\left(0\right)}\right|^{2}\right\rangle _{n}$
exists by Lemma \ref{lem:fmean}. We denote this mean value by $C$
for brevity.

For $t\neq0$, we claim that $\left\langle \left|f^{\left(t\right)}\right|^{2}\right\rangle _{n}$
exists as well, and that it is actually equal to the same constant
$C$. To show this, we use the fact that $\left|f_{n}^{\left(t\right)}-f_{n}^{\left(0\right)}\right|\underset{n\rightarrow\infty}{\rightarrow}0$
(Lemma \ref{lem:eigenfunc-conv}). Note that
\begin{equation}
\frac{1}{N}\sum_{n=1}^{N}\left|f_{n}^{\left(t\right)}\right|^{2}\leq\frac{1}{N}\sum_{n=1}^{N}\left|f_{n}^{\left(0\right)}\right|^{2}+\frac{1}{N}\sum_{n=1}^{N}\left|\left(f_{n}^{\left(t\right)}\right)^{2}-\left(f_{n}^{\left(0\right)}\right)^{2}\right|.\label{eq:-60}
\end{equation}

As $N\rightarrow\infty$, the first term converges to $C$. We claim
that the second term converges to zero. To show this, it is enough
to show that the summands themselves tend to zero (since the Cesaro
sum of a converging sequence is the limit itself).

Recall by the proof of Theorem \ref{thm:1.RNG-Lipschitz} that the
expressions $\left|f_{n}^{\left(t\right)}\right|^{2}$ are all uniformly
bounded in $t\in\mathbb{R}$ by some $M>0$. Since $\left|f_{n}^{\left(t\right)}-f_{n}^{\left(0\right)}\right|\underset{n\rightarrow\infty}{\rightarrow}0$,
we have that
\begin{align}
 & \left|\left(f_{n}^{\left(t\right)}\right)^{2}-\left(f_{n}^{\left(0\right)}\right)^{2}\right|=\left|f_{n}^{\left(t\right)}-f_{n}^{\left(0\right)}\right|\cdot\left|f_{n}^{\left(t\right)}+f_{n}^{\left(0\right)}\right|\label{eq:-61}\\
 & \leq2M\left|f_{n}^{\left(t\right)}-f_{n}^{\left(0\right)}\right|\rightarrow0,\label{eq:-62}
\end{align}
and so the second term indeed goes to zero. Overall we get that $C$
is the Cesaro mean in this case as well.
\end{proof}
\bigskip

\subsection{Omitting the assumption of independence over $\mathbb{Q}$ \label{subsec:rational-proof}}

Recall that in order to apply the ergodic theorem, we added the assumption
that the entries of the vector of edge lengths $\vec{\ell}$ are linearly
independent over $\mathbb{Q}$. We now show that the result of Theorems
\ref{thm:RNG-mean} and \ref{thm:Weyl-law} in fact holds without
this assumption.
\begin{prop}
Assumption \ref{subsec:rationality} can be omitted in Theorems \ref{thm:RNG-mean}
and \ref{thm:Weyl-law}.
\end{prop}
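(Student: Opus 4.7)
The plan is to reduce Theorems \ref{thm:RNG-mean} and \ref{thm:Weyl-law} to the pointwise local Weyl law $\left\langle |f(v)|^2 \right\rangle_n = \frac{2}{\deg(v) L}$ of Lemma \ref{lem:fmean}, and then establish this identity without Assumption \ref{subsec:rationality} by a route that bypasses the Barra--Gaspard ergodic theorem.

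First, I would observe that the arguments of Subsection \ref{subsec:mean-proof} --- specifically Proposition \ref{prop:mean-conv} together with Lemmas \ref{lem:eigenfunc-conv} and \ref{lem:eigmean-conv} --- derive Theorem \ref{thm:RNG-mean} from Lemma \ref{lem:fmean} using only the Hadamard formula (Lemma \ref{lem:hadamard}), the uniform boundedness of $|f_n^{(t)}(v)|^2$ (Lemma \ref{lem:bdd}), and continuity of eigenfunctions under the $\delta$ perturbation. None of these inputs invokes the ergodic theorem, so the derivation works verbatim for arbitrary $\vec{\ell}$. Hence Theorem \ref{thm:RNG-mean} reduces to the pointwise local Weyl law.

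Second, to prove the pointwise local Weyl law without Assumption \ref{subsec:rationality}, I would invoke heat kernel methods along the lines of \cite{Borthwick2022}. The diagonal heat kernel obeys the short-time asymptotic $K_t(v,v) = \sum_n |f_n(v)|^2 e^{-\lambda_n t} \sim \frac{1}{\deg(v) \sqrt{\pi t}}$ as $t \to 0^+$, which depends only on the local star structure at $v$ and makes no reference to the arithmetic of $\vec{\ell}$. Combined with the Weyl counting asymptotic $N(\Lambda) \sim \frac{L}{\pi}\sqrt{\Lambda}$ and a Karamata Tauberian theorem, this yields $\sum_{n:\lambda_n \le \Lambda} |f_n(v)|^2 \sim \frac{2}{\deg(v)\pi}\sqrt{\Lambda}$, and therefore the desired Cesaro limit.

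The main obstacle I expect is the remainder of Theorem \ref{thm:Weyl-law}, namely the diagonal amplitude identity $\langle |A_e|^2\rangle_n = \frac{1}{2L}$ for each edge together with the off-diagonal vanishing $\langle A_j \overline{A_e}\rangle_n = 0$ for $j\ne e$. These are genuinely non-diagonal quantities not accessible to heat-trace asymptotics. My plan for this step is an approximation argument: pick $\mathbb{Q}$-independent $\vec{\ell}^{(m)} \to \vec{\ell}$, apply the already-proved identities on each approximating graph $\Gamma^{(m)}$, and pass to the limit using the boundedness $|A_e|\le 1$ together with the locally Lipschitz dependence of the scattering amplitudes on $\vec{\ell}$ away from a codimension-one set of eigenvalue crossings. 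The delicate point is a uniform-in-$N$ continuity estimate for $\frac{1}{N}\sum_{n=1}^N A_j^{(m)}\overline{A_e^{(m)}}$ as $m \to \infty$, which requires handling eigenvalue crossings carefully --- for instance by choosing the approximating sequence so that the relevant portion of the spectrum stays uniformly away from crossings, and then justifying the interchange $\lim_m \lim_N = \lim_N \lim_m$ via the resulting equicontinuity.
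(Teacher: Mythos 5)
Your first two steps are correct, and they take a genuinely different route from the paper. The paper does not re-derive the local Weyl law for rationally dependent lengths at all: it fixes the combinatorial graph, regards the RNG sequence as a map $\phi_{1}:\vec{\ell}\mapsto\left(d_{n}^{\vec{\ell}}\left(\sigma\right)\right)_{n=1}^{\infty}$, uses continuity of the spectral curves in $\vec{\ell}$, the closedness of the set of Cesaro summable sequences, and the fact that on the dense set of rationally independent $\vec{\ell}$ the Cesaro mean is given by the explicit, continuous formula $\frac{2\sigma}{L}\sum_{v}\frac{1}{\deg\left(v\right)}$; the general case then follows by a density/continuity extension in $\vec{\ell}$. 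Your observation that Subsection \ref{subsec:mean-proof} (Proposition \ref{prop:mean-conv}, Lemmas \ref{lem:eigenfunc-conv} and \ref{lem:eigmean-conv}) nowhere uses Assumption \ref{subsec:rationality}, so that Theorem \ref{thm:RNG-mean} reduces to Lemma \ref{lem:fmean}, is accurate, and the heat-kernel/Tauberian derivation of $\left\langle \left|f\left(v\right)\right|^{2}\right\rangle _{n}=\frac{2}{\deg\left(v\right)L}$ (diagonal kernel $\sim\frac{1}{\deg\left(v\right)\sqrt{\pi t}}$ at a Kirchhoff vertex, Karamata, and the Weyl count $N\left(\Lambda\right)\sim\frac{L}{\pi}\sqrt{\Lambda}$) is arithmetic-free and correct, modulo importing the short-time vertex expansion of the heat kernel as in \cite{Borthwick2022}, which the thesis does not develop; it even yields Equation (\ref{eq:-15-2}) directly, which the paper's written argument (confined to the RNG sequence) only covers implicitly. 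A minor point to record: at multiple eigenvalues $\left|f_{n}\left(v\right)\right|^{2}$ is basis dependent, but eigenspace sums are not and multiplicities are uniformly bounded, so the Cesaro limit is unaffected.

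The genuine gap is in your third step, the amplitude statements (\ref{eq:-16-2}) and (\ref{eq:-17-2}). Your scheme needs exactly the interchange $\lim_{m}\lim_{N}=\lim_{N}\lim_{m}$ that you flag, and the proposed remedy, choosing $\vec{\ell}^{\left(m\right)}$ so that the spectrum stays uniformly away from crossings, cannot be arranged: the near-crossings are forced by the limit, not by a bad choice of approximants. The target vector $\vec{\ell}$ is rationally dependent, and in the very cases of interest (e.g. the equilateral star of Figure \ref{rational-fig}) the limiting graph has a positive density of multiple eigenvalues; hence for any sequence $\vec{\ell}^{\left(m\right)}\rightarrow\vec{\ell}$ an asymptotically positive proportion of eigenvalues of $\Gamma^{\left(m\right)}$ become nearly degenerate, and within such near-degenerate clusters the normalized amplitude vectors can rotate arbitrarily, so $A_{j}^{\left(m\right)}\overline{A_{e}^{\left(m\right)}}$ need not converge pointwise in $n$ and "locally Lipschitz away from crossings" gives no equicontinuity of $\frac{1}{N}\sum_{n\leq N}A_{j}^{\left(m\right)}\overline{A_{e}^{\left(m\right)}}$ in $m$ uniform in $N$. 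This is precisely why the paper runs its density argument on the RNG, a quantity that is uniformly bounded and continuous in $\vec{\ell}$ for each fixed $n$ even through degeneracies and whose limiting mean is an explicit continuous function of $\vec{\ell}$, rather than on amplitudes or eigenfunctions. To close your step you would need to replace individual amplitudes by degeneracy-robust quantities (e.g. averages over spectral projections) or find an off-diagonal analogue of the heat-kernel argument; as written, the claim that Theorem \ref{thm:Weyl-law} in its amplitude part holds without Assumption \ref{subsec:rationality} is not established by your argument.
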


\begin{proof}
Fix $\sigma\in\mathbb{R}$, a combinatorial graph $G=\left(\mathcal{V},\mathcal{E}\right)$
and a Robin set $\VR\subset\mathcal{V}$. Denote:
\begin{equation}
\mathbb{R}_{+}^{E}=\left\{ \vec{x}\in\mathbb{R}^{E}:x_{i}>0,\forall i\in\left\{ 1,..,E\right\} \right\} .\label{eq:-11}
\end{equation}

For $\vec{\ell}\in\mathbb{R}_{+}^{E}$, denote by $\Gamma_{\vec{\ell}}$
the metric graph obtained by assigning the vector of edge lengths
$\vec{\ell}$ to the fixed combinatorial graph $G$ (recall Definition
\ref{fig:A-metric-graph.}). Furthermore, denote by $P^{E}$ the subset
of $\mathbb{R}_{+}^{E}$ of vectors whose coordinates are rationally
independent. This is a dense subset of $\mathbb{R}_{+}^{E}$. Lastly,
denote the set of Cesaro summable sequences by $\mathcal{C}$.

Define the following function:
\begin{align}
 & \phi_{1}:\mathbb{R}_{+}^{E}\rightarrow\mathbb{R}^{\mathbb{N}},\label{eq:-10}\\
 & \phi_{1}\left(\vec{\ell}\right)=\left(d_{1}^{\vec{\ell}}\left(\sigma\right),d_{2}^{\vec{\ell}}\left(\sigma\right),...\right),\label{eq:-14}
\end{align}

where $d_{n}^{\vec{\ell}}\left(\sigma\right)$ is the RNG for the
graph $\Gamma_{\vec{\ell}}$ with corresponding Robin set $\mathcal{V}_{R}$.
Furthermore, define the following additional functions:
\begin{align}
 & \phi_{2}:\mathcal{C}\rightarrow\mathbb{R},\label{eq:-63}\\
 & \phi_{2}\left(\left(a_{n}\right)_{n=1}^{\infty}\right)=\lim_{N\rightarrow\infty}\frac{1}{N}\sum_{n=1}^{N}a_{n}\label{eq:-64}\\
 & \phi:P^{E}\rightarrow\mathbb{R},\label{eq:-65}\\
 & \phi=\phi_{2}\circ\left(\phi_{1}|_{P^{E}}\right)\label{eq:-66}
\end{align}

We know by the version we proved for Theorem \ref{thm:RNG-mean} that
$\phi$ is a well defined function on $P^{E}$, which assigns to each
vector of edge lengths the mean value of the RNG for the corresponding
graph.

Note that $\phi$ is locally uniformly continuous, since it is simply
given by the expression
\begin{equation}
\phi\left(\vec{\ell}\right)=\frac{2\sigma}{\sum_{i=1}^{E}\ell_{i}}\sum_{v\in\VR}\frac{1}{\deg\left(v\right)}.\label{eq:-67}
\end{equation}

Since $P^{E}$ is dense in $\mathbb{R}_{+}^{E}$ and $\phi$ is locally
uniformly continuous, it can be extended into a continuous function
$\tilde{\phi}$ on $\mathbb{R}_{+}^{E}$.

Assuming that the composition $\phi_{2}\circ\phi_{1}$ is well defined
and continuous on $\mathbb{R}_{+}^{E}$, we can in fact say that $\tilde{\phi}=\phi_{2}\circ\phi_{1}$,
since the two functions are continuous and identify on a dense subset.
To show that the composition $\phi_{2}\circ\phi_{1}$ is well defined
(meaning, that $Im\left(\phi_{1}\right)\subset Dom\left(\phi_{2}\right)=\mathcal{C}$),
we can simply show that $\phi_{1}$ is continuous, since then
\begin{equation}
\phi_{1}\left(\mathbb{R}_{+}^{E}\right)=\phi_{1}\left(\overline{P^{E}}\right)\subset_{\text{Continuity of }\phi}\overline{\phi_{1}\left(P^{E}\right)}\subset\overline{\mathcal{C}}\subset\mathcal{C},\label{eq:-70}
\end{equation}
where we have used the fact that the set of Cesaro summable sequences
is closed. This continuity will also show the continuity of the composition,
since $\phi_{1}$ and $\phi_{2}$ are both continuous.

We thus want to show that $\phi_{1}$ is continuous. By Theorem $3.1.2$
in \cite{BerKuc_graphs}, the functions $\lambda_{n}^{\vec{\ell}}\left(\sigma\right)$
and $\lambda_{n}^{\vec{\ell}}\left(0\right)$ are continuous in $\vec{\ell}$.
Moreover, by definition,
\begin{equation}
d_{n}^{\vec{\ell}}\left(\sigma\right)=\lambda_{n}^{\vec{\ell}}\left(\sigma\right)-\lambda_{n}^{\vec{\ell}}\left(0\right).\label{eq:-71}
\end{equation}

Then $\phi_{1}$ is continuous in each of its components as the difference
of two continuous functions, and is thus continuous.

Now that we know that $\tilde{\phi}=\phi_{2}\circ\phi_{1}$, our proposition
follows from continuity of $\tilde{\phi}$; For every $\vec{\ell}\in\mathbb{R}_{+}^{E}$,
we can choose a sequence $\vec{\ell}_{n}$ such that $\vec{\ell}_{n}\rightarrow\vec{\ell}$,
and then
\begin{align}
 & \lim_{N\rightarrow\infty}\frac{1}{N}\sum_{n=1}^{N}d_{n}^{\vec{\ell}}\left(\sigma\right)=\phi_{2}\left(\left(d_{n}^{\vec{\ell}}\left(\sigma\right)\right)_{n=1}^{\infty}\right)=\phi_{2}\left(\phi_{1}\left(\vec{\ell}\right)\right)\label{eq:-68}\\
 & =\tilde{\phi}\left(\vec{\ell}\right)=\tilde{\phi}\left(\lim_{n\rightarrow\infty}\vec{\ell}_{n}\right)=\lim_{n\rightarrow\infty}\tilde{\phi}\left(\vec{\ell}_{n}\right)\\
 & =\lim_{n\rightarrow\infty}\frac{2\sigma}{\sum_{i=1}^{E}\ell_{i}^{\left(n\right)}}\sum_{v\in\VR}\frac{1}{\deg\left(v\right)}=\frac{2\sigma}{\sum_{i=1}^{E}\ell_{i}}\sum_{v\in\mathcal{V}_{R}}\frac{1}{\deg\left(v\right)}.\label{eq:-69}
\end{align}

This completes the proof.

\end{proof}
\begin{rem}
It is worth noting that while the mean value converges to the same
value for the rationally dependent case, the behavior of the RNG might
be drastically different than in the rationally independent case.

For instance, for the case of an equilateral star graph, one can show
that the RNG accumulates around two values, and does not get close
to the mean value at all. Nevertheless, the Cesaro sum still converges
to the same mean value, as displayed in Figure \ref{rational-fig}.
\begin{figure}
\includegraphics[scale=0.6]{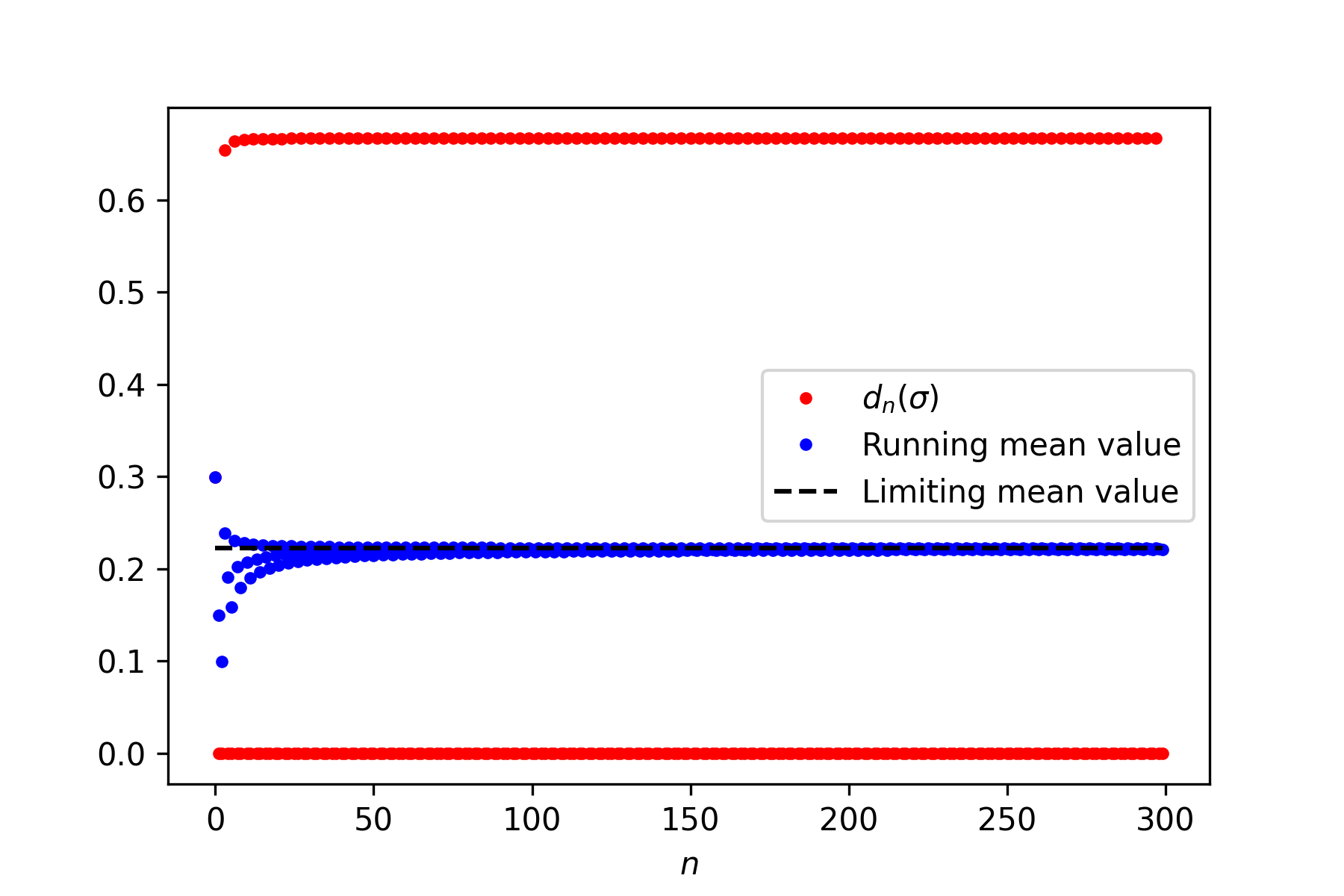}

\caption[RNG on an equilateral star graph.]{\label{rational-fig}While the RNG accumulates around two values for
an equilateral star graph, the mean value still converges to the value
$\frac{2}{L\cdot\deg\left(v\right)}$ which lies between them.}
\end{figure}
\end{rem}

\newpage{}

\section{Tools for proofs of Theorems \ref{thm:SF-index}, \ref{prop:SF-points},
\ref{thm:SF-Betti1} and \ref{prop:SF-Betti2} \label{sec:SF-tools}}

In this section we prove several useful properties of the $\delta_{s}$
family and its spectral curves $\lambda_{n}^{s}\left(t\right)$.

A useful fact that will be used throughout this section is an alternative
definition of the $\delta_{s}$ family from Subsection \ref{subsec:delta-s-definition}
via sesquilinear forms (the relevant computation may be found in Appendix
\ref{sec:Appendices}).

For $s\neq\infty$, one can define the self-adjoint family $\left(H^{s}\left(t\right)\right)_{t\neq0,\infty}$
using the following sesquilinear form on $H^{1}\left(\Gamma\right)$:
\begin{align}
 & \,\,\,\,\,\,\,\,\,\,\,\,\,\,\,\,L_{t}^{s}\left(f,g\right)=\int_{\Gamma}\frac{df}{dx}\overline{\frac{dg}{dx}}dx\label{eq:-4-1}\\
 & -\frac{1}{\sin^{2}\left(\alpha\right)t}\sum_{v\in B}\left(\begin{array}{c}
f_{1}\left(v\right)\\
f_{2}\left(v\right)
\end{array}\right)^{*}\left(\begin{array}{cc}
1+\frac{1}{2}\sin\left(2\alpha\right)t & -1\\
-1 & 1-\frac{1}{2}\sin\left(2\alpha\right)t
\end{array}\right)\left(\begin{array}{c}
g_{1}\left(v\right)\\
g_{2}\left(v\right)
\end{array}\right).
\end{align}

In the case $t=\infty$, the form is given by

\begin{align}
 & \,\,\,\,\,\,\,\,\,\,\,\,\,\,\,\,L_{\infty}^{s}\left(f,g\right)=\int_{\Gamma}\frac{df}{dx}\overline{\frac{dg}{dx}}dx\label{eq:-4-1-2}\\
 & -\frac{1}{\sin^{2}\left(\alpha\right)}\sum_{v\in B}\left(\begin{array}{c}
f_{1}\left(v\right)\\
f_{2}\left(v\right)
\end{array}\right)^{*}\left(\begin{array}{cc}
\frac{1}{2}\sin\left(2\alpha\right) & 0\\
0 & -\frac{1}{2}\sin\left(2\alpha\right)t
\end{array}\right)\left(\begin{array}{c}
g_{1}\left(v\right)\\
g_{2}\left(v\right)
\end{array}\right),
\end{align}
which can be thought of as taking the limit $t\rightarrow\infty$
in (\ref{eq:-4-1}).

In the case $t=0$ the domain of $L_{0}^{s}$ is given by all functions
in $H^{1}\left(\Gamma\right)$ which are continuous at $B$, and the
corresponding form is given by
\begin{equation}
L_{0}^{s}\left(f,g\right)=\int_{\Gamma}\frac{df}{dx}\overline{\frac{dg}{dx}}dx,\label{eq:-100}
\end{equation}

which can also be thought of as a special case of (\ref{eq:-4-1})
by identifying the second term as zero due to the continuity at $B$.

For the case $s=\infty$ we define the $\delta_{\infty}$ family using
the following sesquilinear form on $H^{1}\left(\Gamma\right)$:
\begin{equation}
L_{t}^{\infty}\left(f,g\right)=\int_{\Gamma}\overline{\frac{df}{dx}}\frac{dg}{dx}dx+t\sum_{v\in B}\overline{f\left(v\right)}g\left(v\right),\label{eq:-8-1-1}
\end{equation}

where in the case $t=\infty$ the domain of $L_{\infty}^{\infty}$
is $H_{0}^{1}\left(\Gamma\right)$ (here the boundary of $\Gamma$
is chosen as $B$), and so the second term vanishes.

In other words, $H^{s}\left(t\right)$ is the maximal self-adjoint
extension of the Laplacian which satisfies
\begin{equation}
L_{t}^{s}\left(u,v\right)=\left\langle H^{s}\left(t\right)u,v\right\rangle _{L^{2}},\,\,\forall u\in Dom\left(H^{s}\left(t\right)\right),v\in Dom\left(L_{t}^{s}\right).\label{eq:-77}
\end{equation}

\begin{lem}
\label{lem:sf-bdd-below} For every $s\neq\infty$, the spectral curves
$\lambda_{n}^{s}\left(t\right)$ are real analytic at any $t_{0}\in\R\backslash\left\{ 0\right\} $
such that $\lambda_{n}^{s}\left(t_{0}\right)$ is a simple eigenvalue.
Moreover, for every $\epsilon>0$, they are uniformly bounded from
below for $t\in\R\backslash\left(0,\epsilon\right)$.\\
For $s=\infty$, the spectral curves $\lambda_{n}^{\infty}\left(t\right)$
are real analytic at any $t_{0}\in\R$ such that $\lambda_{n}^{\infty}\left(t_{0}\right)$
is a simple eigenvalue. Moreover, for every $M>0$, they are uniformly
bounded from below for $t\in[-M,\infty)$.
\end{lem}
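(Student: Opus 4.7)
The plan is to combine Kato's analytic perturbation theory for sesquilinear forms with standard trace inequalities on $H^{1}(\Gamma)$, working throughout with the form family $L_{t}^{s}$ from equations (\ref{eq:-4-1})--(\ref{eq:-8-1-1}) rather than directly with the operators $H^{s}(t)$.

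For the real analyticity claim, I would verify that $\{L_{t}^{s}\}$ is a type (B) analytic family in the sense of Kato on a fixed form domain. When $s=\infty$, the form $L_{t}^{\infty}(f,f)=\int_{\Gamma}|f'|^{2}\,dx+t\sum_{v\in B}|f(v)|^{2}$ is affine in $t$ on the fixed domain $H^{1}(\Gamma)$, so analyticity of the family at every $t_{0}\in\mathbb{R}$ is immediate. When $s\neq\infty$, the boundary coefficients in (\ref{eq:-4-1}) are rational functions of $t$ regular off $t=0$, and the form domain is $H^{1}(\Gamma)$ for all $t\in\mathbb{R}\setminus\{0\}$; hence the family is analytic on $\mathbb{R}\setminus\{0\}$. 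The standard theorem on analytic perturbation of simple eigenvalues of a type (B) analytic family then extends any simple $\lambda_{n}^{s}(t_{0})$ to a real-analytic function of $t$ in a neighbourhood of $t_{0}$.

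For the uniform lower bound I would use the Sobolev trace inequality on $\Gamma$: for every $\eta>0$ there exists $C_{\eta}>0$ such that
\[
\left|f(v)\right|^{2}\leq\eta\,\|f'\|_{L^{2}(\Gamma)}^{2}+C_{\eta}\,\|f\|_{L^{2}(\Gamma)}^{2}
\]
for all $v\in B$ and all $f\in H^{1}(\Gamma)$. In the case $s=\infty$ with $t\in[-M,\infty)$, the only possibly negative contribution to $L_{t}^{\infty}(f,f)$ is $t\sum_{v\in B}|f(v)|^{2}$ when $t<0$, and this is controlled by $M\left|B\right|(\eta\|f'\|^{2}+C_{\eta}\|f\|^{2})$. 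Choosing $\eta$ so that $M\left|B\right|\eta\leq\tfrac{1}{2}$ yields
\[
L_{t}^{\infty}(f,f)\geq -M\left|B\right|C_{\eta}\,\|f\|_{L^{2}}^{2}
\]
uniformly in $t\in[-M,\infty)$, and the min-max principle then gives $\lambda_{n}^{\infty}(t)\geq -M\left|B\right|C_{\eta}$ for every $n$. The same absorption argument applies when $s\neq\infty$ and $t\in\mathbb{R}\setminus(0,\epsilon)$, because on that set each of the rational coefficients $\tfrac{1}{\sin^{2}(\alpha)t}$ and $\tfrac{1}{\sin^{2}(\alpha)t}\bigl(1\pm\tfrac{1}{2}\sin(2\alpha)t\bigr)$ appearing in the boundary matrix of (\ref{eq:-4-1}) is bounded; indeed, excluding the right neighbourhood $(0,\epsilon)$ is exactly what is needed to keep $1/t$ bounded.

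The main obstacle I expect is in the $s\neq\infty$ case, where the boundary matrix in (\ref{eq:-4-1}) is indefinite in the pair $(f_{1}(v),f_{2}(v))$, so the trace inequality must be applied to each matrix entry separately in order to dominate the negative part of the quadratic form by $\eta\|f'\|^{2}+C_{\eta}\|f\|^{2}$. A secondary technical point to verify is that the form domain is genuinely independent of $t$ on the open set where analyticity is claimed, and that the distinguished form at $t=0$, whose domain is strictly smaller due to the continuity constraint at $B$, does not interfere with real analyticity on $\mathbb{R}\setminus\{0\}$.
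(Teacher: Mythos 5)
Your overall route --- working with the forms $L_{t}^{s}$, invoking Kato's analytic (type (B)) perturbation theory on the fixed form domain for the analyticity of simple eigenvalue branches, and bounding the ground state from below via the min--max principle applied to the quadratic form --- is essentially the paper's own proof, and your treatment of the analyticity claims and of the case $s=\infty$ is fine. However, there is a genuine gap in your lower bound for $s\neq\infty$: the set $\mathbb{R}\setminus(0,\epsilon)=(-\infty,0]\cup[\epsilon,\infty)$ contains values $t<0$ arbitrarily close to $0$, so the coefficient $\frac{1}{\sin^{2}(\alpha)t}$ is \emph{not} bounded on it, and your assertion that ``excluding $(0,\epsilon)$ is exactly what is needed to keep $1/t$ bounded'' is false. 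As written, your absorption argument yields no uniform bound on any interval $(-\delta,0)$, which is part of the region the lemma covers.

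The step is repaired by using the sign, not the size, of the singular term. For each $v\in B$ the boundary contribution to $L_{t}^{s}(f,f)$ can be rewritten as
\begin{align*}
 & -\frac{1}{\sin^{2}\left(\alpha\right)t}\Bigl[\bigl(1+\tfrac{1}{2}\sin\left(2\alpha\right)t\bigr)\left|f_{1}\left(v\right)\right|^{2}-2\,\mathrm{Re}\bigl(f_{1}\left(v\right)\overline{f_{2}\left(v\right)}\bigr)+\bigl(1-\tfrac{1}{2}\sin\left(2\alpha\right)t\bigr)\left|f_{2}\left(v\right)\right|^{2}\Bigr]\\
 & =-\frac{1}{\sin^{2}\left(\alpha\right)t}\bigl|f_{1}\left(v\right)-f_{2}\left(v\right)\bigr|^{2}-\cot\left(\alpha\right)\bigl(\left|f_{1}\left(v\right)\right|^{2}-\left|f_{2}\left(v\right)\right|^{2}\bigr).
\end{align*}
For $t<0$ the first (singular) term is nonnegative and may simply be discarded, while the second term has a $t$-independent coefficient and is absorbed exactly as you propose via the trace inequality; on $[\epsilon,\infty)$ (and at $t=\infty$) the coefficients really are bounded and your argument goes through unchanged. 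Note also that the true reason for excising $(0,\epsilon)$ is not boundedness of $1/t$ but that for small $t>0$ the nonnegative quantity $\left|f_{1}\left(v\right)-f_{2}\left(v\right)\right|^{2}$ enters with a large negative coefficient, which is precisely why, as the paper remarks after the lemma, some spectral curves diverge to $-\infty$ as $t\rightarrow0^{+}$.
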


\begin{proof}
We first prove the analyticity.

On the given two regions, $Dom\left(L_{t}^{s}\right)$ is independent
of $t$. For fixed $f\in Dom\left(L_{t}^{s}\right)$, the complex
valued function $\phi\left(z\right)=L_{z}^{s}\left(f,f\right)$ is
clearly analytic (on $\mathbb{C}$ for $s=\infty$ and on $\mathbb{C}\backslash\left\{ 0\right\} $
for $s\neq\infty$). Thus, by standard perturbation theory (see, for
instance, chapter VII-4 in \cite{Kato_book}), the $\delta_{s}$ family
is a holomorphic family of operators, and the spectral curves $\lambda_{n}^{s}\left(t\right)$
are real analytic on the given regions.

For the boundedness from below, we write the proof for the case $s\neq\infty$,
while the proof for $s=\infty$ is analogous. The quadratic form associated
with the operator $H^{s}\left(t\right)$ is given by
\begin{align}
 & L_{t}^{s}\left(f,f\right)=\int_{\Gamma}\left|\frac{df}{dx}\right|^{2}dx-\sum_{v\in B}\frac{1}{\sin^{2}\left(\alpha\right)t}\Biggl[\left(1+\frac{1}{2}\sin\left(2\alpha\right)t\right)\left|f_{1}\left(v\right)\right|^{2}\label{eq:-30-1-1}\\
 & -2Re\left(f_{1}\overline{f_{2}}\left(v\right)\right)+\left(1-\frac{1}{2}\sin\left(2\alpha\right)t\right)\left|f_{2}\left(v\right)\right|^{2}\Biggr].
\end{align}

By the min-max characterization of the ground state:
\begin{align}
 & \lambda_{1}\left(H^{s}\left(t\right)\right)=\min_{\left\{ f\in Dom\left(L_{t}^{s}\right):\left\Vert f\right\Vert =1\right\} }L_{t}^{s}\left(f,f\right)\label{eq:-78-1}\\
= & \min_{\left\{ f\in Dom\left(L_{t}^{s}\right):\left\Vert f\right\Vert =1\right\} }\int_{\Gamma}\left|\frac{df}{dx}\right|^{2}dx-\frac{1}{\sin^{2}\left(\alpha\right)t}\sum_{v\in B}\Biggl[\left(1+\frac{1}{2}\sin\left(2\alpha\right)t\right)\left|f_{1}\left(v\right)\right|^{2}\label{eq:-31-1-1}\\
 & -2Re\left(f_{1}\overline{f_{2}}\left(v\right)\right)+\left(1-\frac{1}{2}\sin\left(2\alpha\right)t\right)\left|f_{2}\left(v\right)\right|^{2}\Biggr].
\end{align}

Given $\epsilon>0$, the quantity above is clearly uniformly bounded
from below for $t\in\R\backslash\left(0,\epsilon\right)$, which gives
the result.
\end{proof}
\begin{rem}
At the points where $\lambda_{n}^{s}\left(t\right)$ is not a simple
eigenvalue, the results in \cite{Kato_book} show that the spectral
curves are still continuous. Nevertheless, at the forbidden region
$\left(0,\epsilon\right)$, one can show that some of the spectral
curves diverge to $-\infty$ near $t=0^{+}$ (see, for instance, Figures
\ref{fig:Betti} and \ref{monoton-fig}).\\
Theorem \ref{prop:SF-points} shows that the number of curves which
tend to $-\infty$ is equal to the number of points where the $\delta_{s}$
condition is imposed.

\begin{figure}
\includegraphics[scale=0.7]{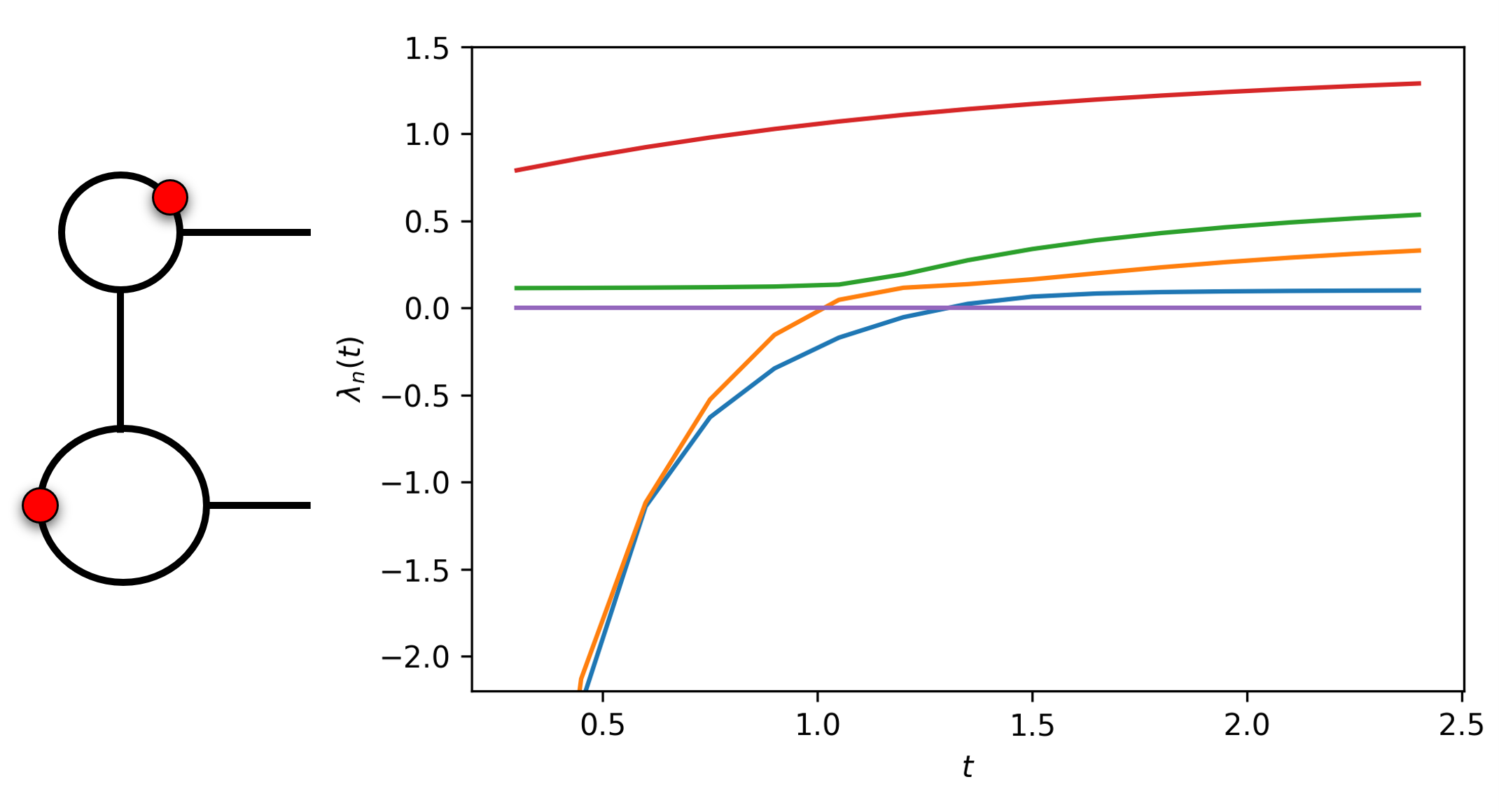}

\caption[Monotonicity and boundedness from below of spectral curves.]{\label{monoton-fig}Demonstration of Lemmas \ref{lem:sf-bdd-below}
and \ref{lem:monotonicity} for the $\delta_{0}$ flow on glasses
graph. The spectral curves are monotone increasing with $t$, and
are uniformly bounded from below on $\left(0,\epsilon\right)$, with
two spectral curves approaching $-\infty$ near $0^{+}$. }
\end{figure}
\end{rem}

\begin{lem}
\label{lem:monotonicity}The spectral curves $\lambda_{n}^{s}\left(t\right)$
are monotone increasing with $t$ on every region where they are continuous
(either $\mathbb{R}$ for $s=\infty$ or the two half lines for $s\neq\infty$).
\end{lem}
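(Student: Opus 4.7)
The plan is to apply the min-max characterization of eigenvalues to the sesquilinear forms $L_t^s$ introduced at the start of Section \ref{sec:SF-tools}. Once this framework is in place, monotonicity of the spectral curves reduces to the fiberwise statement that for each fixed $f$ in the form domain, the map $t \mapsto L_t^s(f,f)$ is non-decreasing on the relevant region.

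The first step is to verify that on every such region the form domain $Dom(L_t^s)$ does not depend on $t$: for $s = \infty$ it is $H^{1}(\Gamma)$ for every finite $t$, and for $s \neq \infty$ it is $H^{1}(\Gamma)$ on each of the open half-lines $t > 0$ and $t < 0$. With the domain held fixed, the min-max formula
$$
\lambda_n^s(t) = \min_{\substack{V \subset Dom(L_t^s) \\ \dim V = n}} \; \max_{\substack{f \in V \\ \|f\|_{L^2} = 1}} L_t^s(f,f)
$$
transfers $t$-monotonicity of the quadratic form (at each fixed $f$) directly to $t$-monotonicity of the eigenvalues.

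For $s = \infty$ the monotonicity at fixed $f$ is immediate from $L_t^\infty(f,f) = \int_\Gamma |f'|^2 dx + t\sum_{v\in B}|f(v)|^2$, whose $t$-derivative $\sum_{v\in B}|f(v)|^2$ is manifestly non-negative. For $s \neq \infty$ a short algebraic reorganization of the boundary block (expanding using $\sin(2\alpha) = 2\sin\alpha\cos\alpha$ and isolating the combination $|f_1(v)-f_2(v)|^2$) rewrites the form as
$$
L_t^s(f,f) = \int_\Gamma |f'|^2 dx - \frac{1}{\sin^2(\alpha)\, t}\sum_{v\in B} |f_1(v)-f_2(v)|^2 - \cot(\alpha)\sum_{v\in B}\bigl(|f_1(v)|^2 - |f_2(v)|^2\bigr).
$$
Only the middle term carries $t$-dependence, and its $t$-derivative equals $\sin^{-2}(\alpha)\, t^{-2}\sum_{v\in B}|f_1(v)-f_2(v)|^2 \geq 0$ on each of the two open half-lines. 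Combined with min-max this yields the claimed monotonicity of $\lambda_n^s(t)$.

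I do not anticipate a genuine obstacle, since the closedness and semi-boundedness of $L_t^s$ needed to apply min-max on each region were essentially already secured in Lemma \ref{lem:sf-bdd-below} through the holomorphic-family framework of Kato. The only mild care required is at the isolated values of $t$ where eigenvalues cross, so that $\lambda_n^s(t)$ is merely continuous rather than analytic; but the min-max formulation is precisely the right tool here, as it is insensitive to how branches are labelled through crossings.
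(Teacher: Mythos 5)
Your argument is correct, and it takes a genuinely different route from the paper. The paper proves the lemma by a Feynman--Hellmann/Hadamard-type computation: at each $t_{0}$ where $\lambda_{n}^{s}\left(t_{0}\right)$ is simple it differentiates the identity $L_{t_{0}}^{s}\left(f_{n},g\right)=\lambda_{n}^{s}\left(t_{0}\right)\left\langle f_{n},g\right\rangle$ in $t$, using analyticity of the eigenbranch and eigenfunction, to obtain the explicit derivative $\frac{\partial\lambda_{n}^{s}}{\partial t}=\sum_{v\in B}\frac{1}{\sin^{2}\left(\alpha\right)t^{2}}\left|f_{n,2}\left(v\right)-f_{n,1}\left(v\right)\right|^{2}$ (respectively $\sum_{v\in B}\left|f_{n}\left(v\right)\right|^{2}$ for $s=\infty$), and then handles the discrete set of eigenvalue crossings by continuity. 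You instead fix the form domain on each region (indeed $H^{1}\left(\Gamma\right)$ on each open half-line for $s\neq\infty$, and for all finite $t$ when $s=\infty$), verify fiberwise monotonicity of $t\mapsto L_{t}^{s}\left(f,f\right)$, and invoke the min-max principle; your algebraic reorganization is right, since $\frac{\sin\left(2\alpha\right)}{2\sin^{2}\left(\alpha\right)}=\cot\left(\alpha\right)$, so the only $t$-dependent term is $-\frac{1}{\sin^{2}\left(\alpha\right)t}\sum_{v\in B}\left|f_{1}\left(v\right)-f_{2}\left(v\right)\right|^{2}$, whose $t$-derivative is the same non-negative quantity the paper computes. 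What your route buys is robustness: it needs neither simplicity of eigenvalues nor analyticity of eigenbranches, so crossings require no separate argument, and the label-insensitivity of min-max is exactly the right tool there. What the paper's route buys is the pointwise derivative formula itself, which is in the spirit of the Hadamard formula used for the Robin--Neumann gap and is of independent interest. One small caveat on your side: the semi-boundedness needed for min-max at a fixed $t$ is elementary (the vertex terms are form-bounded relative to $\int_{\Gamma}\left|f'\right|^{2}dx$ by trace estimates), but it is not quite what Lemma \ref{lem:sf-bdd-below} states, which gives a lower bound uniform in $t$ only away from $\left(0,\epsilon\right)$; you should cite the fixed-$t$ semi-boundedness and compactness of the embedding $H^{1}\left(\Gamma\right)\hookrightarrow L^{2}\left(\Gamma\right)$ directly rather than that lemma. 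Also note your min-max argument lives on the open half-lines where the domain is $t$-independent; if the closed half-line is wanted, monotonicity extends to the endpoint by the continuity already asserted in the paper, just as in the paper's own proof.
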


\begin{proof}
By Lemma \ref{lem:sf-bdd-below}, the spectral curves $\lambda_{n}^{s}\left(t\right)$
are real analytic for every $t_{0}$ such that $\lambda_{n}^{s}\left(t_{0}\right)$
is a simple eigenvalue. Fix such $n,t_{0}$ and denote by $f_{n}$
the $L^{2}$ normalized eigenfunction corresponding to $\lambda_{n}^{s}\left(t_{0}\right)$.

By definition of $L_{t}^{s}$, we have that
\begin{equation}
L_{t_{0}}^{s}\left(f_{n},g\right)=\lambda_{n}^{s}\left(t_{0}\right)\left\langle f_{n},g\right\rangle ,\,\,\forall g\in Dom\left(L_{t_{0}}^{s}\right).\label{eq:-9-1}
\end{equation}

By Theorem $3.1.2$ in \cite{BerKuc_graphs}, $f_{n}$ depends analytically
on $t$ at $t_{0}$ as well. We can thus differentiate both sides
with respect to $t$ and get
\begin{equation}
\frac{\partial L_{t_{0}}^{s}\left(f_{n},g\right)}{\partial t}+L_{t_{0}}^{s}\left(\frac{\partial f_{n}}{\partial t},g\right)=\frac{\partial\lambda_{n}^{s}\left(t_{0}\right)}{\partial t}\left\langle f_{n},g\right\rangle +\lambda_{n}^{s}\left(t_{0}\right)\left\langle \frac{\partial f_{n}}{\partial t},g\right\rangle .\label{eq:-10-1}
\end{equation}

Choosing $g=f_{n}$ we obtain
\begin{equation}
\frac{\partial L_{t_{0}}^{s}\left(f_{n},f_{n}\right)}{\partial t}+L_{t_{0}}^{s}\left(\frac{\partial f_{n}}{\partial t},f_{n}\right)=\frac{\partial\lambda_{n}^{s}\left(t_{0}\right)}{\partial t}\left\langle f_{n},f_{n}\right\rangle +\lambda_{n}^{s}\left(t_{0}\right)\left\langle \frac{\partial f_{n}}{\partial t},f_{n}\right\rangle .\label{eq:-11-1}
\end{equation}

Recall that $\left\langle f_{n},f_{n}\right\rangle =1$ and that $L_{t_{0}}^{s}\left(\frac{\partial f_{n}}{\partial t},f_{n}\right)=\lambda_{n}^{s}\left(t_{0}\right)\left\langle \frac{\partial f_{n}}{\partial t},f_{n}\right\rangle $,
and so
\begin{equation}
\frac{\partial\lambda_{n}^{s}\left(t_{0}\right)}{\partial t}=\frac{\partial L_{t_{0}}^{s}\left(f_{n},f_{n}\right)}{\partial t}=\begin{cases}
\sum_{v\in B}\frac{1}{\sin^{2}\left(\alpha\right)t^{2}}\left|f_{n,2}\left(v\right)-f_{n,1}\left(v\right)\right|^{2} & s\neq\infty\\
\sum_{v\in B}\left|f_{n}\left(v\right)\right|^{2} & s=\infty
\end{cases}\label{eq:-12-1}
\end{equation}

The expression above is non-negative, which gives the desired result
for any point such that $\lambda_{n}^{s}\left(t\right)$ is simple.

Once again by standard theory, for every given $n$, $\lambda_{n}^{s}\left(t\right)$
can only be a multiple eigenvalue on a discrete set of points (see
also proof of Lemma \ref{lem:hadamard}), in which $\lambda_{n}^{s}\left(t\right)$
is still continuous, which means that the monotonicity in fact holds
for all $t$ in the given regions.
\end{proof}
The following useful relationship between the $\delta_{s}$ family
and the Robin map (see Subsection \ref{subsec:DTN-definition}) will
be a key argument in the proof of Theorem \ref{thm:SF-index}.
\begin{lem}
\label{lem:DTN-correspondence}Let $c\notin\spec{H^{s}\left(\infty\right)}$.
Then $-t\in\mathbb{R}$ is an eigenvalue of $\Lambda_{s}\left(c\right)$
if and only if $c$ is an eigenvalue of $H^{s}\left(t\right)$ (with
identical multiplicity).\\
Consequently,
\begin{equation}
\dim\ker\left(H^{s}\left(t\right)-c\right)=\dim\ker\left(\Lambda_{s}\left(c\right)+t\right).\label{eq:-83}
\end{equation}
\end{lem}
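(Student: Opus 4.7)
My plan is to construct an explicit bijection between $\ker(H^s(t)-c)$ and $\ker(\Lambda_s(c)+t)$ realized by the trace $f\mapsto\gamma^s(f)$. The hypothesis $c\notin\spec{H^s(\infty)}$ will be used twice: once to guarantee well-posedness of the boundary value problem defining each $\Lambda_s^{D_i}(c)$ (so the Robin map is actually defined at $c$), and once more to ensure injectivity of the trace.

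\textbf{Forward direction.} Given $f\in\ker(H^s(t)-c)$, I would set $w:=\gamma^s(f)\in\ell^2(B)$. Restricted to each subdomain $D_i$, the function $f|_{D_i}$ solves $-u''=cu$, satisfies the Neumann--Kirchhoff condition at $\partial D_i\setminus B_i$ (since these are vertices of $\Gamma$ outside $B$ where $H^s(t)$ imposes the standard condition), and has $\gamma^s(f|_{D_i})=w|_{B_i}$. By the definition of $\Lambda_s^{D_i}(c)$ in (\ref{eq:-15-1}), this means $\Lambda_s^{D_i}(c)(w|_{B_i})=\gamma^{s\star}(f|_{D_i})$. Summing over $i$ with the orientation weights $\chi_{D_i}$, formula (\ref{eq:-16-1}) yields
\begin{equation}
\Lambda_s(c)w=\gamma_1^{s\star}(f)-\gamma_2^{s\star}(f).
\end{equation}
But the second $\delta_s$ vertex condition (\ref{eq:-7-1-1}) gives $\gamma_2^{s\star}(f)-\gamma_1^{s\star}(f)=t\gamma^s(f)=tw$, so $\Lambda_s(c)w=-tw$, i.e.\ $w\in\ker(\Lambda_s(c)+t)$.

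\textbf{Reverse direction.} Conversely, given $w\in\ker(\Lambda_s(c)+t)$, I would solve on each $D_i$ the boundary value problem (\ref{eq:-13-1})--(\ref{eq:-14-1}) with boundary data $w|_{B_i}$; this produces a unique $u_i\in H^2(D_i)$ precisely because $c\notin\spec{H^s(\infty)}$ (otherwise one of the local problems could fail to have a unique solution). Gluing the $u_i$ yields $u\in H^2(\Gamma)$ since at each $v\in B$ the two one-sided $\gamma^s$-values agree (both equal $w(v)$), giving $\gamma_1^s(u)=\gamma_2^s(u)$, i.e.\ (\ref{eq:-6-1-1}); at all other vertices the Neumann--Kirchhoff condition holds by construction. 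The defining relation $\Lambda_s(c)w=-tw$ translates via (\ref{eq:-16-1}) into $\gamma_1^{s\star}(u)-\gamma_2^{s\star}(u)=-t\gamma^s(u)$, which is exactly (\ref{eq:-7-1-1}). Therefore $u\in\mathrm{Dom}(H^s(t))$ and $H^s(t)u=cu$.

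\textbf{Multiplicity.} The two constructions above are mutually inverse, so they establish a linear bijection $\ker(H^s(t)-c)\to\ker(\Lambda_s(c)+t)$, $f\mapsto\gamma^s(f)$. Injectivity is the only slightly delicate point: if $\gamma^s(f)=0$ then $f$ satisfies the $\gamma_1^s=\gamma_2^s=0$ condition (\ref{eq:-5-1}), which is precisely the $t=\infty$ case, so $f\in\ker(H^s(\infty)-c)=\{0\}$ by the hypothesis on $c$. Hence $\dim\ker(H^s(t)-c)=\dim\ker(\Lambda_s(c)+t)$, which is exactly (\ref{eq:-83}).

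The main obstacle I anticipate is bookkeeping at vertices $v\in B$ that lie on the boundary of two subdomains $D_i,D_j$ simultaneously: one must check that the orientation weights $\chi_{D_i}(v),\chi_{D_j}(v)$ combine correctly so that the global relation (\ref{eq:-16-1}) really produces the jump $\gamma_1^{s\star}-\gamma_2^{s\star}$ (rather than a sum) that matches the $\delta_s$ condition (\ref{eq:-7-1-1}); this is exactly the motivation given for the signs $\chi_{D_i}$ in Subsection \ref{subsec:DTN-definition}, and once that sign convention is carefully unwound the rest of the argument is routine.
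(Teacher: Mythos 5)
Your proposal is correct and follows essentially the same route as the paper's proof: translating between eigenvectors of $\Lambda_{s}\left(c\right)$ and eigenfunctions of $H^{s}\left(t\right)$ via the trace $\gamma^{s}$ and the relation (\ref{eq:-16-1}), with $c\notin\spec{H^{s}\left(\infty\right)}$ ensuring the trace does not vanish. You are somewhat more explicit than the paper about the multiplicity statement (exhibiting $f\mapsto\gamma^{s}\left(f\right)$ as a linear bijection) and about well-posedness of the local boundary value problems, but these are refinements of the same argument rather than a different approach.
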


\begin{proof}
First, assume that $-t$ is an eigenvalue of $\Lambda_{s}\left(c\right)$
with eigenvector $w\in\ell^{2}\left(B\right)$:
\begin{equation}
\Lambda_{s}\left(c\right)w=-tw.\label{eq:-18-1}
\end{equation}
 Then the solution $u$ to the corresponding boundary value problem
satisfies
\begin{equation}
-\frac{\partial^{2}u}{\partial x^{2}}=cu,\label{eq:-19-1}
\end{equation}
and on the boundary $B$ it satisfies
\begin{align}
 & \gamma_{1}^{s}\left(u\right)=\gamma_{2}^{s}\left(u\right),\label{eq:-20-1}\\
 & \gamma_{1}^{s\star}\left(u\right)-\gamma_{2}^{s\star}\left(u\right)=\Lambda_{s}\left(c\right)\gamma^{s}\left(u\right)=-t\gamma^{s}\left(u\right),\label{eq:-21-1}
\end{align}
which means that $u$ is an eigenfunction of $H^{s}\left(t\right)$
with eigenvalue $c$.

For the contrary, if $c$ is an eigenvalue of $H^{s}\left(t\right)$
with eigenfunction $u$, then $u$ must satisfy the $\delta_{s}$
vertex condition at $B$:
\begin{equation}
\gamma_{2}^{s\star}\left(u\right)-\gamma_{1}^{s\star}\left(u\right)=t\gamma^{s}\left(u\right).\label{eq:-13}
\end{equation}

Choosing $w=\gamma^{s}\left(u\right)$ as input to $\Lambda_{s}\left(c\right)$
gives
\begin{equation}
\Lambda_{s}\left(c\right)w=\gamma_{1}^{s\star}\left(u\right)-\gamma_{2}^{s\star}\left(u\right)=-t\gamma^{s}\left(u\right)=-tw,\label{eq:-22-1}
\end{equation}
and so $-t$ is an eigenvalue of $\Lambda_{s}\left(c\right)$ with
eigenvector $w$, assuming that $w\neq0$.

Assume by contradiction that $w=0$. Then $w$ corresponds to the
vertex condition $\gamma^{s}\left(u\right)=0$ on $B$. We thus conclude
that $u$ (which is the solution to the corresponding boundary value
problem) is an eigenfunction of $H^{s}\left(\infty\right)$ with the
same eigenvalue $c$. But we assumed that $c\notin\spec{H^{s}\left(\infty\right)}$,
which gives a contradiction.
\end{proof}
\newpage{}

\section{Proofs for Theorems \ref{thm:SF-index}, \ref{prop:SF-points}, \ref{thm:SF-Betti1}
and \ref{prop:SF-Betti2} \label{sec:proof_for_sf}}

\subsection{Proof of Theorem \ref{thm:SF-index}}

Fix a generic eigenpair $\left(\lambda_{n},f_{n}\right)$ of $H_{0}:=H^{s}\left(0\right)$
as in the statement of Theorem \ref{thm:SF-index}. Now, consider
the $\delta_{s}$ family of Hamiltonians, where the $\delta_{s}$
points are placed at the $s$ points of the eigenfunction $f_{n}$.
\begin{lem}
\label{lem:const}For every $t$, $f_{n}$ is an eigenfunction of
$H^{s}\left(t\right)$ with eigenvalue $\lambda_{n}$.
\end{lem}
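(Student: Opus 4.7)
The plan is to verify the lemma by simply checking that $f_{n}$ satisfies the domain conditions of $H^{s}(t)$ for every $t$, since once $f_{n}\in\mathrm{Dom}(H^{s}(t))$, the equation $H^{s}(t)f_{n}=-f_{n}''=\lambda_{n}f_{n}$ follows from the fact that $f_{n}$ was already an eigenfunction of $H_{0}$ with eigenvalue $\lambda_{n}$, and the differential action of $H^{s}(t)$ is the same as that of $H_{0}$.

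First I would note that, away from the set $B$ of $s$ points of $f_{n}$, the operator $H^{s}(t)$ imposes the Neumann-Kirchhoff condition at every vertex of $\mathcal{V}\setminus B$, and $f_{n}$ trivially satisfies these since it is an eigenfunction of $H_{0}=H^{s}(0)$. Hence the only non-trivial verification is at the vertices $v\in B$, where $B$ consists of the $s$ points of $f_{n}$ (which are interior artificial degree-two vertices, by the genericity assumption and the fact that $s$ points are only defined at degree one or two vertices).

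Next I would verify the two $\delta_{s}$ conditions (\ref{eq:-6-1-1}) and (\ref{eq:-7-1-1}) at each $v\in B$. The first condition $\gamma_{1}^{s}(f_{n})=\gamma_{2}^{s}(f_{n})$ follows immediately from the characterization of $s$ points noted in the preliminaries, namely that $x$ is an $s$ point of $f_{n}$ precisely when $\gamma_{1}^{s}(f_{n})(x)=\gamma_{2}^{s}(f_{n})(x)=0$; in particular both sides of condition (\ref{eq:-6-1-1}) vanish. For the second condition, the right-hand side $t\gamma^{s}(f_{n})$ is zero for the same reason. For the left-hand side, since $v$ is an interior point of an edge treated as an artificial degree-two vertex and $f_{n}$ is a Neumann-Kirchhoff eigenfunction, $f_{n}$ is smooth at $v$; with the orientation conventions of Subsection \ref{subsec:delta-s-definition} this yields $f_{n,1}(v)=f_{n,2}(v)$ and $f_{n,1}'(v)=f_{n,2}'(v)$, which after applying the rotation by the Prüfer angle $\alpha$ gives $\gamma_{1}^{s\star}(f_{n})(v)=\gamma_{2}^{s\star}(f_{n})(v)$. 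Thus the left-hand side vanishes as well, and the equality $\gamma_{2}^{s\star}(f_{n})-\gamma_{1}^{s\star}(f_{n})=t\gamma^{s}(f_{n})$ holds for \emph{every} $t\in\overline{\mathbb{R}}$, including the case $t=\infty$ where the condition is interpreted as (\ref{eq:-5-1}) and reduces to the $s$ condition, which holds by definition.

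I expect no real obstacle here — the whole point of fixing $B$ as the $s$ set of $f_{n}$ is precisely to make $f_{n}$ satisfy the $\delta_{s}$ vertex condition identically. The only subtlety worth mentioning is the conventions used for degree-one vertices and for the orientation (which governs whether $\gamma_{1}^{s\star}$ and $\gamma_{2}^{s\star}$ agree on a smooth interior point); but under the genericity hypothesis and the interpretation of interior points as artificial degree-two vertices, the smoothness of $f_{n}$ at $v\in B$ makes both sides of condition (\ref{eq:-7-1-1}) vanish simultaneously. Combining this with the fact that $f_{n}\in H^{2}(\Gamma)$ and $-f_{n}''=\lambda_{n}f_{n}$ on each edge, we conclude that $f_{n}\in\mathrm{Dom}(H^{s}(t))$ and $H^{s}(t)f_{n}=\lambda_{n}f_{n}$ for all $t$, as desired.
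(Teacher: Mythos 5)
Your proposal is correct and follows essentially the same route as the paper: the eigenvalue equation is automatic, and the only thing to check is that $f_{n}\in\mathrm{Dom}\left(H^{s}\left(t\right)\right)$, which holds because placing $B$ at the $s$ points of $f_{n}$ forces $\gamma_{1}^{s}\left(f_{n}\right)=\gamma_{2}^{s}\left(f_{n}\right)=0$ and makes condition (\ref{eq:-7-1-1}) read $0=t\cdot0$ for every $t$ (including $t=\infty$). You merely spell out a detail the paper leaves implicit, namely that smoothness of $f_{n}$ at the interior degree-two points gives $\gamma_{1}^{s\star}\left(f_{n}\right)=\gamma_{2}^{s\star}\left(f_{n}\right)$, so the left-hand side of (\ref{eq:-7-1-1}) vanishes as well.
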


\begin{proof}
Since $\left(\lambda_{n},f_{n}\right)$ is an eigenpair of $H_{0}$,
then naturally
\begin{equation}
-\frac{d^{2}f_{n}}{dx^{2}}=\lambda_{n}f_{n}.\label{eq:-106}
\end{equation}

We thus only need to verify that $f_{n}\in Dom\left(H^{s}\left(t\right)\right)$
for every $t$. But since the $\delta_{s}$ points for the family
$H^{s}\left(t\right)$ were chosen as the set of $s$ points of $f_{n}$
(recall Definition \ref{def:s-points}), then
\begin{align}
 & \gamma_{1}^{s}\left(f\right)=\gamma_{2}^{s}\left(f\right)=0\label{eq:-107}\\
 & \gamma_{2}^{s\star}\left(f\right)-\gamma_{1}^{s\star}\left(f\right)=t\gamma^{s}\left(f\right)=0,\label{eq:-108}
\end{align}

and so $f_{n}\in Dom\left(H^{s}\left(t\right)\right)$ by definition.
\end{proof}
For brevity, we denote the number of $s$ points of $f_{n}$ by $\phi_{s}$
and the number of $s$ domains of $f_{n}$ by $\nu_{s}$. Moreover,
we denote the multiplicity of the eigenvalue $\lambda_{n}$ is $\spec{H^{s}\left(t\right)}$
by $\mult{\lambda_{n}}t$.
\begin{lem}
\label{lem:multiplicity}The following holds:
\begin{equation}
\mult{\lambda_{n}}{\infty}=\nu_{s}.\label{eq:-46}
\end{equation}
\end{lem}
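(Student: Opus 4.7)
The proof rests on the fact that $H^s(\infty)$ decouples the graph along the set $B = \mathcal{S}_s(f_n)$. Since the vertex condition imposed by $H^s(\infty)$ reads $\gamma_1^s(f) = \gamma_2^s(f) = 0$ at every point of $B$, the two sides of each $s$ point are completely severed, and $H^s(\infty)$ splits as an orthogonal direct sum
\[
H^s(\infty) = \bigoplus_{i=1}^{\nu_s} H_i,
\]
where $H_i$ denotes the Laplacian on the $s$ domain $D_i$ equipped with the $s$ condition on $\partial D_i$ and the inherited Neumann--Kirchhoff conditions at any interior vertex of $D_i$. Consequently
\[
\mult{\lambda_n}{\infty} = \sum_{i=1}^{\nu_s} \dim\ker(H_i - \lambda_n),
\]
so the claim reduces to showing that $\dim \ker(H_i - \lambda_n) = 1$ for every $i$.

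For the lower bound $\mult{\lambda_n}{\infty} \ge \nu_s$, I would invoke Lemma \ref{lem:const}: the restriction $f_n|_{D_i}$ lies in $Dom(H_i)$ (since $f_n$ already satisfies the $s$ condition at every point of $B$), is non-trivial, and satisfies $H_i f_n|_{D_i} = \lambda_n f_n|_{D_i}$. The $\nu_s$ zero-extensions of these restrictions to $\Gamma$ are supported on pairwise disjoint regions and therefore linearly independent, producing $\nu_s$ independent eigenfunctions of $H^s(\infty)$ at the level $\lambda_n$.

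The reverse inequality, i.e.\ the simplicity of $\lambda_n$ as an eigenvalue of each $H_i$, is the main obstacle. My plan is to exploit the fact that $f_n|_{D_i}$ has no interior $s$ points in $D_i$: the Pr\"ufer-type quantity $f_n' - s f_n$ keeps a constant sign throughout the interior of $D_i$, which should identify $f_n|_{D_i}$ as the lowest oscillatory state of $H_i$. If a second linearly independent eigenfunction $g$ of $H_i$ existed at the level $\lambda_n$, one could form a linear combination $h = a f_n|_{D_i} + b g$ satisfying the $s$ condition at a prescribed interior point $x_0 \in D_i$, and a graph-adapted Sturm oscillation comparison between $f_n|_{D_i}$ and $h$ would contradict the absence of interior $s$ points of $f_n$ inside $D_i$. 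The hypothesis on $\lambda_n$ together with the genericity of $f_n$ enters by guaranteeing that each $D_i$ has sufficiently simple topology (at worst a star around a single higher-degree vertex), so that the Sturm comparison applies cleanly; formalizing this graph-theoretic oscillation comparison across the interior Neumann--Kirchhoff vertices of $D_i$ is the principal technical burden.
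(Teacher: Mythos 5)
Your overall structure --- the orthogonal decomposition of $H^{s}\left(\infty\right)$ over the $s$ domains and the lower bound $\mult{\lambda_{n}}{\infty}\geq\nu_{s}$ via zero extensions of $f_{n}|_{D_{i}}$ --- is exactly the paper's. The genuine gap is the reverse inequality, i.e.\ the simplicity of $\lambda_{n}$ as an eigenvalue of each $H_{i}$, which you correctly identify as the crux but do not prove. Two concrete problems with your plan: (i) the claim that constancy of the sign of $f_{n}'-sf_{n}$ on the interior of $D_{i}$ makes $f_{n}|_{D_{i}}$ the lowest oscillatory state of $H_{i}$ is false for $s\neq\infty$: the restriction may well have nodal points inside $D_{i}$, and then it is not the ground state --- cf.\ the proof of Lemma \ref{lem:s-position}, where the spectral position of $\lambda_{n}$ within each star domain is $\phi_{\infty}\left(f_{n}|_{\Omega}\right)+1$, not $1$; (ii) the Sturm/Pr\"ufer interlacing you invoke is an interval statement, while the only case in which simplicity is actually at stake is a star domain whose central Neumann--Kirchhoff vertex has degree at least three (on an interval the $s$--$s$ problem is a regular Sturm--Liouville problem, so all its eigenvalues are simple). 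Interlacing does not propagate through such a vertex --- a second eigenfunction of $H_{i}$ may even vanish identically on some edges of the star, so edge-by-edge comparison with $f_{n}$ yields nothing --- and it is not clear why multiplicity should force an interior $s$ point of $f_{n}$ in the first place. This is precisely the step you defer as ``the principal technical burden'', so the half of the lemma that carries its content is missing.

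For comparison, the paper closes this step without oscillation theory: if $\mult{\lambda_{n}}{\infty}>\nu_{s}$, some $s$ domain $\Omega$ carries an eigenfunction independent of $f_{n}|_{\Omega}$, so $\lambda_{n}$ is a multiple eigenvalue on $\Omega$; since $\lambda_{n}>\frac{\pi}{\ell_{min}}$ forces $\Omega$ to be a star, an adaptation of Corollary $3.1.9$ in \cite{BerKuc_graphs} (in essence: on each edge of the star, $f_{n}$ is, up to scale, the unique solution pinned down by the boundary condition at the outer endpoint, and multiplicity forces one of these solutions to vanish at the centre) yields an internal vertex $v$ with $f_{n}\left(v\right)=0$, contradicting genericity (Definition \ref{def:generic}). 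Note that genericity is used there to produce the contradiction itself, not merely --- as in your sketch --- to guarantee that the domains have simple topology. To complete your proof you should either adopt this argument or actually establish the graph-adapted comparison you allude to; as written, the upper bound is unproven.
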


\begin{proof}
Denote $m=\mult{\lambda_{n}}{\infty}$. First, note that $m\geq\nu_{s}$.
This is true since to any $s$ domain $\Omega$ (which is just a subgraph
of $\Gamma$), we can match a linearly independent eigenfunction of
$H^{s}\left(\infty\right)$:
\begin{equation}
f_{\Omega}\left(x\right)=\begin{cases}
f_{n}\left(x\right) & x\in\Omega\\
0 & \text{Otherwise}
\end{cases}\label{eq:-26-1}
\end{equation}

We remind the reader that domain of the operator $H^{s}\left(\infty\right)$
consists of functions which satisfy the $s$ condition at the selected
subset of points $B$:
\begin{equation}
f'\left(x\right)=sf\left(x\right).\label{eq:-105}
\end{equation}

In our case, $B$ is just the collection of $s$ points of the function
$f_{n}$. We thus see that by construction, $f_{\Omega}$ are all
indeed eigenfunctions of $H^{s}\left(\infty\right)$ with eigenvalue
$\lambda_{n}$, and there are thus at least $\nu_{s}$ such linearly
independent eigenfunctions. So $m\geq\nu_{s}$.

Assume by contradiction that $m>\nu_{s}$. Then there is an eigenfunction
$g$ which is linearly independent of all $f_{\Omega}$ above. In
particular, there is some $s$ domain $\Omega$ such that $g|_{\Omega}\not\equiv0$
and $g|_{\Omega}\not\propto f_{\Omega}|_{\Omega}$. This gives two
linearly independent eigenfunctions for the Laplacian on the subgraph
$\Omega$ -- $f_{\Omega}|_{\Omega}$ and $g|_{\Omega}$. This means
that $\lambda_{n}$ is not a simple eigenvalue of $H_{0}$ on $\Omega$.

Since $\lambda_{n}>\frac{\pi}{L_{min}}$, $\Omega$ is a star graph
(see Proposition $3.9$ in \cite{AloBan_21ahp}). By a simple adaption
of Corollary $3.1.9$ in \cite{BerKuc_graphs}, there exists an internal
vertex $v\in\Omega$ such that $f_{n}\left(v\right)=0$. This contradicts
the genericity of $f_{n}$ (recall Definition \ref{def:generic}).
\end{proof}
\begin{figure}
\includegraphics[scale=0.6]{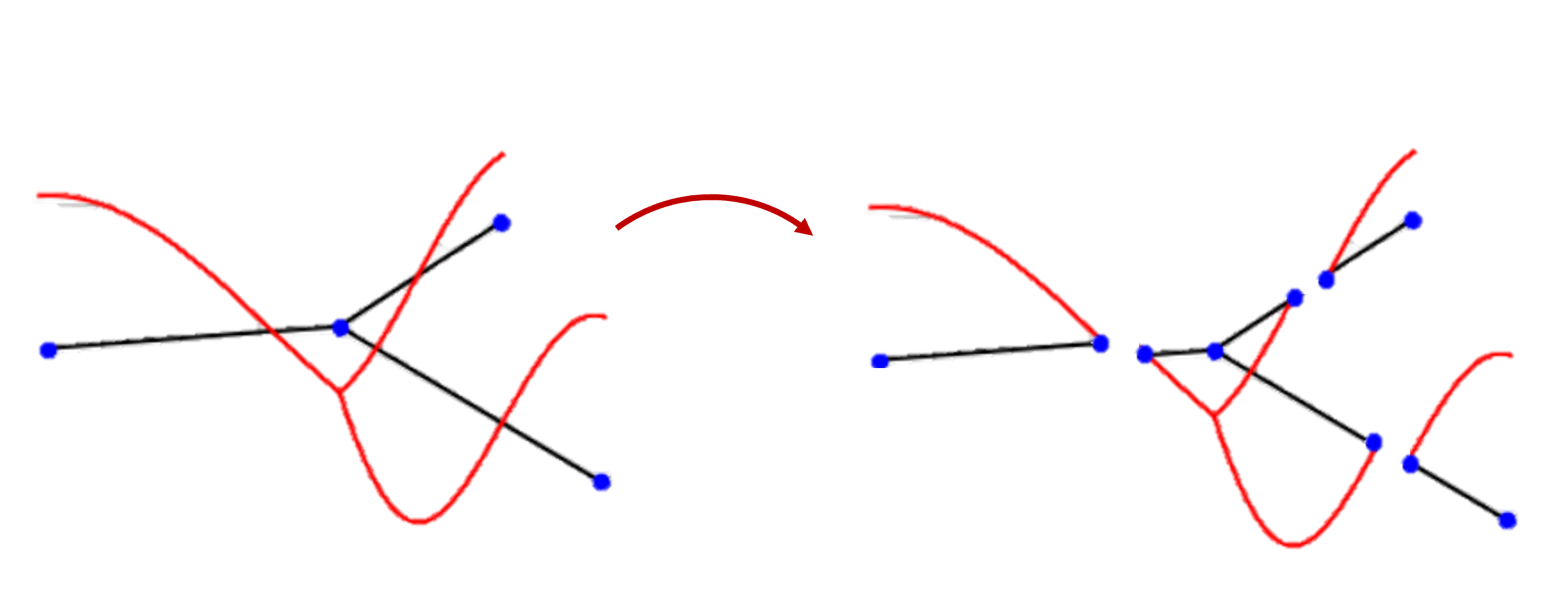}

\caption[Idea of the proof of Lemma \ref{lem:multiplicity}.]{Idea of the proof of Lemma \ref{lem:multiplicity} for the case $s=\infty$.
$t=0$ corresponds to imposing the Dirichlet condition at the selected
set of nodal points, which naturally matches an eigenfunction of $H^{\infty}\left(\infty\right)$
with eigenvalue $\lambda_{n}$ to each nodal domain.\label{fig:nodal-domains}}
\end{figure}

\begin{lem}
\label{lem:s-position}Denote the spectral position of $\lambda_{n}$
at $t=-\infty$ by $p$. Meaning,
\begin{equation}
p:=1+\#\left\{ \lambda\in\spec{H^{s}\left(\infty\right)}:\lambda<\lambda_{n}\right\} .\label{eq:-96}
\end{equation}

Then
\begin{equation}
p=\begin{cases}
1 & s=\infty\\
1+\phi_{\infty} & s\neq\infty
\end{cases}\label{eq:-78}
\end{equation}
\end{lem}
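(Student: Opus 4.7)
The plan is to exploit the fact that $H^{s}\left(\infty\right)$ decomposes as a direct sum of the restrictions $H^{s}\left(\infty\right)|_{D_{i}}$ over the $s$-domains $D_{i}$ of $f_{n}$, so that counting $\#\left\{ \lambda\in\spec{H^{s}\left(\infty\right)}:\lambda<\lambda_{n}\right\}$ reduces to counting, on each $D_{i}$, the number of eigenvalues of $H^{s}\left(\infty\right)|_{D_{i}}$ strictly below $\lambda_{n}$. By Lemma \ref{lem:const}, $f_{n}|_{D_{i}}$ is always an eigenfunction of this restriction with eigenvalue $\lambda_{n}$, and by Lemma \ref{lem:multiplicity} this eigenvalue is simple on each $D_{i}$.

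For $s=\infty$, I would argue directly: on each nodal domain $D_{i}$, the restriction $f_{n}|_{D_{i}}$ is a Dirichlet eigenfunction with eigenvalue $\lambda_{n}$ which does not vanish in the interior (this is the defining property of a nodal domain), hence is sign-definite and therefore the ground state of the Dirichlet Laplacian on $D_{i}$. Consequently $\lambda_{n}=\min\spec{H^{\infty}\left(\infty\right)|_{D_{i}}}$ for every $i$, so $\lambda_{n}=\min\spec{H^{\infty}\left(\infty\right)}$, yielding $p=1$.

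For $s\neq\infty$, let $m_{i}$ denote the number of nodal points of $f_{n}$ lying in the interior of $D_{i}$. A nodal point (where $f_{n}=0$ but $f_{n}'\neq 0$ by genericity) cannot be an $s$-point for finite $s$ (which would force $f_{n}'=sf_{n}=0$), so every nodal point lies strictly inside some $s$-domain, and $\sum_{i}m_{i}=\phi_{\infty}\left(f_{n}\right)$. The key claim is that the number of eigenvalues of $H^{s}\left(\infty\right)|_{D_{i}}$ strictly below $\lambda_{n}$ is exactly $m_{i}$. For the lower bound, the plan is to cut $D_{i}$ at its $m_{i}$ interior nodal points to produce $m_{i}+1$ connected pieces $P_{j}$ on which $f_{n}$ is sign-definite, and extend each $f_{n}|_{P_{j}}$ by zero to obtain pairwise orthogonal test functions in the form domain: continuity at the seams holds because $f_{n}$ vanishes there, the $s$-condition at $\partial D_{i}$ is inherited from $f_{n}$, and Neumann–Kirchhoff at high-degree internal vertices is inherited from $f_{n}$ by genericity. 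Integration by parts against the sesquilinear form (\ref{eq:-4-1-2}) shows each such test function attains Rayleigh quotient exactly $\lambda_{n}$ (the seam boundary terms vanish because the test function itself is zero there, and the $\partial D_{i}$ terms are absorbed by the Robin-type correction in the form), so by the min-max principle $\lambda_{m_{i}+1}\left(H^{s}\left(\infty\right)|_{D_{i}}\right)\leq\lambda_{n}$.

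The hard part, and main obstacle, is the matching upper bound, namely ruling out that strictly more than $m_{i}$ eigenvalues lie below $\lambda_{n}$. This is essentially a converse Sturm-type count on a subgraph that may contain high-degree vertices and cycles, and it is here that the hypothesis $\lambda_{n}>\pi/\ell_{min}$ enters. Combined with the genericity of $f_{n}$ and arguments analogous to Proposition 3.9 of \cite{AloBan_21ahp} (adapted from nodal to $s$-domains), one can control the topology of each $D_{i}$, and together with the simplicity of $\lambda_{n}$ on $D_{i}$ from Lemma \ref{lem:multiplicity} this pins the spectral index of $\lambda_{n}$ in $\spec{H^{s}\left(\infty\right)|_{D_{i}}}$ to exactly $m_{i}+1$. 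Summing over $i$ then gives $p-1=\sum_{i}m_{i}=\phi_{\infty}$, completing the proof.
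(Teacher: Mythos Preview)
Your approach is essentially the same as the paper's: both decompose $H^{s}(\infty)$ as a direct sum over the $s$-domains $D_{i}$ and reduce the global count to the per-domain identity $p(D_{i})-1 = m_{i}$, where $m_{i}$ is the number of nodal points of $f_{n}$ interior to $D_{i}$; summing gives $p-1=\sum_{i}m_{i}=\phi_{\infty}$. The $s=\infty$ case is also handled identically, via the sign-definiteness of $f_{n}$ on each nodal domain and the ground-state characterization.

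For $s\neq\infty$ the difference is packaging. The paper invokes the hypothesis $\lambda_{n}>\pi/\ell_{\min}$ up front to conclude (via Proposition~3.9 of \cite{AloBan_21ahp}) that each $D_{i}$ is a star graph, and then cites Lemma~3.1 of \cite{AloBan_21ahp} as a black box yielding $p(D_{i}) = m_{i}+1$ in one stroke. You instead sketch a variational proof of one inequality and defer the other to the same literature. One small slip: your claim that cutting $D_{i}$ at its $m_{i}$ interior nodal points produces exactly $m_{i}+1$ connected pieces already requires $D_{i}$ to be a tree---if $D_{i}$ had a cycle, some cuts would open the cycle rather than separate components, and you would obtain fewer than $m_{i}+1$ test functions, so min--max would only give $\lambda_{m_{i}+1-\beta_{D_{i}}}\leq\lambda_{n}$. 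Thus the star structure coming from $\lambda_{n}>\pi/\ell_{\min}$ is needed for your lower bound too, not only for the upper bound where you explicitly invoke it. Once you move that hypothesis earlier, your argument goes through and amounts to unpacking the content of the lemma the paper cites.
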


\begin{proof}
Let us first consider the case $s=\infty$, which corresponds to the
Dirichlet condition. Let $\Omega$ be an arbitrary nodal domain of
$f_{n}$. Consider the function
\begin{equation}
f_{\Omega}\left(x\right)=\begin{cases}
f_{n}\left(x\right) & x\in\Omega\\
0 & \text{Otherwise}
\end{cases}\label{eq:-28-1-1}
\end{equation}

Then as shown in Lemma \ref{lem:multiplicity}, $f_{\Omega}$ is an
eigenfunction of $H^{\infty}\left(\infty\right)$ with eigenvalue
$\lambda_{n}$.

Moreover, it has a single nodal domain -- $\Omega$. By a well known
consequence of Courant's nodal theorem (see Theorem $5.2.6$ and Remark
$5.2.9$ in \cite{BerKuc_graphs}), since $f_{\Omega}$ is an eigenfunction
with a single nodal domain, then it is a ground state. So $\lambda_{n}$
is the smallest eigenvalue of $H^{\infty}\left(\infty\right)$ and
$p=1$. 

We now prove the case $s\neq\infty$. By definition, $p$ is equal
to the number of eigenvalues $\lambda'$ of $H^{s}\left(\infty\right)$
such that $\lambda'<\lambda_{n}$, plus one. Our strategy for counting
these eigenvalues will be to partition $\Gamma$ into subgraphs according
to the $s$ domains of $f_{n}$, and counting the eigenfunctions of
each such subgraph.

Denoting the set of $s$ domains of $f_{n}$ by $D$ and the spectral
position of $\lambda_{n}$ as an eigenvalue of the subgraph $\Omega$
by $p\left(\Omega\right)$, we have the following formula for the
spectral position $p$:
\begin{equation}
p=1+\sum_{\Omega\in D}\left(p\left(\Omega\right)-1\right).\label{eq:-27-1}
\end{equation}

The formula above is obtained by using the eigenfunctions of $H^{s}\left(\infty\right)$
on each subgraph to define an eigenfunction of $H^{s}\left(\infty\right)$
on the entire graph (similar to Lemma \ref{lem:multiplicity}). On
each $s$ domain $\Omega$, there are $p\left(\Omega\right)-1$ eigenvalues
of $H^{s}\left(\infty\right)$ smaller than $\lambda_{n}$, which
we denote by $\left(\lambda_{k}^{\Omega}\right)_{k=1}^{p\left(\Omega\right)-1}$
with eigenfunctions $\left(\tilde{f}_{k}^{\Omega}\right)_{k=1}^{p\left(\Omega\right)-1}$.
Note that by the same argument as in Lemma \ref{lem:multiplicity},
each $\lambda_{k}^{\Omega}$ is also an eigenvalue of $H^{s}\left(\infty\right)$
on $\Gamma$ with eigenfunction
\begin{equation}
f_{k}^{\Omega}\left(x\right)=\begin{cases}
\tilde{f}_{k}^{\Omega}\left(x\right) & x\in\Omega\\
0 & \text{Otherwise}
\end{cases}\label{eq:-28-1}
\end{equation}

This overall gives us $\sum_{\Omega\in D}\left(p\left(\Omega\right)-1\right)$
linearly independent eigenfunctions for eigenvalues of $H^{s}\left(\infty\right)$
which are smaller than $\lambda_{n}$.

We claim that there are no other eigenfunctions of $\Gamma$ with
eigenvalues smaller than $\lambda_{n}$. Indeed, assume that $g$
is an eigenfunction with eigenvalue $\lambda_{g}<\lambda_{n}$. Choose
some $s$ domain $\Omega$ such that $g|_{\Omega}\not\equiv0$. Note
that $g|_{\Omega}$ is also an eigenfunction of $H_{0}$ on $\Omega$
with eigenvalue $\lambda_{g}$. But this means that we have already
counted $g|_{\Omega}$ as one of the eigenfunctions $\tilde{f}_{k}^{\Omega}\left(x\right)$
from before. This proves the formula for $p$ in (\ref{eq:-27-1}).

Since $\lambda_{n}>\frac{\pi}{L_{min}}$, then on each $s$ domain
$\Omega$ (which is now a star subgraph of $\Gamma$), the spectral
position of $\lambda_{n}$ as an eigenvalue of $H_{0}$ is equal to
$\phi_{\infty}\left(f_{n}|_{\Omega}\right)+1$ (see Lemma $3.1$ in
\cite{AloBan_21ahp}). Plugging this into (\ref{eq:-27-1}) we get
\begin{equation}
p=1+\sum_{\Omega\in D}\left(p\left(\Omega\right)-1\right)=1+\sum_{\Omega\in D}\phi_{\infty}\left(f_{n}|_{\Omega}\right)=1+\phi_{\infty},\label{eq:-29-1}
\end{equation}
which completes the proof.
\end{proof}
\begin{rem}
We remind the reader that for the case $s=\infty$, the assumptions
in Theorem \ref{thm:SF-index} that $\lambda_{k}>\frac{\pi}{\ell_{min}}$
and that $f_{k}$ is generic are unnecessary (see Remark \ref{rem:unnecessary}).
These assumptions were used in Lemmas \ref{lem:multiplicity} and
\ref{lem:s-position} above to guarantee that each $s$ domain of
the eigenfunction $f_{n}$ is a star graph and to apply Corollary
$3.1.9$ in \cite{BerKuc_graphs}. For the case $s=\infty$, these
lemmas hold without this additional assumption.
\end{rem}

\begin{proof}[Proof of Theorem \ref{thm:SF-index}]
 We first write the proof for $s\neq\infty$. Consider the $\delta_{s}$
family placed on the set of $s$ points of $f_{n}$. The idea of the
proof will be to compute the spectral flow along $\left[-\infty,0\right]$
(see also Figure \ref{fig:index-demo}).

By Lemmas \ref{lem:multiplicity} and \ref{lem:s-position}, at $t=-\infty$
there are exactly $\nu_{s}+\phi_{\infty}$ spectral curves below $\lambda_{n}+\epsilon$.
On the other hand, there are exactly $n$ spectral curves below $\lambda_{n}+\epsilon$
at $t=0$.

Since the spectral curves are monotone increasing and continuous on
all of $(-\infty,0]$ (Lemma \ref{lem:sf-bdd-below}), we conclude
that along $\left(-\infty,0\right)$, exactly $\nu_{s}+\phi_{\infty}-n$
spectral curves intersect $\lambda_{n}+\epsilon$, giving rise to
$\nu_{s}+\phi_{\infty}-n$ positive eigenvalues of $\Lambda_{s}\left(\lambda_{n}+\epsilon\right)$
in the process (Lemma \ref{lem:DTN-correspondence}). Thus,
\begin{align}
 & Pos\left(\Lambda_{s}\left(\lambda_{n}+\epsilon\right)\right)=\nu_{s}+\phi_{\infty}-n\label{eq:-72}\\
 & \Rightarrow\mathcal{D}_{s}\left(f_{n}\right)=n-\nu_{s}=\phi_{\infty}\left(f_{n}\right)-Pos\left(\Lambda_{s}\left(\lambda_{n}+\epsilon\right)\right).\label{eq:-32-1}
\end{align}

For $s=\infty$, we repeat the same procedure, counting the spectral
flow through $\lambda_{n}+\epsilon$ along $\left[0,\infty\right]$
instead of $\left[-\infty,0\right]$. This time, Lemma \ref{lem:s-position}
says that at $t=\infty$, $\lambda_{n}$ is the ground state.

The same intersection counting argument as before now gives
\begin{equation}
\mathcal{D}_{\infty}\left(f_{n}\right)=Mor\left(\Lambda_{\infty}\left(\lambda_{n}+\epsilon\right)\right),\label{eq:-8-1}
\end{equation}
which finishes the proof.
\end{proof}
\begin{rem}
If $\lambda_{n}$ is not assumed to be a simple eigenvalue, then by
the same proof as above, Formulas (\ref{eq:-32-1}) and (\ref{eq:-8-1})
easily generalize to
\begin{align}
 & \mathcal{D}_{s}=\phi_{\infty}+1-\dim\left(Ker\left(H_{0}-\lambda_{n}\right)\right)-Pos\left(\Lambda_{s}\left(\lambda_{n}+\epsilon\right)\right),\label{eq:-73}\\
 & \mathcal{D}_{\infty}\left(f_{n}\right)=Mor\left(\Lambda_{s}\left(\lambda_{n}+\epsilon\right)\right)+1-\dim\left(Ker\left(H_{0}-\lambda_{n}\right)\right).\label{eq:-74}
\end{align}
\end{rem}

A demonstration of Theorem \ref{thm:SF-index} can be found in Figure
\ref{fig:index-demo} for $s=0$ (Neumann points). We fix the fourth
eigenfunction $f_{4}$, which has three Neumann points (red) and four
nodal points (blue). Since $f_{4}$ has two Neumann domains, $\mathcal{D}_{0}\left(f_{4}\right)=4-2=2$.
The $\delta_{0}$ family is placed on the Neumann points of $f_{4}$.

By Lemma \ref{lem:DTN-correspondence}, $Pos\left(\Lambda_{0}\left(\lambda_{4}+\epsilon\right)\right)$
is equal to the spectral flow along $\left[-\infty,0\right]$ through
the horizontal line $y=\lambda_{4}+\epsilon$ (dashed black line),
which is seen to be two. Combining everything we get
\begin{equation}
\mathcal{D}_{0}\left(f_{4}\right)=n-\nu_{0}\left(f_{4}\right)=4-2=\phi_{\infty}\left(f_{n}\right)-Pos\left(\Lambda_{0}\left(\lambda_{4}+\epsilon\right)\right),\label{eq:-119}
\end{equation}
as anticipated by the Theorem.

\begin{figure}
\includegraphics[scale=0.7]{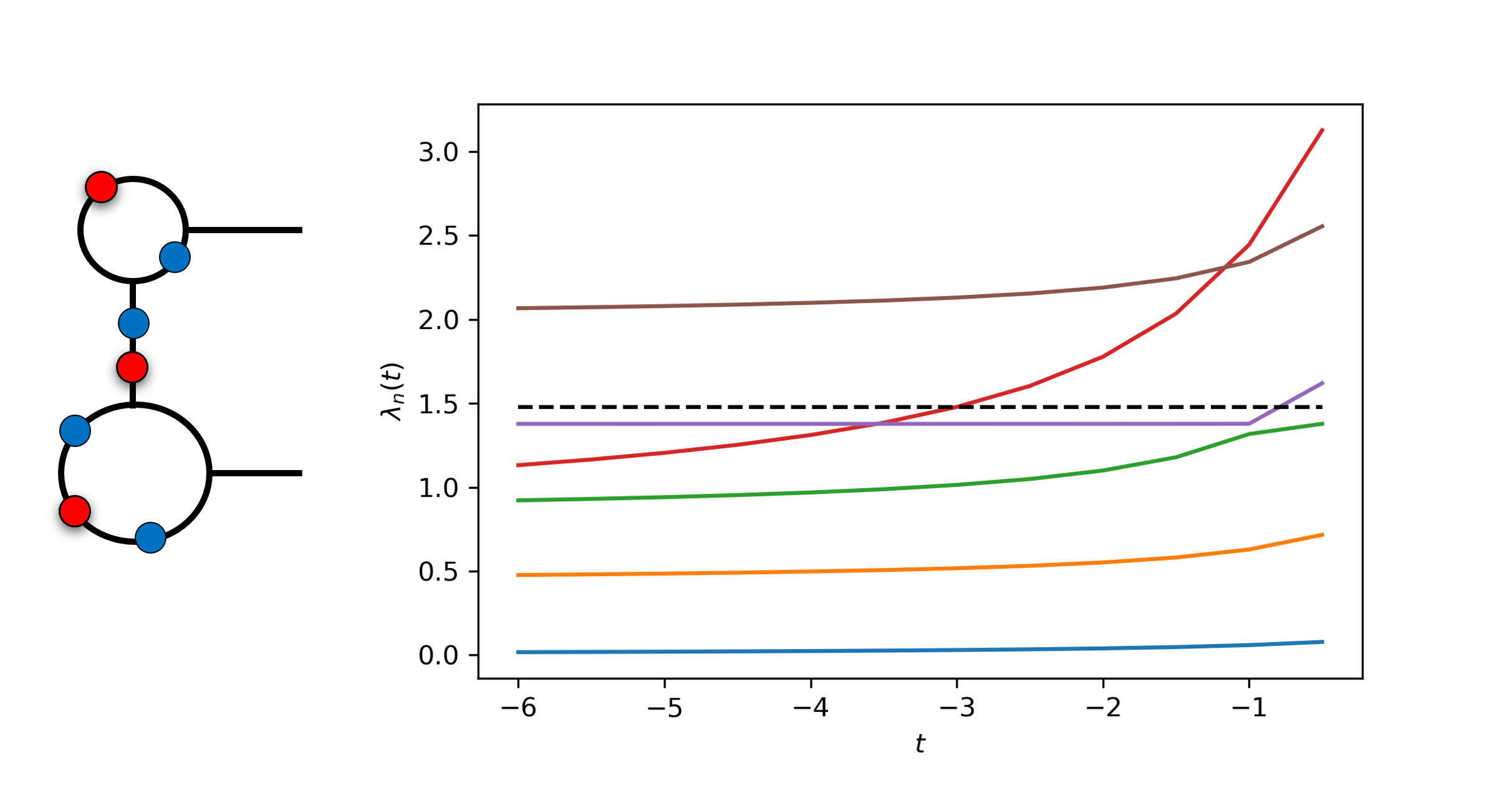}

\caption{\label{fig:index-demo}Demonstration of Theorem \ref{thm:SF-index}
for $s=0$ (Neumann points).}
\end{figure}

\begin{rem}
By Remark \ref{rem:DTN-matrix}, when the Robin map $\Lambda_{s}\left(\lambda_{n}+\epsilon\right)$
is placed at the $s$ points of $f_{n}$, then , $\Lambda_{s}\left(\lambda_{n}+\epsilon\right)$
be written as a square matrix of size $\phi_{s}\left(f_{n}\right)$.
This means that
\begin{equation}
Mor\left(\Lambda_{s}\left(\lambda_{n}+\epsilon\right)\right)+Pos\left(\Lambda_{s}\left(\lambda_{n}+\epsilon\right)\right)=\phi_{s}\left(f_{n}\right).\label{eq:-82}
\end{equation}

From this we see that taking $s\rightarrow\infty$ in Formula (\ref{eq:-32-1}),
we once again obtain Formula (\ref{eq:-8-1}). Colloquially:
\begin{equation}
\lim_{s\rightarrow\infty}\mathcal{D}_{s}\left(f_{n}\right)=\mathcal{D}_{\infty}\left(f_{n}\right).\label{eq:-92}
\end{equation}

In this sense, the index formula for the $s$ deficiency is ``continuous''
in $s$.
\end{rem}

\bigskip

\subsection{Proof of Theorems \ref{prop:SF-points}, \ref{thm:SF-Betti1} and
\ref{prop:SF-Betti2}}
\begin{proof}[Proof of Theorem \ref{prop:SF-points}]
 The theorem is an immediate corollary of Lemma \ref{lem:DTN-correspondence}.
Due to the lemma, $\sf c^{\delta_{s}}\left[-\infty,\infty\right]$
is equal to the number of eigenvalues of $\Lambda_{s}\left(c\right)$.
Since $\Lambda_{s}\left(c\right)$ can be written as a square matrix
whose size is equal to the number of selected $\delta_{s}$ points
(see Remark \ref{rem:DTN-matrix}), we conclude
\begin{equation}
\sf c^{\delta_{s}}\left[-\infty,\infty\right]=\text{no. of }\delta_{s}\text{ points}=\left|B\right|.\label{eq:-80}
\end{equation}
\end{proof}
\begin{proof}[Proof of Theorem \ref{thm:SF-Betti1}]
 By Theorem \ref{prop:SF-points}, we know that $\sf{\epsilon}^{\delta_{0}}\left[-\infty,\infty\right]=\left|B\right|$
for $\epsilon>0$ small enough. Recall that the spectral curves are
continuous and monotone increasing. Then by an intersection counting
argument of the spectral curves, for $\epsilon>0$ small enough we
have that
\begin{equation}
\sf{+\varepsilon}^{\delta_{0}}\left[-\infty,0\right]-\sf{-\varepsilon}^{\delta_{0}}\left[-\infty,0\right]=\mult{\lambda=0}{H^{0}\left(-\infty\right)}-\mult{\lambda=0}{H^{0}\left(0\right)}.\label{eq:-34-1}
\end{equation}

Moreover, for $t<0$ the operator $H^{0}\left(t\right)$ is non-negative,
as can be seen from the quadratic form $L_{t}^{0}\left(f,f\right)$,
see proof of Lemma \ref{lem:sf-bdd-below}. This means that there
are no spectral curves in the third quadrant, and so $\sf{-\varepsilon}^{\delta_{0}}\left[-\infty,0\right]=0$. 

Note that $\mult{\lambda=0}{H^{0}\left(0\right)}$ is equal to one,
with a constant eigenfunction. Similarly, $\mult 0{H^{0}\left(-\infty\right)}$
is equal to the number of connected components of the cut graph (denoted
by $\pi_{0}\left(\Gamma_{cut}\right)$), with piecewise constant eigenfunctions
supported on each individual connected component.

Combining all of these we get
\begin{align}
 & \sf{+\varepsilon}^{\delta_{0}}\left[0,\infty\right]=\sf{+\varepsilon}^{\delta_{0}}\left[-\infty,\infty\right]-\sf{+\varepsilon}^{\delta_{0}}\left[-\infty,0\right]=\left|B\right|-\pi_{0}\left(\Gamma_{cut}\right)+1.\label{eq:-35-1}
\end{align}

Using the definition of the first Betti number for the graphs $\Gamma$
and $\Gamma_{cut}$\footnote{For a graph with $C$ connected components, we have that $\beta=E-V+C$}
we note that
\begin{equation}
\pi_{0}\left(\Gamma_{cut}\right)-\pi_{0}\left(\Gamma\right)=\left|B\right|+\beta_{\Gamma_{cut}}-\beta_{\Gamma}.\label{eq:-36-1}
\end{equation}

Collecting all of the above we finally obtain
\begin{equation}
\sf{+\epsilon}^{\delta_{0}}\left[0,\infty\right]=\beta_{\Gamma}-\beta_{\Gamma_{cut}}.\label{eq:-37-1}
\end{equation}
\end{proof}
\begin{proof}[Proof of Theorem \ref{prop:SF-Betti2}]
 Since we assume that $\lambda_{n}>\frac{\pi}{\ell_{min}}$, then
the following relation holds (see \cite{AloBan_21ahp}):
\begin{equation}
\nu_{s}\left(f_{n}\right)=\phi_{s}\left(f_{n}\right)-\beta_{\Gamma}+1.\label{eq:-56}
\end{equation}

By Lemma \ref{lem:multiplicity},
\begin{equation}
\mult{\lambda_{n}}{H^{s}\left(-\infty\right)}=\nu_{s}=\phi_{s}-\beta_{\Gamma}+1.\label{eq:-81}
\end{equation}

Since the $\delta_{s}$ points are placed on the $s$ points of $f_{n}$,
then $\lambda_{n}\in\spec{H^{s}\left(t\right)}$ for all $t$, which
means that $y=\lambda_{n}$ is a constant spectral curve (see Lemma
\ref{lem:const}).

Since $\lambda_{n}$ is simple (due to the genericity assumption),
there is exactly one such flat spectral curve. This means that all
other $\phi_{s}-\beta_{\Gamma}$ spectral curves that are equal to
$\lambda_{n}$ at $t=-\infty$ must intersect $\lambda_{n}+\epsilon$
at $\left(-\infty,\infty\right)$. This gives $\phi_{s}-\beta_{\Gamma}$
intersections of spectral curves with $\lambda_{n}+\epsilon$.

By Theorem \ref{prop:SF-points}, there should overall be $\phi_{s}$
such intersections along $\left[-\infty,\infty\right]$. We thus conclude
that $\beta_{\Gamma}$ additional spectral curves must cross through
$\lambda_{n}+\epsilon$ as well.

By assumption, these curves are not equal to $\lambda_{n}$ at $t=\pm\infty$,
and so they must also intersect $\lambda_{n}$ in $\left(-\infty,\infty\right)$,
which gives us exactly $\beta_{\Gamma}$ such intersections.
\end{proof}
\newpage{}

\section{Discussion and further remarks \label{sec:Discussion}}

The results presented in this work are naturally classified into two
types -- the ones involving the Robin-Neumann gap and the ones involving
the spectral flow. While these two types are different in nature,
they are both related to the collective behavior of the spectral curves,
each from a different point of view.

Combining these two quantities provides us with information not just
about the spectral curves themselves, but also about the metric graph
(the edge lengths, the vertex degrees, the Betti number) and the corresponding
eigenfunctions (through the $s$ deficiency).

\subsection{Discussion of RNG results}

The properties of the Robin-Neumann gap presented in this work tell
us that the spectral curves display some uniform regularity; They
are uniformly Lipschitz, they have some common accumulation points,
and on average, their growth is linear in the Robin parameter $\sigma$.

\bigskip

By \textbf{Theorem \ref{thm:RNG-mean}}, the mean value of the RNG
is given by:
\begin{equation}
\left\langle d\right\rangle _{n}\left(\sigma\right)=\frac{2\sigma}{L}\sum_{v\in\VR}\frac{1}{\deg\left(v\right)}.\label{eq:-98}
\end{equation}

This expression bears obvious similarity to the result introduced
in \cite{RudWigYes_arxiv21}, which states that for a bounded planar
domain $\Omega$ and $\sigma>0$
\begin{equation}
\left\langle d\right\rangle _{n}\left(\sigma\right)=\frac{2\text{Length}\left(\partial\Omega\right)}{\text{Area\ensuremath{\left(\Omega\right)}}}\sigma.\label{eq:-33-2}
\end{equation}

This is also the result proven in \cite{RudWig_amq21} for the hemisphere.

This gives an interesting analogy between the two-dimensional setting
and the quantum graph setting. The term $\frac{2\sigma}{L}$ takes
naturally the place of $\frac{2\sigma}{\text{Area}\left(\Omega\right)}$
(as $L$ is the Lebesgue measure of our space). More interestingly,
the boundary term $\text{Length}\left(\partial\Omega\right)$ is replaced
by a discrete measure on the graph boundary (which in our case is
exactly the set of Robin points), which assigns to each vertex total
weight which is inversely proportional to its degree. Heuristically,
the higher the vertex degree, the less it ``feels'' the $\delta$
perturbation on average.

While it seems like this dependence on degree only appears in the
graph setting, we believe that at least in a sense, a similar behavior
also exists for two-dimensional domains. To see how, note that the
boundary $\partial\Omega$ of the planar domains considered in \cite{RudWigYes_arxiv21}
only intersects $\Omega$ on one side (see Figure \ref{fig: bndry}),
which can be thought of as dividing the expression in (\ref{eq:-33-2})
by one. On the other hand, each Robin vertex $v$ intersects the graph
$\Gamma$ through $\deg\left(v\right)$ different edges, which might
be the reason for dividing by $\deg\left(v\right)$. We conjecture
that if the Robin boundary for the domain $\Omega$ is chosen to have
a two sided intersection with $\Omega$ (see again Figure \ref{fig: bndry}),
the corresponding expression in (\ref{eq:-33-2}) should be divided
by two. In this sense, the degree of the vertex can be replaced by
the number of sides of the boundary which are in contact with the
domain.

\begin{figure}
\includegraphics[scale=0.5]{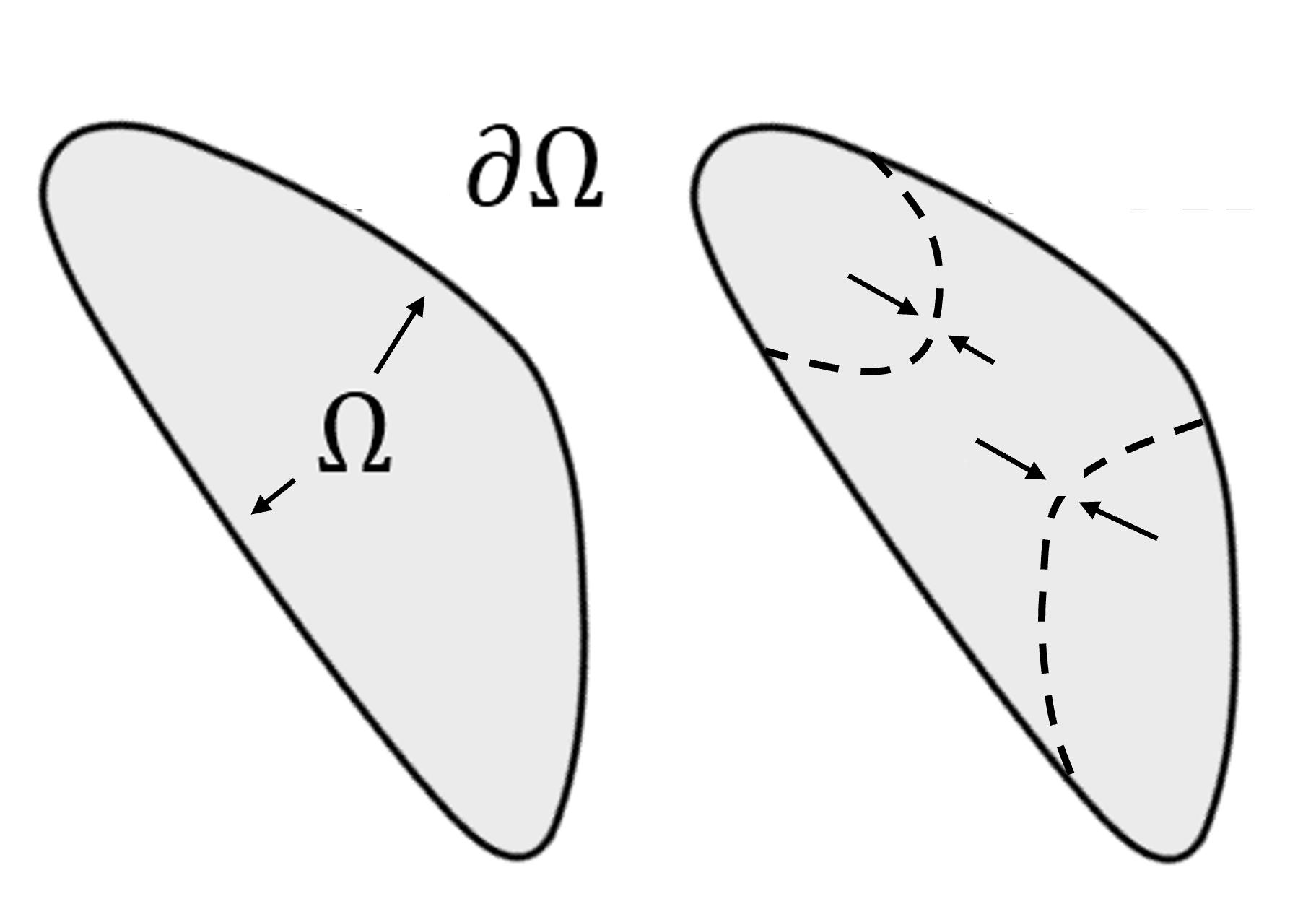}

\caption[Domains with different ``degrees''.]{The boundary of the left domain only touches the domain $\Omega$
from one side. The boundary of the right domain contains components
(the two dashed curves) which touch the domain from both sides. In
formula (\ref{eq:-33-2}), the contribution of these components to
the boundary term will contain a factor of $\frac{1}{2}$, which is
analogous to the degree term from the graph setting. \label{fig: bndry}}
\end{figure}

\bigskip

The fact that the Robin-Neumann gaps are uniformly bounded (\textbf{Theorem
\ref{thm:1.RNG-Lipschitz}}) is remarkable, since it no longer holds
when one passes to the two-dimensional setting. The sequence of RNG
is known to be unbounded for the hemisphere (\cite{RudWig_amq21}),
and also conjectured to be unbounded for certain planar domains, like
the disk (\cite{RudWigYes_arxiv21}).

\textbf{Theorem \ref{thm:1.RNG-Lipschitz}} also shows the existence
of a uniformly convergent subsequence of gaps. The result in \cite{RivRoy_jphys20}
shows that for star graphs, there exists a subsequence which converges
pointwise to zero. Although the proof is not included in this work,
this can be in fact shown for the general graph setting as well (see
\cite{Banda}).

For the two-dimensional case, it is shown in \cite{RudWigYes_arxiv21}
that under the assumption that the billiard dynamics which correspond
to the domain $\Omega$ are ergodic, there exists a subsequence which
converges pointwise to the mean value.

The example in Subsection \ref{subsec:rational-proof} shows that
it is not necessarily the case for an equilateral star graph. We conjecture
that the statement is in fact true if one adds the assumption of rationally
independent edge lengths, which is analogous to the assumption of
ergodic billiards.

\bigskip

While the local Weyl law presented in \textbf{Theorem \ref{thm:Weyl-law}}
was originally obtained as part of the proof of the RNG mean value
estimate, we note that it is interesting on its own right. Firstly,
it shows that on average, the value of an eigenfunction at a given
vertex only depends on the total length of the graph and the vertex
degree (the structure of the graph itself does not affect this mean
value). Moreover, it shows that in a sense, the scattering amplitudes
at different edges are uncorrelated (since their product averages
to zero). Again, this result is rather remarkable, since it does not
depend on the structure of the graph itself.

\bigskip

Lastly, we note that while the results presented in this work for
the RNG are concerned with the Laplacian, all of these results are
easily generalizable to Schrödinger operators of the form $H=-\frac{d^{2}}{dx^{2}}+V\left(x\right)$
where $V\left(x\right)\in L^{\infty}\left(\Gamma\right)$ (with the
same $\delta$ vertex conditions).

\subsection{Discussion of spectral flow results}

The $\delta_{s}$ family can be thought of as a gradual perturbation
of our graph. By increasing the parameter $t$, it extrapolates between
the unperturbed Neumann-Kirchhoff condition at $t=0$, and the condition
at $t=\infty$, which corresponds to cutting the original graph at
the given set of points and imposing the Robin condition with parameter
$s$.

\bigskip

\textbf{Theorem \ref{thm:SF-index}} continues the sequence of nodal
index theorems proven in recent years. Not only does it give a metric
graph version for an existing index formula for the nodal deficiency
on domains (\cite{BerCoxMar_lmp19}); But it in fact provides much
more information about the eigenfunctions than the original theorem,
by considering its $s$ domains. This is a more general notion than
the usual nodal (or Neumann) domains studied in previous works, and
it is studied for the first time in this work.

Note that interestingly, although the theorem is concerned with the
$s$ deficiency, the nodal count still appears in the index formula,
even for $s\neq\infty$. In this sense, it seems that the nodal count
is `special', as it holds certain information about the $s$ count
for all values of $s$.

\bigskip

\textbf{Theorem \ref{thm:SF-index}} shows that the Robin map holds
data about the behavior of the graph eigenfunctions. While the Dirichlet
to Neumann map (which is a special case of the Robin map) has been
connected to studying the nodal behavior of eigenfunctions in past
works (\cite{BerCoxMar_lmp19,Berkolaiko2022}), this is the first
time where an operator of this type is used to obtain information
which is not strictly nodal.

In fact, the definition we give for the Robin map $\Lambda_{s}\left(c\right)$
can be extended even further, so that the parameter $c$ can take
any real value (even values within the spectrum of $H^{s}\left(\infty\right)$),
although we did not present this generalization in this work.

We hope that by gaining a better understanding of the Robin map itself,
one may further use it as a `mediator' in gaining information about
the eigenfunctions. While we have only defined the Robin map for graph
eigenfunctions, we believe that our definition may be generalized
for domains in $\mathbb{R}^{N}$ as well, and can then be used to
study eigenfunctions on Euclidean domains.

\bigskip

A quantity related to the spectral flow is the spectral shift presented
in \cite{BerKuc_arxiv21}, which in essence describes how the position
of the spectral curves with respect to one another changes along some
lateral perturbation.

It was shown that the spectral shift of a certain spectral curve is
equal to the stability index of the given eigenvalue with respect
the lateral perturbation (\cite{BerKuc_arxiv21}), and this observation
can in fact be used to give an alternative proof of the Nodal Magnetic
Theorem (\cite{Ber_apde13,BerWey_ptrsa14}). This result is similar
in spirit to the idea of the proof of Theorem \ref{thm:SF-index},
and we hope that it is possible to recast the theorem in the language
of the spectral shift.

\bigskip

\textbf{Theorem \ref{prop:SF-points}} gives information about the
spectral curves themselves. It tells us that the number of spectral
curves that cross each horizontal cross section is completely determined
by the number of $\delta_{s}$ interaction points.

\bigskip

\textbf{Theorems \ref{thm:SF-Betti1} and \ref{prop:SF-Betti2}} show
that the spectral flow can also tell us about topological features
of the graph.

\bigskip

We shall note that the spectral flow formalism can be recast in a
more abstract setting, in the language of symplectic geometry, Lagrangian
subspaces and the Maslov index (See, for instance, \cite{LatSuk_ams20,B.BoossBavnbek2018,B.BoossBavnbek2013}).

We did not focus on this perspective in this work, and instead gave
a more elementary approach. Nevertheless, this more general formalism
can be used to give alternative proofs to some of our results, and
in certain cases even shed additional light on the collective behavior
of the $\delta_{s}$ family. Moreover, it provides several additional
results which we cannot prove with only the tools presented in this
work (see \cite{Band}).

Our long term goal is to combine these two approaches in order to
gain a better understanding of the spectral curves of the $\delta_{s}$
family, and consequently gain a better understanding of the eigenfunctions
by studying their $s$ domains.

\subsection{Ideas for future work}

The present work suggests many possible new directions to explore.
We give suggestions to several of them, which arose naturally along
the research. The first three involve the RNG, while the other two
involve the spectral flow and the Robin map.
\begin{enumerate}
\item Theorem \ref{thm:RNG-mean} gives an expression for the expectation
(or first moment) of the RNG with respect to the natural density.
A natural question that arises is -- can the higher moments be computed
similarly as well? What is the general probability distribution of
the RNG? Does it hold further geometric information about the graph,
which is not seen from the first moment alone? (see Figure \ref{fig:RNG-hist})\\
\begin{figure}
\includegraphics[scale=0.7]{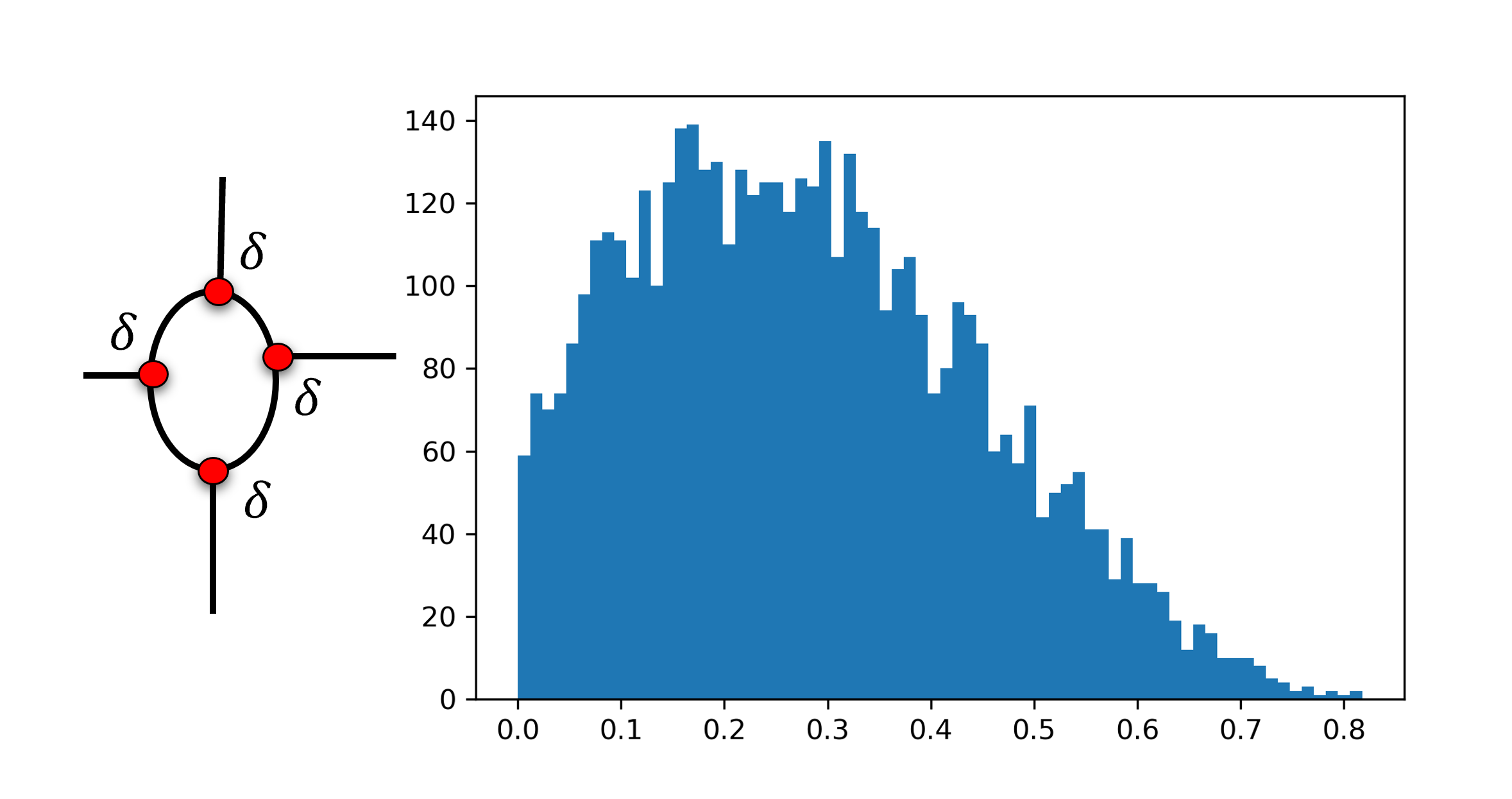}

\caption[Histogram of Robin-Neumann gaps.]{Histogram of the first five-thousand values of the Robin-Neumann gap
for a graph with four $\delta$ points, which gives a suggestion for
the possible probability distribution of the RNG. \label{fig:RNG-hist}}
\end{figure}
\\
Naively, the computation of the higher moments could be carried out
by an approach similar to before -- defining the higher moments as
functions on the secular manifold, and then computing the corresponding
integral. Yet, it turns out that the higher moments cannot be expressed
as well defined functions on the secular manifold. \\
Since this approach fails for the higher moments, this problem holds
an additional challenge of finding a different way to perform the
computation.
\item The RNG was defined using the $\delta$ vertex condition, which is
a special case of the $\delta_{s}$ family of vertex conditions. The
results for the RNG given in this work immediately raise the question
whether similar results can be obtained for the $\delta_{s}$ family
(and whether further geometric information can be obtained from studying
the corresponding gaps).\\
For instance, for the $\delta_{0}$ type condition, it seems (at least
numerically) that the mean value of the corresponding gap converges
as well (see Figure \ref{fig:DPRNG}) Nevertheless, it is not clear
what it converges to, and what additional properties this gap possesses.\\
\begin{figure}
\includegraphics[scale=0.6]{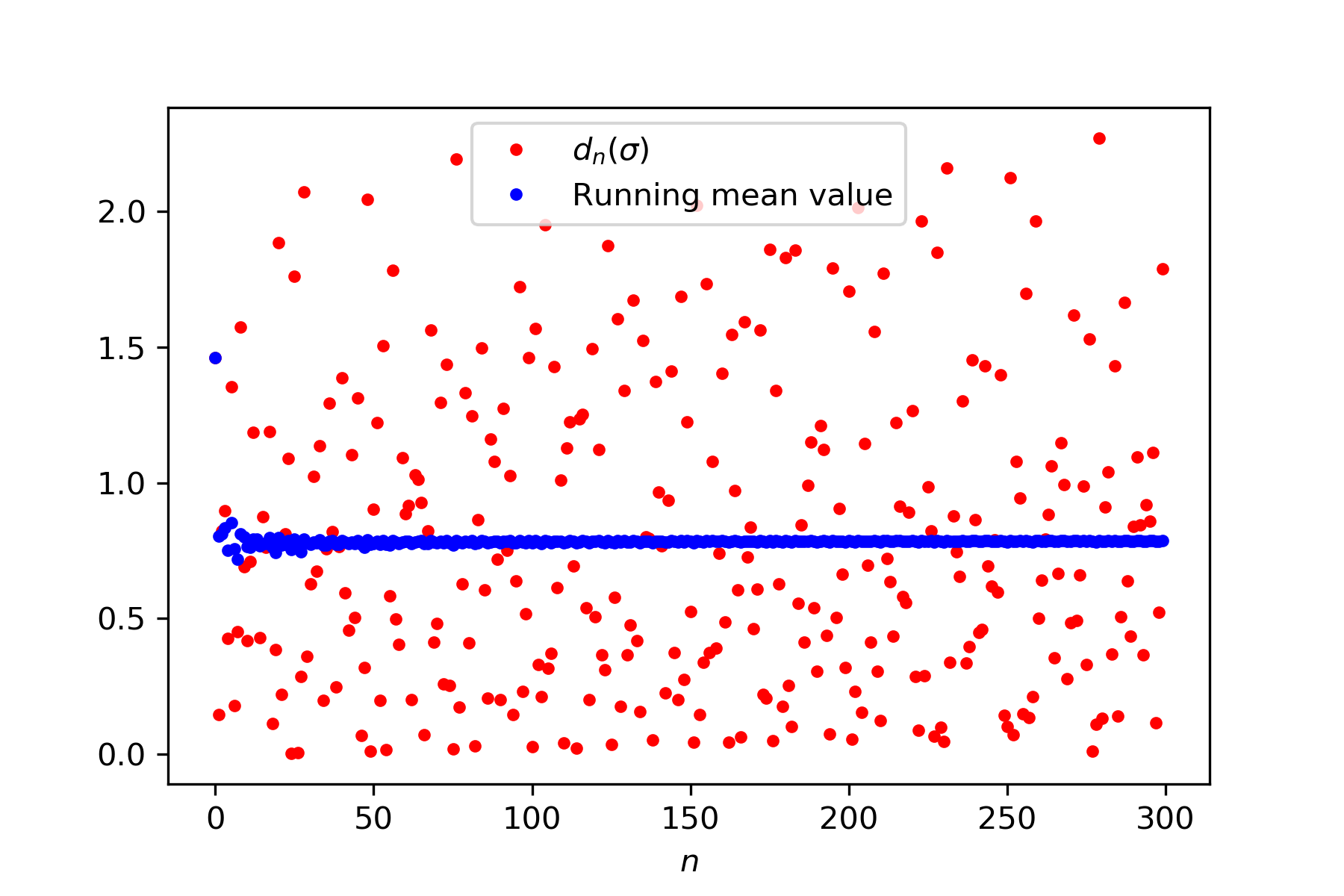}

\caption[Spectral gaps for the $\delta_{0}$ condition on a star graph.]{The first three-hundred values of the spectral gap for the $\delta_{0}$
condition on a star graph (red points), and the running mean value
of the sequence (blue points). Similar to the RNG, the sequence seems
to be bounded, and the mean value seems to converge.\label{fig:DPRNG}}
\end{figure}
When trying to repeat the computation shown for the RNG with different
vertex conditions from the $\delta_{s}$ family, one comes across
a problem similar to the one described in the previous bullet. It
turns out that for $s\neq\infty$, it is not simple to define the
corresponding spectral gap as a function on the secular manifold,
which makes it difficult to apply the approach presented in this work.\\
So once again, an interesting challenge could be to find an alternative
way to perform this computation, and see if any additional geometric
information can be derived from the statistics of the corresponding
gap.
\item In this work, we have assumed $\Gamma$ to be a compact metric graph.
In the case where $\Gamma$ is not compact, but periodic, then the
spectrum is no longer discrete, and consists of bands and gaps.\\
If we periodically place a Robin parameter $\sigma$ along the graph
(as done in the famous Kronig-Penney model, see \cite{Kronig}), then
as $\sigma$ changes, these spectral bands change as well. What one
obtains are not exactly spectral curves (since they have `width'),
but many of the usual notions from compact graphs still apply (including
the secular manifold).\\
It could be interesting to try and define a notion of a RNG for such
systems, which will quantify how the structure of the spectral bands
and gaps changes with respect to the Robin parameter. Then, one could
study its properties and see if any analogies to the compact case
exist.
\item Fix a certain graph eigenfunction $f_{n}$ (as in Figure \ref{fig:sjump}).
If we follow the value of the quantity $s\left(x\right):=\frac{f_{n}'\left(x\right)}{f_{n}\left(x\right)}$
near a vertex, we see that its behavior changes suddenly when we reach
the vertex. Denote by $s_{e}\left(v\right)$ the limit of the quantity
$s\left(x\right)$ as we approach the vertex $v$ from the edge $e$.
Then we see that for $s$ value slightly smaller than $s_{e}\left(v\right)$,
the number of $s$ points of $f_{n}$ is usually different than that
for $s$ value slightly larger than $s_{e}\left(v\right)$. These
so called critical $s$ values are exactly the $s$ values for which
the $s$ count of $f_{n}$ changes (see Figure \ref{fig:s-surfaces}),
and they hold a particular interest in studying the $s$ points of
$f_{n}$. \\
\begin{figure}
\includegraphics[scale=0.55]{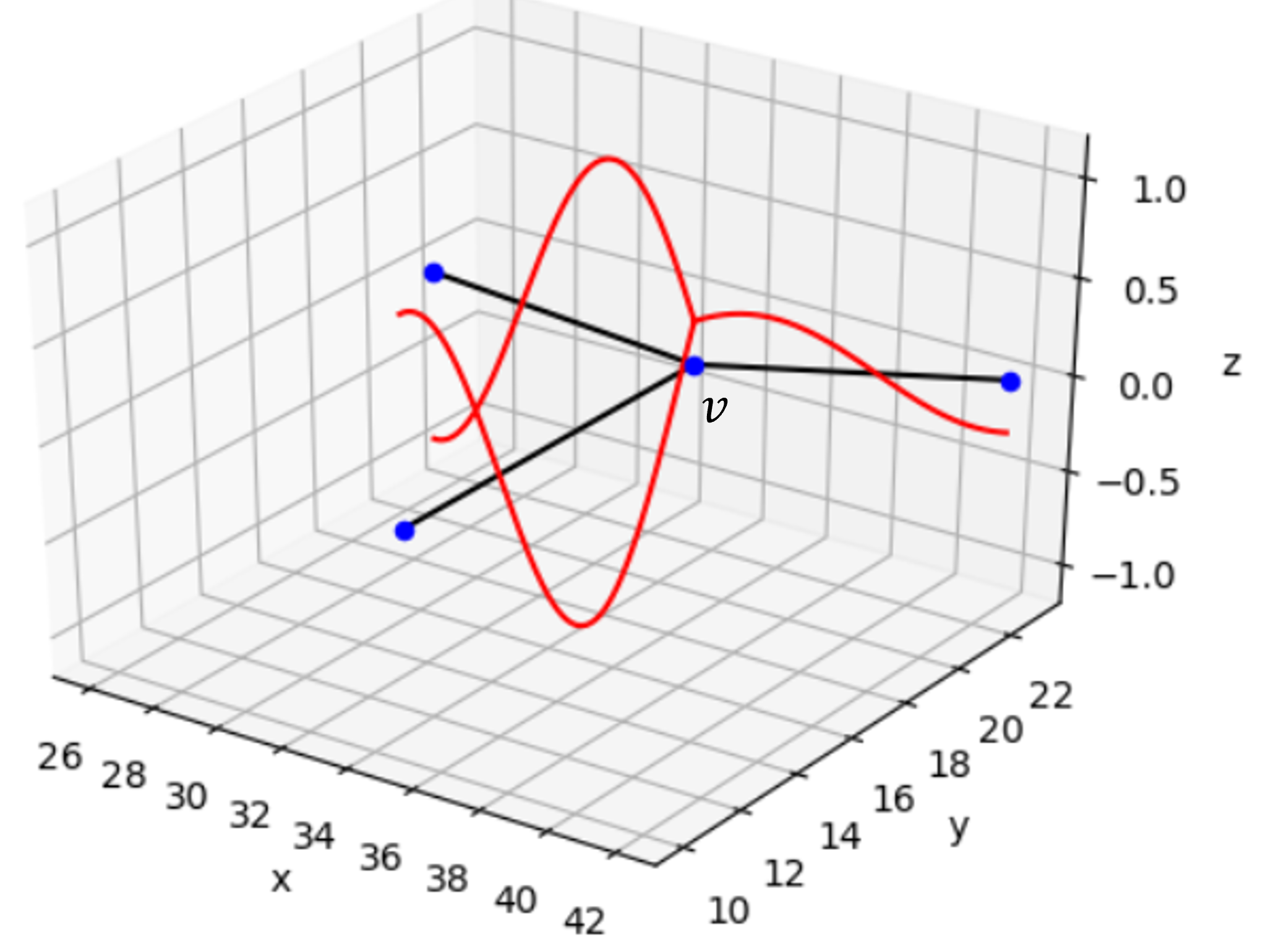}

\caption[The jump in the quantity $s\left(x\right)$.]{When approaching the central vertex through each individual edge,
the limit of the quantity $s\left(x\right)$ is different. These three
critical $s$ values result in $f_{n}$ having a different number
of $s$ points for different values of $s$.\label{fig:sjump}}
\end{figure}
Theorem \ref{prop:SF-points} shows that the number of $s$ points
of a given eigenfunction is equal to the spectral flow of the corresponding
$\delta_{s}$ family. Thus, at these critical $s$ values, the spectral
flow for the $\delta_{s}$ family changes suddenly. This can be seen
by considering not the spectral curves, but in fact the spectral surfaces
of the two-parameter family $\delta_{s}\left(t\right)$. 
\begin{figure}
\includegraphics[scale=0.45]{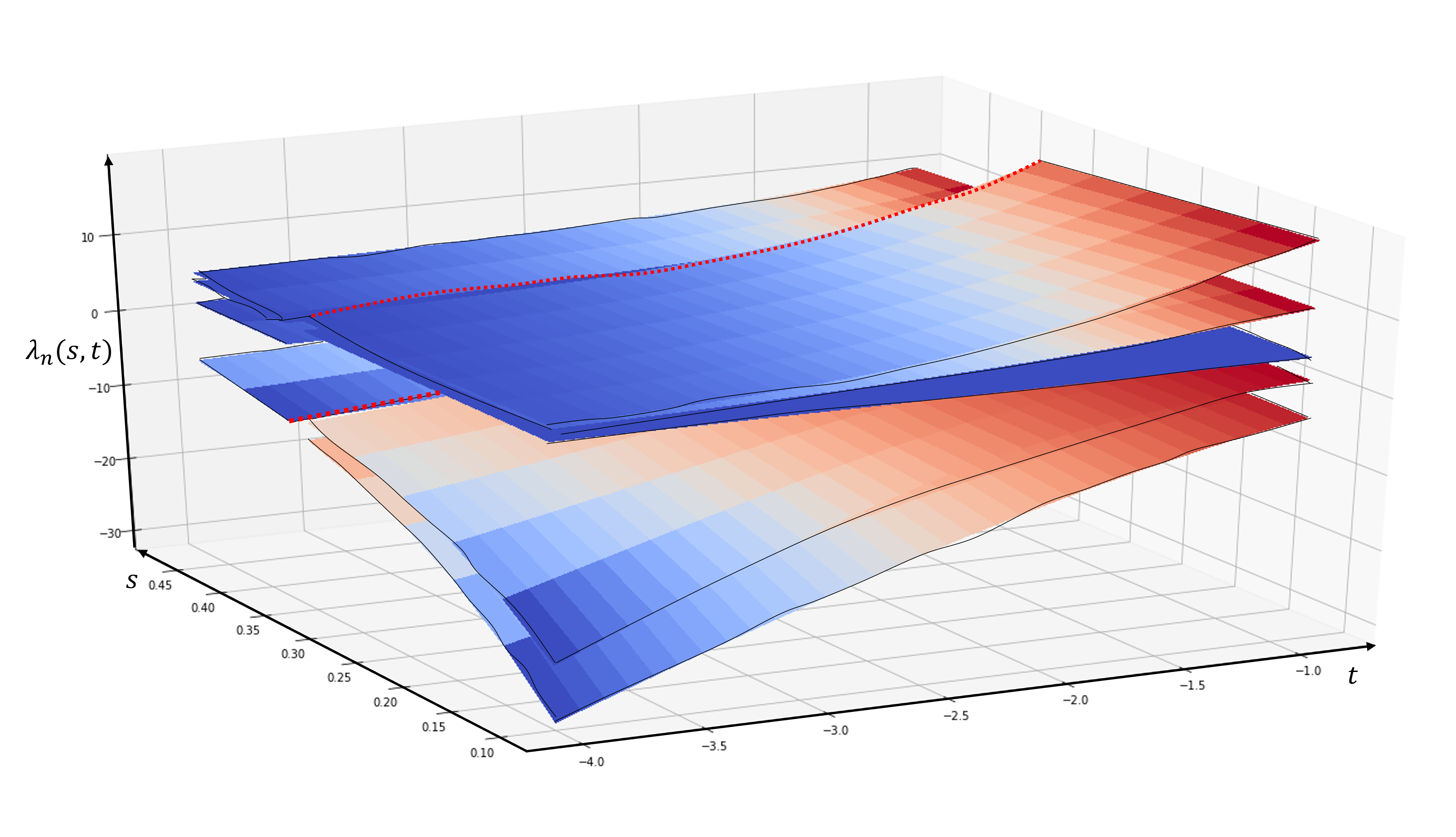}

\caption[The spectral surfaces for the $\delta_{s}\left(t\right)$ family.]{The spectral surfaces for the $\delta_{s}\left(t\right)$ family for
a fixed eigenfunction $f$ of a star graph. For each value of $s$,
we place the $\delta_{s}$ points on the $s$ points of $f$. At $s=0.3$,
the number of $s$ points of $f$ jumps from two to one, which results
in branching of the spectral surfaces (red dashed lines). \label{fig:s-surfaces}}
\end{figure}
\\
Whenever we reach a critical $s$ value, the spectral flow changes
(either increases by one or decreases by one). This means that at
the critical $s$ values, a single branch of the spectral surfaces
is created/destroyed, as seen in Figure \ref{fig:s-surfaces}.\\
It could be interesting to study these spectral surfaces in order
to gain insight on the possible number of $s$ points an eigenfunction
could have.
\item The study of the statistical behavior of the nodal count on quantum
graphs (such as the mean nodal deficiency) was the subject of several
works in recent years, such as \cite{AloBanBer_cmp18,AloBanBer_21arxiv,Alon_PhDThesis,Ban_ptrsa14}.
The newly introduced definition of $s$ domains opens a door for studying
the statistics of $s$ domains, rather than just nodal domains.\\
The study of Neumann domains (\cite{AloBan_21ahp,AloBanBerEgg_lms20})
shows that different $s$ values can display different statistics.
It could thus be interesting to try and study the collective statistical
behavior of $s$ points, in order to gain a better understanding of
the behavior of the graph eigenfunctions.
\end{enumerate}
\newpage{}

\appendix

\section{\label{sec:Appendices} Derivation of the sesquilinear form for the
$\delta_{s}$ condition}

We wish to compute the sesquilinear form $L_{t}^{s}$ which corresponds
to the $\delta_{s}\left(t\right)$ condition. To do this, we apply
the method presented in Theorem $1.4.11$ in \cite{BerKuc_graphs}.

We perform the computation for $s\neq\infty$. The computation for
$s=\infty$ (which is just the $\delta$ condition) is standard and
appears in the reference above.

Let us start with the case $t\neq0,\infty$. Working in the standard
coordinates on the graph (where the derivative of a function $u$
is taken into the edge), the $\delta_{s}$ vertex conditions presented
in formulas (\ref{eq:-6-1-1},\ref{eq:-7-1-1}) can be written at
each individual vertex $v$ in the following form:
\begin{align}
 & A_{v}\left(\begin{array}{c}
u_{1}\left(v\right)\\
u_{2}\left(v\right)
\end{array}\right)+B_{v}\left(\begin{array}{c}
u_{1}'\left(v\right)\\
u_{2}'\left(v\right)
\end{array}\right)=0,\label{eq:-43-1}\\
 & A_{v}=\left(\begin{array}{cc}
\cos\left(\alpha\right) & -\cos\left(\alpha\right)\\
\frac{1}{2}\cos\left(\alpha\right)t+\sin\left(\alpha\right) & \frac{1}{2}\cos\left(\alpha\right)t-\sin\left(\alpha\right)
\end{array}\right),\label{eq:-89}\\
 & B_{v}=\left(\begin{array}{cc}
\sin\left(\alpha\right) & \sin\left(\alpha\right)\\
\frac{1}{2}\sin\left(\alpha\right)t-\cos\left(\alpha\right) & -\frac{1}{2}\sin\left(\alpha\right)t-\cos\left(\alpha\right)
\end{array}\right),\label{eq:-90}
\end{align}
where $\alpha$ is the Prüfer angle, $s=\cot\left(\alpha\right)$.

By the result presented in \cite{BerKuc_graphs}, the sesquilinear
form is given by
\begin{equation}
L_{t}^{s}\left(f,g\right)=\int_{\Gamma}\frac{df}{dx}\overline{\frac{dg}{dx}}dx-\sum_{v\in B}\left\langle \Lambda_{v}P_{R,v}F,P_{R,v}G\right\rangle ,\label{eq:-44-1}
\end{equation}

where:
\begin{align}
 & P_{R,v}=I-P_{D,v}-P_{N,v},\label{eq:-91}\\
 & \Lambda_{v}=B_{v}^{-1}A_{v}P_{R,v},\label{eq:-45-2}
\end{align}
and $P_{D,v},P_{N,v}$ are the orthogonal projections onto $\ker\left(B_{v}\right)$
and $\ker\left(A_{v}\right)$ correspondingly.

A straightforward computation gives that for $t\neq0$
\begin{align}
 & P_{R,v}=I,\label{eq:-88}\\
 & \Lambda_{v}=\frac{1}{\sin^{2}\left(\alpha\right)t}\left(\begin{array}{cc}
1+\frac{1}{2}\sin\left(2\alpha\right)t & -1\\
-1 & 1-\frac{1}{2}\sin\left(2\alpha\right)t
\end{array}\right).\label{eq:-46-1}
\end{align}

Plugging this into formula \ref{eq:-44-1} we get
\begin{align}
 & \,\,\,\,\,\,\,\,\,\,\,\,\,\,\,\,L_{t}^{s}\left(f,g\right)=\int_{\Gamma}\frac{df}{dx}\overline{\frac{dg}{dx}}dx\label{eq:-47-1}\\
 & -\frac{1}{\sin^{2}\left(\alpha\right)t}\sum_{v\in B}\left(\begin{array}{c}
f_{1}\left(v\right)\\
f_{2}\left(v\right)
\end{array}\right)^{*}\left(\begin{array}{cc}
1+\frac{1}{2}\sin\left(2\alpha\right)t & -1\\
-1 & 1-\frac{1}{2}\sin\left(2\alpha\right)t
\end{array}\right)\left(\begin{array}{c}
g_{1}\left(v\right)\\
g_{2}\left(v\right)
\end{array}\right).
\end{align}

Note that while the computation was done in the non-oriented coordinates,
since the sesquilinear form only depends on the value of the function
(and not the orientation of the derivative), the same formula holds
in the oriented coordinates we use in this work.

By \cite{BerKuc_graphs}, the domains of the corresponding operators
consist of all functions in $H^{1}\left(\Gamma\right)$ which satisfy
$P_{D,v}=0$ at each vertex $v$. Since $P_{D,v}=0$, we conclude
that the domain is $H^{1}\left(\Gamma\right)$.

Now for the case $t=\infty$. This time, the corresponding matrices
are given by
\begin{align}
 & A_{v}=\left(\begin{array}{cc}
\cos\left(\alpha\right) & -\cos\left(\alpha\right)\\
\cos\left(\alpha\right) & \cos\left(\alpha\right)
\end{array}\right),\label{eq:-93}\\
 & B_{v}=\left(\begin{array}{cc}
\sin\left(\alpha\right) & \sin\left(\alpha\right)\\
\sin\left(\alpha\right) & -\sin\left(\alpha\right)
\end{array}\right).\label{eq:-94}
\end{align}

The same computation as before gives

\begin{align}
 & P_{R,v}=I,\label{eq:-91-1}\\
 & \Lambda_{v}=\frac{1}{2\sin^{2}\left(\alpha\right)}\left(\begin{array}{cc}
\sin\left(2\alpha\right) & 0\\
0 & -\sin\left(2\alpha\right)
\end{array}\right),\label{eq:-45-2-1}
\end{align}

which overall gives
\begin{align}
 & \,\,\,\,\,\,\,\,\,\,\,\,\,\,\,\,L_{\infty}^{s}\left(f,g\right)=\int_{\Gamma}\frac{df}{dx}\overline{\frac{dg}{dx}}dx\label{eq:-47-1-1}\\
 & -\frac{1}{\sin^{2}\left(\alpha\right)}\sum_{v\in B}\left(\begin{array}{c}
f_{1}\left(v\right)\\
f_{2}\left(v\right)
\end{array}\right)^{*}\left(\begin{array}{cc}
\frac{1}{2}\sin\left(2\alpha\right)t & 0\\
0 & -\frac{1}{2}\sin\left(2\alpha\right)t
\end{array}\right)\left(\begin{array}{c}
g_{1}\left(v\right)\\
g_{2}\left(v\right)
\end{array}\right).
\end{align}

Once again, $P_{D,v}=0$ and so the domain is given by $H^{1}\left(\Gamma\right)$.

Lastly, in the case $t=0$, we have that

\begin{align}
 & A_{v}=\left(\begin{array}{cc}
\cos\left(\alpha\right) & -\cos\left(\alpha\right)\\
\sin\left(\alpha\right) & -\sin\left(\alpha\right)
\end{array}\right),\label{eq:-93-1}\\
 & B_{v}=\left(\begin{array}{cc}
\sin\left(\alpha\right) & \sin\left(\alpha\right)\\
-\cos\left(\alpha\right) & -\cos\left(\alpha\right)
\end{array}\right).\label{eq:-94-1}
\end{align}

This time, the Dirichlet and Neumann projections are non-trivial:

\begin{align}
 & \ker\left(B_{v}\right)=sp\left\{ \left(1,-1\right)\right\} ,\label{eq:-91-1-1}\\
\Rightarrow & P_{D,v}=\frac{1}{\sqrt{2}}\left(\begin{array}{cc}
1 & -1\\
-1 & 1
\end{array}\right),\\
 & \ker\left(A_{v}\right)=sp\left\{ \left(1,1\right)\right\} ,\\
\Rightarrow & P_{N,v}=\frac{1}{\sqrt{2}}\left(\begin{array}{cc}
1 & 1\\
1 & 1
\end{array}\right),
\end{align}

Which gives $\Lambda=0$ and the following sesquilinear form:

\begin{align}
 & L_{0}^{s}\left(f,g\right)=\int_{\Gamma}\frac{df}{dx}\overline{\frac{dg}{dx}}dx.\label{eq:-47-1-1-1}
\end{align}

By the expression above for $P_{D,v}$, we also see that the domain
of $L_{0}^{s}$ consists of all functions in $H^{1}\left(\Gamma\right)$
which are continuous at the selected set of vertices.

\section{\label{sec:DTN-comp} Explicit computation of $\Lambda_{\infty}\left(c\right)$
for specific graph}

We give an example of a computation for the Robin map $\Lambda_{\infty}\left(c\right)$
for the graph displayed in Figure \ref{fig:DTN-graph}. For simplicity
of the computation, we take all edge lengths to be equal to $L$.

\begin{figure}
\includegraphics[scale=0.8]{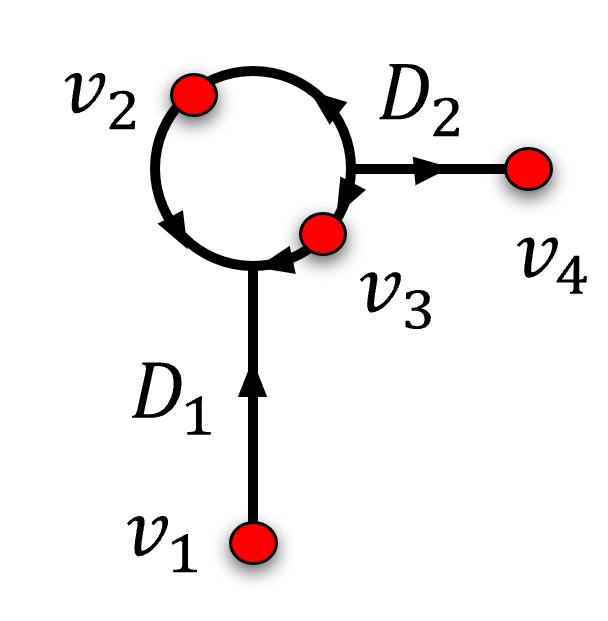}

\caption[Computation of $\Lambda_{\infty}\left(c\right)$ for specific graph.]{Computation of the Robin map $\Lambda_{\infty}\left(c\right)$ for
the graph above, where the vertex set $B$ is marked in red. The graph
is decomposed into to star graphs, $D_{1}$ and $D_{2}$, and the
computation of $\Lambda_{\infty}\left(c\right)$ is done by computing
the matrices for $\Lambda_{\infty}^{D_{1}}\left(c\right)$ and $\Lambda_{\infty}^{D_{1}}\left(c\right)$
and taking their block sum. \label{fig:DTN-graph}}
\end{figure}

To compute the Robin map, we first compute the map $\Lambda_{\infty}^{D}\left(c\right)$
for a star graph with three edges, where the set $B$ is taken to
be its exterior vertices (and the Neumann-Kirchhoff condition is imposed
at the center). We will then construct the matrix for $\Lambda_{\infty}\left(c\right)$
by using the decomposition of the given graph into two stars. For
convenience, we parameterize our star graph so that the interior vertices
are located at $x=0$ and the center is located at $x=L$.

By using the definition of our trace maps, we see that for the case
$s=\infty$, the map $\Lambda_{\infty}^{D}\left(c\right)$ is simply
the Dirichlet to Neumann map, which sends the Dirichlet trace data
to the Neumann trace data.

Let $k^{2}\notin\spec{H^{\infty}\left(\infty\right)}$. For simplicity,
we choose $c>0$ and write $c=k^{2}$. Let $w\in\mathbb{R}^{3}$ be
given. We are interested in solving the following boundary value problem:
\begin{align}
 & -\frac{d^{2}f}{dx^{2}}=k^{2}f,\label{eq:-111}\\
 & f_{e}\left(0\right)=w_{e},e\in\left\{ 1,2,3\right\} ,\\
 & f_{e}\left(L\right)=f_{e'}\left(L\right),\forall e,e'\in\left\{ 1,2,3\right\} ,\\
 & \sum_{e=1}^{3}f_{e}'\left(L\right)=0.
\end{align}

For each edge $e\in\left\{ 1,2,3\right\} $, we suggest a solution
of the form
\begin{equation}
f_{e}\left(x\right)=A_{e}\sin\left(kx\right)+B_{e}\cos\left(kx\right).\label{eq:-113}
\end{equation}

Then the condition at the outer vertices gives
\begin{equation}
B_{e}=w_{e}.\label{eq:-110}
\end{equation}

The continuity condition at the center gives that for every $e\in\left\{ 1,2,3\right\} $:
\begin{align}
 & A_{1}\sin\left(kL\right)+w_{1}\cos\left(kL\right)=A_{e}\sin\left(kL\right)+w_{e}\cos\left(kL\right)\label{eq:-112}\\
\Rightarrow & A_{e}=\frac{1}{\sin\left(kL\right)}\left(A_{1}\sin\left(kL\right)+w_{1}\cos\left(kL\right)-u_{e}\cos\left(kL\right)\right)\\
\Rightarrow & f_{e}\left(x\right)=\left(A_{1}+w_{1}\cot\left(kL\right)-u_{e}\cot\left(kL\right)\right)\sin\left(kx\right)+w_{e}\cos\left(kx\right).
\end{align}

Plugging this into the condition for the sum of derivatives at the
center, a straightforward computation yields that for each edge $e$
\begin{equation}
A_{e}=\frac{\sum_{j=1}^{3}\left(\frac{1}{\sin\left(kL\right)}\left(w_{j}\cos\left(kL\right)-w_{e}\cos\left(kL\right)\right)\cos\left(kL\right)+w_{j}\sin\left(kL\right)\right)}{3\cos\left(kL\right)}.\label{eq:-114}
\end{equation}

This overall gives
\begin{align}
 & f_{e}\left(x\right)=\left[\frac{1}{\sin\left(kL\right)\cos\left(kL\right)}w_{j}-w_{e}\sin\left(kL\right)\right]\sin\left(kx\right)+w_{e}\cos\left(\lambda x\right),\label{eq:-115}
\end{align}

and the corresponding Neumann trace is thus
\begin{equation}
f_{e}'\left(0\right)=\frac{k}{\sin\left(kL\right)\cos\left(kL\right)}w_{j}-kw_{e}\sin\left(kL\right).\label{eq:-116}
\end{equation}

We thus see that $\Lambda_{\infty}^{D}\left(k^{2}\right)$ can be
written in the following matrix form:
\begin{equation}
\left(\Lambda_{\infty}^{D}\left(k^{2}\right)\right)_{ee'}=\begin{cases}
\frac{k}{\sin\left(kL\right)\cos\left(kL\right)} & e\neq e'\\
\frac{k}{\cos\left(kL\right)} & e=e'
\end{cases}\label{eq:-117}
\end{equation}

We now construct the complete matrix $\Lambda_{\infty}\left(k^{2}\right)$,
which is a square matrix of size four acting on $\ell^{2}\left(\left\{ v_{1},...,v_{4}\right\} \right)$.
From the orientation on the edges (and definition of our weight functions
$\chi_{D}$), we see that $\Lambda_{\infty}^{D_{1}}\left(k^{2}\right)$
comes with a minus sign while $\Lambda_{\infty}^{D_{2}}\left(k^{2}\right)$
comes with a plus sign, which means that most terms cancel. Taking
the block sum (and noting that $D_{1}\cap D_{2}=\left\{ v_{2},v_{3}\right\} $)
we finally get
\begin{equation}
\Lambda_{\infty}\left(k^{2}\right)=\left(\begin{array}{cccc}
-\frac{k}{\cos\left(kL\right)} & 0 & 0 & 0\\
0 & 0 & 0 & 0\\
0 & 0 & 0 & 0\\
0 & 0 & 0 & \frac{k}{\cos\left(kL\right)}
\end{array}\right).\label{eq:-118}
\end{equation}

\newpage{}

\bibliographystyle{plain}
\bibliography{GlobalBib_210709}

\end{document}